\numberwithin{equation}{section}
\providecommand{\U}[1]{\protect\rule{.1in}{.1in}}
\newtheorem{theorem}{Theorem}[section]
\newtheorem*{assumption_ahk}{Assumption AHK}
\newtheorem*{assumption_asm}{Assumption ASM}
\newtheorem*{assumption_se}{Assumption SE}
\newtheorem*{assumption_reg}{Assumption REG}
\newtheorem*{assumption_nd}{Assumption ND}
\newtheorem*{assumption_dml1}{Assumption DML1}
\newtheorem*{assumption_dml2}{Assumption DML2}
\newtheorem*{assumption_gmd}{Assumption GMD}
\newtheorem*{assumption_plm}{Assumption REG-P}
\newtheorem{claim}{Claim}[section]
\newtheorem{definition}{Definition}[section]
\newtheorem{lemma}{Lemma}[section]
\newtheorem{lemma*}{Lemma}
\newtheorem{remark}{Remark}[section]
\newcommand{\ind}{\perp\!\!\!\!\perp}
\begin{document}

\begin{frontmatter}

\title{Inference in High-Dimensional Panel Models:\\
Two-Way Dependence and Unobserved Heterogeneity \tnoteref{t1}}

\author[sufe]{Kaicheng Chen\fnref{fn1}}
\ead{chenkaicheng@sufe.edu.cn}
\fntext[fn1]{School of Economics, Shanghai University of Finance and Economics.}

\address[sufe]{School of Economics, Shanghai University of Finance and Economics}

\begin{abstract}
Panel data allows for the modeling of unobserved heterogeneity, significantly raising the number of nuisance parameters and making high dimensionality a practical issue. Meanwhile, temporal and cross-sectional dependence in panel data further complicates high-dimensional estimation and inference. This paper proposes a toolkit for high-dimensional panel models with large cross-sectional and time sample sizes. To reduce the dimensionality, I propose a variant of LASSO for two-way clustered panels. While being consistent, the convergence rate of LASSO is slow due to the cluster dependence, rendering inference challenging in general. Nevertheless, asymptotic normality can be established in a semiparametric moment-restriction model by leveraging a clustered-panel cross-fitting approach and, as a special case, in a partial linear model using the full sample. In an exercise of estimating multiplier using panel data, I demonstrate how high dimensionality could be hidden and the proposed toolkit enables flexible modeling and robust inference.
\end{abstract}
\begin{keyword}
high-dimensional panel regression \sep two-way cluster dependence \sep correlated time effects \sep unobservable heterogeneity \sep LASSO \sep double/debiased machine learning \sep cross-fitting. \\
\textit{JEL Classification:} C01, C14, C23, C33
\end{keyword}
\end{frontmatter}

\setcounter{page}{1} \thispagestyle{empty} \pagestyle{plain}

\section{Introduction}
In economic research, high dimensionality typically refers to the large number of unknown parameters relative to the sample size, under which traditional estimations are either infeasible or tend to yield noisy estimates. The issue of high dimensionality becomes more relevant as data availability grows and economic modeling involves more flexibility. Commonly, the problem of high dimensionality appears at least in three scenarios as follows:
\begin{itemize}
    \item The dimension of observable and potentially relevant variables can be large relative to the sample size. For example, in trade literature, preferential trade agreements (PTAs) usually involve a large number of provisions even though most policy analysis only focuses on the effect of a small subset of the provisions \footnote{For example, 282 PTAs were signed and notified to the WTO between 1958 and 2017, encompassing 937 provisions across 17 policy areas. See \cite{breinlich2022machine}.}. In demand analysis, even if the focus is on the own-price elasticity, the prices of relevant goods should also be included, unless strong assumptions for aggregation are made (see \citealp{chernozhukov2019demand}).
    
    \item With nonparametric or semiparametric modeling, the unknown functions are viewed as infinite dimensional parameters regardless of the dimension of observable characteristics. When the unknown function is approximated by a series of basis functions, the number of unknown parameters increases quickly. \footnote{For instance, the 2nd-order polynomial transformation of $k$-dimensional covariates generates $\frac{k^2}{2}+\frac{3}{2}k$ terms and the 3rd-order polynomial transformation generates $k+\frac{1}{2}k(k+1)+ \frac{1}{2} \sum_{l=1}^k l(l+1) = \frac{1}{6}k^3+k^2+\frac{11}{6}k$ terms.}

    \item Modeling of heterogeneity also raises the number of nuisance parameters. If the unobserved heterogeneity enters the model in a nonlinear way, either treating them as parameters or modeling them parametrically contributes to high dimensionality drastically. \footnote{This is particularly relevant in trade literature where the unobserved heterogeneity derived from the gravity model takes a pairwise form among the importers, exporters, and the time. As each of these three dimensions expands, the number of nuisance parameters explodes quickly.}.
\end{itemize}

Particularly, the modeling of unobserved heterogeneity in panel models makes high dimensionality more of a practical issue rather than just a theoretical concern. As a motivating example, let's consider a panel model where all three sources of high dimensionality are involved:
\begin{align}
         Y_{it} = D_{it}\theta_0+g_0(X_{it},c_i,d_t)+U_{it}, \label{example}
    \end{align}
where ${D}_{it}$ are low-dimensional treatments or policy variables. ${X}_{it}$ are high-dimensional controls. $D_{it}$ can also contain some higher-order effects and interactive effects with a subset of the controls to allow for nonlinear and heterogeneous effects in a parametric way. $g(.)$ is an unknown function, e.g., an infinite-dimensional parameter; $c_i$ and $d_t$ are unobserved heterogeneous effects. The interest is in the inference on the low-dimensional parameters ${\theta}_0$.

Without considering the features of panel data and the unobserved heterogeneity, it is a classic partial linear model that has been well-studied in the semiparametric literature. With high dimensionality, sparse approximation and regularization approaches have been widely employed to reduce the dimensionality. Essentially, regularization, also known as the machine learning approach, trades off bias for smaller variance to achieve desirable rates of convergence. However, due to the bias introduced by regularization and overfitting, inference can be challenging. Typically, some bias-correction procedures are involved to obtain estimators with better statistical properties and to conduct valid inference. 

In the case of panel data, three challenges remain with the existing high-dimensional approaches. First of all, the statistical properties of many regularized estimators remain unknown with panel data where dependence exists across space/unit and time. Secondly, some bias-correction procedures for inference, such as sample-splitting/cross-fitting, are particular about sampling assumptions, and existing approaches are not valid under two-way dependence in the panel. Thirdly, the unobserved individual and time effects may appear in a flexible way, which further complicates estimation and inference.

Although the standard LASSO that assumes the Gaussian error condition does not restrict the dependence structure of data as long as the (long-run) variance is finite, the validity of this approach largely depends on whether a theoretically required penalty level can be found in practice. The common approach for the standard LASSO to decide the penalty level is the K-fold cross-validation, and its validity has been established in \cite{chetverikov2021cross}, but proper modifications and validity for dependent data remain as open questions. In another strand of literature, \cite{gao2024robust} establish new Fuk-Nagaev type concentration inequality, by which they show two variants of LASSO continue to work with two-way dependence and non-Gaussianity, with a rate requirement on the penalty level. However, to determine the practical penalty level, it is unclear if the modified BIC approach continues to work under two-way dependence. Furthermore, the validity of this approach under cluster dependence where the correlation may be strong within the cluster remains unknown. \cite{chernozhukov2021lasso} deal with a system of (time series) equations, and they propose to run LASSO equation-by-equation while determining the penalty level jointly through a multiplier bootstrap which accounts for cross-sectional dependence. However, their performance bound for LASSO only works for each time series and is not compatible with a pooled panel. Also, its validity under cluster dependence is not unknown. I proposed a variant of LASSO that uses regressor-specific penalty weights robust to two-way cluster dependence and weak temporal dependence across clusters. Such a LASSO approach is labeled as the two-way cluster-LASSO, corresponding to the heteroskedasticity-robust LASSO in \cite{belloni2012sparse} and the cluster-LASSO in \cite{belloni2016inference}. This approach theoretically derives the common penalty level $\lambda$ up to a constant and a small-order sequence that does not vary across different data-generating processes. Therefore, data-driven tuning, such as information criterion and cross-validation, is not needed, which makes it more computationally efficient and avoids non-trivial theories that take data-driven tuning into account. In Table \ref{table_compare}, a comparison of aforementioned LASSO-type estimators are presented.

\begin{table}[htbp]
\centering \label{table_compare}
\footnotesize            
\setlength{\tabcolsep}{3pt} 
\renewcommand{\arraystretch}{1.15} 
\caption{Comparison of LASSO-type estimators for two-way dependent panel data}
\label{lasso_comparison}
\begin{tabular}{p{2.2cm}p{1.8cm}p{2.0cm}p{1.8cm}p{2.9cm}p{4.0cm}}
\hline
Method 
& Reference
& Penalty level
& Non-Gaussian error
& Two-way \ \ \ \ \ \ \ \ \ \ \ \ \ \ \ \ \ \ \ \ \ \ \ \ \ \ \  dependence
& Other limitations \\
\hline
Standard LASSO 
& \cite{hansen2022econometrics}
& Cross-validation 
& Not allowed
& Only if penalty is chosen correctly
& Validity of cross-validation under two-way dependence is unknown; computationally costly. \\
\hline
Adaptive or conservative LASSO 
& \cite{gao2024robust}
& Modified BIC
& Allowed
& Weak dependence allowed; unknown under cluster dependence
& Validity of the modified BIC under two-way dependence is unknown. \\
\hline
Equation-by-equation LASSO
& \cite{chernozhukov2021lasso} 
& Multiplier bootstrap
& Allowed 
& Weak dependence allowed; unknown under cluster dependence  
& Not compatible with pooled panels. \\
\hline
Two-way cluster-LASSO
& This paper
& Theoretically driven 
& Allowed
& Allowed in general
& More conservative; may not work well with weak signals. \\
\hline
\end{tabular}
\end{table}

According to the rate result, the proposed two-way cluster-LASSO is consistent as both $N,T$ diverge jointly, but the convergence rate is not as fast as the common rates for LASSO under independence or weak dependence\footnote{The intuition for the slow rate of convergence is illustrated in \hyperref[app_0]{Appendix}.}. When $N,T$ diverge at a comparable rate, the rate of convergence is slower than $(N\wedge T)^{-1/2}=(NT)^{-1/4}$ , which is a common rate requirement for inferential theory. This is where the second challenge arises: if a faster rate of convergence is not achievable due to the two-way cluster dependence, some bias-correction approaches are needed to relax the rate requirement for valid inference. There are many bias-correction methods for high-dimensional models.  The orthogonalization of moment functions in \cite{Chernozhukov2018} provides a general way for constructing estimators that features multiplicative error terms, by which the rate requirement on the nuisance parameter estimation can be relaxed. Combining with a cross-fitting procedure to further control the overfitting bias, they obtain valid inferential results for high-dimensional regression models. However, cross-fitting is sensitive to the sampling assumption. Building upon recent development of cross-fitting approaches for dependent data (\cite{Chiang2022,Semenova2023}), I propose a clustered-panel cross-fitting scheme and establish its validity. Effectively, this inferential procedure extends the double/debiased machine learning (DML, hereafter) approach by \cite{Chernozhukov2018} to panel data models, and it is labeled as the panel DML. Asymptotic normality for the panel DML estimator and the consistency for the variance estimator are established. It is shown that the crude requirement on the rate of convergence can be relaxed to $(N\wedge T)^{-1/4}$, which admits the first-step estimation through the two-way cluster-LASSO.

For the third challenge caused by the unobserved heterogeneity, existing approaches assume that $(c_i,d_t)$ are either additive (\citealp{belloni2016inference,kock2019uniform,clarke2025double}) or interactive (\citealp{vogt2022cce}).  To allow for flexible function forms while remaining tractable, I propose to model $(c_i,d_t)$ as correlated random effects through a generalized Mundlak device . In that way, a very rich form of heterogeneity is permitted. A closely related idea has been implemented in \cite{wooldridge2020inference} and \cite{clarke2025double}, what's different in this paper is that both unit and temporal unobserved effects are considered, and they are not separable from observable covariates. This is made possible by exploring the sparsity condition on $g_0$. Furthermore, a subtle issue of cross-fitting is discussed with the presence of  unobserved heterogeneity. Although valid inference remains challenging for high-dimensional models without cross-fitting in general, I show that inferential theory can be established in model (\ref{example}) using the full sample with a slightly stronger sparsity condition.  

The simulation results are quite revealing. When the model is mildly high-dimensional, non-robust methods, i.e. those designed for i.i.d. data or one-way cluster data, are comparable to the proposed two-way cluster-robust method, and the unregularized method remains valid.  This is true for both i.i.d and two-way clustered panel. However, when the nuisance parameters are truly high-dimensional, the pattern changes significantly: (1) with i.i.d data,  the unregularized method completely fails the task, and the proposed two-way cluster-robust methods has comparable performance to the non-robust methods in term of both estimation and inference; (2) with two-way clustered panel data, both unregularized methods and non-robust high-dimensional methods severely over-reject in the tests, and the proposed two-way cluster-robust method dominates in terms of estimation bias, SD, RMSE, and, particularly, inference coverage.

In the empirical application, I re-examine the effect of government spending on the output of an open economy following the framework of \cite{nakamura2014fiscal}, a well-cited empirical macro paper. The baseline model is not concerned with the high dimensionality: a linear panel model with a small number of covariates and additive unobserved heterogeneous effects; the identification is through the instrumental variable. However, even in a conventionally low-dimensional setting, high dimensionality could be hidden because the true model could be highly nonlinear in the covariates and the unobserved heterogeneity. To avoid the endogeneity caused by the potential misspecification in the function form, I consider extending the baseline model in a flexible way as in \ref{example}. The proposed dependence-robust estimation and inference for high-dimensional models can be leveraged, and the results can be used for a robustness check. It is shown that the estimates are consistent with the baseline results, which indicates that the nonlinear and interactive effects may not be very relevant in this model. However, existing approaches that are not robust to high dimensionality or two-way cluster dependence tend to over-fit, bringing noisy estimates and inaccurate inference results.

The rest of the paper is outlined as follows: Section \ref{tw_cluster_lasso} presents the two-way cluster-LASSO estimator and the investigation of its statistical properties under two-way cluster dependence. Section \ref{sub_sample} introduces the clustered-panel cross-fitting for inference. Section \ref{partial_linear_unob} studies the partial linear model with unobserved heterogeneity as a leading example. Simulation evidence is given in Section \ref{mc_simulation}. Section \ref{empirical_study} presents an empirical estimation of the government spending multiplier as an illustration of hidden high dimensionality and the application of the proposed toolkit. Section \ref{sec_conclusion} concludes the paper with empirical recommendations.

\subsection*{Notation.} \label{notation_sec} Here is a collection of frequently used notations in this paper. Some extra notations are defined along with the context. ${\rm E}$ and ${\rm P}$ are as generic expectation and probability operators. $\mathcal{P}_{NT}$ is an expanding collection of all data-generating processes ${P}$ that satisfy certain conditions. ${P}_{NT}$ is a sequence of probability laws such that ${P}_{NT} \in \mathcal{P}_{NT}$ for each $(N,T)$. The dependence on $(N,T)$ and ${P}_{NT}$ will be suppressed whenever clear in the context. $\Vert . \Vert$ is the Euclidean (Frobenius) norm for a matrix. Let $\mathbf{x}$ be a generic $k\times 1$ real vector, then the $l^q$ norm is denoted as  $\Vert \mathbf{x} \Vert_q := \left(\sum_{j=1}^k x_{j}^q \right)^{1/q}$ for $1\leq q < \infty$; $\Vert \mathbf{x} \Vert_{\infty} := \max_{1\leq j\leq k} |x_j|$. The $L^q({P})$ norm is denoted as  $\Vert f  \Vert_{{P},q} := \left(\int \Vert f(\omega)\Vert^q d {P}(\omega)\right)^{1/q} $ where $f$ is a random element with probability law ${P}$. I denote the empirical average of $f_{it}$ over $i=1,...,N$ and $t=1,...,T$ as $\mathbb{E}_{NT}[f_{it}] = \frac{1}{NT} \sum_{i=1}^N \sum_{t=1}^T f_{it}$ and the empirical $L^2$ norm as $ \left\Vert f_{it} \right\Vert_{NT,2} = \left(\frac{1}{NT} \sum_{i=1}^N\sum_{t=1}^T \Vert f_{it}\Vert ^2\right)^{1/2}$. Correspondingly, I denote the empirical average of $f_{it}$ over the sub-sample $i\in I_k$ and $t\in S_l$ as $\mathbb{E}_{kl}[f_{it}] = \frac{1}{N_k T_l} \sum_{i\in I_k, t\in S_l} f_{it}$ and the empirical $L^2$ norm over the subsample as $ \left\Vert f_{it} \right\Vert_{kl,2} = \left(\frac{1}{N_kT_l} \sum_{i\in N_l}\sum_{t\in T_l} \Vert f_{it}\Vert ^2\right)^{1/2}$, where $I_k,S_l$ are sub-sample index sets and $N_k,T_l$ are sub-sample sizes that will be introduced next section.

\section{Two-Way Cluster-LASSO} \label{tw_cluster_lasso}

In the existing literature, not much is known in terms of statistical properties for high-dimensional methods under cluster dependence in both cross-section and time. In this section, a variant of the $l1$-regularization methods, also known as the LASSO, is proposed and examined. To focus on the LASSO approach under two-way dependence, I consider a simple conditional expectation model of a scalar outcome given a potentially high-dimensional vector of covariates. Let $(Y_{it},X_{it})$ be a sample with $i=1,...,N$ and $t=1,...,T$. The conditional expectation model can be expressed as follows:
\begin{align}
    Y_{it} = f(X_{it}) + V_{it}, \ {\rm E}[V_{it}|X_{it}]=0 \label{model_lasso}
\end{align}
where $f(X_{it}):= {\rm E}[Y_{it}|X_{it}]$ is an unknown conditional expectation function of potentially high-dimensional covariates $X_{it}$; $V_{it}$ is the associated stochastic error. 

To characterize the two-way cluster dependence in the panel, I assume the random elements $W_{it}:=(Y_{it},X_{it},V_{it})$ are generated by the following process: 
\begin{assumption_ahk}[Aldous-Hoover-Kallenberg Component Structure Characterization]
\label{ahk} 
    \begin{align}
        W_{it}= \mu + f\left({\alpha }_i,{\gamma }_t,{\varepsilon }_{it}\right), \ \ \forall i\geq1, t\geq 1, \label{component}
    \end{align}
        where $\mu = {\rm E}[W_{it}]$, $f$ is some unknown measurable function; $(\alpha_{i})_{i\geq 1}$, $(\gamma_{t})_{t\geq 1}$, and $(\varepsilon_{it})_{i\geq 1, t\geq 1}$ are mutually independent
        sequences, $\alpha_{i}$ is i.i.d across $i$, $\varepsilon_{it}$ is i.i.d across $i$ and $t$. $\{\gamma_t\}_{t\geq 1}$ is strictly stationary and is beta-mixing at a geometric rate: 
    \begin{align}
        \beta_{\gamma}(m) = \sup_{s\leq T} \beta\left(\{\gamma_t\}_{t\leq s},\{\gamma_t\}_{t\geq s+m}\right) \leq c_{\kappa}exp(-\kappa m), \forall m \in \mathbb{Z^+} \label{beta},
    \end{align}
    for some constants $\kappa>0$ and $c_{\kappa}\geq 0$, where $\beta(X,Y) = \frac{1}{2}\Vert {P}_{X,Y}-{P}_{X}\times {P}_{Y} \Vert_{TV}$ and $\Vert . \Vert_{TV}$ denotes the total variation norm. 
\end{assumption_ahk}

Assumption \hyperref[ahk]{AHK} is motivated by a representation theorem for an exchangeable array, named after Aldous-Hoover-Kallenberg (AHK, hereafter), which states that if an array of random variables $(X_{ij})_{i\geq 1, j\geq 1}$ is separately or jointly exchangeable\footnote{An array $(X_{ij})_{i\geq 1, j\geq 1}$ is separately exchangeable if $\left(X_{\pi(i),\pi'(j)}\right)\overset{d}{=} \left(X_{ij}\right)$, and jointly exchangeable if the same condition holds with $\pi=\pi'$.}, then $X_{ij} = f(\xi_i,\zeta_{j},\iota_{ij})$ where $(\xi_i)_{i\geq 1},(\zeta_j)_{j\geq 1},(\iota_{ij})_{i\geq 1, j\geq 1}$ are mutually independent, uniformly distributed i.i.d. random variables. However, the exchangeability is not likely to hold for arrays with the presence of a temporal dimension since it is naturally ordered. In macroeconomics, for instance, we can interpret the time components $(\gamma_{t})_{t\geq 1}$ as unobserved common time shocks, which are naturally correlated over time, implying that the exchangeability is violated. Therefore, by allowing $\gamma_t$ to be correlated, it introduces temporal dependence across all clusters, making the characterization more sensible in the panel data setting. The beta-mixing condition restricts the temporal dependence of the common time effects to decay at an exponential rate, which is common in literature. The relaxation of the independence condition on $(\gamma_{t})_{t\geq 1}$ can be viewed as a generalization of the component structure representation, as argued by \cite{CHS_Restat}. It is important to note that the components in \ref{component} simply characterize the dependence in panel data. Differing from factor models or models with unobserved heterogeneity, they do not affect the identification of the regression model.

Due to the potential high dimensionality in $X$, traditional nonparametric methods are not appropriate for estimating the unknown function $f$ due to the curse of dimensionality. A common approach to reduce the dimensionality is by taking into account the sparsity information in models through regularization. Although the unknown function $f$ is an infinite-dimensional parameter, which is not exactly sparse,  we can view it from a sparse approximation perspective, following \cite{belloni2012sparse}:

\begin{assumption_asm}[Approximate Sparse Model]
\label{asm}
    The unknown function $f$ can be well-approximated by a dictionary of transformations $f_{it} = F(X_{it})$ where $f_{it}$ is a $p\times 1$ vector and $F$ is a measurable map, such that 
    \begin{align*}
        f(X_{it}) = f_{it}' \zeta_0 + r_{it} 
    \end{align*}
    where the coefficients $\zeta_0$ and the approximation error $r_{it}$ satisfy
    \begin{align*}
        \Vert \zeta_0 \Vert_0 \leq s = o(N\wedge T), \ \Vert r_{it}\Vert_{NT,2} =O_P\left(\sqrt{\frac{s}{N\wedge T}}\right).
    \end{align*}
\end{assumption_asm}
Assumption \hyperref[asm]{ASM} views the high-dimensional linear regression as an approximation. It requires a subset of the parameters $\zeta_0$ to be zero while controlling the size of the approximation error. Compared to the sparsity condition in previous literature, here it imposes a slower rate of growth restriction on the non-zero slope coefficients. For example, $s = o(NT)$ corresponds to the case of heteroskedasticity-robust LASSO under i.i.d data in \cite{belloni2012sparse}; $s=(Nl_T)$ corresponds to the cluster-robust LASSO under temporal dependence panel data in \cite{belloni2016inference} where $l_T\in [1,T]$ is an information index that equals T when there is no temporal dependence and equals 1 when there is cross-sectional independence and perfect temporal dependence. In other words, the underlying component structure restricts the growth of nonzero slope coefficients of the model in a way similar to the perfect temporal-dependence case. 

Under Assumption \hyperref[asm]{ASM}, we can rewrite the model \ref{model_lasso} as
\begin{align}
       Y_{it} = f_{it}'\zeta_0 + r_{it} + V_{it}, \ {\rm E}[V_{it}|X_{it}] =0. \label{model_lasso_approx}
\end{align}
We then apply $l1$ regularization in the least squared error problem under the model \ref{model_lasso_approx}. Let $\lambda$ be some non-negative common penalty level and $\omega$ be some non-negative $p\times p$ diagonal matrix of regressor-specific penalty weights. Consider the following generic weighted LASSO estimator:
\begin{align}
     \widehat{\zeta} = \arg\min_{\zeta} \frac{1}{NT}\sum_{i=1}^N \sum_{t=1}^T (Y_{it} - f_{it}'\zeta)^2 + \frac{\lambda}{NT} \Vert {\omega}\zeta \Vert_1. \label{lasso_main}
\end{align}

We note that the level of penalty term is determined by both $\lambda$ and $\omega$. A large penalty level can reduce avoid noisy estimation due to overfitting but meanwhile too large a penalty level can cause under selection which introduces missing variable bias. This is clearly a tradeoff between overfitting variance and regularization bias. Therefore, to obtain the desirable property of LASSO estimation, $\lambda$ and $\omega$ need to be determined jointly both in theory and in practice. A common choice of $\omega$ is $\mathbb{E}_{NT}[\dot{f}_{it}'\dot{f}_{it}]$ where $\dot{f}_{it}$ be the demeaned $f_{it}$ using the sample mean\footnote{The demeaning is done because of the inclusion of the intercept term which is not penalized.}. With this choice of $\omega$, regressors are standardized, so the model selection is not affected by the scale of the regressors. In theory, given $\omega$, $\lambda$ needs to be chosen in a way that the following event happens with high probability:
\begin{align}
  \max_{j=1,...,p}  \left|\frac{1}{NT}\sum_{i=1}^N\sum_{t=1}^T{\omega}_{j}^{-1}f_{it,j}V_{it}\right| \leq \frac{\lambda}{2C_\lambda NT}. \label{regevent}
\end{align}
where $C_\lambda>1$ is some unknown constant. Condition \ref{regevent} is referred to as the ``regularization event'' in the literature. If the error term $V_{it}$ is conditionally Gaussian or sub-Gaussian, then the Gaussian tail inequality implies an asymptotic order for $\lambda$. To enable applications in broader settings such as asymmetry and heavy tails of the error term, recent literature has investigated in new tools that ensures validity of LASSO with weaker conditions. Particularly, for panel data models, \cite{babii2023machine} derives a Fuk-Nagaev type concentration inequality for panel data which ensures the validity of LASSO under non-Gaussianity and weak temporal dependence. Under functional dependence measure, {\cite{gao2024robust} establish new Fuk-Nagaev type concentration inequality and \cite{chernozhukov2021lasso} extend Gaussian approximation results to ensure LASSO validity with the presence of temporal and cross-sectional dependence. In practice, however, all approaches mentioned above require further estimation of data-driven tuning for the penalty level using cross-validation, information criterion, or bootstrap. These tuning methods are usually computationally costly and hard to justify in theory, and they are further complicated by the temporal and cross-sectional dependence in panel \footnote{\cite{chernozhukov2021lasso} do provide theoretical justification for choosing $\lambda$ using the multiplier block bootstrap method.}. Furthermore, the restriction imposed for the functional dependence measure in \cite{chernozhukov2021lasso} and \cite{gao2024robust} excludes the two-way cluster dependence considered here as we allow the dependence to not decay over cross-sectional or time.

\cite{belloni2012sparse} view $\omega$ as a self-normalizer and leverage moderate deviation theorems for the self-normalized sums to choose the penalty level that ensures Condition \ref{regevent}. While this approach does not require extra data tuning once $\omega$ is chosen properly according to the theory, the validity of the approach is restricted: existing moderate deviation theorems only work for independent or weakly dependent random variables. For panel data with cross-sectional independence, one can cluster within each cross-sectional unit or construct temporal blocks that are approximately independent, but there is no existing moderate deviation theorem with self-normalizer that works for two-way dependence. Instead, I utilize the component structure characterization of the dependence and consider a Hoeffding-type decomposition of the high-dimensional mean-zero error term $f_{it}'V_{it}$: $$a_i = {\rm E}[f_{it}'V_{it}|\alpha_i], \ \  g_t = {\rm E}[f_{it}'V_{it}|\gamma_t], \ \ e_{it} = f_{it}'V_{it} - a_i - g_t.$$ 
The goal is to design penalty weights that account for the randomness from all three components and, importantly, being adaptive for both non-degenerate and degenerate cases.

To illustrate, let's first define a generic three-term regressor-specific penalty weight as follows:
\begin{align}
    \omega_{j}  = 
    \sqrt{\omega_{a,j}^2+\omega_{g,j}^2+\omega_{e,j}^2}  \label{penaltyweights}.
\end{align}
where $\omega_{a,j},\ \omega_{g,j}$, and $\omega_{e,j} $ are non-negative weights correspond to the three components. Let  $a_{i,j}$ be the entries of $a_i$ for $j=1,...,p$; $g_{t,j}$ and $e_{it,j}$ are defined similarly. To pin down an appropriate choice of penalty weights, we note that the following inequality holds,  for each $j=1,...,p$,
\begin{align}
 &P\left( \left|\frac{1}{NT}\sum_{i=1}^N\sum_{t=1}^T{\omega}_{j}^{-1}f_{it,j}V_{it}\right|>z\right)= P\left(\left|\frac{{\rm E}_{N}[a_{i,j}] +{\rm E}_{T}[g_{t,j}] + {\rm E}_{NT}[e_{it,j}] }{\sqrt{\omega_{a,j}^2+\omega_{g,j}^2+\omega_{e,j}^2} }\right| >z\right) \nonumber \\ 
\leq & P\left( \left|\frac{{\rm E}_{N}[a_{i,j}]}{\omega_{a,j}}\right| >\frac{z}{c_\omega}\right) + P\left( \left|\frac{{\rm E}_{T}[g_{t,j}]  }{\omega_{g,j} }\right| >\frac{z}{c_\omega}\right) +P\left( \left|\frac{ {\rm E}_{NT}[e_{it,j}] }{\omega_{e,j} }\right| >\frac{z}{c_\omega}\right) \label{three_ineq}
\end{align}
where $c_\omega:= \frac{\omega_{a,j}+\omega_{g,j}+\omega_{e,j}}{\sqrt{\omega_{a,j}^2+\omega_{g,j}^2+\omega_{e,j}^2}}\in \left[1,\sqrt{3}\right]$ is a scaling constant. To see why \ref{three_ineq} holds, we note that for any positive numbers $A,B,C$, $a,b,c$ and $d=\frac{a+b+c}{\sqrt{a^2+b^2+c^2}}$, we have
$$\left\{\frac{A+B+C}{\sqrt{a^2+b^2+c^2}}>z\right\} \subseteq \left\{\frac{A}{a}> \frac{z}{d} \right\} \cup \left\{\frac{B}{b}>\frac{z}{d}\right\} \cup \left\{\frac{C}{c}>\frac{z}{d}\right\}.$$ 
To prove it, suppose $\neg \left\{\left\{\frac{A}{a}>\frac{z}{d}\right\} \cup \left\{\frac{B}{b}>\frac{z}{d}\right\} \cup \left\{\frac{C}{c}>\frac{z}{d}\right\}\right\}$, which implies $\left\{\frac{A}{a}\leq \frac{z}{d}\right\} \cap \left\{\frac{B}{b}\leq \frac{z}{d}\right\} \cap \left\{\frac{C}{c}\leq \frac{z}{d}\right\}$. Then, we have $\frac{A+B+C}{\sqrt{a+b+c}} = \frac{A}{a} \frac{a}{\sqrt{a+b+c}} + \frac{B}{b} \frac{b}{\sqrt{a+b+c}} + \frac{C}{c} \frac{c}{\sqrt{a+b+c}} \leq \frac{z}{d} d =z $, which is $\neg \left\{\frac{A+B+C}{\sqrt{a+b+c}}>z\right\} $, so it is proved by contrapositive.

Inequality \ref{three_ineq} shows that it suffices to consider the tail probability of each component when choosing the penalty level. This greatly simplifies the problem since each of the components possesses much more tractable statistical properties compared to the original error term. It is shown in Appendix that, $a_{i,j}$ is independent over $i$, $\omega_{g,j}$ is weakly dependent over $t$, $e_{it,j}$ is independent conditional on $\{\gamma_t\}$, along with other useful properties. With the observation above, the infeasible regressor-specific penalty weights are proposed as follows:
\begin{align}
    \omega_{a,j}^2= \frac{1}{N}\sum_{i=1}^N a_{i,j}^2, \ \ \omega_{g,j}^2= \frac{N}{T^2} \sum_{b=1}^B\left(\sum_{t\in H_b} g_{t,j}\right)^2, \ \ \omega_{e,j}^2=\frac{1}{NT^2}\sum_{i=1}^N\left(\sum_{t=1}^T e_{it,j}\right)^2 \label{penalty_inf} 
\end{align}
where $B$ is the number of clusters/blocks, $h$ is the block length and $H_b$ is the associated index set. We note that $\omega_{a,j}$ can also be seen as a sample variance estimator (without centering around the sample mean). Furthermore, $\omega_{e,j}$ can be seen as a cluster variance estimator without estimation error, and $\omega_{g,j}$ can be seen as a correlated-cluster variance estimator (e.g., \citealp{bester2008inference}). It may seem natural to use their sample analogs as the feasible penalty weights, by estimating the components and then plugging in. However, it turns out that a direct sample analog of the infeasible penalty weights is not a good idea in certain settings. To see that, let's first focus on the estimation of the component without considering the estimation error in $V_{it}$, since the latter one is not the main source of the problem. For what follows, we consider the component estimators given $V_{it}$: let $v_{it,j}:= f_{it,j}V_{it}$ and define
\begin{align}
    \tilde{a}_{i,j} = \frac{1}{T}\sum_{t=1}^{T} v_{it,j}  , \ \ \ \tilde{g}_{t,j} =\frac{1}{N} \sum_{i=1}^{N} v_{it,j}  , \ \ \
    \tilde{e}_{it,j} = v_{it,j} -\tilde{a}_{i,j}-\tilde{g}_{t,j}. \label{compo_noerror}
\end{align}
Let $\tilde\omega_{a,j}^2$, $\tilde\omega_{g,j}^2$, and $\tilde\omega_{e,j}^2$ be defined as \ref{penalty_inf} with the components replaced by those in \ref{compo_noerror}, and let $\tilde{\omega}_j^2 = \tilde\omega_{a,j}^2 + \tilde\omega_{g,j}^2 + \tilde\omega_{e,j}^2$. We also define the variances $\sigma_{a,j}^2 = E[a_{i,j}^2]$, $\sigma_{g,j}^2 = E[g_{j,t}^2]$, $\sigma_{e,j}^2 = E[e_{it,j}^2]$ and the long-run variances $\Sigma_{g,j} = \sum_{l=-\infty}^\infty E[g_{t,j} g_{t+l,j}]$, $\Sigma_{e,j} = \sum_{l=-\infty}^\infty E[e_{it.j} e_{it+l,j}]$. It can be shown that\footnote{More details can be found in Lemma \ref{weight_valid_1} in Appendix}, when either $a_{i,j}$ or $g_{i,j}$ is non-degenerate, 
\begin{enumerate}
    \item  ${\omega}_{a,j}^2$ and $ \tilde{\omega}_{a,j}^2$ have the same probability limit $\sigma_{a,j}^2$; 
    \item ${\omega}_{g,j}^2$ and $\tilde{\omega}_{g,j}^2$ have the same probability limit $\Sigma_{g,j}$; 
    \item ${\omega}^2_{e,j}$ and $\tilde{\omega}^2_{e,j}$ have the same degenerate probability limit $0$.
\end{enumerate}
In this case, a direct sample analog of the infeasible one in \ref{penalty_inf} could be valid. However, when both $a_{i,j}$ and $g_{i,j}$ are degenerate, e.g. $W_{it}$ is i.i.d. over $i$ and $t$, then $T\tilde{\omega}_{a,j}$ and $T\tilde{\omega}_{g,j}$ are exactly zero, and it can be shown that 
\begin{align*}
    &T\omega_{e,j}^2 \overset{p}{\to}\Sigma_e = \sigma_e^2 \\
   & T(\tilde{\omega}_{a,j}^2 + \tilde{\omega}_{g,j}^2) \overset{p}{\to}  2 \sigma_e^2
\end{align*}
Here we rescale the penalty weights so that their asymptotic limits are not degenerate. This implies that the rescaled penalty weights would converge to the limit that is at least twice larger as the limit of the infeasible penalty weights, making $\tilde{\omega}_j$ too conservative under the degeneracy. Furthermore, we note that the sample analog of ${\omega}_{e,j}^2$ is mechanically downward biased:
\begin{align*}
    \tilde{\omega}_{e,j}^2:=&\frac{1}{NT^2}\sum_{i=1}^N\left(\sum_{t=1}^T \tilde{e}_{it,j}\right)^2 = \frac{1}{N}\sum_{i=1}^N\left(\frac{1}{T}\sum_{t=1}^T \left(v_{it,j} -\frac{1}{T}\sum_{t=1}^{T} v_{it,j} - \frac{1}{N} \sum_{i=1}^{N} v_{it,j} \right)\right)^2 \\
   =& \left(\frac{1}{NT} \sum_{i=1}^{N} \sum_{t=1}^Tv_{it,j} \right)^2  =o_P(1/T) \ \ \text{ under non-degeneracy}\\
   =&o_P(1/NT) \ \ \text{ under degeneracy}
\end{align*}
$ \tilde{\omega}_{e,j}^2$ is actually mechanically zero if we recenter the components when constructing the penalty weights. For similar reasons, the finite sample performance of  $\tilde{\omega}_{g,j}^2$ would not be satisfying when the size of each correlated cluster $H_b$ is chosen too large. 

Due to the aforementioned issues, I propose the following feasible penalty weights: Let $\hat{a}_{i,j}, \hat{g}_{t,j}$, and $\hat{e}_{it,j}$ be defined the same way as \ref{compo_noerror} with $v_{it,j}$ replaced by $\hat v_{it,j} = f_{it,j}\hat V_{it}$ and $\hat V_{it}$ is the residual from some initial estimate. Then, we define
\begin{align}
    \hat{\omega}_j^2 :=& \max(\hat{\omega}_{a,j}^2-\hat{\omega}_{e,j}^2,0) + \max(\hat{\omega}_{g,j}^2-\hat{\omega}_{e,j}^2,0) +  \hat{\omega}_{e,j}^2 \label{feasible_w} \\
    \hat{\omega}_{a,j}^2:= &  \frac{1}{N}\sum_{i=1}^{N} \hat{a}_{i,j}^2= \frac{1}{NT^2} \sum_{i=1}^N\left(\sum_{t=1}^{T} \hat{v}_{it,j}\right)^2 \label{feasible_a} \\
    \hat{\omega}_{g,j}^2:=& \frac{N}{T^2} \sum_{t=1}^T \sum_{s=1}^T k\left(\frac{|t-s|}{M}\right) \hat{g}_{t,j}\hat{g}_{s,j} =\frac{1}{NT^2} \sum_{t=1}^T \sum_{s=1}^T k\left(\frac{|t-s|}{M}\right) \left(\sum_{i=1}^N \hat{v}_{it,j}\right)\left(\sum_{i=1}^N \hat{v}_{is,j}\right) \label{feasible_g} \\
    \hat{\omega}_{e,j}^2:= &  \frac{1}{NT^2}\sum_{i=1}^N\sum_{t=1}^T \sum_{s=1}^T k\left(\frac{|t-s|}{M}\right) \hat{e}_{it,j}\hat{e}_{is,j} \label{feasible_e}
\end{align}
where $ k\left(\frac{|t-s|}{M}\right)$ is a Bartlett kernel and $M$ is a bandwidth with a requirement $M=o(T^{1/4})$. In \cite{menzel2021bootstrap} and \cite{hounyo2025}, similar adjustments are used to determine presence of the components in order for bootstrapping the components. Here these are used as self-normalizers adaptive to both non-degeneracy and degeneracy scenarios. We also note that $\hat{\omega}_{a,j}^2$ has been used as the feasible penalty weights in the cluster-LASSO method of \cite{belloni2016inference} but they don't need the extra subtraction adjustment because $\hat{\omega}_{a,j}^2$ itself is in some sense adaptive under one-way cluster dependence, as hinted by Lemma \ref{weight_valid_1} in Appendix. Additionally, we observe that $\hat{\omega}_{g,j}^2$ is effectively the Driscoll-Kraay variance estimator which accounts for cross-sectional cluster dependence and weak serial dependence, and $\hat{\omega}_{e,j}^2$ is the "average of HACs" estimator that accounts for the serial correlation remaining in $e$. 

Since the error term $v$ is unobserved, we need to first obtain the residual $\hat{v}$ from some initial estimation and then iterate the estimation. The detailed implementation is given as Algorithm 1 in \hyperref[app_0]{Appendix}. The validity of this proposal relies on whether the feasible penalty weights under the iterative estimation converges to the same limit of the infeasible penalty weights. We establish the validity formally in Appendix. For the main theorem below, we maintain a high-level assumption on the feasible penalty weights following \citealp{belloni2012sparse,belloni2016inference}: Let $\widehat{\omega}$ be the feasible diagonal weights. Suppose there exists $0<1/c_1<l\leq 1$ and $1\leq u<\infty$ such that $l\to 1$ and 
\begin{align}
    l \omega_{j} \leq \widehat{\omega}_{j} \leq u \omega_{j}, {\rm \ uniformly \ over \ j=1,...,p}, \label{omegacon}
\end{align}
where $\{\omega_j\}$ and $\{\widehat{\omega}_{j}\}$ are diagonal entries of $\omega$ and $\widehat{\omega}$, respectively.  

Given this choice of penalty weights above, the common penalty level $\lambda$ is pinned down as
\begin{align}
    \lambda & = 2C_\lambda\sqrt{N}T\Phi^{-1}\left(1-\frac{\gamma}{2p}\right), \label{penaltylevel} 
\end{align}
where $C_\lambda = c_1c_\omega>1$; $\gamma$ is a small order sequence. Other than $C_\lambda$ and $\gamma$, there are no unknown tuning parameters in this weighted LASSO estimation. In practice, $C_\lambda$ is chosen as a constant close to 1 and $\gamma = \alpha/\log( N \vee T)$ with $\alpha$ taken as a significance level\footnote{For example, $\alpha=0.1$ is more liberal than $\alpha=0.05$. The choices of $C_\lambda$ and $\gamma$ around the proposed practical choices do not matter much in finite sample.}. The order of $\gamma$ affects the convergence rate of the LASSO estimator: the theory only requires $\gamma=o(1)$ for LASSO to be consistent, but a faster rate of decay of $\gamma$ will result in a slower convergence rate of LASSO.

The key identification condition is given as follows. In the low-dimensional case, the identification in a linear regression is given by the non-singularity of ${\rm E}[f_{it}' f_{it}]$, which implies its empirical counterpart $\mathbb{E}_{NT} [f_{it}'f_{it}]$ is non-singular with high probability. However, as the dimension of $f_{it}$ grows larger than the sample size, $\mathbb{E}_{NT} [f_{it}'f_{it}]$ is singular almost surely. Fortunately, it turns out that under sparsity and $L1$ regularization, we only need certain sub-matrices to be well-behaved for identification. Define
  \begin{align*}
        \phi_{\rm min}(m)(M_f):= \min_{\delta\in \Delta(m)}\delta'M_f\delta \ {\rm and} \ \phi_{\rm max}(Cs)(M_f) := \max_{\delta\in \Delta(m)}\delta'M_f\delta,
  \end{align*}
where $\Delta(m)=\{\delta: \Vert\delta\Vert_0 = m, \Vert\delta\Vert_2=1\}$ and $ M_f = \mathbb{E}_{NT} [f_{it}'f_{it}]$.

\begin{assumption_se}[Sparse Eigenvalues]
    \label{sparse_eigen} For any $C>0$, there exists constants $0<\kappa_1<\kappa_2<\infty$ such that with probability approaching one, as $(N,T)\to\infty$ jointly, $\kappa_1\leq \phi_{\rm min}(Cs)(M_f)<\phi_{\rm max}(Cs)(M_f) \leq \kappa_2$. 
\end{assumption_se}
The sparse eigenvalue assumption follows from \cite{belloni2012sparse}. It implies a restricted eigenvalue condition, which represents a modulus of continuity between the prediction norm and the norm of $\delta$ within a restricted set. More primitive sufficient conditions are discussed in \cite{bickel2009simultaneous} and \cite{belloni2012sparse}.

\begin{assumption_reg}[Regularity Conditions]
    \label{regcon} 
    (i) $\log (p/\gamma) = o\left(T^{1/6}/(\log T)^2\right)$. (ii) For some $\mu>1,\delta>0$, $\max_{j\leq p}\emph{E}[\lvert f_{it,j} \rvert^{8(\mu+\delta)}]<\infty$, $\emph{E}[\lvert V_{it} \rvert^{8\mu+\delta)}]<\infty$. (iii) $\emph{E}\left[\left(\sum_{t=1}^Te_{it,j}\right)^2|\{\gamma_t\}_{t=1}^T\right] >0$ almost surely. (iv) For each $j$, either (1) $\emph{E}\left(a_{i,j}^2\right) + \emph{E}\left(g_{t,j}^2\right) >\epsilon$ for some $\epsilon>0$ or (2) $f_{it,j}V_{it}$ is i.i.d over $i,t$. (v) $1\le \max_{j\leq p} \omega_j / \min_{j\leq p} \omega_j  =O(1) $.
\end{assumption_reg}

Assumption \hyperref[regcon]{REG}(i) imposes a restriction on the dimension of $f_{it}$, $p$, while allowing it to be greater than the sample size. The moment conditions in Assumption \hyperref[regcon]{REG}(ii) are common in the literature. \hyperref[regcon]{REG}(iii) is allows for both non-degeneracy and i.i.d case (degeneracy)\footnote{Here the i.i.d is taken as an special case of degeneracy. 
By Hoeffding decomposition, $e_{it}$ can be further decomposed into a term featuring $E[f_{it}V_{it}|\alpha_i,\gamma_t]$ and a residual term. This is mainly relevant when the both $a_i$ and $g_t$ are degenerate and $E[f_{it}V_{it}|\alpha_i,\gamma_t]$ is not, referred to as the non-Gaussian degenerate case discussed in \cite{menzel2021bootstrap}. Also see recently \cite{davezies2025analytic} and \cite{hounyo2025}. This case still remains as a challenge and is a ongoing topic of research in the literature. We will focus on the Gaussian non-degenerate and degenerate cases.}. 

A common way to mitigate the shrinkage bias of LASSO is to apply least square estimation based on the selected model by LASSO, which is named Post-LASSO. We denote the index set of selected regressors as $\widehat{\Gamma} = \{j\in {1,...,p}: |\widehat{\zeta}_j|>0\}$ where $\widehat{\zeta}_j$ are two-way LASSO estimates. The next theorem gives convergence rate results for both two-way cluster-LASSO and its associated Post-LASSO.

\begin{theorem} \label{performance_bound}
Suppose Assumptions \hyperref[ahk]{AHK}, \hyperref[asm]{ASM}, \hyperref[regcon]{REG} hold for model \ref{model_lasso} as $N,T\to\infty$ jointly with $N/T\to c$. Then, by setting $\lambda$ as \ref{penaltylevel} and $\omega_j$ as \ref{penaltyweights} and \ref{penalty_inf}, we have (i) the event \ref{regevent} happens with probability approaching one. Additionally, suppose that Assumption \hyperref[sparse_eigen]{SE} holds and $\widehat{\omega}$ satisfies condition \ref{omegacon}. Let $\widehat{\zeta}$ be the two-way cluster-LASSO estimator or the post-LASSO estimator based on the two-way cluster-LASSO selection. Then, (ii) $ \Vert \widehat{\zeta}\Vert_0= O_P(s)$, and (iii) $ \frac{1}{NT} \sum_{i=1}^N\sum_{t=1}^T\left(f_{it}'\widehat{\zeta}- f_{it}'\zeta_0\right)^2 =  O_P\left( \frac{s\log(p/\gamma)}{l_{NT}}\right)$,  $ \left\Vert \widehat{\zeta} - \zeta_0 \right\Vert_1 =       O_P\left(s\sqrt{\frac{\log (p/\gamma)}{l_{NT}} }\right)$, and $\left\Vert \widehat{\zeta} - \zeta_0 \right\Vert_2 =  O_P\left(\sqrt{\frac{s\log (p/\gamma)}{l_{NT}} }\right)$, where $l_{NT} = N\wedge T$ for the case (1) and $l_{NT} = NT$ for the case (2)  of Assumption \hyperref[regcon]{REG}(iv).
\end{theorem}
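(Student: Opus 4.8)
\section*{Proof proposal}

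The plan is to follow the standard oracle-inequality program for weighted LASSO (as in \citealp{belloni2012sparse,belloni2016inference}), for which the only genuinely new ingredient is part (i): controlling the regularization event \ref{regevent} under two-way cluster dependence. Once (i) is in hand, parts (ii) and (iii) follow from the sparse-eigenvalue machinery with only bookkeeping adjustments to track the effective sample size $l_{NT}$; I would therefore concentrate the effort on (i). Throughout I use that $N/T\to c$ forces $N\asymp T\asymp N\wedge T$.

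For (i), the starting point is the Hoeffding-type decomposition $f_{it,j}V_{it}=a_{i,j}+g_{t,j}+e_{it,j}$ (the mean vanishes because ${\rm E}[V_{it}|X_{it}]=0$), so the normalized score splits as $\frac{1}{NT}\sum_{i,t}\omega_j^{-1}f_{it,j}V_{it}=\omega_j^{-1}\big({\rm E}_N[a_{i,j}]+{\rm E}_T[g_{t,j}]+{\rm E}_{NT}[e_{it,j}]\big)$. Applying the elementary union bound \ref{three_ineq} reduces the problem to bounding each of the three component averages against its own weight. The weights \ref{penalty_inf} are engineered so that each ratio becomes $N^{-1/2}$ times a \emph{self-normalized sum}: the $a$-term equals $N^{-1/2}\,\sum_i a_{i,j}/\sqrt{\sum_i a_{i,j}^2}$ over the $i$-independent summands; writing $G_{b,j}=\sum_{t\in H_b}g_{t,j}$, the $g$-term equals $N^{-1/2}\,\sum_b G_{b,j}/\sqrt{\sum_b G_{b,j}^2}$, a self-normalized sum over the $B$ block sums; and with $\bar e_{i,j}=\sum_t e_{it,j}$ the $e$-term equals $N^{-1/2}\,\sum_i \bar e_{i,j}/\sqrt{\sum_i \bar e_{i,j}^2}$, whose summands are independent across $i$ conditional on $\{\gamma_t\}$. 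I would then invoke a self-normalized moderate deviation theorem on each: the Jing--Shao--Wang bound for the i.i.d.\ $a$-term; the same bound applied \emph{conditionally} on $\{\gamma_t\}$ for the $e$-term, integrating out afterward and using \hyperref[regcon]{REG}(iii) to keep the conditional normalizer nondegenerate; and a blocking/coupling step for the $g$-term. With $\lambda$ as \ref{penaltylevel}, the threshold is $z=\Phi^{-1}(1-\gamma/(2p))/\sqrt N$, each component tail is of order $\gamma/p$, and the union over $j=1,\dots,p$ and the three pieces gives total probability $O(\gamma)=o(1)$. Assumption \hyperref[regcon]{REG}(ii) supplies the moments for the moderate-deviation constants, and \hyperref[regcon]{REG}(i) guarantees the deviation range $\sqrt{\log(p/\gamma)}=o(B^{1/6})$ (equivalently $o(T^{1/12})$ for block length $\asymp\sqrt T$, whence the $T^{1/6}/(\log T)^2$ form) is respected.

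The main obstacle is the $g$-term, since no self-normalized moderate deviation theorem is available off the shelf for the $\beta$-mixing, block-normalized sum $\sum_b G_{b,j}/\sqrt{\sum_b G_{b,j}^2}$. Here I would exploit the geometric $\beta$-mixing of \ref{beta}: insert small gaps of length $O(\log T)$ between the big blocks $H_b$ and use Berbee's coupling lemma to replace $\{G_{b,j}\}_b$ by an independent sequence with coupling error bounded by $B\,\beta_\gamma(\text{gap})$, which is polynomially negligible. The i.i.d.\ moderate-deviation bound then applies to the coupled blocks, while a separate concentration argument shows the block normalizer $\frac{N}{T^2}\sum_b G_{b,j}^2$ settles on the long-run variance $\Sigma_{g,j}$; the interplay of block length, gap length, and bandwidth $M=o(T^{1/4})$ is what pins down \hyperref[regcon]{REG}(i). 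The degenerate branch \hyperref[regcon]{REG}(iv)(2) is treated separately: when $f_{it,j}V_{it}$ is i.i.d.\ over $(i,t)$ the weights $\omega_{a,j},\omega_{g,j}$ vanish and $\omega_j=\omega_{e,j}=O_P(T^{-1/2})$, so a single i.i.d.\ self-normalized sum effectively over all $NT$ terms governs the score, which is the source of the faster effective sample size $l_{NT}=NT$.

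For (ii) and (iii), I would first transfer the event to the feasible weights via \ref{omegacon}: for each fixed $j$ the feasible normalized score equals $(\omega_j/\widehat\omega_j)$ times the infeasible one, and $\omega_j/\widehat\omega_j\le 1/l<c_1$, so the event persists with the constant $C_\lambda=c_1c_\omega$. On that event, Assumption \hyperref[sparse_eigen]{SE} supplies the restricted-eigenvalue condition (with \hyperref[regcon]{REG}(v) keeping the weight ratio $O(1)$), and the standard sparsity argument---bounding the cardinality of $\widehat{\Gamma}$ through $\phi_{\rm max}(Cs)$---yields $\Vert\widehat\zeta\Vert_0=O_P(s)$, which carries over to post-LASSO since it refits on $\widehat{\Gamma}$. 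The oracle inequality then gives the prediction norm $\Vert f_{it}'(\widehat\zeta-\zeta_0)\Vert_{NT,2}=O_P(\sqrt s\,\tau/\kappa_1)$ with $\tau=\max_j \lambda\omega_j/(2C_\lambda NT)$; substituting $\lambda\asymp\sqrt N\,T\sqrt{\log(p/\gamma)}$ and the order of $\omega_j$---namely $O_P(1)$ under \hyperref[regcon]{REG}(iv)(1) and $O_P(T^{-1/2})$ under (iv)(2)---turns $\tau$ into $\sqrt{\log(p/\gamma)/l_{NT}}$ with $l_{NT}=N\wedge T$ and $l_{NT}=NT$, respectively, so the squared prediction norm is $O_P(s\log(p/\gamma)/l_{NT})$ as claimed. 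The approximation error $\Vert r_{it}\Vert_{NT,2}=O_P(\sqrt{s/(N\wedge T)})$ from \hyperref[asm]{ASM} is of matching order and does not dominate. The $\Vert\widehat\zeta-\zeta_0\Vert_1$ and $\Vert\widehat\zeta-\zeta_0\Vert_2$ bounds follow from the prediction-norm bound and the restricted-eigenvalue inequality in the usual way, and the post-LASSO rates coincide because \hyperref[sparse_eigen]{SE} and the support bound guarantee a well-conditioned refit with negligible additional bias.
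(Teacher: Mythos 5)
Your proposal is correct and its skeleton coincides with the paper's own proof: the Hoeffding-type decomposition, the union bound \ref{three_ineq}, self-normalized moderate deviations applied component by component --- Theorem 7.4 of \cite{pena2009self} for the $a$-term, and its conditional version given $\{\gamma_t\}_{t=1}^T$ (integrated out afterward, with \hyperref[regcon]{REG}(iii) keeping the conditional normalizer nondegenerate) for the $e$-term --- followed by the weighted-LASSO oracle machinery of \cite{belloni2012sparse} (their Lemmas 6--10) for parts (ii)--(iii), with the feasible-weight transfer via \ref{omegacon} and the $\ell_2$ bound extracted from the prediction norm through Assumption \hyperref[sparse_eigen]{SE}, exactly as you describe. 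The one genuine divergence is the $g$-term: contrary to your premise that no off-the-shelf result exists, the paper invokes an existing self-normalized moderate deviation theorem for $\beta$-mixing sequences (Theorem 3.2 of \citealp{gao2022refined}), applied directly to the block-normalized statistic $\sum_{t}g_{t,j}/\bigl(\sum_b G_{b,j}^2\bigr)^{1/2}$, with $G_{b,j}=\sum_{t\in H_b}g_{t,j}$, which is precisely the statistic the weight \ref{penalty_inf} is engineered to produce; its admissible range $x=o\bigl((\log T)^{-1/2}T^{1/12}\bigr)$ is exactly where \hyperref[regcon]{REG}(i) comes from. Your big-block/small-block Berbee-coupling construction is a viable substitute (it is essentially how such dependent MDTs are proved), but it carries costs the citation route avoids: since \ref{penalty_inf} defines $\omega_{g,j}$ on a gapless partition, you must either redefine the normalizer around the gaps or show that the small-block contributions are negligible at the moderate-deviation scale, i.e., at tail level $\gamma/p$ uniformly over the $p$ coordinates, which requires its own exponential-type inequality rather than an in-probability argument; on the other hand, the concentration of the block normalizer onto $\Sigma_{g,j}$ that you mention is not actually needed --- self-normalization is doing precisely that job, and only moment-ratio and variance-growth conditions enter. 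What each route buys: the paper's is shorter and delivers constants and range matching \hyperref[regcon]{REG}(i) directly; yours is self-contained and independent of the exact scope of the external theorem. Finally, your degenerate-case treatment (keep $\lambda$ as \ref{penaltylevel} and let $\omega_j=\omega_{e,j}=O_P(T^{-1/2})$ carry the $\sqrt{NT}$ scaling) is equivalent to the paper's device of rescaling both the weights and the penalty level by $\sqrt{T}$, and is arguably tidier since it matches the theorem statement as written.
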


Theorem \ref{performance_bound} establishes convergence rates in terms of the prediction, $l1$, and $l2$ norms for the (post) two-way cluster-LASSO estimator in an approximately sparse model. These results are the first that give convergence rates for a LASSO-based estimator allowing for two-way cluster dependence. It is shown that under the two-way cluster dependence, the two-way cluster-LASSO is consistent but, under the non-degenerate case, has a convergence rate slower than those of LASSO-based methods under the random sampling condition or weak dependence. Without loss of generality, let $N = N \wedge T$, then by choosing $\gamma$ according to $\log(1/\gamma) \simeq \log(p\vee N)$, we have $\left\Vert \widehat{\zeta} - \zeta_0 \right\Vert_2= O_P\left(\sqrt{\frac{s\log (p\vee N)}{N} }\right)$. As a comparison, the rate of convergence in terms of the $l2$ norm is $O_P\left(\sqrt{\frac{s\log p }{NT} }\right)$ under the random sampling and the homoskedasticity Gaussian error assumptions in \cite{bickel2009simultaneous} or the heteroskedasticity Gaussian error in Theorem 19.3 of \cite{hansen2022econometrics}, $O_P\left(\sqrt{\frac{s\log (p\vee N) }{NT} }\right)$ under random sampling in \cite{belloni2012sparse}, and $O_P\left(\sqrt{\frac{s\log (p\vee N) }{N I_T} }\right)$ under cross-sectional independence in \cite{belloni2016inference} where the information index $I_T = 1$ when there is perfect dependence within the cross-sectional cluster.

As illustrated in the Introduction, the slow rate of convergence is due to the underlying factor structure. It is unclear if valid inference is possible under the rate of convergence results in Theorem \ref{performance_bound} or if it is possible to relax the requirement through a cross-fitting procedure. These questions are addressed in the next section.

\section{Clustered-Panel Cross-Fitting and Inference}\label{sub_sample}
In this section, I will propose a cross-fitting scheme suitable for clustered panel data. The idea of sample splitting is to split the sample in a proper way and use the sub-samples separately to estimate nuisance parameters and main parameters of interest. If the sub-samples are independent of each other, then the first-step estimation of the nuisance parameters will be independent of the sample used for the second-step estimation. With this property, the error term that causes the bias can vanish with a less stringent rate requirement on the nuisance estimation. Intuitively, the dependence between the two steps is eliminated so that a potentially over-fitted nuisance estimate from the first step does not pollute the second step as much as it would otherwise do. 

Therefore, the goal of the cross-fitting scheme is to split the sample in a proper way so that the two resulting sub-samples are independent or, at least, \textquotedblleft approximately\textquotedblright \ independent. Under the AHK characterization in Assumption \hyperref[ahk]{AHK}, $W_{it}$ are cluster-dependent over both cross-section and time. Importantly, the cluster dependence does not vanish as the distance between observations (if there is any ordering) increases. If $\gamma_t$ is weakly dependent, which is the focus of this paper, then the dependence between observations that don't share the same cluster in either dimension dies out as the temporal distance grows. In that case, intuitively, one can split the sample so that the sub-samples do not share the same cluster and are far apart in temporal distance. This is exactly how this scheme works: 

\begin{definition}[Two-Way Clustered-Panel Cross-Fitting]          \ \ \
\label{panel_CV_def}
     \begin{enumerate}
         \item[(i)] Select some positive integers $(K,L)$. Randomly partition the cross-sectional index set $\{1,2,...,N\}$ into $K$ folds $\{I_1,I_2,...,I_K\}$ and partition the temporal index set $\{1,2,...,T\}$ into $L$ adjacent folds $\{S_1,S_2,...,S_L\}$ so that $\bigcup_{k=1}^K I_k = \{1,...,N\}$, $\bigcup_{l=1}^L S_l = \{1,...,T\}$\footnote{For 
simplicity, I assume $N$ and $T$ are divisible by $K$ and $L$, respectively. In practice, if $N$ is not divisible by $K$, the size for each cross-sectional block can be chosen differently with some length equal to ${\rm floor}(N/K)$ and others equal to ${\rm ceil}(N/K)$. and the same applies to the temporal dimension.}. 
        \item[(ii)] For each $k=1,...,K$ and $l=1,..,L$, construct the main sample $ W(k,l)=\{W_{it}: {i\in I_k, t\in{S_l}}\}$ and the auxiliary sample $ W(-k,-l) = \left\{W_{it}: {i\in \bigcup_{k'\ne k}I_{k'}, t\in \bigcup_{l'\ne l, l\pm 1}S_{l'}}\right\}$.
     \end{enumerate}
\end{definition}

Later on, we also use $I_{-k}$ and $S_{-l}$ to denote the index sets for the auxiliary sample $W(-k,-l)$. Similarly, we denote $N_{-k}$ and $T_{-l}$ as the cross-sectional and temporal sample sizes for the auxiliary sample $W(-k,-l)$. Figure 1 illustrates the cross-fitting with $K=4$ and $L=8$.

 \begin{center}
    \includegraphics[scale=0.25]{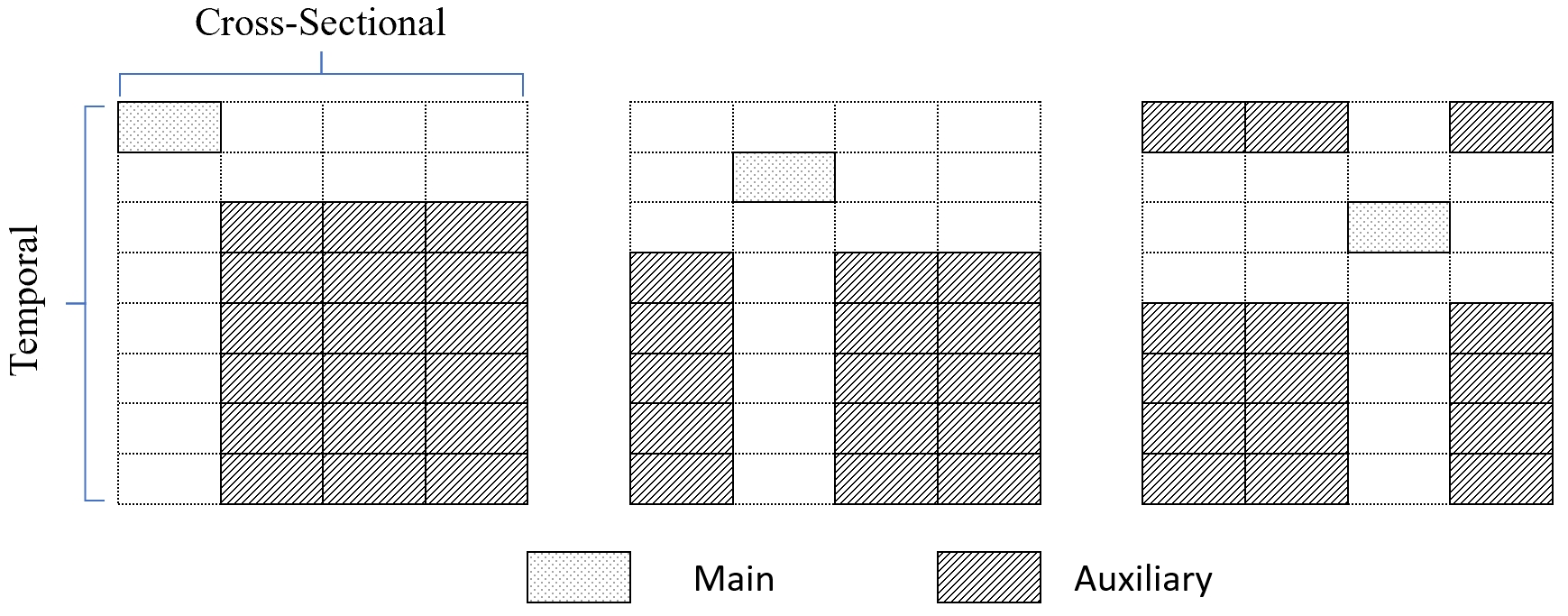}
\end{center}
\vspace{-4mm}
{\footnotesize \textbf{Figure 1}: Clustered-Panel cross-fitting with $K=4$ and $L=8$. Three graphs from left to right correspond to the main and auxiliary sample constructions with $(k,l)=(1,1)$, $(k,l)=(2,2)$, $(k,l)=(3,3)$. For a simple illustration, observations in the main sample are all adjacent in the cross-sectional dimension but it is not necessary in practice; the same applies to the auxiliary sample.} 

Since the sub-samples $W(k,l)$ and $W(-k,-l)$ do not share any cluster, they are free from cluster dependence, and what's left is the weak dependence over time. Unless imposing $m-$dependence, the sub-samples above will not be independent. However, under certain regularity conditions regarding the weak dependence, it can be shown through the coupling technique that as long as the temporal distance between the sub-samples diverges at a certain rate, there exist coupling sub-samples that are independent of each other while having the same marginal distributions as the constructed sub-samples with probability converging to 1. Such a result is provided in Lemma \ref{indep_couple}.

\begin{lemma}[Independent Coupling] \label{indep_couple}
Consider the sub-samples $W(k,l)$ and $W(-k,-l)$ for $k=1,...,K$ and $l=1,...,L$. Suppose Assumption \hyperref[ahk]{AHK} holds and $log(N)/T = o(1)$ as $T\to\infty$. Then, we can construct $\tilde{W}(k,l)$ and $\tilde{W}(-k,-l)$ such that: (i) they are independent of each other; (ii) have the same marginal distribution as $W(k,l)$ and $W(-k,-l)$, respectively; (iii) 
\begin{align*}
    {\rm P}\left\{\left(W(k,l),W(-k,-l)\right)\ne \left(\tilde{W}(k,l),\tilde{W}(-k,-l)\right), \ {\rm for \ some} \ (k,l) \right\} = o(1).
\end{align*}
\end{lemma}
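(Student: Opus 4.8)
\emph{Proof idea.} The plan is to exploit that, under Assumption \hyperref[ahk]{AHK}, the \emph{only} randomness shared between the main sample $W(k,l)$ and the auxiliary sample $W(-k,-l)$ is carried by the common time effects, and then to sever that dependence with a single application of Berbee's coupling lemma, paying a price governed by the $\beta$-mixing coefficient at the temporal buffer width. First I would isolate the shared randomness. Because the cross-sectional folds are disjoint, $I_k\cap I_{-k}=\emptyset$, the components $\{\alpha_i:i\in I_k\}$ and $\{\varepsilon_{it}:i\in I_k,t\in S_l\}$ building $W(k,l)$ carry indices disjoint from $\{\alpha_i:i\in I_{-k}\}$, $\{\varepsilon_{it}:i\in I_{-k},t\in S_{-l}\}$ building $W(-k,-l)$; by the mutual independence and i.i.d.\ structure in \eqref{component} these two collections are independent of each other and of $(\gamma_t)_t$. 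Writing $\Gamma^{\mathrm{in}}:=(\gamma_t)_{t\in S_l}$ and $\Gamma^{\mathrm{out}}:=(\gamma_t)_{t\in S_{-l}}$, I can represent $W(k,l)=\Phi^{\mathrm{in}}(\Gamma^{\mathrm{in}},\eta^{\mathrm{in}})$ and $W(-k,-l)=\Phi^{\mathrm{out}}(\Gamma^{\mathrm{out}},\eta^{\mathrm{out}})$ for measurable maps and independent noise blocks $\eta^{\mathrm{in}},\eta^{\mathrm{out}}$ (the pooled $\alpha$'s and $\varepsilon$'s), with $\eta^{\mathrm{in}},\eta^{\mathrm{out}},(\gamma_t)_t$ mutually independent. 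It therefore suffices to couple $\Gamma^{\mathrm{out}}$ to a copy independent of $(\Gamma^{\mathrm{in}},\eta^{\mathrm{in}},\eta^{\mathrm{out}})$.

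Next I would apply Berbee's lemma on an enlarged space carrying an independent uniform: taking $X=(\Gamma^{\mathrm{in}},\eta^{\mathrm{in}},\eta^{\mathrm{out}})$ and $Y=\Gamma^{\mathrm{out}}$ yields $\tilde\Gamma^{\mathrm{out}}\overset{d}{=}\Gamma^{\mathrm{out}}$ with $\tilde\Gamma^{\mathrm{out}}\ind(\Gamma^{\mathrm{in}},\eta^{\mathrm{in}},\eta^{\mathrm{out}})$ and $\mathrm{P}(\tilde\Gamma^{\mathrm{out}}\ne\Gamma^{\mathrm{out}})\le\beta(\sigma(X),\sigma(Y))$. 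Since $\eta^{\mathrm{in}},\eta^{\mathrm{out}}$ are independent of $(\gamma_t)_t$, adjoining them to one side leaves the coefficient unchanged, so $\beta(\sigma(X),\sigma(Y))=\beta(\sigma(\Gamma^{\mathrm{in}}),\sigma(\Gamma^{\mathrm{out}}))$. Setting $\tilde W(k,l):=W(k,l)$ and $\tilde W(-k,-l):=\Phi^{\mathrm{out}}(\tilde\Gamma^{\mathrm{out}},\eta^{\mathrm{out}})$, a short verification gives (ii) $\tilde W(-k,-l)\overset{d}{=}W(-k,-l)$ (as $\tilde\Gamma^{\mathrm{out}}\ind\eta^{\mathrm{out}}$ and $\tilde\Gamma^{\mathrm{out}}\overset{d}{=}\Gamma^{\mathrm{out}}$) and (i) $\tilde W(k,l)\ind\tilde W(-k,-l)$ (as $(\tilde\Gamma^{\mathrm{out}},\eta^{\mathrm{out}})\ind(\Gamma^{\mathrm{in}},\eta^{\mathrm{in}})$). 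Moreover the two pairs coincide on $\{\tilde\Gamma^{\mathrm{out}}=\Gamma^{\mathrm{out}}\}$, so the per-pair discrepancy probability is at most $\beta(\sigma(\Gamma^{\mathrm{in}}),\sigma(\Gamma^{\mathrm{out}}))$.

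The crux, and what I expect to be the main obstacle, is bounding this \emph{two-sided} mixing coefficient, since $S_{-l}$ sits on both temporal sides of $S_l$. Writing $S_{-l}=L_l\cup R_l$ with $L_l$ to the left and $R_l$ to the right, the dropped buffers $S_{l\pm1}$ guarantee a gap of at least $h:=T/L$ between $S_l$ and each of $L_l,R_l$, and a gap of at least $2h$ between $L_l$ and $R_l$. I would bound $\beta(\sigma(\Gamma^{\mathrm{in}}),\sigma(\Gamma^{\mathrm{out}}))$ by a total-variation triangle inequality, peeling off in turn the right dependence (gap $\ge h$), the left dependence (gap $\ge h$), and the long-range left--right dependence (gap $\ge 2h$); each step is controlled through the one-sided definition \eqref{beta}, yielding $\beta(\sigma(\Gamma^{\mathrm{in}}),\sigma(\Gamma^{\mathrm{out}}))\le 3\,\beta_\gamma(h)\le 3c_\kappa e^{-\kappa T/L}$. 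Reducing a middle-versus-both-sides coefficient to the one-sided rate is exactly where the geometric decay in \eqref{beta} is indispensable.

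Finally I would run the construction with fresh independent randomness for each of the $KL$ pairs and union bound, so that
\[
\mathrm{P}\bigl\{(W(k,l),W(-k,-l))\ne(\tilde W(k,l),\tilde W(-k,-l))\ \text{for some }(k,l)\bigr\}\le 3\,KL\,c_\kappa\,e^{-\kappa T/L}.
\]
Since the folds partition the indices, $K\le N$, and with $L$ fixed the right-hand side is of order $N\,e^{-\kappa T/L}$; taking logarithms, it is $o(1)$ precisely when $\log N-(\kappa/L)T\to-\infty$, which is guaranteed by the maintained condition $\log(N)/T=o(1)$ as $T\to\infty$. This would deliver (iii) and close the argument.
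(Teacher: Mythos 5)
Your proposal is correct, and it reaches the conclusion by a genuinely different route than the paper. The paper applies the independence-coupling lemma (its Lemma A.4, quoted from Semenova and Chernozhukov) directly to the observable pair $(W(k,l),W(-k,-l))$, and then bounds $\beta\bigl(W(k,l),W(-k,-l)\bigr)$ via a subadditivity property of $\beta$-coefficients (its Lemma A.5), decomposing it into a triple sum of unit-pair coefficients $\beta\bigl(\{W_{it}\}_{t\in S_l},\{W_{jt}\}_{t\in S_{l'}}\bigr)$; each such pair has one time block strictly before the other, so the one-sided rate in Assumption AHK applies after a conditioning argument that reduces everything to the $\sigma$-algebras of the $\gamma_t$'s. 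The two-sided geometry you identify as the crux is thus handled implicitly by that block-pair decomposition, whereas you handle it explicitly with the three-term total-variation peeling $\beta(\Gamma^{\mathrm{in}},\Gamma^{\mathrm{out}})\le\beta((L_l,\Gamma^{\mathrm{in}}),R_l)+\beta(L_l,\Gamma^{\mathrm{in}})+\beta(L_l,R_l)\le 3\beta_\gamma(T/L)$, which is valid. What your route buys: coupling the latent time factors once (Berbee applied to $\Gamma^{\mathrm{out}}$ against everything else, then pushing the coupled factors through the AHK representation maps) avoids the unit-level decomposition entirely, so your per-pair bound $3c_\kappa e^{-\kappa T/L}$ carries no $N$-dependent factor — with $(K,L)$ fixed, the hypothesis $\log(N)/T=o(1)$ is not even needed — whereas in the paper's bound $K^2L^2N^2c_\kappa e^{-\kappa T_l}$ that hypothesis is exactly what kills the $N^2$; your argument also sidesteps the $\beta$-subadditivity lemma, which as a statement about general random vectors is delicate (in the paper it is really underwritten by the same conditional-independence-given-$\{\gamma_t\}$ structure that you exploit directly). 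What the paper's route buys: it operates at the level of the observables, needing only the two abstract lemmas rather than your explicit measurable factorization $W(k,l)=\Phi^{\mathrm{in}}(\Gamma^{\mathrm{in}},\eta^{\mathrm{in}})$, $W(-k,-l)=\Phi^{\mathrm{out}}(\Gamma^{\mathrm{out}},\eta^{\mathrm{out}})$ and the accompanying verifications (that adjoining the independent noise blocks leaves the coefficient unchanged, and that independence and marginals survive the push-forward) — all of which you do state and all of which hold under Assumption AHK.
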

The proof of Lemma \ref{indep_couple} is given in \hyperref[app_0]{Appendix}. Lemma \ref{indep_couple} shows that the main and auxiliary samples from the proposed clustered-panel cross-fitting scheme are approximately independent as $N,T$ diverge. Note that the hypothetical samples $\tilde{W}(k,l)$ and $\tilde{W}(-k,-l)$ only serve as technical tools and do not matter in practice. The proof of Lemma \ref{indep_couple} is based on independence coupling results introduced in \cite{Semenova2023}.

It turns out such properties of the cross-fitting scheme are very useful for inference in a high-dimensional panel model. Such procedure is used to remove the dependence between the first and the second steps of estimation so as to relax the rate requirement for the first-step estimation of the high-dimensional nuisance parameters. For what follows, I will define an inferential procedure for high-dimensional panel with a two-step estimator and cluster-robust variance estimator using the proposed cross-fitting scheme in the context of a semi-parametric moment restriction model, extending the DML approach in \cite{Chernozhukov2018} to a panel setting. 

Let $\varphi(W_{it};\theta,\eta)$ be some identifying moment functions where $\theta$ is a low-dimensional vector of parameters of interest and $\eta$ are nuisance functions. For example, $\eta = g_0$ in \ref{example}. Let $\psi(W_{it};\theta,\eta)$ be the orthogonalized moment function with the following properties:
\begin{align}
    {\rm E}[\psi(W_{it};\theta_0,\eta_0)] &= 0, \label{neyman1} \\
    \partial_{r}{\rm E}\left[\psi\left(W_{it};\theta_0,\eta_0+r(\eta-\eta_0)\right)\right]|_{r=0}&=0. \label{neyman2}
\end{align}
Essentially, $\psi(W_{it};\theta,\eta)$ is adjusted for the fact that $\eta_0$ needs to be estimated and, as a result, the nuisance functions have no first-order effect locally on the orthogonalized moment conditions, based on which the estimation of $\theta_0$ is therefore robust to the plug-in of noisy estimates of $\gamma_0$. In contrast, the original identifying moment conditions do not possess such a property. In model \ref{example}, $\varphi(W_{it};\theta,\eta) = D_{it} U_{it}$ and   $\psi(W_{it};\theta,\eta) = \left(D_{it} - {\rm E}[D_{it}|X_{it},c_i,d_t]\right) \left(Y_{it}-D_{it}\theta - g(X_{it},c_i,d_t) \right)$. In the treatment effect model with unconfoundedness conditional on covariates and unobserved heterogeneous effects, $\varphi(W_{it};\theta,\eta) =   {\rm E}[Y_{it}|D_{it}=1,X_{it},c_i,d_t]-{\rm E}[Y_{it}|D_{it}=0,X_{it},c_i,d_t] - \theta^{\rm ATE}$ and $\psi(W_{it};\theta,\eta)$ is the moment function corresponding to the well-known augmented inverse probability weighting estimator, which is doubly robust. 

Due to the presence of unit and time components, I consider a two-way cluster robust variance estimator similar to \cite{CHS_Restat} (CHS estimator) with adjustment due to cross-fitting. The variance estimator is motivated under arbitrary dependence in panel data and is shown to be robust to two-way clustering with correlated time effects in linear panel models. \cite{chen2024fixed} show that such variance estimator can be written as an affine combination of three well-known robust variance estimators: Liang-Zeger-Arellano estimator, Driscoll-Kraay estimator, and the "average of HACs" estimator, and they propose a two-term variant named as, DKA estimator, for better finite sample performance\footnote{\cite{chen2024fixed} also consider fixed-b asymptotic approximation and bias-correction as a function of the bandwidth $M$. It is not considered in this paper since the statistical property of such adjustment is not clear in this setting.}. We define the CHS and DKA variance estimator with cross-fitting adjustment as follows: 
\begin{alignat}{2}
     \widehat{V}_{\rm CHS} &= \widehat{\bar{A}}^{-1}\widehat{\Omega}_{\rm CHS} \widehat{\bar{A}}^{-1'}, \ \widehat{\Omega}_{\rm CHS} &&= \widehat{\Omega}_{\rm A} + \widehat{\Omega}_{\rm DK} - \widehat{\Omega}_{\rm NW}, \label{chs_cf} \\
    \widehat{V}_{\rm DKA} &= \widehat{\bar{A}}^{-1}\widehat{\Omega}_{\rm DKA} \widehat{\bar{A}}^{-1'}, \ \widehat{\Omega}_{\rm DKA} &&= \widehat{\Omega}_{\rm A} + \widehat{\Omega}_{\rm DK} \label{dka_cf}
\end{alignat}
where $ \widehat{\bar{A}}:= \frac{1}{KL} \sum_{k=1}^K\sum_{l=1}^L \frac{1}{N_k T_l}  \sum_{i\in I_k,s\in S_l}\psi^a({W_{it};\widehat{\eta}_{kl}})$ and, with $k\left(  \frac{m}{M_l}\right):= 1-\frac{m}{M_l}$ for $m=0,1,...,M_l-1$ and 0 otherwise, for some $M_l\in [1,T_l]$,
\begin{align*}
{\widehat{\Omega}}_{\rm A}:=  &  \frac{1}{KL}\sum\limits_{k=1}^{K}%
\sum\limits_{l=1}^{L} \frac{1}{N_k T_l^2}\sum\limits_{i\in I_k, t\in S_l, r\in S_l} \psi(W_{it};\widehat{\theta},\widehat{\eta}_{kl})%
\psi(W_{ir};\widehat{\theta},\widehat{\eta}_{kl})'  , \\
{\widehat{\Omega}}_{\rm DK}:=  & \frac{1}{KL}\sum\limits_{k=1}^{K}%
\sum\limits_{l=1}^{L} \frac{K/L}{N_k T_l^2} \sum\limits_{t\in S_l, r\in S_l}k\left(
\frac{\left\vert t-r\right\vert }{M}\right)   \sum\limits_{i\in I_k, j\in I_k} \psi(W_{it};\widehat{\theta},\widehat{\eta}_{kl})\psi_{jr}(\widehat{\theta},\widehat{\eta}_{kl})'  , \\
{\widehat{\Omega}}_{\rm NW}:= &  \frac{1}{KL}\sum\limits_{k=1}^{K}
\sum\limits_{l=1}^{L} \frac{K/L}{N_k T_l^2} \sum\limits_{i\in I_k, t\in S_l, r\in S_l}k\left(  \frac{\left\vert t-r\right\vert }{M}\right)
\psi(W_{it};\widehat{\theta},\widehat{\eta}_{kl})\psi(W_{ir};\widehat{\theta},\widehat{\eta}_{kl})'.
\end{align*}

The next definition summarizes the panel DML estimation and inference procedures for a semiparametric moment restriction model: 
\begin{definition}[Panel DML Algorithm]          \ \ \
\label{panel_DML_def}
     \begin{enumerate}
         \item[(i)] Given the identifying moment functions $\varphi(W;\theta,\eta)$ such that ${\rm E}_[\varphi(W;\theta_0,\eta_0)]=0$, find the orthogonalized moment function $\psi(W,\theta, \eta)$.
         \item[(ii)] Obtain cross-fitting sub-samples $W(k,l)$ and $W(-k,-l)$ as in Definition \ref{panel_CV_def}. 
         \item[(iii)] For each $k$ and $l$, use the sample $W(-k,-l)$ for the first-step estimation and obtain $\widehat{\eta}_{kl}$, then construct $\overline{\psi}_{kl}\left(\theta\right)=\mathbb{E}_{kl}[\psi(W_{it};\theta,\widehat{\eta}_{kl})]$ using $W(k,l)$ for each $(k,l)$, where $\mathbb{E}_{kl}[.]$ is defined in \hyperref[notation_sec]{Notation}. 
         \item[(iv)] Obtain the DML estimator $\widehat{\theta}$ as the solution to 
            \begin{align}
                \frac{1}{KL}\sum_{k=1}^K \sum_{l=1}^L \overline{\psi}_{kl}\left(\theta\right)= 0. \label{dml_exact}
            \end{align}
        \item[(v)] Conduct inference based on the asymptotic normality of $\hat{\theta}$ and the consistent variance estimators defined in \ref{chs_cf} or \ref{dka_cf}.
     \end{enumerate}
\end{definition}

\begin{remark}[The Choice of $K$ and $L$]
\label{remart1}
    Notice there is a trade-off in setting $(K,L)$ between the first step and second step accuracy: the bigger values of $(K,L)$, the bigger sample size of the auxiliary sample $W(-k,-l)$, which is beneficial for high-dimensional first-steps but at the cost of a noisier parametric second step. Due to leaving out the temporal neighborhood, it necessitates an $L\geq 4$ for feasible implementation (if $L=3$, for example, any main sample $W(k,l)$ with $l=2$ does not have a well-defined auxiliary sample). On the other hand, it is computationally costly to set the values of $(K,L)$ too large. In practice, $K=2 \ to \ 4$ and $L=4 \ to \ 8$ work well in simulations. 
\end{remark}

A formal study of the estimation and inference procedure given in Algorithm \ref{panel_DML_def} is given in the Online Supplementary Material. It is shown that the two-step panel-DML estimator of the fixed-dimensional parameter $\theta_0$ is asymptotically normal and the CHS- and DKA-type variance estimators are consistent for the asymptotic variance, under a non-degenerate condition of the underlying components. However, due to the cross-fitting adjustment, there is a subtle issue when the underlying components are degenerate: Since $(K,L)$ are assumed to be fixed, and the sample sizes in each block diverge, the asymptotic approximation of the variance estimators happens in the sub-sample level. If the components that drives the cluster dependence are degenerate, the rate of convergence would change, but the change of scaling for correct asymptotics of the variance estimator may not match with that of the DML estimator. It can be further shown that the proposed variance estimator would be conservative in the i.i.d case: In this case, the asymptotic variance is not degenerate when the DML estimator is scaled by $\sqrt{NT}$. The variance estimator in (\ref{chs_cf}) is designed for the asymptotic variance of the DML estimator scaled by $\sqrt{N}$. Therefore, to be adaptive to the i.i.d case, we need $T\hat{V}_{\rm CHS}$ to be consistent for the asymptotic variance. However, the correct asymptotics requires an adjustment less than $T$, which implies that current specification would be too conservative. This problem is not unique for the method in this paper but rather an intrinsic issue when cross-fitting is implemented in more than one-dimension. For example, \cite{Chiang2022} imposes a non-degenerate condition for their main results. 

Instead of analytical variance estimators, one could consider resampling method. However, there is one more subtle issue when unobserved heterogeneities are present in the panel model, as revealed in the next section. For the latter issue, it is dealt by a strong assumption on the unobserved heterogeneities or DML without cross-fitting in a partial linear model as a special case. As an implication of these two subtle issues, approaches and sufficient conditions that allow for DML estimation and inference in general settings  without cross-fitting would be favorable, and it is the focus of the ongoing research.

\section{Partial Linear Model with Unobserved Heterogeneity} 
\label{partial_linear_unob}
In this section, we study the partial linear model with unobserved heterogeneous effects as \ref{example} as a special case, and an excludable instrumental variable $Z_{it}$ is considered for identification: 
\begin{align}
    Y_{it} = & D_{it}\theta_0 + g(X_{it},c_i,d_t) + U_{it}, \ \ {\rm E}[U_{it}|{X}_{it},c_i,d_t]= 0,  \ E[Z_{it}U_{it}] = 0. \label{cremodel} 
\end{align}
For clearer presentation, $D_{it}$ is treated as a scalar variable. If the lags or leads of $D_{it}$ are exogenous, they can also be included in $X_{it}$. Doing so would not change the theory for estimation and inference, but could change the interpretation of $\theta_0$.

To deal with the unobserved heterogeneous effects that cause endogeneity, 
I take a correlated random-effects approach through the generalized Mundlak device:
\begin{assumption_gmd}[Generalized Mundlak Device] \label{crecon} For each $i=1,...,N$ and $t=1,...,T$, 
    \begin{align}
        c_i &= h_c(\bar{F}_i,\epsilon^c_i), \label{ci} \\
        d_t &= h_d(\bar{F}_t,\epsilon^d_t), \label{dt}
    \end{align}
where $\bar{F}_i = \frac{1}{T}\sum_{t=1}^T F_{it}$, $\bar{F}_{t} = \frac{1}{N}\sum_{i=1}^N F_{it}$, $F_{it}:= (D_{it},X_{it}')'$; $h_c$ and $h_d$ are some unknown measurable functions; $(\epsilon^c_i,\epsilon^d_t)$ are independent shocks; and $(c_i,d_t)$ are independent of $U_{it}$.
\end{assumption_gmd}

To justify this assumption, we shall recall the idea of the conventional Mundlak device. To explicitly model the correlation between the random effects and the covariates, \cite{mundlak1978pooling} proposes an auxiliary regression between the random effects and the cross-sectional sample average, and shows that if the random effects enter the model linearly then the resulting GLS estimator is equivalent to the common within-estimator. \cite{wooldridge2021two} further shows that the equivalence relations exist among the POLS estimators resulting from the Mundlak device, within-transformation, and the fixed-effects dummies. Therefore, if aforementioned approaches are sensible for dealing with unobserved heterogeneity, then allowing the Mundlak device to have a more flexible function form should also be reasonable and more robust. Compared to similar ideas implemented in \cite{wooldridge2020inference} and \cite{clarke2025double}, the approach here is more flexible, while, as a tradeoff, relying more on the sparsity condition on the nuisance functions.

Before utilizing the proposed panel DML approach for inference, we note that there is a subtle issue: the Mundlak device uses the full history of the covariates which potentially generates dependence across the cross-fitting sub-samples. Similar issues also appear in a simple linear panel model with additive unobserved effects where within-transformation also introduces sample-averages. 
Therefore, the cross-fitting may not be compatible with the presence of  unobserved heterogeneity. Without cross-fitting, it is challenging to establish an inferential theory with growing dimensionality in unknown parameters in general. Nevertheless, as is shown below, it is possible to establishing the asymptotic normality of the panel DML estimator using the full sample with a strengthened sparsity condition.

Under model \ref{cremodel}, $g(X_{it},c_i,d_t) = {\rm E}[Y_{it} - D_{it}\theta_0 | X_{it},c_i,d_t]$. We can rewrite \ref{cremodel} as follows:
\begin{align*}
    Y_{it} = \left(D_{it} - g_D(X_{it},c_i,d_t) \right)\theta_0 + g_Y(X_{it},c_i,d_t) + U_{it}. 
\end{align*}
where $g_D(X_{it},c_i,d_t):= {\rm E}[ D_{it} | X_{it},c_i,d_t] $ and $g_Y(X_{it},c_i,d_t):= {\rm E}[ Y_{it} | X_{it},c_i,d_t] $. Under Assumption \hyperref[crecon]{GMD}, $g_D(X_{it},c_i,d_t)$ and $g_Y(X_{it},c_i,d_t)$ can be rewritten as compound functions, which are assumed to be well-approximated by a linear combination of a $\tau-$th order polynomial transformation $L^{\tau}$ as follows:
\begin{align}
    g_D^{*}(X_{it},\bar{F}_i,\epsilon^c_i,\bar{F}_t,\epsilon^d_t)& := g_D(X_{it},h_c(\bar{F}_i,\epsilon^c_i),h_d(\bar{F}_t,\epsilon^d_t))=  L^{\tau} \left(X_{it},\bar{F}_{i}, \bar{F}_{t},\epsilon_i^c,\epsilon_t^d\right) \eta_D + r^D_{it} \label{sparse_D} \\
    g_Y^{*}(X_{it},\bar{F}_i,\epsilon^c_i,\bar{F}_t,\epsilon^d_t)& := g_Y(X_{it},h_c(\bar{F}_i,\epsilon^c_i),h_d(\bar{F}_t,\epsilon^d_t))=  L^{\tau} \left(X_{it},\bar{F}_{i}, \bar{F}_{t},\epsilon_i^c,\epsilon_t^d\right) \eta_Y + r^Y_{it} \label{sparse_Y}
\end{align}
where $(\eta_D,\eta_Y)$ are slope coefficients and $(r_{it}^D,r_{it}^Y)$ are the approximation errors. Furthermore, we can define a vector of transformed regressors as $L_{1,it} =L^{\tau}(X_{it},\bar{F}_{i}, \bar{F}_{t})$ and a vector of unobserved regressors as $L_{2,it} = L^{\tau}(X_{it},\bar{F}_{i}, \bar{F}_{t},\epsilon_i^c,\epsilon_t^d) \backslash L^{\tau}(X_{it},\bar{F}_{i}, \bar{F}_{t}) $. Let $(\eta_{D,1},\eta_{D,2})$ be such that
\begin{align*}
    L^{\tau}\left(X_{it},\bar{F}_{i},\bar{F}_{t},\epsilon_i^c,\epsilon_t^d\right) \eta_D = L_{1,it}\eta_{D,1} + L_{2,it}\eta_{D,2}.
\end{align*}
And $(\eta_{Y,1},\eta_{Y,2})$ are defined in the same way. Under the sparse approximation and Assumption \hyperref[crecon]{GMD}, we can rewrite model \ref{cremodel} as follows:
\begin{align*}
    Y_{it} =  \left(D_{it} - L_{1,it}\eta_{D,1} - L_{2,it}\eta_{D,2} - r^D_{it} \right)\theta_0 +L_{1,it}\eta_{Y,1} + L_{2,it}\eta_{Y,2} + r^Y_{it} + U_{it}.
\end{align*}
By defining a new error term $V_{it}^g := \left( L_{2,it} -{\rm E} [L_{2,it}]\right)\left(\eta_{Y,2}-\eta_{D,2}\theta_0\right)+ U_{it}$, a new approximation error $r_{it} = r_{it}^Y + r_{it}^D\theta_0$, the vector of observables $f_{it}:= \left(L_{1,it},1\right)$ with dimension denoted by $p$, and the nuisance vectors $\beta_0 := \left(\eta_{Y,1}, {\rm E} [L_{2,it}]\eta_{Y,2}\right)$, $\pi_0 := \left(\eta_{D,1}, {\rm E} [L_{2,it}]\eta_{D,2}\right)$, we can rewrite the model above as
\begin{align}
     Y_{it} = \left(D_{it} - f_{it}'\pi_0 \right)\theta_0 + f_{it}'\beta_0 + r_{it} + V_{it}^g. \label{cremodel_linear}
\end{align}
Noticeably, in this case, the parameters associated with the unobservables $L_{2,it}$ can be arbitrarily non-sparse. 

Given ${\rm E}[Z_{it}U_{it}]$ and the independence between $Z_{it}$ and $(\epsilon_i^c,\epsilon_t^d)$,  we have the identifying moment condition ${\rm E}[Z_{it}V_{it}^g] = 0$. Let $\zeta_0$ be the linear projection parameter of $Z_{it}$ onto $f_{it}$ and let $V^Z_{it}$ be the corresponding linear projection errors. By Eq. (2.18) of \cite{Chernozhukov2018}, the near-Neyman orthogonal moment function is given by:
\begin{align}
   \psi_{it}(\theta_0,\eta_0):= \left(Z_{it} - f_{it}'\zeta_0 \right)\left( Y_{it}-f_{it}'\beta_0 - \left(D_{it}-f_{it}'\pi_0\right)\theta_0   \right)  . \label{cre_neyman_feasible}
\end{align}
where we denote $\eta_0=(\zeta_0,\beta_0,\pi_0)$. Under the sparse approximation, we can also rewrite the conditional expectation models for $Y$ and $D$ as 
\begin{align*}
    Y_{it} &= {\rm E}[Y_{it}| X_{it},c_i,d_t] +U^Y_{it} = f_{it}'\beta_0+r^Y_{it}+V^Y_{it} \\
    D_{it} &= {\rm E}[Y_{it} |X_{it},c_i,d_t] +U^D_{it} = f_{it}'\pi_0+r^D_{it}+V^D_{it} .
\end{align*}
where ${V}^Y_{it} = \left( L_{2,it} -{\rm E} [L_{2,it}]\right)\eta_{Y,2}+ U_{it}^Y$ and ${V}^D_{it} = \left( L_{2,it} -{\rm E} [L_{2,it}]\right)\eta_{D,2}+ U_{it}^D$. For $l=Z,Y,D$, let ${\omega}_l$ be the infeasible penalty weights for the two-way cluster-LASSO estimation of $(\zeta_0,\beta_0,\pi_0)$, as defined in \ref{penaltyweights} with $V_{it}$ replaced by $V^l_{it}$. 
Correspondingly, let $\widehat{V}^l$ be the residuals, and $\widehat{\omega}_l$ be the feasible penalty weights. The two-step debiased estimator $\widehat{\theta}$ for $\theta_0$ using the full-sample is defined as the solution of $\mathbb{E}_{NT}[\psi_{it}(\theta_,\widehat{\eta})]=0$ where $\widehat{\eta}$ are the (post) two-way cluster-LASSO estimators for $\eta_0$ obtained in the first step using the full-sample. 

For statistical analysis, the following notations are used: $ {a}_i = {\rm E}[ V_{it}^ZV_{it}^g|\alpha_i]$,  ${g}_t = {\rm E}[V_{it}^ZV_{it}^g|\gamma_t]$, ${\Sigma}_a  = {\rm E}[{a}_i{a}_i']$, ${\Sigma}_g = \sum_{l=-\infty}^{\infty}{ E}[{g}_t{g}_{t+l}']$; 
      ${a}_{i,j,l} = {\rm E}[f_{it,j}V^l_{it}|\alpha_i]$, ${g}_{t,j,l} = {\rm E}[f_{it,j}V^l_{it}|\gamma_t]$, $e_{it,j,l} =f_{it,j}V^l_{it}-{a}_{i,j,l} -{g}_{t,j,l}$ for  $l =Z,Y,D$; ${A}_0 = { E}[V_{it}^Z{V}^D_{it}]$, ${\Omega}_0 = {\Sigma}_a+ c{\Sigma}_g$.
      
\begin{assumption_plm}[Regularity Conditions for the Partial Linear Model] \ \ \ \ \ \ \ \ \ \ \ 
    \label{regcon_app} 
    \begin{itemize}
         \item[(i)] ${A}_0$ is non-singular.
\vspace{-6pt}
         \item[(ii)] For any $\epsilon$, $h_c(F,\epsilon)$ and $h_d(F,\epsilon)$ are invertible in  $F$.
     \vspace{-6pt}
    
         \item[(iii)] For some $\mu>1,\delta>0$, $\max_{j\leq p}{\rm E}[\lvert {f}_{it,j}\rvert^{8(\mu+\delta)}]<\infty$ and ${\rm E}[\lvert {V}^l_{it} \rvert^{8(\mu+\delta)}]<\infty$ for $l=g,D,Y,Z$.
        \vspace{-6pt}
          
         \item[(iv)] For some $\epsilon>0$,  $\lambda_{min}[{\Sigma}_a] + \lambda_{min}[{\Sigma}_g]>\epsilon $, $\emph{E}\left(a_{i,j,l}^2\right) + \emph{E}\left(g_{t,j,l}^2\right) >\epsilon$, and $\min_{j\leq p} E\left[\left(\sum_{t=1}^Te_{it,j,l}\right)^2|\{\gamma_t\}_{t=1}^T\right] >\epsilon$ almost surely for $l=D,Y,Z$.
          \vspace{-6pt}
        
        \item[(v)] $\log (p/\gamma) = o\left(T^{1/6}/(\log T)^2\right)$.
    \vspace{-6pt}

        \item[(vi)] The feasible penalty weights $\widehat{\omega}_l$ satisfy condition \ref{omegacon} for $l=D,Y,Z$.
    \end{itemize}
\end{assumption_plm}

This set of regularity conditions follow from the assumptions for two-way cluster-LASSO. Here we focus on the non-degenerate case. The only extra condition is Assumption \hyperref[regcon_app]{REG-P}(ii) which is a smoothness condition that ensures the exogeneity properties of $\bar{F}_i$ and $\bar{F}_t$ inherited from $(c_i,\epsilon_i)$ and $(d_t,\epsilon_t)$.

\begin{theorem}
 \label{thm_asymp_norm}
    Suppose, for $P=P_{NT}$ for each $(N,T)$, the following conditions hold for model \ref{cremodel} and $W_{it} = \left(Y_{it},D_{it},X_{it},Z_{it},U_{it},c_i,d_t,\epsilon_i,\epsilon_t \right)$: (i) Assumptions \hyperref[ahk]{AHK}, \hyperref[sparse_eigen]{SE}, \hyperref[crecon]{GMD}, \hyperref[regcon_app]{REG-P}; (ii) sparse approximation in \ref{sparse_D} and \ref{sparse_Y} with $s=o\left( \frac{\sqrt{N\wedge T}}{\log(p/\gamma)} \right)$, $\Vert r_{it}^{\iota}\Vert_{NT,2} = o_P\left(\sqrt{\frac{1}{N\wedge T}}\right)$ for $l=Y,D$. Then, as $N,T\to\infty$ and $N/T\to c$ where $0<c<\infty$, $\sqrt{N}(\widehat{\theta} - \theta_0) \overset{d}{\to} \mathcal{N}(0,V), \ V := {A}_0^{-1} {\Omega}_0 {A}_0^{-1}.$
\end{theorem}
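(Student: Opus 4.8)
The plan is to exploit the bilinearity of the orthogonalized score \ref{cre_neyman_feasible} in $(\theta,\eta)$ to write $\widehat\theta-\theta_0$ as an explicit ratio, reduce its numerator to a single two-way-dependent average plus asymptotically negligible remainders, and close with a central limit theorem for that average via the Hoeffding/\hyperref[ahk]{AHK} decomposition. Writing $\widehat V^Z_{it}=Z_{it}-f_{it}'\widehat\zeta$, $\widehat V^Y_{it}=Y_{it}-f_{it}'\widehat\beta$, $\widehat V^D_{it}=D_{it}-f_{it}'\widehat\pi$, linearity of $\psi_{it}$ in $\theta$ lets me solve $\mathbb{E}_{NT}[\psi_{it}(\theta,\widehat\eta)]=0$ in closed form:
\begin{align*}
\widehat\theta-\theta_0=\frac{\mathbb{E}_{NT}\big[\widehat V^Z_{it}\big(\widehat V^Y_{it}-\widehat V^D_{it}\theta_0\big)\big]}{\mathbb{E}_{NT}\big[\widehat V^Z_{it}\widehat V^D_{it}\big]}.
\end{align*}
Substituting the reduced form \ref{cremodel_linear} with $V^g_{it}=V^Y_{it}-V^D_{it}\theta_0$, the numerator equals $\mathbb{E}_{NT}[(V^Z_{it}-\delta^Z_{it})(V^g_{it}+r_{it}-\Delta_{it})]$, where $\delta^Z_{it}:=f_{it}'(\widehat\zeta-\zeta_0)$, $\Delta_{it}:=f_{it}'\widehat\delta$, and $\widehat\delta:=(\widehat\beta-\beta_0)-(\widehat\pi-\pi_0)\theta_0$. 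The leading term is $\mathbb{E}_{NT}[V^Z_{it}V^g_{it}]$; every other term is a product containing at least one of $r_{it}$, $\delta^Z_{it}$, $\Delta_{it}$. For the denominator, the analogous expansion has leading term $\mathbb{E}_{NT}[V^Z_{it}V^D_{it}]\overset{p}{\to}A_0$ by a law of large numbers for two-way-dependent arrays (its variance is $O(1/N)+O(1/T)$ under \hyperref[ahk]{AHK}), with the remaining cross-terms $o_P(1)$; since $A_0\neq0$ by \hyperref[regcon_app]{REG-P}(i), the denominator is bounded away from zero with probability approaching one.

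The core of the argument is showing every numerator remainder is $o_P(1/\sqrt N)$, and this is where the absence of cross-fitting forces the strengthened sparsity of condition (ii). The bias terms $\mathbb{E}_{NT}[V^Z_{it}\Delta_{it}]=\mathbb{E}_{NT}[V^Z_{it}f_{it}']\widehat\delta$ and $\mathbb{E}_{NT}[\delta^Z_{it}V^g_{it}]=(\widehat\zeta-\zeta_0)'\mathbb{E}_{NT}[f_{it}V^g_{it}]$ cannot be handled by conditioning on an independent auxiliary sample; instead I would bound them deterministically by Hölder, e.g. $|\mathbb{E}_{NT}[V^Z_{it}f_{it}']\widehat\delta|\le\|\mathbb{E}_{NT}[f_{it}V^Z_{it}]\|_\infty\|\widehat\delta\|_1$. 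On the regularization event of Theorem \ref{performance_bound}(i), applied to the $Z$, $Y$, and $D$ regressions, the choices \ref{penaltylevel}--\ref{penaltyweights} give $\|\mathbb{E}_{NT}[f_{it}V^l_{it}]\|_\infty=O_P\big(\sqrt{\log(p/\gamma)/(N\wedge T)}\big)$ for $l=Z,Y,D$, while Theorem \ref{performance_bound} gives $\|\widehat\delta\|_1=O_P\big(s\sqrt{\log(p/\gamma)/(N\wedge T)}\big)$; their product is $O_P\big(s\log(p/\gamma)/(N\wedge T)\big)$, which is $o_P(1/\sqrt N)$ exactly when $s=o\big(\sqrt{N\wedge T}/\log(p/\gamma)\big)$ and $N/T\to c$. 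The pure second-order term $(\widehat\zeta-\zeta_0)'\mathbb{E}_{NT}[f_{it}f_{it}']\widehat\delta$ is bounded by the product of prediction norms $\|f_{it}'(\widehat\zeta-\zeta_0)\|_{NT,2}\|f_{it}'\widehat\delta\|_{NT,2}=O_P\big(s\log(p/\gamma)/(N\wedge T)\big)$, again negligible; the approximation-error terms ($\mathbb{E}_{NT}[V^Z_{it}r_{it}]$, $\mathbb{E}_{NT}[\delta^Z_{it}r_{it}]$, and the rest) are $o_P(1/\sqrt N)$ by Cauchy--Schwarz with $\|r_{it}\|_{NT,2}=o_P(1/\sqrt{N\wedge T})$ from condition (ii). I would note that \hyperref[regcon_app]{REG-P}(ii), through invertibility of $h_c,h_d$, is what guarantees ${\rm E}[f_{it}V^g_{it}]=0$, so these are genuine bias terms carrying a vanishing mean rather than a fixed offset, and it also yields ${\rm E}[V^Z_{it}V^g_{it}]=0$ via the identifying moment.

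Finally I would establish $\sqrt N\,\mathbb{E}_{NT}[V^Z_{it}V^g_{it}]\overset{d}{\to}\mathcal N(0,\Omega_0)$. Using the \hyperref[ahk]{AHK} Hoeffding decomposition $V^Z_{it}V^g_{it}=a_i+g_t+e_{it}$ (mean zero up to the negligible $r$-bias),
\begin{align*}
\sqrt N\,\mathbb{E}_{NT}[V^Z_{it}V^g_{it}]=\frac1{\sqrt N}\sum_{i=1}^N a_i+\sqrt{\tfrac NT}\,\frac1{\sqrt T}\sum_{t=1}^T g_t+\frac1{\sqrt N\,T}\sum_{i,t}e_{it}.
\end{align*}
The first average obeys the i.i.d.\ CLT and converges to $\mathcal N(0,\Sigma_a)$; the second, since $\{g_t\}$ is stationary and geometrically $\beta$-mixing, obeys a mixing CLT and, with $\sqrt{N/T}\to\sqrt c$, converges to $\mathcal N(0,c\Sigma_g)$; the third has variance $O(1/T)$ (using that $e_{it}$ is independent across $i$ given $\{\gamma_t\}$) and is $o_P(1)$. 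Because $(a_i)$ and $(g_t)$ are functions of the mutually independent sequences $(\alpha_i)$ and $(\gamma_t)$, the first two limits are independent, giving total variance $\Sigma_a+c\Sigma_g=\Omega_0$. Combining with the denominator limit $A_0$ through Slutsky's theorem yields $\sqrt N(\widehat\theta-\theta_0)\overset{d}{\to}\mathcal N(0,A_0^{-1}\Omega_0A_0^{-1})=\mathcal N(0,V)$.

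I would flag the remainder control of the second paragraph as the main obstacle. Without cross-fitting one cannot condition the empirical score $\mathbb{E}_{NT}[f_{it}V^Z_{it}]$ on an independently estimated $\widehat\delta$, so the product-of-rates bound must run through the uniform $\ell_\infty$--$\ell_1$ Hölder inequality rather than a pointwise-direction variance bound that would only cost a prediction norm. This replaces the cross-fitting budget $s\log(p/\gamma)=o(N\wedge T)$ by the stronger $s\log(p/\gamma)=o(\sqrt{N\wedge T})$, which is precisely condition (ii); verifying that the slow two-way-cluster-LASSO rates of Theorem \ref{performance_bound} still fit inside this tightened budget is the crux of the proof.
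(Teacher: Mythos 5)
Your proposal is correct and takes essentially the same route as the paper's own proof: the same ratio decomposition $\widehat\theta-\theta_0=\widehat A_{NT}^{-1}\widehat\psi_{NT}$, the same remainder control that pairs the $\ell_1$--$\ell_\infty$ H\"older bound (regularization event plus the two-way cluster-LASSO $\ell_1$ rates for $\zeta,\beta,\pi$) with Cauchy--Schwarz prediction-norm and approximation-error bounds under the strengthened sparsity $s=o\bigl(\sqrt{N\wedge T}/\log(p/\gamma)\bigr)$, and the same CLT for $\mathbb{E}_{NT}[V^Z_{it}V^g_{it}]$ via the AHK component decomposition, which the paper invokes as Lemma \ref{lemma_a2}. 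The only differences are cosmetic: you inline the component-wise CLT instead of citing the two-way clustered CLT lemma, and you merge the $\beta$ and $\pi$ estimation errors into a single vector $\widehat\delta$, whereas the paper keeps them as separate terms in its expansion.
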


Theorem \ref{thm_asymp_norm} establishes the validity of the proposed inference procedure using the full sample. Note that the sparsity condition and the condition of the approximation errors are stronger than the ones needed for two-way LASSO estimation itself. To estimate the asymptotic variance, the following variance estimators are adapted from \cite{CHS_Restat} and \cite{chen2024fixed} using the full sample:
\begin{alignat}{2}
    \widetilde{V}_{\rm CHS} & = \widetilde{A}_{NT}^{-1}\widetilde{\Omega}_{\rm CHS}\widetilde{A}_{NT}^{-1'}, \ \ \ \ \widetilde{\Omega}_{\rm CHS} &&= \widetilde{\Omega}_{\rm A} + \widetilde{\Omega}_{\rm DK} - \widetilde{\Omega}_{\rm NW}, \label{CHS_app} \\
    \widetilde{V}_{\rm DKA} & = \widetilde{A}_{NT}^{-1}\widetilde{\Omega}_{\rm DKA}\widetilde{A}_{NT}^{-1'}, \ \ \ \ \widetilde{\Omega}_{\rm DKA} &&= \widetilde{\Omega}_{\rm A} + \widetilde{\Omega}_{\rm DK}, \label{DKA_app}
\end{alignat}
where $ \widetilde{A}_{NT}:= \frac{1}{NT}  \sum_{i=1}^N\sum_{t=1}^T (Z_{it}-f_{it}'\widetilde{\zeta})(D_{it}-f_{it}'\widetilde{\pi})$ and
\begin{align*}
{\widetilde{\Omega}}_{\rm A}:=  & \frac{1}{NT^2} \sum_{i=1}^N\sum_{t=1}^T\sum_{r=1}^T \psi_{it}(\widetilde{\theta},\widetilde{\eta}) \psi_{ir}(\widetilde{\theta},\widetilde{\eta})' , \\
{\widetilde{\Omega}}_{\rm DK}:=  & \frac{1}{NT^2} \sum_{t=1}^T\sum_{r=1}^T k\left(\frac{|t-r|}{M} \right) \sum_{i=1}^N\sum_{j=1}^N\psi_{it}(\widetilde{\theta},\widetilde{\eta}) \psi_{jr}(\widetilde{\theta},\widetilde{\eta})', \\
{\widetilde{\Omega}}_{\rm NW}:= & \frac{1}{NT^2} \sum_{i=1}^N\sum_{t=1}^T\sum_{r=1}^T k\left(\frac{|t-r|}{M} \right) \psi_{it}(\widetilde{\theta},\widetilde{\eta}) \psi_{ir}(\widetilde{\theta},\widetilde{\eta})' .
\end{align*}

For simplicity, we deliver the consistency results of variance estimators assuming the approximation is exact. Allowing for approximation errors does not change the main idea but only requires more regularity conditions on the approximation error and lengthier derivations.

\begin{theorem}
    \label{thm_consistency}
     Suppose assumptions for Theorem \ref{thm_asymp_norm} holds for $P=P_{NT}$ for each $(N,T)$ with $r_{it}^D = r_{it}^Y = 0 $ a.s., and $M = o\left(\sqrt{T}\right)$. Then, $(N,T)\to\infty$ and $N/T\to c$ where $0<c<\
    \infty$, 
    \begin{align*}
        \widetilde{V}_{\rm CHS} = & V + o_P(1), \\
        \widetilde{V}_{\rm DKA} = & \widetilde{V}_{\rm CHS} + o_P(1). 
    \end{align*}
\end{theorem}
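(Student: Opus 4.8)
The plan is to prove both claims through the continuous mapping theorem, by showing separately that the Jacobian estimator satisfies $\widetilde{A}_{NT}\overset{p}{\to} A_0$ and that the three building blocks satisfy $\widetilde{\Omega}_{\rm A}\overset{p}{\to}\Sigma_a$, $\widetilde{\Omega}_{\rm DK}\overset{p}{\to} c\,\Sigma_g$, and $\widetilde{\Omega}_{\rm NW}\overset{p}{\to} 0$. For each estimator I would first pass to an \emph{infeasible} version evaluated at the truth $(\theta_0,\eta_0)$ and then argue that the \emph{plug-in remainder} from replacing $(\theta_0,\eta_0)$ with the first-step (post) two-way cluster-LASSO estimates is $o_P(1)$. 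Since the approximation is exact ($r_{it}^D=r_{it}^Y=0$), at the truth the score collapses to $\psi_{it}(\theta_0,\eta_0)=V_{it}^Z V_{it}^g$, so every infeasible object is a quadratic form in $V_{it}^Z V_{it}^g$, which I would analyze via the Hoeffding decomposition $\psi_{it}(\theta_0,\eta_0)=a_i+g_t+e_{it}$ supplied by Assumption \hyperref[regcon_app]{REG-P}.

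For the Jacobian, writing $Z_{it}=f_{it}'\zeta_0+V_{it}^Z$ and $D_{it}=f_{it}'\pi_0+V_{it}^D$ expands $\widetilde{A}_{NT}$ into $\mathbb{E}_{NT}[V_{it}^Z V_{it}^D]$ plus terms linear and quadratic in $\widetilde\zeta-\zeta_0$ and $\widetilde\pi-\pi_0$. The leading mean converges to $A_0={\rm E}[V_{it}^Z V_{it}^D]$ by a law of large numbers for the AHK array, while each remainder is controlled by Cauchy--Schwarz using $\Vert V_{it}^l\Vert_{NT,2}=O_P(1)$ and the prediction-norm rate $\Vert f_{it}'(\widehat\zeta-\zeta_0)\Vert_{NT,2}=O_P(\sqrt{s\log(p/\gamma)/(N\wedge T)})$ from Theorem \ref{performance_bound}; the strengthened sparsity $s=o(\sqrt{N\wedge T}/\log(p/\gamma))$ makes these $o_P(1)$, giving $\widetilde{A}_{NT}\overset{p}{\to} A_0$ and hence $\widetilde{A}_{NT}^{-1}\overset{p}{\to} A_0^{-1}$.

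For the infeasible variance blocks I would substitute $\psi_{it}=a_i+g_t+e_{it}$ and track orders term by term. The Arellano block reduces to $\frac{1}{N}\sum_i a_i a_i'+o_P(1)\overset{p}{\to}\Sigma_a$, since the $g$- and $e$-contributions enter only through within-unit time averages of order $O_P(1/T)$. The Driscoll--Kraay block has leading term $\frac{N}{T}\cdot\frac{1}{T}\sum_{t,r}k(|t-r|/M)\,g_t g_r'$, which converges to $c\,\Sigma_g$ by HAC/Newey--West consistency for the geometrically beta-mixing sequence $\{g_t\}$ under $M\to\infty$, $M=o(T)$, while the $\bar a$- and $\bar e$-terms are $O_P(M/T)=o_P(1)$. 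The ``average of HACs'' block is $o_P(1)$ because each of its $aa'$, $gg'$, and $ee'$ pieces carries a factor $\frac{1}{T^2}\sum_{t,r}k(\cdot)=O(M/T)$ or $1/T$ that vanishes under the $\sqrt N$ scaling. Combining, $\widetilde{\Omega}_{\rm DKA}=\widetilde{\Omega}_{\rm A}+\widetilde{\Omega}_{\rm DK}\overset{p}{\to}\Sigma_a+c\,\Sigma_g=\Omega_0$ and $\widetilde{\Omega}_{\rm CHS}=\widetilde{\Omega}_{\rm DKA}-\widetilde{\Omega}_{\rm NW}\overset{p}{\to}\Omega_0$; sandwiching with $\widetilde{A}_{NT}^{-1}$ and applying the continuous mapping theorem gives $\widetilde{V}_{\rm CHS}\overset{p}{\to} V$, and $\widetilde{V}_{\rm DKA}-\widetilde{V}_{\rm CHS}=\widetilde{A}_{NT}^{-1}\widetilde{\Omega}_{\rm NW}\widetilde{A}_{NT}^{-1\prime}=o_P(1)$ yields the second claim.

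The main obstacle is the plug-in remainder for the variance blocks. Each $\widetilde{\Omega}$ is a kernel-weighted double sum with $\frac{N}{T^2}$ or $\frac{1}{NT^2}$ scaling and $O(TM)$ nonzero kernel entries, so the first-step error, entering through $\widetilde\theta-\theta_0=O_P(N^{-1/2})$ and through $\widehat\zeta,\widehat\beta,\widehat\pi$, can accumulate across the $(t,r)$ lags. The hard part is to show that all cross-products between the true score components and the estimation errors, as well as the pure error--error products, stay $o_P(1)$ after kernel smoothing. I would expand $\psi_{it}(\widetilde\theta,\widetilde\eta)-\psi_{it}(\theta_0,\eta_0)$ into pieces linear and quadratic in the first-step errors, bound the induced kernel sums by products of $\max_{j}|\mathbb{E}_{NT}[f_{it,j}V_{it}^l]|$-type maximal quantities, the $\ell_1$- and prediction-norm rates of Theorem \ref{performance_bound}, and $\Vert V_{it}^l\Vert_{NT,2}=O_P(1)$; the bandwidth restriction $M=o(\sqrt T)$ together with the strengthened sparsity is precisely what keeps the accumulated error negligible. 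The remaining steps are routine but lengthy bookkeeping.
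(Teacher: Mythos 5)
Your overall architecture coincides with the paper's: reduce the claim to $\widetilde{A}_{NT}\overset{p}{\to}A_0$ plus consistency of the variance blocks, split each feasible block into an infeasible part evaluated at $(\theta_0,\eta_0)$ plus a plug-in remainder, analyze the infeasible parts via the Hajek/Hoeffding decomposition and the mixing of $g_t$, and let the strengthened sparsity together with $M=o(\sqrt{T})$ absorb the remainder. The organizational difference --- you keep the native blocks $\widetilde{\Omega}_{\rm A},\widetilde{\Omega}_{\rm DK},\widetilde{\Omega}_{\rm NW}$ and send them to $\Sigma_a$, $c\Sigma_g$, $0$, whereas the paper regroups $\widehat{\Omega}_{\rm CHS}$ into within-unit, same-time cross-unit, diagonal, and lagged cross-unit pieces so it can recycle Claims \ref{claim_3}--\ref{claim_6} verbatim --- is cosmetic; the two groupings are algebraically equivalent, and your order assessments of the infeasible parts are correct.

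The gap sits in the one step that actually distinguishes this theorem from its cross-fitted counterpart: controlling the plug-in remainder without sample splitting. As in Claims \ref{claim_3}--\ref{claim_6}, every remainder reduces by Cauchy--Schwarz within each block to showing $R_{NT}:=\Vert\psi_{it}(\widetilde{\theta},\widetilde{\eta})-\psi_{it}(\theta_0,\eta_0)\Vert_{NT,2}=o_P(1)$, and the dominant pieces of the score difference are products $f_{it}'(\widehat{\eta}-\eta_0)\,V^l_{it}$. Their empirical $L^2$ norm is $\bigl(\frac{1}{NT}\sum_{i,t}(f_{it}'(\widehat{\eta}-\eta_0))^2(V^l_{it})^2\bigr)^{1/2}$, and the tool you name --- $\max_{j}\vert\mathbb{E}_{NT}[f_{it,j}V^l_{it}]\vert$ paired with the $\ell_1$ rate --- only controls bilinear sums $\frac{1}{NT}\sum_{i,t}f_{it}'(\widehat{\eta}-\eta_0)V^l_{it}$, the objects that arise in the proof of Theorem \ref{thm_asymp_norm}, not these quadratic ones. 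The paper closes this by pulling out a sup-norm: Assumption \hyperref[regcon_app]{REG-P}(iii) gives ${\rm E}\vert V^l_{it}\vert^{8(\mu+\delta)}<\infty$, hence $\max_{i\leq N,t\leq T}\vert V^l_{it}\vert^2=O_P\bigl((NT)^{1/(4(\mu+\delta))}\bigr)$, so that $R_{NT}\lesssim\bigl(\max_{i,t}\vert V^l_{it}\vert^2\bigr)^{1/2}\Vert f_{it}'(\widehat{\eta}-\eta_0)\Vert_{NT,2}=O_P\bigl((NT)^{1/(8(\mu+\delta))}\bigr)\,o_P\bigl((N\wedge T)^{-1/4}\bigr)=o_P(1)$, where $\mu>1$ and the strengthened sparsity $s=o\bigl(\sqrt{N\wedge T}/\log(p/\gamma)\bigr)$ are exactly what make the product vanish. (An alternative that would also work is a second H\"{o}lder step, $\frac{1}{NT}\sum_{i,t}(f_{it}'(\widehat{\eta}-\eta_0))^2(V^l_{it})^2\leq\Vert\widehat{\eta}-\eta_0\Vert_1^2\max_{j,k}\vert\mathbb{E}_{NT}[f_{it,j}f_{it,k}(V^l_{it})^2]\vert$ with fourth-moment bounds, but that is a different maximal quantity from the one you propose.) Without one of these devices the ``routine bookkeeping'' does not close: your bandwidth condition $M=o(\sqrt{T})$ only handles the accumulation over lags, as in the paper's treatment of the lagged cross-unit block, and cannot substitute for the missing bound on the score-difference norm itself.
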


Theorems \ref{thm_asymp_norm} and \ref{thm_consistency} together justify the panel DML inference without cross-fitting by exploiting the partial linear structure and the sparsity condition \footnote{For more general models, recent literature resort to extra restrictions to avoid cross-fitting(e.g., \citealp{chen2022debiased} and \citealp{cao2025neighborhood}).}.

\section{Monte Carlo Simulation}
\label{mc_simulation}
In this section, we examine the finite sample performance of proposed two-way cluster-LASSO estimation and debiased inference with or without cross-fitting. To focus on the main challenges caused by two-way clustering and high-dimensional nuisance estimation, the simulation study is anchored on linear approximately sparse models without unobserved heterogeneities. In the previous version of working paper, simulation results for linear model with exact sparsity and partial linear model with nuisance functions are also reported. For partial linear model, the performance is largely determined by whether the nuisance functions are well-approximated by basis functions. For the case of exact sparsity, the performance of LASSO-based method is sensitive to the size of the slope coefficients: a larger slope coefficient leads to better selection in the first stage across all LASSO methods. Since it is hard to justify proper choices of nuisance functions and sizes of slope coefficients in the DGP,  we will consider a linear model with the sequence of slope coefficients decay in a polynomial rate, and LASSO-type approaches decides between ``signal" and ``noise'' based on particular penalty choices\footnote{See \cite{Shen2025} for a critique on LASSO-type approaches for learning weak signals. They consider a scenario where signals are individually weak but jointly significant, which is implicitly excluded by the approximate sparsity condition considered in this paper. }. 

The linear model with high-dimensional covariates, approximate sparsity, and two-way clustering dependence is specified as follows:
\begin{align*}
Y_{it} &=  \ D_{it}\theta_0 + X_{it}\beta_0 + U_{it}, \ \ \ D_{it} =  \ X_{it}\pi_0 + V_{it}, \\
U_{it} &=  \ w_1\alpha_{i}^u+  w_2\gamma_{t}^u +  w_3\varepsilon_{it}^u, \ \ \ V_{it} =  w_1\alpha_{i}^v+  w_2\gamma_{t}^v +  w_3\varepsilon_{it}^v,\\
X_{it,j} &=  \ w_1\alpha_{i,j}+  w_2\gamma_{t,j} +  w_3\varepsilon_{it,j} \ \text{for each j=1,...,p.}
\end{align*}
where $\theta_0=1$ is the true parameter of interest. $\beta_0 =\pi_0= \left(1, \frac{1}{2^2},..., \frac{1}{p^2}\right)'$ are $p$-dimensional nuisance parameters. $\alpha_{i}^u, \alpha_{i}^v, \varepsilon_{it}^u, \varepsilon_{it}^v$ are each random draws from $N(0,1)$; $(\alpha_{i,1},...,\alpha_{i,p})'$ and $(\varepsilon_{it,1},...,\varepsilon_{it,p})'$ are each random draws from a joint normal distribution with mean zero and variance-covariance matrix equal to $0.5^{|j-k|}$ in the $(j,k)$'s entry; $\gamma_t^u,\gamma_t^v, \gamma_{t,1},...,\gamma_{t,p}$ each follows an AR(1) process with the coefficient equal to $0.5$ and the initial values randomly drawn from $N(0,0.75)$. The weights $(w_1,w_2,w_3) = (1/3, 1/3, 1/3)$ control the relative importance of the components.

The simulation study examines the Monte Carlo bias (Bias), standard deviation (SD), root mean square error (RMSE), and coverage probability of estimators for $\theta_0$. All estimations are based on the orthogonal moment condition given by \ref{cre_neyman_feasible} with $Z_{it} = D_{it}$ and $f_{it} = X_{it}$. The comparison will be among procedures with and without cross-fitting. The first-step estimations will be based on the POLS estimator (if feasible), the post heteroskedasticity-robust LASSO from \cite{belloni2012sparse}, the post cluster-robust LASSO from \cite{belloni2016inference}, and the post two-way cluster-LASSO. The CHS-type and DKA-type variance estimators, given in Section \ref{sub_sample} with cross-fitting and Section \ref{partial_linear_unob} without cross-fitting, are used for inference. \footnote{In some unreported simulations, I also compare CHS/DKA type variance estimators with Eicker-Huber-White type estimators in \cite{Chernozhukov2018} for random sampling data and Cameron-Galbach-Miller type estimator from \cite{Chiang2022} for multiway clustered data. Since it is well-known that inference based on variance estimators that do not sufficiently account for the dependence would cause over-rejection, it is omitted here}.

The simulation results are based on 1000 Monte Carlo replications\footnote{For replication, the code is included in the supplementary material and also publicly available at \nolinkurl{http://kaichengchen.github.io/twlasso_paneldml_replication.zip}.}. It is a relatively small number of replications but it is necessitated by the high computational cost of multiple high-dimensional estimation and inference procedures, particularly with cross-fitting. For variance estimation, bandwidth parameters $M$ of the Bartlett kernel are required. I use the min-MSE rule from \cite{Andrews1991} for both purposes. For a generic scalar score $v_{it}$, the formula is given as $\hat{M}=1.8171\left(  \frac{\widehat{\rho}^{2}}{\left(  1-\widehat{\rho
}^{2}\right)  ^{2}}\right)  ^{1/3}T^{1/3} + 1$ where $\widehat{\rho}$ is the OLS estimator from the regression $\bar
{\hat{v}}_{t}=\rho\bar{\hat{v}}_{t-1}+\eta_{t}$ where
$\bar{\hat{v}}_{t}=\frac{1}{N}\sum_{i=1}^{N}\hat{v}_{it}$ and $\hat{v}_{it} = \hat{U}_{it}\hat{V}_{it}$.

Table \hyperref[table1]{5.1} presents a set of baseline results with a mild dimension of covariates, $p=200$. The number of covariates is much larger than either cross-sectional or temporal dimensions. In the first step, model selections are done using different LASSO approaches reported in the second column. Comparing the results obtained without using cross-fitting, it is shown that when the number of regressors is not extremely large relative to the sample size, the POLS estimator dominates the sparse methods through different LASSOs in terms of Monte Carlo bias, standard deviation, and coverage probability obtained using DKA standard error, even though the true model is approximately sparse. Among the sparse methods, cluster-robust methods including the proposed two-way cluster-LASSO exhibit the smallest biases and best coverages, with slightly larger standard deviation. In terms of selection, the proposed method selects the number of regressors closest to the true number of relevant regressors while other sparse methods over-select to different extents. 

\begin{table}\centering
\begin{threeparttable}\label{table1}
\begin{tabular}
[c]{c|c|ccc|cc}%
\multicolumn{7}{c}{Table 5.1: $N=T=25$, $p=200$} \\ \hline\hline
{Cross}&{First-Step}&\multicolumn{3}{c|}{Second-Step}&\multicolumn{2}{c}{Coverage (\%)} \\
{Fitting}& {Estimator}&Bias & SD & RMSE  & CHS & DKA  \\ \hline 
\multirow{4}*{No}& POLS&0.001&0.056&0.056&77.5&94.2  \\ 
 & H LASSO       & 0.108&0.077&0.133&45.6&63.8 \\ 
 & C LASSO      & 0.038&0.113&0.119&80.8&86.8 \\
 & TW LASSO    & 0.041&0.109&0.116&80.8&87.8 \\ \hline 
 \multirow{3}*{Yes}& H LASSOS &0.08&0.145&0.167&92.9&97.0 \\
 & C LASSO     &0.043&0.138&0.145&96.0&97.6 \\
 & TW LASSO  &0.009&0.134&0.135&96.8&98.6 \\ \hline 
\end{tabular}
\begin{tablenotes}
\footnotesize
\item Note: Simulation results are based on 1000 replications. Tuning parameters: $(K,L) = (4,8)$, $C_\lambda = 2$, and $\gamma= 0.1/ \log(N \vee T)$. At most 10 iterations are used in calculating the penalty weights. H: heteroskedastic-LASSO; C: cluster-LASSO; TW: two-way cluster-LASSO. Post-LASSO POLS is performed in all first steps. Nominal coverage probability: 0.95.
\end{tablenotes}
\end{threeparttable}
\end{table}

When cross-fitting is employed, all methods have witnessed a significant improvement in terms of sample coverage. This is particularly true for LASSO-based method not robust for dependent data. This is not too surprising because those non-robust sparse methods tend to over-select, and the cross-fitting is designed to remove the overfitting bias and to restore asymptotic normality. As a cost of cross-fitting, the Monte Carlo standard deviation increased, indicating the efficiency loss due to the exclusion of sub-samples in the first-step estimation. It is also worth emphasizing that the CHS- and DKA-type variance estimators designed for cross-fitting approaches play an important role in the desirable sample coverage. In some unreported simulations, it is shown that inference based on the cross-fitting variance estimators proposed in \cite{Chernozhukov2018} and \cite{Chiang2022} suffers from severe under-coverage. It implies that while two-way dependence potentially affects both estimation and inference, its negative impact on the inference is more salient. 

\begin{table}\centering
\begin{threeparttable}\label{table2}
\begin{tabular}
[c]{c|c|ccc|cc}%
\multicolumn{7}{c}{Table 5.2: $N=T=25$, $p=600$} \\ \hline\hline 
{Cross}&{First-Step}&\multicolumn{3}{c|}{Second-Step}&\multicolumn{2}{c}{Coverage (\%)} \\
{Fitting}& {Estimator}&Bias & SD & RMSE  & CHS & DKA  \\ \hline 
\multirow{4}*{No}& POLS& 0.006&0.223&0.223&25.2&37.9 \\ 
& H LASSO       &0.124&0.065&0.140&27.5&50.5 \\ 
& C LASSO     &0.057&0.118&0.131&75.0&82.2 \\
& TW LASSO    &0.054&0.109&0.121&79.0&87.1 \\ \hline 
  \multirow{3}*{Yes}& H LASSO& 0.105&0.140&0.175&92.4&96.0\\ 
 & C LASSO    &0.068&0.136&0.152&94.6&96.5 \\
 & TW LASSO   &0.022&0.139&0.141&95.9&97.3  \\ \hline 
\end{tabular}
\begin{tablenotes}
\footnotesize
\item Note: Simulation results are based on 1000 replications. Tuning parameters: $(K,L) = (4,8)$, $C_\lambda = 2$, and $\gamma= 0.1/ \log(N \vee T)$. At most 10 iterations are used in calculating the penalty weights. H: heteroskedastic-LASSO; C: cluster-LASSO; TW: two-way cluster-LASSO. Post-LASSO POLS is performed in all first steps. Nominal coverage probability: 0.95.
\end{tablenotes}
\end{threeparttable}
\end{table}

As the dimension of the covariates significantly increases and becomes as large as the overall sample size, a different pattern is revealed. Table \hyperref[table2]{5.2} also reports simulation results under the same DGP except that the dimension $p$ now increases to 600, slightly smaller than the overall sample size 625. First, we compare the results obtained without cross-fitting. The simulation results demonstrate that the methods based on the POLS with no selection and those based on the existing LASSO approaches with over-selection all suffer from severe under-coverage. The proposed method, in contrast, remain among the best in terms of bias, RMSE, and coverage. When cross-fitting is performed, there is again a significant improvement across all approaches in terms of the sample coverage but it is also in a cost of a slight efficiency loss measured by the increase in SD. 

Finally, we examine the performance of the two-way cluster-robust method as well as the one-way cluster-robust and non-robust methods when the true DGP features i.i.d data. The theoretical results tell that the high-dimensional methods should remain valid. The pattern in Table \ref{table3} matches with the theories, and the non-robust high-dimensional methods performs slightly better than the cluster-robust ones. The conservativeness of the DKA variance estimator is well-known in the cluster-robust inference literature: two-way cluster-robust variance without the double-counting adjustment can ensure the semi-positive definiteness, but when the components are degenerate, it has a probability limit twice of the true asymptotic variance (see, for example, \citealp{mackinnon2021wild,chen2024fixed}). 

\begin{table}\centering
\begin{threeparttable}\label{table3}
\begin{tabular}
[c]{c|c|ccc|cc}%
\multicolumn{7}{c}{Table 5.3: $N=T=25$, $p=600$, i.i.d data.} \\ \hline\hline 
{Cross}&{First-Step}&\multicolumn{3}{c|}{Second-Step}&\multicolumn{2}{c}{Coverage (\%)} \\
{Fitting}& {Estimator}&Bias & SD & RMSE  & CHS & DKA  \\ \hline 
\multirow{4}*{No}& POLS& 0.014&0.210&0.210&25.8&39.1  \\ 
& H LASSO     &0.001&0.041&0.041&87.0&98.9 \\ 
& C LASSO     &0.002&0.041&0.041&86.9&99.2 \\
& TW LASSO    &0.012&0.043&0.045&84.9&97.6 \\ \hline 
\end{tabular}
\begin{tablenotes}
\footnotesize
\item Note: Simulation results are based on 1000 replications. Tuning parameters: $(K,L) = (4,8)$, $C_\lambda = 2$, and $\gamma= 0.1/ \log(N \vee T)$. At most 10 iterations are used in calculating the penalty weights. H: heteroskedastic-LASSO; C: cluster-LASSO; TW: two-way cluster-LASSO. Post-LASSO POLS is performed in all first steps. Nominal coverage probability: 0.95.
\end{tablenotes}
\end{threeparttable}
\end{table}

\section{Empirical Application} \label{empirical_study}
In this section, I re-examine the effects of government spending on the output of an open economy following the framework of \cite{nakamura2014fiscal}.  It is one of the most cited empirical macro papers in the American Economic Review, and it investigates one classic quantity of interest in economics: the government spending multiplier. The question here is, can we improve the estimation and inference through more robust and flexible methods? As I will show, it is made possible by the proposed toolkit in this paper.

This framework utilizes the regional variation in military spending in the US to estimate the percentage increase in output that results from the increase of government spending by 1 percent of GDP, i.e. government spending multiplier. It is referred to as the "open economy relative multiplier" because this framework takes advantage of uniform monetary and tax policies across the regions in the US to difference-out their effects on government spending and output. The parameter of interest is a scalar, and the baseline model is identified without considering control variables, so why is the high dimensionality relevant here? As discussed below, the high dimensionality from heterogeneity and flexible modeling could be hidden. 

Due to the endogeneity in the variation of the regional military procurement, \cite{nakamura2014fiscal} achieves identification through an instrumental variable (IV) approach. As argued by the authors, the national military spending is largely determined by geopolitical events so it is likely exogenous to the unobserved factors of regional military spending and it affects the regional military spending disproportionally. In other words, the identifying assumption is that the buildups and drawdowns in national military spending are not due to unbalanced military development across regions. Based on this observation, a share-shift type IV is considered and the share is estimated by regressing the regional military spending on the national military spending allowing for region-specific constant slope coefficients.\footnote{All quantities, unless specifically defined, are in terms of two-year growth rate of the real per capita values. Per capita is in terms of total population. \cite{nakamura2014fiscal} also presents results when per capita is calculated using the working age population as a robustness check.} To focus on the main idea, the shares are taken as given and the resulting instrument variable is treated as observable instead of generated regressors.


In this paper, to avoid the endogeneity caused by the misspecification of the function form, I extend the linear model with additive unobserved heterogeneous effects to a partial linear model with non-additive unobserved heterogeneous effects. Let $D_{it}$ be the percentage change in per capita regional military spending in state $i$ and time $t$ and $Z_{it}$ be the IV. Specifically, the baseline model from the original study and the one from this paper differ as follows: 
\begin{alignat*}{2}
    & {\rm  \textbf{Baseline \ model}}:   \ \ && {\rm \textbf{Partial \ linear \ model}}: \\
    & Y_{it} = \theta_0 D_{it} +\pi_i W_t + c_i + d_t + U_{it}, \ \ \ \ \ \ &&  Y_{it} = \theta_0 D_{it} + g(X_{it},W_t,c_i,d_t)+U_{it}
\end{alignat*}
where $\theta_0$ is the parameter of interest, i.e. the true multiplier; $X_{it}$ and $W_t$ are exogenous control variables with the latter being only time-varying; $\pi_i$ are non-random unit specific slope coefficients of $W_t$;  $(c_i,d_t)$ are unobserved heterogeneous effects. In the original study, the linear model is estimated by the two-stage least square (2SLS) with two-way fixed effects. We apply the approach in Section \ref{partial_linear_unob} for the extended model.

In the baseline specification of \cite{nakamura2014fiscal}, $W_t$ are not included. In their alternative specifications, $W_t$ is chosen as the real interest rate or the change in national oil price. These two variables are never included together in the original study. Note that allowing the unit-specific slope coefficients for controls generates many nuisance parameters: with 51 state groups\footnote{The regions in this analysis are defined by the states. \cite{nakamura2014fiscal} also presents results on regions as clusters of states.}, one control would increase 51 parameters and two controls would generate 102 parameters, without considering interactions or higher order terms. With a sample size of less than 2000, the high dimensionality in nuisance parameters could result in a noisy estimate of $\theta_0$. In this paper, I consider additional controls. As is shown in Table 3 of \cite{nakamura2014fiscal}, the change in state population is likely not affected by the treatment (the regional military spending), so it is not subject to the "bad control" issue; But it could affect the treatment and the outcome, so it is included in $X_{it}$. By considering more flexible function forms and additional exogenous control variables, the excludability condition of the instruments is more plausible. On the other hand, the high-dimensionality arose from the flexible function form and the unobserved heterogeneity necessitates the use of high-dimensional methods. Moreover, state-level yearly variables of those macroeconomic characteristics are often considered to be cluster-dependent in both cross-sectional and temporal groups due to common time shocks and state unobserved effects. These concerns justify the use of the proposed robust methods in this paper.

\begin{table}[th]
\label{table1_emp}
\centering{ \begin{threeparttable}
\begin{tabular}
[c]{cccccccc}%
\multicolumn{8}{c}{Table 6.1:  Multiplier estimates from the original model} \\\hline\hline
(1)&(2)&(3)&(4)&(5)&(6)&(7)&(8) \\
Unobs.&Oil& Real&&First&IV 1&CHS&DKA \\ 
Heterog.&Price&Int.&Pop.&Stage&$\widehat{\theta}$&s.e.&s.e. \\\hline 
\multirow{5}*{Fixed Effects}&No&No&No&POLS&1.43&0.68&0.81 \\
&Yes&No&No&POLS&1.30&0.56&0.72 \\
&No&Yes&No&POLS&1.40&0.57&0.70 \\
&Yes&Yes&No&POLS&1.27&0.45&0.71  \\
&Yes&Yes&Yes&POLS&1.36&0.43&0.56 \\\hline
\end{tabular}
\begin{tablenotes}
\small
\item Note: Standard errors are calculated with the truncation parameter $M$ chosen by the min-MSE rule given in Section \ref{mc_simulation}. 
\end{tablenotes}
\end{threeparttable}}\end{table}

The data is available through \cite{nakamura2014fiscal}. It is a balanced (after trimming) state-level yearly panel data with 51 states from 1971-2005 years. The military spending data is collected from the electronic database of DD-350 military procurement forms of the US Department of Defense. The state output is measured by state DGP collected from the US Bureau of Economics Analysis (BEA). The state population data is from the Census Bureau. Data on oil prices is from West Texas Intermediate. The Federal Funds rate is from the FRED database of the St. Louis Federal Reserve. The state inflation measures are constructed from several sources. For more details on data construction, readers are referred to \cite{nakamura2014fiscal}. \footnote{ For replication, the data and code used for this empirical study are included in the supplementary material and also publicly available at \nolinkurl{http://kaichengchen.github.io/twlasso_paneldml_replication.zip}.}


Table \href{table1_emp}{6.1} provides benchmark results for the original model with different choices of control variables. All estimates (columns 6) of are given by 2SLS with two-way fixed effects and the standard errors (s.e.) are calculated using CHS and DKA formulas given in Section \ref{partial_linear_unob}. The estimates of the multiplier replicate those given in \cite{nakamura2014fiscal} with significant differences in the standard errors. It is because the variance estimates here account for the potential two-way dependence while the variance estimator used in \cite{nakamura2014fiscal} assumes cross-sectional independence. 

\begin{table}[th]
\label{table2_emp}
\centering{ \begin{threeparttable}
\begin{tabular}
[c]{cccccccc}%
\multicolumn{8}{c}{Table 6.2:  Open economy relative multiplier estimates from the extended model.} \\\hline\hline
(1)&(2)&(3)&(4)&(5)&(6)&(7)&(8) \\
Cross-&Unobs.&Poly.&Param.&First & &CHS& DKA \\ 
Fitting&Heterog.&Trans.&Gen.&Stage&$\widehat{\theta}$&s.e.&s.e.  \\\hline 
\multirow{4}*{No}&\multirow{4}*{Mundlak}&\multirow{4}*{None}&\multirow{4}*{7}&POLS&1.51&0.66&0.82 \\
&&&&H LASSO& 1.43&0.66&0.81 \\
&&&&C LASSO& 1.43&0.66&0.81 \\
&&&&TW LASSO&1.42&0.69&0.84 \\\hline 
\multirow{4}*{No}&\multirow{4}*{Mundlak}&\multirow{4}*{2nd}&\multirow{4}*{35}&POLS&1.73&0.99&1.15  \\
&&&&H LASSO&1.73&1.01&1.17 \\
&&&&C LASSO&1.75&1.02&1.19 \\
&&&&TW LASSO&1.73&1.00&1.16 \\\hline 
\multirow{4}*{No}&\multirow{4}*{Mundlak}&\multirow{4}*{3rd}&\multirow{4}*{119}&POLS &2.20&1.19&1.37 \\
&&&&H LASSO&1.97&1.16&1.38 \\
&&&&C LASSO&0.98&0.66&0.82 \\
&&&&TW LASSO&1.47&0.59&0.75 \\\hline 
\end{tabular}
\begin{tablenotes}
\small
\item Note: Tuning parameters are chosen as $C_\lambda = 2$, and $\gamma= 0.1/ \log(N \vee T)$. The 7 original regressors are used for initial estimation, and at most 10 iterations are used in calculating the penalty weights. H: heteroskedastic-LASSO; C: cluster-LASSO; TW: two-way cluster-LASSO. The number of predictors generated by the polynomial transformation is reported in column (4). Standard errors are calculated with the truncation parameter $M$ chosen by the min-MSE rule given in Section \ref{mc_simulation}. 
\end{tablenotes}
\end{threeparttable}}\end{table}

The main comparisons are done in Tables \href{table2_emp}{6.2} and \href{table3_emp}{6.3}. In Table \href{table2_emp}{6.2}, no cross-fitting is performed in the first stage. The number of parameters associated with regressors generated by the polynomials transformations are reported in column (4). Overall, with more controls and the polynomial transformation of the observables, the standard errors are generally larger than those in \href{table1_emp}{6.1}. With no transformations of the original regressors, the estimates obtained by four different methods are similar and are consistent with the baseline results. As the flexibility increases with the higher-order polynomial transformations, the number of selected regressors increases across all methods. While the standard errors of most approaches become larger and the estimates deviate from the baseline results, the proposed approach remains less noisy. This indicates that many higher-order polynomials included in the extended model for robustness in the function form may not matter much. While the existing approaches tend to over-select those terms under potential two-way dependence, the proposed LASSO method is robust against over-selection and the panel DML estimator remains accurate. Table \href{table3_emp}{6.3} demonstrates the comparison between various sparse methods with the clustered-panel cross-fitting \footnote{Due to a smaller sample used in the first-step estimation and multicollinearity among the polynomial terms, methods based on the POLS first-step is too noisy and so they are omitted for comparison here.}. It reveals a similar pattern as in Table \href{table2_emp}{6.2}: The variability of different methods increases as the model approximated by higher-order polynomial series, except for the two-way cluster-robust approach. 

\begin{table}[th]
\label{table3_emp}
\centering{ \begin{threeparttable}
\begin{tabular}
[c]{cccccccc}%
\multicolumn{8}{c}{Table 6.3: Open economy relative multiplier estimates from the extended model.} \\\hline\hline
(1)&(2)&(3)&(4)&(5)&(6)&(7)&(8) \\
Cross-&Unobs.&Poly.&Param.&First& &CHS& DKA \\ 
Fitting&Heterog.&Trans.&Gen.&Stage&$\widehat{\theta}$&s.e.&s.e.  \\\hline 
\multirow{3}*{Yes}&\multirow{3}*{Mundlak}&\multirow{3}*{None}&\multirow{3}*{7}&H LASSO& 1.28&1.73&2.00  \\ 
&&&&C LASSO &1.32&1.75&2.03\\ 
&&&&TW LASSO&1.38&1.27&1.54 \\\hline 
\multirow{3}*{Yes}&\multirow{3}*{Mundlak}&\multirow{3}*{2nd}&\multirow{3}*{35}&H LASSO& 1.12&2.18&2.52  \\ 
&&&&C LASSO&1.46&1.95&2.24 \\ 
&&&&TW LASSO&0.01&1.29&1.56 \\\hline 
\multirow{3}*{Yes}&\multirow{3}*{Mundlak}&\multirow{3}*{3rd}&\multirow{3}*{119}&H LASSO&1.81&3.17&3.47 \\ 
&&&&C LASSO&1.25&1.59&1.91 \\
&&&&TW LASSO&1.34 & 1.37&1.64 \\\hline 
\end{tabular}
\begin{tablenotes}
\small
\item Note: The tuning parameters are chosen as $(K,L)=(4,8)$, $C_\lambda = 2$, and $\gamma= 0.1/ \log(p\vee N \vee T)$. The 7 original regressors are used for initial estimation, and at most 10 iterations are used in calculating the penalty weights. H: heteroskedastic-LASSO; C: cluster-LASSO; TW: two-way cluster-LASSO. The number of predictors generated by the polynomial transformation is reported in column (4). Standard errors are calculated with the truncation parameter $M$ chosen by the min-MSE rule given in Section \ref{mc_simulation}. 
\end{tablenotes}
\end{threeparttable}}\end{table}

To conclude, the empirical study of the government spending multiplier using a flexible model and sparse methods illustrates the issue of hidden dimensionality. In the current example, the estimates obtained through the high-dimensional methods do not deviate much from the baseline results, so it implies the nonlinear effects omitted from the original model may not be very relevant. While the two-way cluster-LASSO and the inference procedure remain relatively accurate and provide results as a robustness check, other sparse methods tend to over-select and the DML estimators become too noisy to interpret.

\section{Conclusion and Discussion} \label{sec_conclusion}
The inferential theory for high-dimensional models is particularly relevant in panel data settings where the modeling of unobserved heterogeneity commonly leads to high-dimensional nuisance parameters. This paper enriches the toolbox of researchers in dealing with high-dimensional panel models. Particularly, I propose a package of tools that deal with the estimation and inference in high-dimensional panel models that feature two-way cluster dependence and unobserved heterogeneity. I first develop a weighted LASSO approach for two-way clustered panels. As is shown in the statistical analysis of the two-way cluster-LASSO, the convergence rates are slow due to the cluster dependence, making it challenging for inference purposes. However, by utilizing a cross-fitting method designed for a two-way clustered panel, the rate requirement for the first step can be substantially relaxed, making the proposed two-way cluster-LASSO a feasible first-step estimator for the panel-DML inference procedure in a high-dimensional semiparametric model. Individually, both the two-way cluster-LASSO and the clustered-panel cross-fitting can be of independent interest; Together, they extend the DML approach to panel data settings. Two concerns are also revealed regarding cross-fitting for DML in panel data setting, and alternative approaches are discussed and left for future research.  In the partial linear panel model with high dimensionality as a special case, I further consider the unobserved heterogeneity and inferential theory using the full sample. The validity of the full-sample estimation and inference is established under a slightly stronger sparsity condition compared to the general case with cross-fitting.

The estimation and inferential theory are empirically relevant. I illustrate the proposed approaches in an empirical example and exemplify that high-dimensionality could be hidden in questions not traditionally considered high-dimensional. In practice, when the question is naturally high-dimensional and answered by panel data, then the proposed approaches are natural solutions. When the questions are originally not high-dimensional, it is reasonable to start with a simple model as a baseline and then extend it to a more general and flexible model for a robustness check.

\section*{Data Availability Statement}
The author confirms that the data supporting the findings of this study are available within the article and its supplementary materials.

\section*{Acknowledgment}
For helpful comments and discussions, I thank Tim Vogelsang, Hugo Freeman, Jeff Wooldridge, Antonio Galvao, Kyoo il Kim, Shlomo Levental, Chiang Harold, Rahul Singh, Le Wang, Louise Laage, and Saera Oh. I wish to thank Whitney Newey for his enlightening lectures in 2023 Asian Summer School in Econometrics and Statistics and helpful advice. I also appreciate all helpful comments from seminar participants at Michigan State and conference participants at AMES2024, CES2024, ESIF-AIML2024, MEG2024, CES2025, ESWC2025.

\bibliographystyle{elsarticle-harv}
\bibliography{dml.bib}

@article{cao2025neighborhood,
  title={Neighborhood Stability in Double/Debiased Machine Learning with Dependent Data},
  author={Cao, Jianfei and Leung, Michael P},
  journal={arXiv preprint arXiv:2511.10995},
  year={2025}
}

@article{chen2022debiased,
  title={Debiased machine learning without sample-splitting for stable estimators},
  author={Chen, Qizhao and Syrgkanis, Vasilis and Austern, Morgane},
  journal={Advances in Neural Information Processing Systems},
  volume={35},
  pages={3096--3109},
  year={2022}
}

@article{Shen2025,
journal={NBER Working Papers},
author={Zhouyu Shen and Dacheng Xiu},
title={Can Machines Learn Weak Signals?},
year={2025},
month={Jan},
number={33421},
}

@article{newey1987simple,
  title={A Simple, Positive Semi-Definite, Heteroskedasticity and Autocorrelation},
  author={Newey, Whitney K and West, Kenneth D},
  journal={Econometrica},
  volume={55},
  number={3},
  pages={703--708},
  year={1987}
}

@misc{hounyo2025,
author = {Hounyo, Ulrich and Lin, Jiahao},
year = {2025},
month = {01},
pages = {},
title = {Projection-Based Wild Bootstrap Under General Two-Way Cluster Dependence with Serial Dependence},
}

@article{davezies2025analytic,
  title={Analytic inference with two-way clustering},
  author={Davezies, Laurent and D'Haultf{\oe}uille, Xavier and Guyonvarch, Yannick},
  journal={arXiv preprint arXiv:2506.20749},
  year={2025}
}

@article{babii2023machine,
  title={Machine learning panel data regressions with heavy-tailed dependent data: Theory and application},
  author={Babii, Andrii and Ball, Ryan T and Ghysels, Eric and Striaukas, Jonas},
  journal={Journal of Econometrics},
  volume={237},
  number={2},
  pages={105315},
  year={2023},
  publisher={Elsevier}
}

@article{kock2019uniform,
  title={Uniform inference in high-dimensional dynamic panel data models with approximately sparse fixed effects},
  author={Kock, Anders Bredahl and Tang, Haihan},
  journal={Econometric Theory},
  volume={35},
  number={2},
  pages={295--359},
  year={2019},
  publisher={Cambridge University Press}
}

@article{chetverikov2021cross,
  title={On cross-validated lasso in high dimensions},
  author={Chetverikov, Denis and Liao, Zhipeng and Chernozhukov, Victor},
  journal={Ann. Statist.},
  volume={49},
  number={3},
  pages={1300--1317},
  year={2021},
  publisher={Institute of Mathematical Statistics}
}

@article{gonccalves2011moving,
  title={The moving blocks bootstrap for panel linear regression models with individual fixed effects},
  author={Gon{\c{c}}alves, Silvia},
  journal={Econometric Theory},
  volume={27},
  number={5},
  pages={1048--1082},
  year={2011},
  publisher={Cambridge University Press}
}

@article{bester2008inference,
  title={Inference with Dependent Data Using Cluster Covariance Estimators},
  author={Bester, C Alan and Conley, Timothy G and Hansen, Christian B},
  journal={Working Paper},
  year={2008}
}

@article{vogt2022cce,
  title={CCE estimation of high-dimensional panel data models with interactive fixed effects},
  author={Vogt, Michael and Walsh, Christopher and Linton, Oliver},
  journal={arXiv preprint arXiv:2206.12152},
  year={2022}
}

@article{gao2024robust,
  title={Robust Inference for High-Dimensional Panel Data Models},
  author={Gao, Jiti and Peng, Bin and Yan, Yayi},
  journal={Available at SSRN 4825772},
  year={2024}
}

@article{chernozhukov2021lasso,
  title={Lasso-driven inference in time and space},
  author={Chernozhukov, Victor and Karl H{\"a}rdle, Wolfgang and Huang, Chen and Wang, Weining},
  journal={Ann. Statist.},
  volume={49},
  number={3},
  pages={1702--1735},
  year={2021},
  publisher={Institute of Mathematical Statistics}
}

@article{mackinnon2021wild,
  title={Wild bootstrap and asymptotic inference with multiway clustering},
  author={MacKinnon, James G and Nielsen, Morten {\O}rregaard and Webb, Matthew D},
  journal={Journal of Business \& Economic Statistics},
  volume={39},
  number={2},
  pages={505--519},
  year={2021},
}

@article{gao2022refined,
  title={Refined Cram{\'e}r-type moderate deviation theorems for general self-normalized sums with applications to dependent random variables and winsorized mean},
  author={Gao, Lan and Shao, Qi-Man and Shi, Jiasheng},
  journal={Ann. Statist.},
  volume={50},
  number={2},
  pages={673--697},
  year={2022},
  publisher={Institute of Mathematical Statistics}
}

@book{pena2009self,
  title={Self-normalized processes: Limit theory and Statistical Applications},
  author={Pe{\~n}a, Victor H and Lai, Tze Leung and Shao, Qi-Man},
  year={2009},
  publisher={Springer}
}

@article{bickel2009simultaneous,
  title={Simultaneous analysis of Lasso and Dantzig selector},
  author={Bickel, Peter J and Ritov, Ya’acov and Tsybakov, Alexandre B},
  journal={Ann. Statist.},
  volume={37},
  number={4},
  pages={1705-1732},
  year={2009},
  publisher={Institute of Mathematical Statistics}
}

@article{belloni2012sparse,
  title={Sparse models and methods for optimal instruments with an application to eminent domain},
  author={Belloni, Alexandre and Chen, Daniel and Chernozhukov, Victor and Hansen, Christian},
  journal={Econometrica},
  volume={80},
  number={6},
  pages={2369--2429},
  year={2012},
  publisher={Wiley Online Library}
}

@article{mundlak1978pooling,
  title={On the Pooling of Cross-section and Time-series Data},
  author={Mundlak, Yair},
  journal={Econometrica},
  volume={46(1)},
  pages={69-85},
  year={1978}
}

@article{wooldridge2021two,
  title={Two-way fixed effects, the two-way mundlak regression, and difference-in-differences estimators},
  author={Wooldridge, Jeffrey M},
  journal={Empirical Economics},
    volume={69(5)},
  pages={2545-2587},
  year={2025}
}

@article{belloni2016inference,
  title={Inference in high-dimensional panel models with an application to gun control},
  author={Belloni, Alexandre and Chernozhukov, Victor and Hansen, Christian and Kozbur, Damian},
  journal={Journal of Business \& Economic Statistics},
  volume={34},
  number={4},
  pages={590--605},
  year={2016},
  publisher={Taylor \& Francis}
}

@article{CHS_Restat,
  title={Standard errors for two-way clustering with serially correlated time effects},
  author={Chiang, Harold D and Hansen, Bruce E and Sasaki, Yuya},
  journal={Review of Economics and Statistics},
  pages={1--40},
  year={2024},
  publisher={MIT Press 255 Main Street, 9th Floor, Cambridge, Massachusetts 02142, USA~…}
}

@article{menzel2021bootstrap,
  title={Bootstrap with cluster-dependence in two or more dimensions},
  author={Menzel, Konrad},
  journal={Econometrica},
  volume={89},
  number={5},
  pages={2143--2188},
  year={2021},
  publisher={Wiley Online Library}
}

@article{wooldridge2020inference,
  title={Inference in approximately sparse correlated random effects probit models with panel data},
  author={Wooldridge, Jeffrey M and Zhu, Ying},
  journal={Journal of Business \& Economic Statistics},
  volume={38},
  number={1},
  pages={1--18},
  year={2020},
  publisher={Taylor \& Francis}
}

@article{nakamura2014fiscal,
  title={Fiscal stimulus in a monetary union: Evidence from US regions},
  author={Nakamura, Emi and Steinsson, J{\'o}n},
  journal={American Economic Review},
  volume={104},
  number={3},
  pages={753--792},
  year={2014},
  publisher={American Economic Association 2014 Broadway, Suite 305, Nashville, TN 37203}
}

@article{Andrews1991,
   abstract = {This paper is concerned with the estimation of covariance matrices in the presence of heteroskedasticity and autocorrelation of unknown forms. Currently available estimators that are designed for this context depend upon the choice of a lag truncation parameter and a weighting scheme. Results in the literature provide a condition on the growth rate of the lag truncation parameter as T → ∞ that is sufficient for consistency. No results are available, however, regarding the choice of lag truncation parameter for a fixed sample size, regarding data-dependent automatic lag truncation parameters, or regarding the choice of weighting scheme. In consequence, available estimators are not entirely operational and the relative merits of the estimators are unknown. This paper addresses these problems. The asymptotic truncated mean squared errors of estimators in a given class are determined and compared. Asymptotically optimal kernel/weighting scheme and bandwidth/lag truncation parameters are obtained using an asymptotic truncated mean squared error criterion. Using these results, data-dependent automatic bandwidth/lag truncation parameters are introduced. The finite sample properties of the estimators are analyzed via Monte Carlo simulation.},
   author = {Donald W. K. Andrews},
   issn = {00129682},
   issue = {3},
   journal = {Econometrica},
   pages = {817},
   title = {Heteroskedasticity and Autocorrelation Consistent Covariance Matrix Estimation},
   volume = {59},
   year = {1991},
}

@article{dehling2010central,
  title={Central limit theorem and the bootstrap for U-statistics of strongly mixing data},
  author={Dehling, Herold and Wendler, Martin},
  journal={Journal of Multivariate Analysis},
  volume={101},
  number={1},
  pages={126--137},
  year={2010},
  publisher={Elsevier}
}

@article{hansen1992consistent,
  title={Consistent covariance matrix estimation for dependent heterogeneous processes},
  author={Hansen, Bruce E},
  journal={Econometrica},
  pages={967--972},
  year={1992},
  publisher={JSTOR}
}

@book{hansen2022econometrics,
  title={Econometrics},
  author={Hansen, Bruce},
  year={2022},
  publisher={Princeton University Press}
}

@book{davidson1994stochastic,
  title={Stochastic limit theory: An introduction for econometricians},
  author={Davidson, James},
  year={1994},
  publisher={OUP Oxford}
}

@article{chen2024fixed,
  title={Fixed-b asymptotics for panel models with two-way clustering},
  author={Chen, Kaicheng and Vogelsang, Timothy J},
  journal={Journal of Econometrics},
  volume={244},
  number={1},
  pages={105831},
  year={2024},
  publisher={Elsevier}
}

@article{breinlich2022machine,
  title={Machine learning in international trade research-evaluating the impact of trade agreements},
  author={Breinlich, Holger and Corradi, Valentina and Rocha, Nadia and Ruta, Michele and Santos Silva, JMC and Zylkin, Thomas},
  year={2022},
  journal={CEPR Discussion Paper}
}

@techreport{chernozhukov2019demand,
  title={Demand analysis with many prices},
  author={Chernozhukov, Victor and Hausman, Jerry A and Newey, Whitney K},
  year={2019},
  institution={National Bureau of Economic Research}
}

@article{Chernozhukov2018,
abstract = {We revisit the classic semi-parametric problem of inference on a low-dimensional parameter $\theta$0 in the presence of high-dimensional nuisance parameters $\eta$0. We depart from the classical setting by allowing for $\eta$0 to be so high-dimensional that the traditional assumptions (e.g. Donsker properties) that limit complexity of the parameter space for this object break down. To estimate $\eta$0, we consider the use of statistical or machine learning (ML) methods, which are particularly well suited to estimation in modern, very high-dimensional cases. ML methods perform well by employing regularization to reduce variance and trading off regularization bias with overfitting in practice. However, both regularization bias and overfitting in estimating $\eta$0 cause a heavy bias in estimators of $\theta$0 that are obtained by naively plugging ML estimators of $\eta$0 into estimating equations for $\theta$0. This bias results in the naive estimator failing to be N-1/2 consistent, where N is the sample size. We show that the impact of regularization bias and overfitting on estimation of the parameter of interest $\theta$0 can be removed by using two simple, yet critical, ingredients: (1) using Neyman-orthogonal moments/scores that have reduced sensitivity with respect to nuisance parameters to estimate $\theta$0; (2) making use of cross-fitting, which provides an efficient form of data-splitting. We call the resulting set of methods double or debiased ML (DML). We verify that DML delivers point estimators that concentrate in an N-1 -neighbourhood of the true parameter values and are approximately unbiased and normally distributed, which allows construction of valid confidence statements. The generic statistical theory of DML is elementary and simultaneously relies on only weak theoretical requirements, which will admit the use of a broad array of modern ML methods for estimating the nuisance parameters, such as random forests, lasso, ridge, deep neural nets, boosted trees, and various hybrids and ensembles of these methods. We illustrate the general theory by applying it to provide theoretical properties of the following: DML applied to learn the main regression parameter in a partially linear regression model; DML applied to learn the coefficient on an endogenous variable in a partially linear instrumental variables model; DML applied to learn the average treatment effect and the average treatment effect on the treated under unconfoundedness; DML applied to learn the local average treatment effect in an instrumental variables setting. In addition to these theoretical applications, we also illustrate the use of DML in three empirical examples.},
author = {Chernozhukov, Victor and Chetverikov, Denis and Demirer, Mert and Duflo, Esther and Hansen, Christian and Newey, Whitney and Robins, James},
file = {:Users/leckie/Downloads/CCDDHNR.pdf:pdf},
issn = {1368423X},
journal = {Econometrics Journal},
number = {1},
pages = {C1--C68},
title = {{Double/debiased machine learning for treatment and structural parameters}},
volume = {21},
year = {2018}
}

@article{Semenova2023,
author = {Semenova, Vira and Goldman, Matt and Chernozhukov, Victor and Taddy, Matt},
file = {:Users/leckie/Library/Application Support/Mendeley Desktop/Downloaded/Semenova et al. - 2023 - Inference on heterogeneous treatment effects in high‐dimensional dynamic panels under weak dependence.pdf:pdf},
issn = {1759-7323},
journal = {Quantitative Economics},
number = {2},
pages = {471--510},
publisher = {The Econometric Society},
title = {{Inference on heterogeneous treatment effects in high‐dimensional dynamic panels under weak dependence}},
volume = {14},
year = {2023}
}

@article{Chiang2022,
abstract = {This article investigates double/debiased machine learning (DML) under multiway clustered sampling environments. We propose a novel multiway cross-fitting algorithm and a multiway DML estimator based on this algorithm. We also develop a multiway cluster robust standard error formula. Simulations indicate that the proposed procedure has favorable finite sample performance. Applying the proposed method to market share data for demand analysis, we obtain larger two-way cluster robust standard errors for the price coefficient than nonrobust ones in the demand model.},
archivePrefix = {arXiv},
arxivId = {1909.03489},
author = {Chiang, Harold D. and Kato, Kengo and Ma, Yukun and Sasaki, Yuya},
issn = {15372707},
journal = {Journal of Business and Economic Statistics},
keywords = {Double/debiased machine learning,Multiway clustering,Multiway cross-fitting},
number = {3},
pages = {1046--1056},
title = {Multiway cluster robust double/debiased machine learning},
volume = {40},
year = {2022}
}

\section{Appendix}
\label{app_0}
This appendix provides extra results and proofs for the two-way cluster-LASSO and cross-fitting. Proofs for the panel DML and the partial linear model are provided in the Online Supplementary Material.\\

\noindent \textit{\textbf{The Slow Rate of Convergence in an Oracle Case} } \\
To illustrate the slow rate of convergence under the component structure representation, we consider the simplest multivariate mean model $Y_{it} = \theta_0 + f(\alpha_i,\gamma_t,\epsilon_{it})$ where $Y_{it}$ is a high-dimensional vector with dimension $s = o(NT)$ and $\theta_0 = {\rm E}[Y_{it}]$. To estimate the high-dimensional vector $\theta_0$, we consider the sample mean estimator $\widehat{\theta} = \frac{1}{NT}\sum_{i=1}^N\sum_{t=1}^T Y_{it}$. Consider the following decomposition:
\begin{align}
    \widehat{\theta} - \theta_0 = \frac{1}{NT}\sum_{i=1}^N\sum_{t=1}^T (a_i+g_t+e_{it}) = \frac{1}{N}\sum_{i=1}^N a_i + \frac{1}{T}\sum_{t=1}^T g_t + \frac{1}{NT}\sum_{i=1}^N\sum_{t=1}^T e_{it}, \label{mean_estimator}
\end{align}
where $a_i := {\rm E}[Y_{it}-\theta_0| \alpha_i], g_t := {\rm E}[Y_{it} - \gamma_t]$, and $e_{it} := Y_{it}-\theta_0 - a_i - g_t$. For illustration purposes, suppose those components are i.i.d sequences and independent of each other. Then it can be shown that, under some regularity conditions, $\Vert\widehat{\theta} - \theta_0\Vert_2 = O_P\left(\sqrt{\frac{s}{N\wedge T}}\right)$, which is slower than the common rate requirement for inferential theory. \\

\noindent\textit{\textbf{Algorithm 1: Implementation of the Two-Way Cluster-LASSO}}
\begin{enumerate}
    \item[i] Let $\hat{\omega}_{j,0}^2$ be the product of the sample variance of $f_{it,j}$ and the sample variance of $Y$, for each j. Set ${\lambda}_0 = 2C_\lambda \sqrt{NT} \Phi^{-1}\left(1-\frac{\gamma}{2p}\right)$.
    Obtain the initial residuals $\tilde{V}_0$ using post-LASSO with the penalty level and weights set as above. 
    
    \item[ii] Given residual $\tilde{V}_l$ for $l=0,1,2,...$, calculate $\hat{\omega}_{j,l}$ according to \ref{feasible_w} for each $j$. Run post-LASSO with $\hat{\omega}_{j,l}$ and $\lambda = 2C_\lambda \sqrt{N}T \Phi^{-1}\left(1-\frac{\gamma}{2p}\right)$and update the residual as $\tilde{V}_{l+1}$.

    \item[iv] Repeat Step ii until it converges with the number of iteration greater or equal to 2. Obtains the (post) LASSO estimates from the last iteration. 
\end{enumerate}

\noindent\textit{\textbf{Some Lemmas for Two-Way Clustering}} \\
We will first introduce two lemmas regarding the law of large number (LLN) and the central limit theorem (CLT) for two-way clustered arrays with correlated time effects. They are restated and generalized from Theorems 1 and 2 in \cite{CHS_Restat}. The following notations will also be used frequently throughout the appendices:  Let $\{W_{it}: i=1,..,N; t=1,...,T\}$ be an array of random vectors taking values in $\mathbb{R}^{p}$. Let $F: \mathbb{R}^{p}\rightarrow \mathbb{R}^{k}$ be a measurable function where $k$ is a constant. We define the components of Hajek-type decomposition as $a_i = {\rm E}[F(W_{it}) - {\rm E}[F(W_{it})]|\alpha_i]$, $g_t = {\rm E}[F(W_{it})- {\rm E}[F(W_{it})]|\gamma_t]$, and $e_{it} = W_{it}- {\rm E}[F(W_{it})] - a_i -g_t$ and their corresponding (long-run) variance-covariance matrices: 
\begin{align*}
    \Sigma_a = {\rm E}[a_ia_i'], \ \ \ \Sigma_g = \sum_{l=-\infty}^{\infty} {\rm E}[g_tg_{t+l}'], \ \ \ \Sigma_e = \sum_{l=-\infty}^{\infty} {\rm E}[e_{it}e_{i,t+l}'].
\end{align*}
We can rewrite $F(W_{it}) = a_i+g_t+e_{it}$. Suppose that $W_{it}$ satisfy Assumption \hyperref[ahk]{AHK}, then the decomposition has the following properties:
\begin{enumerate}
    \item[(i)] $\{a_i\}_{i\geq 1}$ is a sequence of i.i.d random vectors,  $\{g_t\}_{t\geq 1}$ are strictly stationary and $\beta$-mixing with the the mixing coefficient $\beta_g(m)\leq \beta_{\gamma}(m)$ for all $m\geq 1$; for each $i$, $\{e_{it}\}_{t\geq 1}$ is also strictly stationary; and $a_i$ is independent of $g_t$. 
    \item[(ii)] $a_i$, $b_t$, $e_{it}$ are mean zero. 
    \item[(iii)] Conditional on $(\gamma_t,\gamma_r)$, $e_{it}$ and $e_{jr}$ are independent for $j\ne i$. 
    \item[(iv)] The sequences $\{a_i\}$, $\{g_t\}$, $\{e_{it}\}$ are mutually uncorrelated. 
\end{enumerate}

Properties (i) and (ii) are straightforward. Property (iii) is due to the assumption that $\{\alpha_i\}$ and $\{\varepsilon_{it}\}$ are each i.i.d sequence and independent of each other. Property (iv) is less obvious. One can show ${\rm E}[e_{it}|\gamma_r] = 0 $ and ${\rm E}[e_{it}|\alpha_j]$ for any $i,t,j,r$. It is less obvious to see ${\rm E}[e_{it}|\gamma_r] = 0 $ for some $r\ne t$:
\begin{align*}
    {\rm E}[e_{it}|\gamma_r] & = {\rm E}[\psi\left(W_{it};\theta_0,\eta_0\right)|\gamma_r] - {\rm E}[a_i|\gamma_r]-{\rm E}[g_s|\gamma_t]\\
    & = {\rm E}[{\rm E}[\psi\left(f(\alpha_i,\gamma_t,\varepsilon_{it});\theta_0,\eta_0\right)|\gamma_t,\gamma_r]|\gamma_r] - {\rm E}[a_i]-{\rm E}[g_t|\gamma_r]\\
    & = {\rm E}[{\rm E}[\psi\left(f(\alpha_i,\gamma_t,\varepsilon_{it});\theta_0,\eta_0\right)|\gamma_t]|\gamma_r] - {\rm E}[a_i]-{\rm E}[g_t|\gamma_r]\\
    & = {\rm E}[g_t|\gamma_r] - {\rm E}[g_t|\gamma_r]=0
\end{align*}
where the second equality follows from the iterated expectation and the independence of $\alpha_i$ and $\gamma_r$ and the third equality follows from that given $\gamma_t$, $\gamma_r$ is independent of $(\alpha_i,\gamma_t,\varepsilon_{it})$.  

Using the properties above, one can derive the LLN and CLT for two-way clustered panel data. The following two lemmas are regarding the LLN and CLT for two-way clustered data with correlated time effects.

\begin{lemma}
\label{lemma_a1}
    Suppose that $W_{it}$ satisfy Assumption \hyperref[ahk]{AHK} and ${\rm E}\left[\Vert F(W_{it})\Vert^{4(r+\delta)}\right]<\infty$. Then, 
\begin{itemize}
    \item[i] $\Vert \Sigma_a \Vert <\infty$,  $\Vert \Sigma_g \Vert <\infty$, and  $\Vert\Sigma_e \Vert <\infty$ where 
    \item[ii] $\emph{Var}\left(\emph{E}_{NT}[F(W_{it})]\right)  = \frac{1}{N}\Sigma_a + \frac{1}{T}\Sigma_g (1+o(1)) + \frac{1}{NT}\Sigma_e(1+o(1))$ as $N,T\to\infty$.
    \item[iii]  $\emph{E}_{NT}[F(W_{it})]\overset{p}{\to} {\rm E}[F(W_{it})]$ as $N,T\to\infty$.
\end{itemize}
\end{lemma}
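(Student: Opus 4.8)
The plan is to exploit the orthogonal decomposition $F(W_{it})-\mu = a_i+g_t+e_{it}$, where $\mu:=\mathrm{E}[F(W_{it})]$, together with the four structural properties recorded above. This reduces the whole lemma to three separate variance analyses: an i.i.d.\ average $\frac1N\sum_i a_i$, a stationary $\beta$-mixing average $\frac1T\sum_t g_t$, and a doubly-indexed array $\frac1{NT}\sum_{i,t}e_{it}$ that is conditionally uncorrelated across $i$. I would prove part (i) first, since the summability it delivers is what makes the computations in (ii) and (iii) go through.

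For $\Sigma_a$, the $L^2$-contraction property of conditional expectation gives $\mathrm{E}[\Vert a_i\Vert^2]\le\mathrm{E}[\Vert F(W_{it})-\mu\Vert^2]$, finite by the moment assumption, so $\Vert\Sigma_a\Vert<\infty$ is immediate. The substantive work is the finiteness of the \emph{long-run} variances. Since $g_t$ is a measurable function of $\gamma_t$, it inherits the geometric $\beta$-mixing rate $\beta_g(m)\le\beta_\gamma(m)\le c_\kappa e^{-\kappa m}$; a standard covariance inequality for $\beta$-mixing sequences (Davydov/Rio type), combined with the moment bound, then controls $\Vert\mathrm{E}[g_t g_{t+l}']\Vert$ by a constant times a power of $\beta_g(l)$, whose geometric decay secures $\sum_l\Vert\mathrm{E}[g_t g_{t+l}']\Vert<\infty$. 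For $\Sigma_e$ the key observation is that, for fixed $i$, the serial dependence of $e_{it}$ over $t$ runs entirely through $\gamma_t$, because $\varepsilon_{it}$ is i.i.d.\ over $t$ and $a_i$ is constant in $t$; hence $\{e_{it}\}_t$ is again a function of a geometrically $\beta$-mixing input, and the same inequality delivers summability. This is the step I expect to be the main obstacle, since it is precisely where the generalization beyond the i.i.d.-$\gamma_t$ setting of \cite{CHS_Restat} bites, and where the mixing rate and the available moment order must be balanced.

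For part (ii) I would expand $\mathrm{E}_{NT}[F(W_{it})]-\mu=\frac1N\sum_i a_i+\frac1T\sum_t g_t+\frac1{NT}\sum_{i,t}e_{it}$ and invoke mutual uncorrelatedness (property (iv)) to annihilate all cross terms, leaving three variances. The first equals $\frac1N\Sigma_a$ exactly by i.i.d.\ of $\{a_i\}$. For the second, stationarity gives $\mathrm{Var}(\frac1T\sum_t g_t)=\frac1T\sum_{|l|<T}(1-|l|/T)\mathrm{E}[g_0 g_l']=\frac1T\Sigma_g(1+o(1))$ by dominated convergence using the summability from (i). For the third I would first show the cross-unit covariances vanish: for $i\ne j$, conditioning on $(\gamma_t,\gamma_r)$ and using $\mathrm{E}[e_{it}\mid\gamma_t,\gamma_r]=\mathrm{E}[e_{it}\mid\gamma_t]=0$ (valid because $(\alpha_i,\varepsilon_{it})$ is independent of $\gamma_r$ given $\gamma_t$), property (iii) forces $\mathrm{E}[e_{it}e_{jr}']=0$. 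Only the diagonal $i=j$ survives, and stationarity in $t$ yields $\frac1{NT}\Sigma_e(1+o(1))$, as claimed.

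Part (iii) is then immediate. By (ii), $\mathrm{Var}(\mathrm{E}_{NT}[F(W_{it})])=\frac1N\Sigma_a+\frac1T\Sigma_g(1+o(1))+\frac1{NT}\Sigma_e(1+o(1))$, each term vanishing as $N,T\to\infty$; since the estimator is unbiased for $\mu$, Chebyshev's inequality gives $\mathrm{E}_{NT}[F(W_{it})]\overset{p}{\to}\mu$. The only care needed is that the limit is taken jointly in $(N,T)$, but as the variance bound is a finite sum of terms each tending to zero, joint convergence poses no additional difficulty.
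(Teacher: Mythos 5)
Your overall architecture --- establish summability of the three (long-run) variance matrices first, then compute the variance of the pooled mean term by term using mutual uncorrelatedness of $\{a_i\},\{g_t\},\{e_{it}\}$, then conclude by Chebyshev --- is the right one. For comparison: the paper does not spell this argument out at all; it lists the decomposition properties (i)--(iv) and then disposes of the lemma by citing Theorems 1 and 2 of \cite{CHS_Restat}. Your parts (ii) and (iii) are correct as written: the cross-unit step $ {\rm E}[e_{it}e_{jr}'] = {\rm E}\left[{\rm E}[e_{it}\mid\gamma_t,\gamma_r]\,{\rm E}[e_{jr}\mid\gamma_t,\gamma_r]'\right] = 0$ is exactly the computation the paper records just before the lemma, the i.i.d.\ and Ces\`{a}ro/dominated-convergence steps are standard given part (i), and unbiasedness plus vanishing variance gives the joint-limit LLN.

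There is, however, one step that fails as literally written: your justification of $\Vert\Sigma_e\Vert<\infty$. You claim that for fixed $i$ the process $\{e_{it}\}_t$ ``is again a function of a geometrically $\beta$-mixing input'' and apply the Davydov-type covariance inequality to it directly. But $e_{it}$ is a function of $(\alpha_i,\gamma_t,\varepsilon_{it})$, and $\alpha_i$ is constant across $t$; a function of a mixing input together with a time-constant random variable is not mixing in general. Concretely, take $e_{it}=\alpha_i\gamma_t$ with $|\alpha_i|$ non-degenerate: the past sample path reveals $|\alpha_i|$, which predicts the scale of the future, so the $\beta$-mixing coefficient of $\{e_{it}\}_t$ does not decay and the unconditional covariance inequality cannot be invoked. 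The conclusion is still true, and the repair is routine: condition on $\alpha_i$. Conditionally on $\alpha_i$, the process $\{e_{it}\}_t$ is a function of $(\gamma_t,\varepsilon_{it})_t$ alone, hence $\beta$-mixing with coefficient at most $\beta_\gamma(m)$, and it is conditionally mean zero because ${\rm E}[e_{it}\mid\alpha_i]= (a_i+\mu)-\mu-a_i-{\rm E}[g_t]=0$. The conditional covariance inequality then gives $\left\Vert {\rm E}[e_{it}e_{i,t+l}'\mid\alpha_i]\right\Vert \lesssim \beta_\gamma(l)^{1-2/q}\left({\rm E}[\Vert e_{it}\Vert^q\mid\alpha_i]\right)^{2/q}$, and taking expectations and using Jensen's inequality (concavity of $x\mapsto x^{2/q}$) bounds $\left\Vert {\rm E}[e_{it}e_{i,t+l}']\right\Vert$ by $\beta_\gamma(l)^{1-2/q}\left({\rm E}\Vert e_{it}\Vert^q\right)^{2/q}$, which sums geometrically. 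This conditioning device is precisely the one the paper itself uses elsewhere --- e.g., in the proof of Theorem \ref{performance_bound}, where $e_{it,j}$ is handled conditionally on $\{\gamma_t\}_{t=1}^T$, and in the supplementary claims, where scores are treated conditionally on $\alpha_i$ via Theorems 14.12--14.13 of \cite{hansen2022econometrics} --- so your proof is complete once this step is restated.
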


\begin{lemma}
    \label{lemma_a2}
    With the same setting as in Lemma \ref{lemma_a1}, further assume that either (1) $\lambda_{min} \left[\Sigma_a\right] >0$ or $\lambda_{min} \left[\Sigma_g\right] >0$, or (2) $W_{it}$ is i.i.d. over $i$ and $t$. Then, as $N,T\to \infty$ and $N/T\to c$, $\sqrt{N} \left(\emph{E}_{NT}[F(W_{it})] - {\rm E}[F(W_{it})]\right) \overset{d}{\to} \mathcal{N}(0, \Sigma_a+c\Sigma_g)$
\end{lemma}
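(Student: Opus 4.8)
The plan is to build directly on the orthogonal Hájek-type decomposition $F(W_{it}) - {\rm E}[F(W_{it})] = a_i + g_t + e_{it}$ and on the variance expansion already recorded in Lemma \ref{lemma_a1}. Averaging and rescaling by $\sqrt{N}$ gives the exact identity
\[
\sqrt{N}\left(\mathbb{E}_{NT}[F(W_{it})] - {\rm E}[F(W_{it})]\right) = \frac{1}{\sqrt{N}}\sum_{i=1}^N a_i + \sqrt{\frac{N}{T}}\cdot\frac{1}{\sqrt{T}}\sum_{t=1}^T g_t + \frac{1}{\sqrt{N}\,T}\sum_{i=1}^N\sum_{t=1}^T e_{it},
\]
and I would treat the three terms separately and then recombine. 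The entire argument can first be reduced to the scalar case by the Cramér--Wold device: fix $\lambda\in\mathbb{R}^k$, project onto $\lambda'a_i$, $\lambda'g_t$, $\lambda'e_{it}$, and prove the one-dimensional statement, since the i.i.d., mixing, and moment structure is preserved under linear combinations.

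First I would dispose of the idiosyncratic term. By property (iv) the three averages are mutually uncorrelated, so Lemma \ref{lemma_a1}(ii) gives ${\rm Var}(\sqrt{N}\,\bar{e}) = N\cdot\frac{1}{NT}\Sigma_e(1+o(1)) = \frac{1}{T}\Sigma_e(1+o(1)) \to 0$, and Chebyshev yields $\frac{1}{\sqrt{N}\,T}\sum_{i,t} e_{it} = o_P(1)$. Next, $\{a_i\}$ is i.i.d.\ mean-zero with finite variance $\Sigma_a$ by properties (i)--(ii) and the moment bound ${\rm E}\|F(W_{it})\|^{4(r+\delta)}<\infty$, so the Lindeberg--L\'evy CLT gives $\frac{1}{\sqrt{N}}\sum_i a_i \overset{d}{\to}\mathcal{N}(0,\Sigma_a)$. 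For the time term, $\{g_t\}$ is strictly stationary and geometrically $\beta$-mixing with $\beta_g(m)\le\beta_\gamma(m)$ (hence strongly mixing), and the same moment bound delivers the summability making $\Sigma_g$ finite (Lemma \ref{lemma_a1}(i)); a CLT for stationary strongly-mixing sequences, e.g.\ via a big-block/small-block construction exploiting the geometric decay, then yields $\frac{1}{\sqrt{T}}\sum_t g_t \overset{d}{\to}\mathcal{N}(0,\Sigma_g)$, and since $N/T\to c$ the factor $\sqrt{N/T}\to\sqrt{c}$ gives $\sqrt{N/T}\cdot\frac{1}{\sqrt{T}}\sum_t g_t\overset{d}{\to}\mathcal{N}(0,c\Sigma_g)$ by Slutsky.

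The final step is joint convergence. Because $a_i$ is a measurable function of $\{\alpha_i\}$ only and $g_t$ of $\{\gamma_t\}$ only, and the sequences $\{\alpha_i\}$ and $\{\gamma_t\}$ are independent under Assumption \hyperref[ahk]{AHK}, the $a$-average and the $g$-average are independent for every $(N,T)$; their joint limit is therefore the product measure and the sum converges to $\mathcal{N}(0,\Sigma_a+c\Sigma_g)$. Adding back the negligible $e$-term by Slutsky finishes the argument, and the two regimes in the hypothesis enter only at this stage: under (1) at least one of $\Sigma_a,\Sigma_g$ is positive definite, so the limit is a genuine (possibly partially degenerate) normal at the $\sqrt{N}$ rate; under (2) the i.i.d.\ structure forces $a_i\equiv g_t\equiv 0$, so the first two terms vanish identically and the claim reduces to the already-established $\sqrt{N}\,\bar{e}=o_P(1)$, i.e.\ the degenerate law $\mathcal{N}(0,0)$.

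I expect the main obstacle to be the time-series term: invoking a mixing CLT whose limiting variance is the \emph{long-run} variance $\Sigma_g$ requires verifying both the covariance summation $\frac{1}{T}\sum_{t,s}{\rm E}[g_tg_s']\to\Sigma_g$ (a dominated-convergence step fed by geometric $\beta$-mixing together with the moment bound) and the Lindeberg/negligibility conditions of the blocking construction. The asymptotic independence of the cross-sectional and temporal pieces might look delicate, but it is exact rather than asymptotic here, descending directly from the independence of $\{\alpha_i\}$ and $\{\gamma_t\}$, so it needs no separate limiting argument.
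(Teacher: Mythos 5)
Your proposal is correct, and it is worth noting that the paper never writes this proof out: its entire argument for Lemma \ref{lemma_a2} (and \ref{lemma_a1}) is the single sentence that they ``follow from Theorems 1 and 2 in \cite{CHS_Restat}.'' What you have done is execute, from scratch, exactly the strategy the paper's setup is designed for --- the properties (i)--(iv) of the decomposition $F(W_{it})-{\rm E}[F(W_{it})]=a_i+g_t+e_{it}$ listed just before the lemmas are precisely the ingredients your argument consumes: Chebyshev plus the variance expansion of Lemma \ref{lemma_a1}(ii) kills the $e$-average at the $\sqrt{N}$ scale, Lindeberg--L\'evy handles the i.i.d.\ $a$-average, a stationary strong-mixing CLT (geometric $\beta$-mixing plus the $4(r+\delta)$ moment gives the summability $\sum_m \beta(m)^{1-2/q}<\infty$ needed for the long-run variance $\Sigma_g$) handles the $g$-average, and the convolution step is licensed by the \emph{exact} independence of the two pieces inherited from the mutual independence of $(\alpha_i)$ and $(\gamma_t)$ --- a point you correctly identify as exact rather than asymptotic, which is what makes the joint convergence trivial via characteristic functions. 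Two small observations. First, your Cram\'er--Wold reduction should explicitly cover directions $\lambda$ with $\lambda'\Sigma_g\lambda=0$ (or $\lambda'\Sigma_a\lambda=0$), where the mixing CLT is replaced by a Chebyshev argument; this is a one-line patch and your framework accommodates it. Second, your closing remark is sharper than the lemma itself: as your proof shows, the decomposition argument never actually uses hypotheses (1) or (2) --- the conclusion holds (degenerately) even in the excluded ``both components degenerate but not i.i.d.'' case, since the $e$-average still has variance $O(1/T)$ at this scaling. The hypotheses matter for the statement to be informative (i.e., for $\sqrt{N}$ to be the right rate), not for its validity, which is presumably why the paper carries them along from the cited theorems.
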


Lemmas \ref{lemma_a1} and \ref{lemma_a2} follow from Theorems 1 and 2 in \cite{CHS_Restat}. \\

\begin{proof}[\textbf{Proof of Theorem \ref{performance_bound}}]
We will first show the regularization event in terms of the infeasible penalty weights $\omega$ as defined in \ref{penaltyweights}. Due to the AHK representation as in Assumption \hyperref[ahk]{AHK}, we can decompose $f_{it,j}V_{it}$ as $ f_{it,j}V_{it} = a_{i,j} + g_{t,j}+e_{it,j}$ where $a_{i,j}:= {\rm E}[f_{it,j}V_{it}|\alpha_i]$, $g_{t,j} = {\rm E}[f_{it,j}V_{it}|\gamma_t]$, and $e_{it,j} = f_{it,j}V_{it} - a_{i,j} - g_{t,j} $, for $j=1,..,p$.

We first consider the non-degeneracy case: $\min_{j\leq p}{\rm E}(a_{i,j}^2)>0$ or $\min_{j\leq p}{\rm E}(g_{t,j}^2)>0$. Due to the inequality \ref{three_ineq}, we have, for each $j=1,...,p$: 
\begin{align*}
        & {\rm P}\left(\frac{1}{NT}\left|\sum_{i=1}^N\sum_{t=1}^T\omega_{j}^{-1}f_{it,j}V_{it}\right| > \frac{\lambda}{2c_1NT}\right) \leq   {\rm P}\left( \left|\frac{1}{\sqrt{N}}\sum_{i=1}^N\omega_{a,j}^{-1} a_{i,j} \right|  > \frac{ \lambda\sqrt{N}}{2C_\lambda NT}\right)\\
        &+ {\rm P}\left( \left |\frac{1}{\sqrt{T}}\sum_{t=1}^T \left(\sqrt{T/N})\omega_{g,j}\right)^{-1} g_{t,j} \right|  > \frac{ \lambda\sqrt{N}}{2C_\lambda NT}\right)  + {\rm P}\left( \left|\frac{1}{\sqrt{N}T} \sum_{i=1}^N\sum_{t=1}^T\omega_{e,j}^{-1} e_{it,j} \right|  > \frac{ \lambda\sqrt{N}}{2C_\lambda NT}\right) \\
         :=& p_{1,j}(\lambda) + p_{2,j}(\lambda) + p_{3,j}(\lambda) 
\end{align*}
where $C_\lambda = c_1c_\omega$. $\omega_{a,j}^2= \frac{1}{N}\sum_{i=1}^N a_{i,j}^2, \ \ \omega_{g,j}^2 = \frac{N}{T^2} \sum_{b=1}^B\left(\sum_{t\in H_b} g_{t,j}\right)^2$, $\omega_{e,j}^2=\frac{1}{NT^2}\sum_{i=1}^N\left(\sum_{t=1}^T e_{it,j}\right)^2$. Then, the union-bound inequality implies 
  \begin{align*}
        & {\rm P}\left(\max_{j=1,...,p}  \left| \sum_{i=1}^N\sum_{t=1}^T\omega_{j}^{-1}f_{it,j}V_{it}\right| > \frac{\lambda}{2c_1}\right)  \leq \sum_{j=1}^p \left[p_{1,j}(\lambda) + p_{2,j}(\lambda) + p_{3,j}(\lambda)\right] 
    \end{align*}

To bound $p_{1,j}(\lambda)$, we consider a moderate deviation theorem for self-normalized sums of independent random variables. For $j=1,...,p$, define $ \Xi_{a,j} =  \frac{\left[{\rm E}(a_{i,j})^2\right]^{1/2}}{\left[{\rm E}(a_{i,j})^{3}\right]^{1/3}}$. Under Assumption  \hyperref[regcon]{REG}(ii), $\max_{j\leq p}E|a_{i,j}|^3 <\infty$ by Hölder's inequality and Jensen's inequality. Under case (3), $\min_{j\leq p} E|a_{i,j}|^2 >0$. Therefore, $\min_j \Xi_{a,j} > 0$. By Theorem 7.4 of \cite{pena2009self} with $\delta=1$, we have for any $x\in [0,N^{1/6}\Xi_{a,j}]$ that 
\begin{align*}
     {\rm P}\left( \left|\frac{1}{\sqrt{N}}\sum_{i=1}^N\omega_{a,j}^{-1} a_{i,j} \right|  > x \right) \leq 2\left( 1- \Phi(x) \right) \left[ 1 + O(1)\left(\frac{1+x}{N^{1/6}\Xi_{a,j}}\right)^{3}\right]
\end{align*}
Let $l_{a,N}$ be some positive increasing sequence. If $N^{1/6}\Xi_{a,j}/l_{a,N} - 1>0$ and $x\in [0,N^{1/6}\Xi_{a,j}/l_{a,N} - 1]$, then
\begin{align*}
     {\rm P}\left( \left|\frac{1}{\sqrt{N}}\sum_{i=1}^N\omega_{a,j}^{-1} a_{i,j} \right|  > x \right) \leq 2\left( 1- \Phi(x) \right) \left[ 1 + O(1)\left(\frac{1}{l_{a,N}}\right)^{3}\right]
\end{align*}
Then, setting $\lambda = 2C_\lambda \sqrt{N}T\Phi^{-1}\left(1-\frac{\gamma}{2p}\right)$ gives
\begin{align*}
    &\sum_{j=1}^p p_{1,j}(\lambda) = {\rm P}\left( \left| \frac{1}{\sqrt{N}}\sum_{i=1}^N\omega_{a,j}^{-1} a_{i,j} \right|  > \frac{  \lambda\sqrt{N}}{2C_\lambda NT}\right)\\
    \leq & 2p(1-\Phi(\Phi^{-1}(1-\gamma/2p))) \leq  {\gamma}[1+O(1)(1/l_{a,N})^{3}]
\end{align*}
given that $\Phi^{-1}\left(1-\frac{\gamma}{2p}\right) \in [0,N^{1/6}\min_j\Xi_{a,j}/l_{a,N} - 1]$ and $N^{1/6}\min_j\Xi_{a,j}/l_{a,N} - 1>0$. Note that $\Phi^{-1}\left(1-\frac{\gamma}{2p}\right) \lesssim \sqrt{\log(p/\gamma)}=o\left(N^{1/12} / \log N\right)$  under Assumption \hyperref[regcon]{REG}(i) and $N/T\to c$ as $N,T\to\infty$. Therefore, it suffices to take $l_{a,N} = O(\log N)$, and it follows that $\sum_{j=1}^p p_{1,kl}(\lambda) \to 0$ as $\gamma\to 0$ and $(N,T)\to \infty$.

To bound $p_{2,j}(\lambda)$, we utilize a moderate deviation theorem for self-normalized sums of weakly dependent random variables. Observe that $g_{t,j} = {\rm E}[f_{it,j}V_{it}|\gamma_t]$ is beta-mixing with coefficient $\beta_g(q)$ satisfying 
\begin{align*}
    \beta_g(q) \leq \beta_\gamma(q) \leq c_\kappa exp(-\kappa q) \ \forall q \in \mathbb{Z^+} 
\end{align*}
Furthermore, by the strict stationarity and the non-degeneracy condition of case (3), we can verify that for some $\nu>0$, $ {\rm E}\left[\sum_{t=r}^{r+m}{g_{t,j}}\right]^2 \geq \nu^2 m $ for all $t\geq 1, r\geq 0, m\geq 1$. By Assumption \hyperref[regcon]{REG}(ii) and Hölder's inequality, we have ${\rm E}|f_{it,j}V_{it}|^{4(\mu+\delta)}<\infty$  for some $\mu>1, \ \delta>0$. Then, by Theorem 3.2 of \cite{gao2022refined} with $\tau = 1$ and $\alpha= \frac{1}{1+2\tau}$, we have
\begin{align*}
   \sum_{j=1}^p {\rm P}\left( \left | \frac{1}{\sqrt{T}}\sum_{t=1}^T\left(\sqrt{T/N})\omega_{g,j}\right)^{-1} g_{t,j} \right|  > x \right)\leq 2 p\left(1-\Phi(x)\right) \left[ 1+ O(1)\left( \frac{1}{l_{g,T}} \right)^2 \right]
\end{align*}
uniformly for $x\in \left(0,d_0(\log T)^{-1/2}T^{1/12}/l_{g,T} \right)$ where $d_0$ is some positive constant and $l_{g,T}$ is some positive increasing sequence. Then, setting $\lambda =2C_\lambda\sqrt{N}T  \Phi^{-1}(1-\frac{\gamma}{2p})$ gives, for all $j=1,...,p$,
\begin{align*}
 \sum_{j=1}^p  p_{2,j}(\lambda) = {\rm P}\left( \left | \frac{1}{\sqrt{T}}\sum_{t=1}^T\left(\sqrt{T/N})\omega_{g,j}\right)^{-1} g_{t,j} \right|  > \frac{\lambda\sqrt{N}}{2C_\lambda NT}\right) \leq \gamma \left[ 1+ O(1)\left( \frac{1}{l_{g,T}} \right)^2 \right]
\end{align*}
given that $\Phi^{-1}(1-\frac{\gamma}{2p}) \in  \left(0,d_0(\log T)^{-1/2}T^{1/12}/l_{g,T}  \right)$. Under Assumption \hyperref[regcon]{REG}(i), we have $\log (p/\gamma) = o\left(T^{1/6}/ (\log T)^2\right))$ and so $ \Phi^{-1}\left(1-\frac{\gamma}{2p}\right)  \lesssim \sqrt{\log(p/\gamma)} = o\left(T^{1/12}/(\log T\right)$. Therefore, by taking $l_{g,T} = O\left( (\log T)^{1/2}\right)$, it follows that $\sum_{j=1}^p p_{2,j}(\lambda) \to 0$ as $\gamma\to 0$ and $(N,T)\to \infty$.

Consider $p_{3,j}(\lambda)$. We observe that ${\rm E}[{e}_{i,j}] = 0$ by iterated expectation, and ${e}_{it,j}$ are independent over $i$ conditional on $\{\gamma_t\}_{t=1}^T$. We have shown previously that $\max_{j\leq p}{\rm E}|f_{it,j}V_{it}|^{4(\mu+\delta)}<\infty$  for some $\mu>1, \ \delta>0$. Given that $e_{it,j} = f_{it,j}V_{it} - a_{i,j}-g_{t,j}$ and $\max_{j\leq p}{\rm E}|a_{i,j}|^{4(\mu+\delta)}<\infty$, $\max_{j\leq p}{\rm E}|g_{t,j}|^{4(\mu+\delta)}<\infty$ due to Jansen's inequality and iterated expectation, we have $\max_{j\leq p}{\rm E}|e_{it,j}|^{4(\mu+\delta)}<\infty$ and so $\max_{j\leq p}{\rm E}|\bar{e}_{i,j}|^{4(\mu+\delta)}<\infty$ due to Minkowski's inequality. 
Note that
\begin{align*}
    {\rm Var}\left(\bar{e}_{i,j}\right)  =\frac{1}
{T}\sum_{l=-(T-1)}^{T-1}\left(  1-\frac{|l|}{T}\right)
{\rm E}(e_{it,j}e_{i,t+l,j})=\frac{1}{T}\Sigma_{e,j}(1+o(1)).
\end{align*}
By Lemma \ref{lemma_a1}, $|\Sigma_{e,j}|<\infty$. Furthermore, as is shown below, $\omega_j$ is bounded from below by some constant $a>0$.

Define $\Xi_{e,j}^\gamma:= \frac{(E[|\bar{e}_{i,j}|^2\mid  \{\gamma_t\}_{t=1}^T])^{1/2} }{(E[|\bar{e}_{i,j}|^3\mid  \{\gamma_t\}_{t=1}^T])^{1/3}}$. Since $\bar{e}_{i,j}$ is conditionally independent, we can apply the conditional version of self-normalization theorem for independent variables from \cite{pena2009self} as follows:
\begin{align*}
     {\rm P}\left( \left| \frac{1}{\sqrt{N}} \sum_{i=1}^N\omega_{e,j}^{-1} \bar{e}_{i,j} \right|  > x \mid \{\gamma_t\}_{t=1}^T\right) \leq 2\left( 1- \Phi(x) \right) \left[ 1 + O(1)\left(\frac{1+x}{N^{1/6}\Xi_{e,j}^\gamma}\right)^{3}\right]
\end{align*}
for $x\in [0,N^{1/6}\Xi_{e,j}^\gamma]$. Under Assumption \hyperref[regcon]{REG}(iii), $1/\Xi_{e,j}^{\gamma} = O_P((E[|\bar{e}_{i,j}|^3\mid  \{\gamma_t\}_{t=1}^T])^{1/3})$. Under Assumption  \hyperref[regcon]{REG}(ii), $E[(E[|\bar{e}_{i,j}|^3\mid \{\gamma_t\}_{t=1}^T])^{1/3} ] \leq (E[|\bar{e}_{i,j}|^3])^{1/3}<\infty$, uniformly over $j\leq p$ by Jensen's inequality, iterated expectation, and Hölder's inequality. Therefore, by integrating out $\{\gamma_t\}_{t=1}^T$, we obtain that 
\begin{align*}
     {\rm P}\left( \left| \frac{1}{\sqrt{N}} \sum_{i=1}^N\omega_{e,j}^{-1} \bar{e}_{i,j} \right|  > x \right) \leq 2\left( 1- \Phi(x) \right) \left[ 1 + O(1)\left(\frac{1+x}{N^{1/6}}\right)^{3} \right]
\end{align*}
for $x\in [0,AN^{1/6}]$ for some constant $A<\infty$. Let $l_{e,N}$ be some increasing sequence. If $AN^{1/6}/l_{e,N} - 1>0$ and $x\in [0,AN^{1/6}/l_{e,N} - 1]$, then
\begin{align*}
    = {\rm P}\left( \left|\frac{1}{\sqrt{N}} \sum_{i=1}^N\omega_{e,j}^{-1} \bar{e}_{i,j} \right|  > x \right) \leq 2\left( 1- \Phi(x) \right) \left[ 1 + O(1)\left(\frac{1}{l_{e,N}}\right)^{3}\right]
\end{align*}
Then, setting $\lambda = 2C_\lambda\sqrt{N}T\Phi^{-1}\left(1-\frac{\gamma}{2p}\right)$ gives
\begin{align*}
    &\sum_{j=1}^p p_{3,j}(\lambda) ={\rm P}\left( \left| \frac{1}{\sqrt{N}T}\sum_{i=1}^N\sum_{t=1}^T\omega_{e,j}^{-1} e_{it,j} \right|  > \frac{\lambda\sqrt{N}}{2C_\lambda NT}\right) \\
    \leq & 2p(1-\Phi(\Phi^{-1}(1-\gamma/2p))) \leq  {\gamma}[1+O(1)(1/l_{e,N})^{3}]
\end{align*}
given that $\Phi^{-1}\left(1-\frac{\gamma}{2p}\right) \in [0,AN^{1/6}/l_{e,N} - 1]$ and $AN^{1/6}/l_{e,N} - 1>0$. Note that $\Phi^{-1}\left(1-\frac{\gamma}{2p}\right) \lesssim \sqrt{\log(p/\gamma)}=o\left(N^{1/12} / \log N\right)$  under Assumption \hyperref[regcon]{REG}(i) and $N/T\to c$ as $N,T\to\infty$. Therefore, it suffices to take $l_{e,N} = O(\log N)$, and it follows that $\sum_{j=1}^p p_{3,kl}(\lambda) \to 0$ as $\gamma\to 0$ and $(N,T)\to \infty$.

Put together, we have shown 
\begin{align}
        & {\rm P}\left(\max_{j=1,...,p}  \left|\frac{1}{NT} \sum_{i=1}^N\sum_{t=1}^T\omega_{j}^{-1}f_{it,j}V_{it}\right| \leq \frac{\lambda}{2c_1NT}\right) \to 1 \label{step1} .
\end{align}

Now we consider the degenerate case: $W_{it}$ is i.i.d over $i$ and $t$. In this case, we have $a_{i,j} = {\rm E}[f_{it,j}V_{it}|\alpha_i] = {\rm E}[f_{it,j}V_{it}] = 0$, $g_{t,j} = {\rm E}[f_{it,j}V_{it}|\gamma_t] = {\rm E}[f_{it,j}V_{it}] = 0$, and $e_{it,j} = f_{it,j}V_{it}$. Then, we observe that $p_{1,j}(\lambda) =p_{2,j}(\lambda)  $. It follows that
\begin{align*}
        & {\rm P}\left(\frac{1}{NT}\left|\sum_{i=1}^N\sum_{t=1}^T\omega_{j}^{-1}f_{it,j}V_{it}\right| > \frac{\lambda}{2c_1NT}\right) = {\rm P}\left(\frac{1}{\sqrt{NT}}\left|\sum_{i=1}^N\sum_{t=1}^T ( \sqrt{T}\omega_{e,j})^{-1}e_{it,j}\right| > \frac{\lambda}{2c_1\sqrt{NT}}\right)
\end{align*} 
Then, following the same steps as for $p_{3,j}(\lambda)$, we have that, with $\lambda = 2C_\lambda\sqrt{NT}\Phi^{-1}\left(1-\frac{\gamma}{2p}\right)$, the probability above converges to zero and so Condition \ref{step1} is obtained for the degenerate case, too. Note that here we rescale the penalty weights and the common penalty level so that the penalty weights is not degenerate under the iid case. See Lemma \ref{weight_valid_1} for the rate of convergence for the penalty weights under the non-degenerate and degenerate cases. 

Now, we can apply Lemma 6 of \cite{belloni2012sparse} to obtain the finite sample bounds on $\left\Vert f_{it}'\left( \widehat{\zeta} - \zeta_0\right) \right\Vert_{NT,2}$ and $\left\vert \omega\left( \widehat{\zeta} - \zeta_0 \right) \right\vert_1 $.
Let $\delta$ be some generic vector of nuisance parameters and let $J^1_p$ be a subset of an index set $J_p={1,...,p}$ and $J^0_p = J_p \backslash J_p^1$. Let $\delta^1$ be a copy of $\delta$ with its $j$-th element replaced by 0 for all $j\in J_p^0$ and similarly let $\delta^0$ be a copy of $\delta$ with its $j$-th element replaced by 0 for all $j\in J_p^1$. Define the restricted eigenvalues and Gram matrix as follows:
\begin{align*}
     K_C(M_f) = \min_{\delta: \Vert\delta^0\Vert_1 \leq C \Vert \delta^1\Vert_1, \ \Vert\delta\Vert\ne 0, \ |J^1_p|\leq s} \frac{\sqrt{s \delta 'M_f \delta}}{\left\Vert \delta^1 \right\Vert_1}, \ \ \ M_f  ={\rm E}_{NT} [f_{it}'f_{it}].
\end{align*}
Define the weighted restricted eigenvalues as follows:
\begin{align*}
     K_C^{\omega}(M_f)  = \min_{\delta: \Vert\omega\delta^0\Vert_1 \leq C \Vert \omega\delta^1\Vert_1, \ \Vert\delta\Vert\ne 0, \ |J^1_p|\leq s} \frac{\sqrt{s \delta 'M_f \delta}}{\left\Vert \omega \delta^1 \right\Vert_1}.
\end{align*}
Assumption \hyperref[asm]{ASM}, Conditions \ref{omegacon}, and \ref{step1}, Lemma 6 of \cite{belloni2012sparse} implies that, for the non-degenerate case, 
\begin{align*}
    \left\Vert f_{it}'\left( \widehat{\zeta} - \zeta_0\right) \right\Vert_{NT,2} \leq & \left(u+\frac{1}{c_1}\right)\frac{\sqrt{s}\lambda}{NTK_{c_0}^{\omega}(M_f)} + 2   \left\Vert r \right\Vert_{NT,2}=  O_P\left( \frac{1}{K_{c_0}^{\omega}(M_f)}\sqrt{\frac{s\log(p/\gamma)}{N\wedge T}} +  \sqrt{\frac{s}{N\wedge T}}\right), \\
     \left\Vert \omega\left( \widehat{\zeta} - \zeta_0 \right) \right\Vert_1 \leq & \frac{3c_0\sqrt{s}}{K_{2c_0}^{\omega}(M_f)}\left[\left(u+\frac{1}{c_1}\right) \frac{\sqrt{s}\lambda}{NTK_{c_0}^{\omega}(M_f)} + 2\left\Vert r \right\Vert_{NT,2}\right] + 3c_0 \frac{NT}{\lambda} \left\Vert r \right\Vert_{NT,2}^2, \\
     =& O_P\left(\frac{s}{K_{2c_0}^{\omega}(M_f)K_{c_0}^{\omega}(M_f)}\sqrt{\frac{\log (p/\gamma)}{N\wedge T}} + \sqrt{\frac{s}{N\wedge T}} + \frac{s/\sqrt{N \wedge T}}{\log (p/\gamma)}\right)
     \end{align*}
where $c_0 := \frac{uc_1+1}{lc_1-1}>1$. For the degenerate case, the common penalty level $\lambda$ is rescaled so that we replace $N\wedge T$ with $NT$ in the above rates. This rate difference remains the same for the rest of the proof, and we will focus on the rate of the non-degenerate case. 

Let $a := \min_{j=1,...,p} \omega_{j}^{1/2}, \ b:= \max_{j=1,...,p} \omega_{j}^{1/2}$. As is shown in \cite{belloni2016inference}, 
\begin{align}
K_C^{\omega}(M_f) \geq \frac{1}{b} K_{bC/a}(M_f). \label{inequality_restricted} 
\end{align}
Under Assumption \hyperref[regcon]{REG}(v), $b<\infty$ and $1<b/a<\infty$. By \ref{inequality_restricted}, we have $1/K_{c_0}^{\omega}(M_f) \leq b/K_{\bar{C}}(M_f)$ where $\bar{C}: = bc_0/a$. By arguments given in \cite{bickel2009simultaneous}, Assumption \hyperref[sparse_eigen]{SE} implies that $1/K_{C}(M_f)=O_P(1)$ for any $C>0$. Therefore, 
\begin{align*}
    \left\Vert f_{it}'\left( \widehat{\zeta} - \zeta_0\right) \right\Vert_{NT,2} =  O_P\left( \sqrt{\frac{s\log(p/\gamma)}{N\wedge T}} \right), \ \ \left\vert \omega\left( \widehat{\zeta} - \zeta_0 \right) \right\vert_1 = O_P\left(s\sqrt{\frac{\log (p/\gamma)}{N \wedge T}}\right).
     \end{align*}
By Hölder's inequality and that $\min_j\omega_j \geq a>0$, 
\begin{align*}
    \Vert\widehat{\zeta}-\zeta_0\Vert_1 \leq \Vert \omega^{-1} \Vert_\infty \left\vert \omega\left( \widehat{\zeta} - \zeta_0 \right) \right\vert_1 = O_P\left(s\sqrt{\frac{\log (p/\gamma)}{N\wedge T}}\right) = O_P\left(s\sqrt{\frac{\log (p\vee NT)}{N\wedge T}}\right)
\end{align*}
where the first inequality follows from Hölder's inequality.

The $l2$ rate of convergence will be derived after the sparsity bounds. We now switch the focus to the Post-LASSO. By the finite sample bounds of Lemma \ref{lemma_a3}, we have
\begin{align}
    \Vert f(X_{it}) - f_{it}'\widehat{\zeta}_{PL} \Vert_{NT,2}= & \left(\sqrt{\frac{s}{\phi_{min}(s)(M_f)}} + \sqrt{\frac{\widehat{m}}{\phi_{min}(\widehat{m})(M_f)}}\right) O_P\left(\frac{\lambda}{NT}\right) \nonumber \\
    & + O_P\left(\Vert f(X_{it}) - \left(\mathcal{P}_{\widehat{\Gamma}}f\right)_{it}\Vert_{NT,2}\right), \label{post1}
\end{align}
By the finite sample bounds of Lemma 7 from \cite{belloni2012sparse}, we have
\begin{align}
    \Vert f_{it}'(\widehat{\zeta}_{PL}-\zeta_0)\Vert_{NT,2} \leq &  \Vert f_{it}(X_{it})-f_{it}'\widehat{\zeta}_{PL}\Vert_{NT,2} + \Vert r_{it} \Vert_{NT,2}, \label{post2} \\
    \Vert \omega(\widehat{\zeta}_{PL}-\zeta_0)\Vert_{1}\leq & \frac{b \sqrt{\widehat{m}+s}}{\sqrt{\phi_{min}(\widehat{m}+s)(M_f)}} \times \Vert f_{it}'(\widehat{\zeta}_{PL}-\zeta_0)\Vert_{NT,2} \label{post3}    \\
    \Vert f(X_{it}) - \mathcal{P}_{\widehat{\Gamma}} f(X_{it}) \Vert_{NT,2} \leq &  \left(u+\frac{1}{c_1}\right)\frac{\lambda\sqrt{s}}{NT K_{c_0}^{\omega}(M_f)} + 3 \Vert r_{it}\Vert_{NT,2} \label{post4} .
\end{align}
The finite sample bound of Lemma 8 from \cite{belloni2012sparse} gives
\begin{align*}
    \widehat{m}\leq \phi_{max}(\widehat{m})(M_f) a^{-2}  \left(\frac{2c_0\sqrt{s}}{K_{c_0}^{\omega}(M_f)} + \frac{6c_0NT \Vert r_{it}\Vert_{NT,2} }{\lambda}\right)^2.
\end{align*}
where $a>0$ has been shown previously. 

Let $\mathcal{M} = \left\{m\in \mathbb{N}: m> 2\phi_{max}(m)(M_f) a^{-2} \left(\frac{2c_0\sqrt{s}}{K_{c_0}^{\omega}(M_f)} + \frac{6c_0NT \Vert r_{it}\Vert_{NT,2} }{\lambda}\right)^2 \right\}$. Lemma 10 of \cite{belloni2012sparse} gives
\begin{align}
    \widehat{m} \leq \min_{m\in\mathcal{M}}\phi_{max}(m\wedge NT)(M_f)a^{-2} \left(\frac{2c_0\sqrt{s}}{K_{c_0}^{\omega}(M_f)} + \frac{6c_0NT \Vert r_{it}\Vert_{NT,2} }{\lambda}\right)^2. \label{lemma10}
\end{align}
Note that $\frac{6c_0NT \Vert r_{it}\Vert_{NT,2} }{\lambda\sqrt{s}} = O_P(1/\log(p\wedge NT)) \overset{p}{\rightarrow} 0 $. Recall that $1/K_{c_0}^{\omega}(M_f) \leq b/K_{\bar{C}}(M_f)<\infty$. Let $\mu := \min_{m}\left\{\sqrt{\phi_{max}(m)(M_f)/\phi_{min}(m)(M_f)}: m> 18\bar{C}^2 s \phi_{max}(m)(M_f)/K^2_{\bar{C}}(M_f)\right\}$, and let $\bar{m}$ be the integer associated with $\mu$. By the definition of $\mathcal{M}$, it implies that $\bar{m}\in \mathcal{M}$ with probability approaching one, which implies $\bar{m}>\widehat{m}$ due to \ref{lemma10}. By Lemma 9 (the sub-linearity of sparse eigenvalues) from \cite{belloni2012sparse} and \ref{lemma10}, we have
\begin{align*}
    \widehat{m}\lesssim_P s\mu^2\phi_{min}(\bar{m}+s)/K_{\bar{C}}^2\lesssim s\mu^2 \phi_{min}(\hat{m}+s)/K_{\bar{C}}^2. 
\end{align*}
Combining the results above with \ref{post1} and \ref{post4} to gives
\begin{align*}
    \Vert f(X_{it}) - f_{it}'\widehat{\zeta}_{PL})\Vert_{NT,2} = O_P\left( \sqrt{\frac{s\mu^2\log(p/\gamma)}{(N\wedge T)K_{\bar{C}}^2}} + \Vert r_{it}\Vert_{NT,2} + \frac{\lambda\sqrt{s}}{NT K_{c_0}^\omega(M_f)}\right).
\end{align*}
Recall that $b<\infty$ and Condition \hyperref[sparse_eigen]{SE} imply $1/K_{c_0}^\omega(M_f)\leq 1/K_{\bar{C}}(M_f) <\infty$. Then, Condition \hyperref[sparse_eigen]{SE}, Condition \hyperref[asm]{ASM} and the choice of $\lambda$ together imply
\begin{align*}
    \Vert f(X_{it}) - f_{it}'\widehat{\zeta}_{PL})\Vert_{NT,2} = O_P\left(\sqrt{\frac{s\log(p/\gamma)}{N\wedge T}}\right).
\end{align*}
For the $l1$ convergence rate, note that $\Vert \widehat{\zeta}_{PL} - \zeta_0\Vert_0 \leq \widehat{m} + s$. Then, applying Cauchy-Schwarz inequality to $ \Vert \widehat{\zeta}_{PL} - \zeta_0\Vert_1 = \sum_{j=1}^p |\widehat{\zeta}_{PL} - \zeta_0| = \sum_{j\in \{\widehat{\Gamma} \bigcup \Gamma_0 \}} |\widehat{\zeta}_{PL} - \zeta_0|$ gives
\begin{align*}
   \Vert \widehat{\zeta}_{PL} - \zeta_0\Vert_1\leq \sqrt{\widehat{m} + s} \Vert \widehat{\zeta}_{PL} - \zeta_0\Vert_2 
\end{align*}

To derive the convergence rates in $l2$-norm of the Post-LASSO estimator (the $l2$ rate for the LASSO estimator is obtained similarly), we will utilize the sparse eigenvalue condition and the prediction norm. If $\widehat{\zeta}_{PL}-\zeta_0 = 0$, then the conclusion holds trivially. Otherwise, define $ b = \left(\widehat{\zeta}_{PL}-\zeta_0\right) / \Vert\widehat{\zeta}_{PL}-\zeta_0\Vert^{-1}_2$. Then, we have $ \Vert b \Vert_2 = 1 $ and so $b\in \Delta(\widehat{m} + s)=\{\delta: \Vert\delta\Vert_0 = \widehat{m} + s, \Vert\delta\Vert_2=1\} $. By Assumption \hyperref[sparse_eigen]{SE}, we have
\begin{align*}
    0<\kappa_1\leq \phi_{min}(\widehat{m} + s)(M_f) \leq \frac{(b'M_fb)^{1/2}}{\Vert b \Vert_2} = \frac{\left\Vert f_{it}'\left( \widehat{\zeta}_{PL} - \zeta_0\right) \right\Vert_{NT,2}}{\Vert\widehat{\zeta}_{PL} - \zeta_0\Vert_2},
\end{align*}
Therefore, using the bound on the prediction norm above, we conclude that
\begin{align*}
    \Vert\widehat{\zeta}_{PL} - \zeta_0\Vert_2\leq \frac{\left\Vert f_{it}'\left( \widehat{\zeta}_{PL} - \zeta_0\right) \right\Vert_{NT,2}}{\kappa_1 } = O_P\left( \sqrt{\frac{s\log(p/\gamma)}{N\wedge T}} \right).
\end{align*}
It implies that $\Vert \widehat{\zeta}_{PL} - \zeta_0\Vert_1 = \sqrt{\widehat{m} + s} O_P\left( \sqrt{\frac{s\log(p/\gamma)}{N\wedge T}} \right) =O_P\left( \sqrt{\frac{s^2\log(p/\gamma)}{N\wedge T}} \right) $. 

\end{proof}

\noindent\textit{\textbf{Feasible Penalty Weights Validity}}\\
Let $\check\omega_j$ be the feasible penalty weights without estimation error, defined the same way as \ref{feasible_w} with $\hat v$ replaced by $v$. The validity of the feasible penalty weights is given by the following Lemmas \ref{weight_valid_1} and \ref{weight_valid_2}.

\begin{lemma}
    \label{weight_valid_1}
    Under Assumptions \hyperref[ahk]{AHK} and \hyperref[regcon]{REG}, and additionally, (i) $M=o(T^{1/4})$, (ii) $E[U_{it}|X_{i't'}]=0$ for any $i,t,i',t'$, (iii) $\sup_{i,t}\vert v_{it,j}\vert$ bounded by some absolute constant, then for each $j=1,...,p$,  
    \begin{enumerate}
        \item[(i)] If  $\emph{E}\left(a_{i,j}^2\right) + \emph{E}\left(g_{t,j}^2\right) >\epsilon$ for some $\epsilon>0$, then
        \begin{align*}
             \omega_{a,j}^2 - \left(\check\omega_{a,j}^2-\check\omega_{e,j}^2\right) &= o_P(1) \\
             \omega_{g,j}^2 - \left(\check\omega_{g,j}^2-\check\omega_{e,j}^2\right) &= o_P(1) \\
             \omega_{e,j}^2-  \check\omega_{e,j}^2 &=o_P(1)
        \end{align*}
        \item[(ii)] If $f_{it,j}V_{it}$ is i.i.d, then
        \begin{align*}
             T\left[\omega_{a,j}^2 - \left(\check\omega_{a,j}^2-\check\omega_{e,j}^2\right)\right] &= o_P(1) \\
             T\left[\omega_{g,j}^2 - \left(\check\omega_{g,j}^2-\check\omega_{e,j}^2\right)\right] &= o_P(1) \\
             T\left[\omega_{e,j}^2-  \check\omega_{e,j}^2 \right]&=o_P(1)
        \end{align*}
    \end{enumerate}
\end{lemma}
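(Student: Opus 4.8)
The plan is to exploit the exact population identity $v_{it,j}=a_{i,j}+g_{t,j}+e_{it,j}$ and rewrite the sample-analog components in terms of the true components plus centered averages. Writing $\bar a_j=\frac{1}{N}\sum_i a_{i,j}$, $\bar g_j=\frac{1}{T}\sum_t g_{t,j}$, $\bar e_{i\cdot,j}=\frac{1}{T}\sum_t e_{it,j}$ and $\bar e_{\cdot t,j}=\frac{1}{N}\sum_i e_{it,j}$, the two averaging operators give $\check a_{i,j}=a_{i,j}+\bar g_j+\bar e_{i\cdot,j}$ and $\check g_{t,j}=g_{t,j}+\bar a_j+\bar e_{\cdot t,j}$, hence $\check e_{it,j}=e_{it,j}-\bar e_{i\cdot,j}-\bar e_{\cdot t,j}-\bar a_j-\bar g_j$. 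Substituting these into the definitions \ref{feasible_a}--\ref{feasible_e} expands each feasible weight into a leading term built from the true components plus a fixed number of cross terms. The leading pieces are $\frac{1}{N}\sum_i a_{i,j}^2$ in $\check\omega_{a,j}^2$, the Newey--West form $\frac{N}{T^2}\sum_{t,s}k(|t-s|/M)\,g_{t,j}g_{s,j}$ in $\check\omega_{g,j}^2$, and the average-of-HACs form $\frac{1}{NT^2}\sum_i\sum_{t,s}k(|t-s|/M)\,e_{it,j}e_{is,j}$ in $\check\omega_{e,j}^2$, and these share the probability limits of $\omega_{a,j}^2$, $\omega_{g,j}^2$ and $\omega_{e,j}^2$ from \ref{penalty_inf}.

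The next step is to quantify the noise averages and the kernel pieces using the decomposition properties recorded above Lemma \ref{lemma_a1}: $a_{i,j}$ is i.i.d.\ across $i$, $g_{t,j}$ is strictly stationary and $\beta$-mixing at a geometric rate, and for each $i$ the sequence $e_{it,j}$ is stationary in $t$ with $E[e_{it,j}\mid\{\gamma_t\}]=0$ and conditionally independent across $i$, while $\{a_{i,j}\}$, $\{g_{t,j}\}$ and $\{e_{it,j}\}$ are mutually uncorrelated. Under the moment and boundedness conditions (Assumption \hyperref[regcon]{REG}(ii) and conditions (ii)--(iii) here), laws of large numbers deliver $\bar a_j=O_P(N^{-1/2})$, $\bar g_j=O_P(T^{-1/2})$, $\frac{1}{N}\sum_i\bar e_{i\cdot,j}^2=O_P(T^{-1})$ and $\frac{1}{T}\sum_t\bar e_{\cdot t,j}^2=O_P(N^{-1})$, while mutual uncorrelatedness kills the expectation of the mixed averages. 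For the kernel terms I would invoke HAC consistency for geometrically mixing sequences to obtain $\frac{1}{T}\sum_{t,s}k(|t-s|/M)g_{t,j}g_{s,j}\overset{p}{\to}\Sigma_{g,j}$, and its average-of-HACs analogue $\frac{1}{NT}\sum_i\sum_{t,s}k(|t-s|/M)e_{it,j}e_{is,j}\overset{p}{\to}\Sigma_{e,j}$, where $M\to\infty$ with $M=o(T^{1/4})$ and the uniform boundedness in condition (iii) control the variance of these quadratic forms.

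For the non-degenerate case (i) these ingredients finish the argument directly. Expanding each square and applying the rates above, every cross term and the whole subtracted quantity $\check\omega_{e,j}^2$ carry an extra $O_P(N^{-1/2})$, $O_P(T^{-1/2})$ or $O_P(T^{-1})$ factor and are therefore $o_P(1)$, since at least one of $\sigma_{a,j}^2$, $\Sigma_{g,j}$ stays bounded away from zero. Hence $\check\omega_{a,j}^2-\check\omega_{e,j}^2=\sigma_{a,j}^2+o_P(1)=\omega_{a,j}^2+o_P(1)$, $\check\omega_{g,j}^2-\check\omega_{e,j}^2=\omega_{g,j}^2+o_P(1)$, and $\check\omega_{e,j}^2=o_P(1)=\omega_{e,j}^2+o_P(1)$, which are the three claimed equalities; the $\max(\cdot,0)$ operators in \ref{feasible_w} are then innocuous by the continuous mapping theorem because the limits $\sigma_{a,j}^2$ and $\lim\omega_{g,j}^2$ are nonnegative.

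The degenerate case (ii) is the crux and where the real work lies. When $f_{it,j}V_{it}$ is i.i.d.\ over $(i,t)$ we have $a_{i,j}=g_{t,j}=0$, so the leading terms vanish and the informative scale is $T$: second-moment computations give $T\check\omega_{a,j}^2\overset{p}{\to}\Sigma_e$, $T\check\omega_{g,j}^2\overset{p}{\to}\Sigma_e$ and $T\check\omega_{e,j}^2\overset{p}{\to}\Sigma_e$, whereas $T\omega_{a,j}^2=T\omega_{g,j}^2=0$ and $T\omega_{e,j}^2\overset{p}{\to}\Sigma_e$. The subtraction is now essential rather than cosmetic: it is exactly what makes $T(\check\omega_{a,j}^2-\check\omega_{e,j}^2)$ and $T(\check\omega_{g,j}^2-\check\omega_{e,j}^2)$ cancel to $o_P(1)$ and reproduce the (zero) infeasible limits, while $T(\omega_{e,j}^2-\check\omega_{e,j}^2)=o_P(1)$ follows from both scaled quantities converging to the same $\Sigma_e$. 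The main obstacle is precisely that nothing may now be discarded as lower order: every contribution at order $1/T$ and $1/(NT)$ must be tracked, including the demeaning cross terms in $\check e_{it,j}$ (which accumulate factors of $M$ through $\sum_s k(|t-s|/M)=O(M)$) and the variances of the HAC quadratic forms. Controlling these remainders is where $M=o(T^{1/4})$ and the uniform boundedness in condition (iii) are indispensable; with those bounds each rescaled difference is shown to be $o_P(1)$, completing the degenerate case.
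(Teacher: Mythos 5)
Your proposal is correct and follows essentially the same route as the paper's proof: the same exact decomposition of the estimated components into the true components plus cross-sectional/temporal averages ($\bar a_j$, $\bar g_j$, $\bar e_{i\cdot,j}$, $\bar e_{\cdot t,j}$), the same LLN/Chebyshev rate bounds on those averages exploiting the AHK properties, HAC-type consistency for the kernel quadratic forms under geometric mixing with $M=o(T^{1/4})$, and the same $T$-scaled second-moment analysis in the i.i.d.\ case where the subtraction of $\check\omega_{e,j}^2$ produces the exact cancellation. The only differences are bookkeeping-level: the paper groups $a_i+T^{-1}\sum_t e_{it}$ together and leans on two named external results — Proposition 2 of \cite{bester2008inference} for the limit of the block-based infeasible $\omega_{g,j}^2$ and Theorem 5 of \cite{CHS_Restat} for uniform (max over $i,t$) coupling of the $e$-components, which is where the boundedness condition (iii) enters — whereas you work with fully separated components and $L^2$-average rates, which serve the same purpose.
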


\begin{lemma}\label{weight_valid_2} 
   Under the same setup as Theorem \ref{performance_bound}, and, additionally, (i) $\sup_{i,t} |f_{it,j}|\leq \bar F$, (ii) the initial penalty weights satisfy Condition \ref{omegacon}, (iii) $M=o\left(\sqrt{\frac{N\wedge T}{s\log(p/\gamma)}}\right)$, then the refined penalty weight components have the following properties: 
   \begin{align*}
       \hat\omega_{a,j}^2-\check\omega_{a,j}^2 = o_P(1),\\
       \hat\omega_{g,j}^2-\check\omega_{g,j}^2 = o_P(1),\\
       \hat\omega_{e,j}^2-\check\omega_{e,j}^2 = o_P(1)
   \end{align*}
\end{lemma}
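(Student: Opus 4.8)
The plan is to write each refined weight as a (possibly kernel-weighted) quadratic form in the fitted products $\widehat v_{it,j}=f_{it,j}\widehat V_{it}$, substitute $\widehat v_{it,j}=v_{it,j}+\delta_{it,j}$ with the perturbation $\delta_{it,j}:=f_{it,j}\Delta_{it}$ and $\Delta_{it}:=\widehat V_{it}-V_{it}$, and expand each form into a ``check'' form plus perturbation pieces. The one estimate I would establish at the outset is an aggregate control of $\Delta_{it}$. Since under model \ref{model_lasso_approx} one has $\widehat V_{it}-V_{it}=r_{it}-f_{it}'(\widehat\zeta-\zeta_0)$, the prediction-norm rate from Theorem \ref{performance_bound} together with Assumption \hyperref[asm]{ASM} gives
\[
\frac{1}{NT}\sum_{i,t}\Delta_{it}^2\le 2\left\Vert r_{it}\right\Vert_{NT,2}^2+\frac{2}{NT}\sum_{i,t}\bigl(f_{it}'(\widehat\zeta-\zeta_0)\bigr)^2=O_P(\rho_{NT}),\qquad \rho_{NT}:=\frac{s\log(p/\gamma)}{N\wedge T},
\]
and $\rho_{NT}=o(1)$ under the maintained setup. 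Using condition (i), $\sup_{i,t}|f_{it,j}|\le\bar F$, this bounds $\frac1{NT}\sum_{i,t}\delta_{it,j}^2\le\bar F^2 O_P(\rho_{NT})$ uniformly in $j$. It is condition (ii) — the initial weights obey \ref{omegacon} — that makes Theorem \ref{performance_bound} applicable to the estimate generating $\widehat V$, so this step breaks the circularity between the weights and the LASSO fit. (In the i.i.d.\ case of Assumption \hyperref[regcon]{REG}(iv) the same display holds with $NT$ in place of $N\wedge T$, so the argument only improves.)

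For $\widehat\omega_{a,j}^2$ there is no kernel. Setting $\widehat S_{i,j}:=\sum_t\widehat v_{it,j}$, $S_{i,j}:=\sum_t v_{it,j}$, I would expand
\[
\widehat\omega_{a,j}^2-\check\omega_{a,j}^2=\frac{1}{NT^2}\sum_i\bigl(\widehat S_{i,j}-S_{i,j}\bigr)^2+\frac{2}{NT^2}\sum_i\bigl(\widehat S_{i,j}-S_{i,j}\bigr)S_{i,j}.
\]
The first term is $\le\bar F^2\frac1{NT}\sum_{i,t}\Delta_{it}^2=O_P(\rho_{NT})$ by Cauchy--Schwarz, since $(\widehat S_{i,j}-S_{i,j})^2=(\sum_t f_{it,j}\Delta_{it})^2\le\bar F^2 T\sum_t\Delta_{it}^2$; the cross term is at most twice the geometric mean of the first term and $\check\omega_{a,j}^2$, which is $O_P(1)$ by the moment bounds in Assumption \hyperref[regcon]{REG}(ii) and Lemma \ref{weight_valid_1}. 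Both are $o_P(1)$, disposing of the $a$-component.

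The kernel components are the crux. For $g$, $\widehat\omega_{g,j}^2=\frac1{NT^2}\sum_{t,s}k(|t-s|/M)\widehat G_{t,j}\widehat G_{s,j}$ with $\widehat G_{t,j}:=\sum_i\widehat v_{it,j}$, a bilinear form in $(t,s)$; for $e$, $\widehat\omega_{e,j}^2=\frac1{NT^2}\sum_i\sum_{t,s}k(|t-s|/M)\widehat e_{it,j}\widehat e_{is,j}$, a per-$i$ sum of such forms. In each case the Bartlett kernel matrix is positive semidefinite. Writing $\widehat G=\check G+(\widehat G-\check G)$ (resp.\ $\widehat e=\check e+(\widehat e-\check e)$) splits the difference $\widehat\omega_{\cdot,j}^2-\check\omega_{\cdot,j}^2$ into a pure-perturbation form plus symmetric cross forms. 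For the pure form I would use that $k$ is supported on $|t-s|<M$ with $k\le1$, so $2|x_tx_s|\le x_t^2+x_s^2$ yields a bound of order $\frac{M}{NT^2}\sum_t(\widehat G_{t,j}-\check G_{t,j})^2$ (resp.\ $\frac{M}{NT^2}\sum_{i,t}(\widehat e_{it,j}-\check e_{it,j})^2$). Here $\widehat G_{t,j}-\check G_{t,j}=\sum_i\delta_{it,j}$, so $(\widehat G_{t,j}-\check G_{t,j})^2\le N\sum_i\delta_{it,j}^2$; and $\widehat e_{it,j}-\check e_{it,j}$ is the two-way demeaning of $\delta_{it,j}$, which, being an orthogonal projection, satisfies $\sum_{i,t}(\widehat e_{it,j}-\check e_{it,j})^2\le\sum_{i,t}\delta_{it,j}^2$. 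Using $N/T\to c$, both pure forms are $O_P(M\rho_{NT})$. For the cross forms I would invoke the Cauchy--Schwarz inequality for the positive-semidefinite kernel form, bounding each by the geometric mean of the pure-perturbation form and the check weight $\check\omega_{\cdot,j}^2=O_P(1)$, hence by $O_P(\sqrt{M\rho_{NT}})$. Condition (iii), $M=o\bigl(\sqrt{(N\wedge T)/(s\log(p/\gamma))}\bigr)$, forces $M\rho_{NT}\to0$, so all pieces are $o_P(1)$; the bounds are uniform in $j$ since $\bar F$, the prediction norm, and (via Assumption \hyperref[regcon]{REG}(v)) $\max_j\check\omega_{\cdot,j}^2$ are.

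The main obstacle I anticipate is the two kernel-weighted cross forms. A crude triangle-inequality expansion fails, because the factor built from the unperturbed scores ($\check G$, $\check e$) carries no smallness and, after convolution with a width-$M$ kernel, would contribute a term of order $M$ rather than $o(1)$. Extracting the smallness $\rho_{NT}$ from only one factor requires genuinely exploiting positive semidefiniteness of the Bartlett kernel through the generalized Cauchy--Schwarz inequality, which is what produces the $\sqrt{M\rho_{NT}}$ rate and thereby ties the conclusion to the bandwidth restriction in condition (iii). The $e$-component adds the secondary wrinkle that one must pass through the two-way demeaning without losing a factor, handled by the projection/contraction bound above.
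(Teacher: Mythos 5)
Your proof is correct, and its skeleton matches the paper's: both write $\hat v_{it,j}=v_{it,j}+\delta_{it,j}$ and both extract the smallness $R_{NT}:=\Vert \hat v_{it,j}-v_{it,j}\Vert_{NT,2}\le \bar F\bigl(\Vert f_{it}'(\hat\beta-\beta)\Vert_{NT,2}+\Vert r_{it}\Vert_{NT,2}\bigr)=O_P\bigl(\sqrt{\rho_{NT}}\bigr)$, $\rho_{NT}:=s\log(p/\gamma)/(N\wedge T)$, from Theorem \ref{performance_bound} and Assumption ASM, with your conditions (i) and (ii) playing exactly the roles you assign them. Where you genuinely differ is the treatment of the kernel-weighted forms. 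The paper works lag by lag: it rewrites $\hat\omega_{g,j}^2$ as a sum over lags $|l|\le M$, bounds each normalized lag-$l$ difference by $\lesssim R_{NT}\bigl(\Vert v_{it,j}\Vert_{NT,2}+R_{NT}\bigr)$ via the product decomposition and Cauchy--Schwarz over $t$, and then pays the factor $M$ in the lag sum, arriving at $O_P\bigl(M\sqrt{\rho_{NT}}\bigr)$, which is $o_P(1)$ precisely under condition (iii). You instead keep the quadratic form intact, bound the pure-perturbation form through the band structure ($O_P(M\rho_{NT})$), and dispatch the cross forms with the generalized Cauchy--Schwarz inequality for the positive semidefinite Bartlett matrix ($O_P(\sqrt{M\rho_{NT}})$). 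Your bound $O_P\bigl(M\rho_{NT}+\sqrt{M\rho_{NT}}\bigr)$ is sharper than the paper's $O_P\bigl(M\sqrt{\rho_{NT}}\bigr)$: your argument needs only $M\rho_{NT}=o(1)$, i.e.\ $M=o\bigl((N\wedge T)/(s\log(p/\gamma))\bigr)$, a strictly weaker bandwidth restriction than condition (iii). That is what exploiting positive semidefiniteness buys.

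Two corrections to your commentary, neither affecting validity. First, your claim that ``a crude triangle-inequality expansion fails'' is wrong: the paper's proof is exactly that expansion, and it succeeds, because per-lag Cauchy--Schwarz over $t$ extracts $\sqrt{\rho_{NT}}$ from the perturbation factor while the unperturbed factor contributes only $O_P(1)$ per lag; the factor $M$ enters additively over lags, and condition (iii) is calibrated so that $M\sqrt{\rho_{NT}}=o(1)$. What fails is only the AM--GM split $|\check G_{t,j}\delta_s|\le\tfrac12\bigl(\check G_{t,j}^2+\delta_s^2\bigr)$ applied to the cross term, which indeed produces an $O_P(M)$ contribution; positive semidefiniteness is a refinement, not a necessity. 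Second, the two-way demeaning $\delta_{it,j}\mapsto\delta_{it,j}-\bar\delta_{i\cdot,j}-\bar\delta_{\cdot t,j}$ (no grand mean added back, matching \ref{compo_noerror}) is not an orthogonal projection; it is, however, self-adjoint with eigenvalues in $\{-1,0,1\}$, hence of operator norm one, so your contraction inequality $\sum_{i,t}(\hat e_{it,j}-\check e_{it,j})^2\le\sum_{i,t}\delta_{it,j}^2$ still holds exactly.
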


The proofs of Lemmas \ref{weight_valid_1} and \ref{weight_valid_2} are given in  the Online Supplementary Material. \\

\noindent\textit{\textbf{Cross-Fitting Validity}}\\ 
Let $(X,Y)$ be random elements taking values in Euclidean space $S=(S_1\times S_2)$ with probability laws ${P}_X$ and ${P}_Y$, respectively. Let $\Vert\nu\Vert_{TV}$ denote the total variation norm of a signed measure $\nu$ on a measurable space $(S,\Sigma)$ where $\Sigma$ is a $\sigma$-algebra on $S$:
\begin{align*}
    \Vert \nu \Vert_{TV} = \sup_{A\in \Sigma} \nu(A) - \nu(A^c).
\end{align*}
The following lemma is quoted from \cite{Semenova2023}(Lemma A.3): 

\begin{lemma}
\label{lemma_a4}    
Let $(X,Y)$ be random element taking values in Polish space $S=(S_1\times S_2)$ with laws ${P}_X$ and ${P}_Y$, respectively. Then, we can construct $(\tilde{X},\tilde{Y})$ taking values in $(S_1, S_2)$ such that (i) they are independent of each other; (ii) their laws $\mathcal{L}(\tilde{X}) = {P}_X$ and $\mathcal{L}(\tilde{Y}) = {P}_Y$; (iii)
\begin{align*}
    {\rm P}\left\{ (X,Y)\ne (\tilde{X},\tilde{Y}) \right\} = \frac{1}{2}\Vert {P}_{X,Y} - {P}_X\times {P}_Y  \Vert_{TV} 
\end{align*}
\end{lemma}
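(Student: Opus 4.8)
The plan is to recognize this as the classical maximal-coupling construction for the two measures $\mu := {\rm P}_{X,Y}$ and $\nu := {\rm P}_X \times {\rm P}_Y$ on the Polish space $S = S_1 \times S_2$, and to show that the maximal coupling attains the coupling lower bound with equality. First I would fix a common dominating finite measure, e.g. $\rho := \mu + \nu$, and introduce the Radon--Nikodym densities $f := d\mu/d\rho$ and $g := d\nu/d\rho$. Using the Hahn decomposition of the signed measure $\mu - \nu$, I would record the identity
\begin{align*}
    \tfrac{1}{2}\Vert \mu - \nu \Vert_{TV} = \tfrac{1}{2}\int |f - g|\, d\rho = 1 - \int (f \wedge g)\, d\rho =: 1 - p,
\end{align*}
where $p := \int (f\wedge g)\, d\rho \in [0,1]$ is the overlap mass; this rewrites the quantity on the right-hand side of (iii) as $1-p$, so the goal becomes ${\rm P}\{(X,Y)\ne(\tilde X,\tilde Y)\}=1-p$.

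Next I would carry out the construction on an enriched probability space carrying the given $(X,Y)\sim\mu$ together with an independent $\mathrm{Unif}[0,1]$ variable $U$ and an independent residual draw; measurable selection of these objects is available because $S$ is Polish. Writing $a(w) := (f\wedge g)(w)/f(w)$ for the acceptance probability (well defined $\mu$-a.s., where $f>0$), I would set $(\tilde X, \tilde Y) := (X,Y)$ on the \emph{accept} event $\{U \le a(X,Y)\}$, and on the \emph{reject} event $\{U > a(X,Y)\}$ let $(\tilde X, \tilde Y)$ be the independent draw from the normalized $\nu$-residual $\nu_r$ with density $(g-f)_+/(1-p)$ (the case $0<p<1$; the boundary cases are disposed of at the end).

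The verification then splits into two checks. For the marginal law I would compute that the accept event contributes the sub-measure with density $a\cdot f = f\wedge g$, while the reject event has probability $1-p$ and contributes $(1-p)\,\nu_r$, i.e. density $(g-f)_+$; summing gives $(f\wedge g) + (g-f)_+ = g$, so $(\tilde X, \tilde Y) \sim \nu = {\rm P}_X\times{\rm P}_Y$, which delivers both the independence in (i) and the marginals in (ii). For the coincidence probability I would note ${\rm P}(\text{accept}) = \int a\, d\mu = \int (f\wedge g)\, d\rho = p$, so $\{(X,Y)\ne(\tilde X,\tilde Y)\}\subseteq\{\text{reject}\}$ already gives the bound $\le 1-p$. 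The crux — and the step I expect to be the main obstacle, since it is what upgrades the generic coupling inequality to an equality — is to rule out coincidence on the reject event: conditionally on rejecting, $(X,Y)$ has density proportional to $(1-a)f = (f-g)_+$ and is therefore supported on $\{f>g\}$, whereas $(\tilde X,\tilde Y)\sim\nu_r$ is supported on $\{g>f\}$; as these sets are disjoint, $(X,Y)\ne(\tilde X,\tilde Y)$ almost surely on reject. Hence ${\rm P}\{(X,Y)=(\tilde X,\tilde Y)\}=p$ exactly and (iii) follows. I would close by handling the degenerate cases directly: if $p=0$ take $(\tilde X,\tilde Y)$ to be an independent draw from $\nu$, and if $p=1$ then $\mu=\nu$ and the reject event is null, both consistent with (iii).
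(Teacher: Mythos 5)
Your proof is correct: it is the classical maximal-coupling (Dobrushin) construction, and every step checks out. The identity $\tfrac{1}{2}\Vert \mu-\nu\Vert_{TV} = 1-\int (f\wedge g)\,d\rho$ is consistent with the paper's definition of the total variation norm; the acceptance--rejection scheme yields the right law because $(f\wedge g)+(g-f)_+ = g$, which simultaneously delivers (i) and (ii); and the disjoint-support argument on the reject event (conditional density proportional to $(f-g)_+$ versus residual density proportional to $(g-f)_+$) is precisely what upgrades the generic coupling inequality to the equality claimed in (iii). Note, however, that the paper does not prove this lemma at all: it quotes it as Lemma A.3 of \cite{Semenova2023} and defers the proof to that paper's supplementary material. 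So your argument is not an alternative to a proof given in the paper but a self-contained replacement for a citation; what it buys is that the coupling equality is verified from first principles rather than taken on faith, at the cost of length. Two small points you should make explicit if this were to stand as the proof: first, the Polish assumption is what guarantees that the diagonal $\{(w,w'): w=w'\}$ is measurable in $S\times S$, so the non-coincidence probability in (iii) is well defined; second, the construction lives on an enlargement of the original probability space carrying $U$ and the residual draw jointly independent of $(X,Y)$, which is implicitly permitted by the phrase ``we can construct'' and is also how the cited reference proceeds.
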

The proof is provided in the supplementary material of \cite{Semenova2023}. To apply the independence coupling result for cross-fitting in the panel data, we need to introduce another lemma:

\begin{lemma}
    \label{lemma_a5}
Let $X_1,...,X_q$ and $Y$ be random elements taking values in Polish space $S=(S_1\times...\times S_m\times S_y)$. 
\begin{align*}
    \beta((X_1,...,X_m), Y)\leq \sum_{i=1}^q \beta(X_i,Y).
\end{align*}
\end{lemma}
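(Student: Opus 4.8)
The plan is to work entirely with the coupling/total-variation characterization of $\beta$. Writing $\beta(X,Y) = \frac{1}{2}\Vert P_{X,Y} - P_X \times P_Y \Vert_{TV} = \sup_E \left( P_{X,Y}(E) - (P_X\times P_Y)(E)\right)$, the left-hand side $\beta((X_1,\dots,X_m),Y)$ is exactly the total-variation distance between the joint law of $(X_1,\dots,X_m,Y)$ and the product of the law of $(X_1,\dots,X_m)$ with the law of $Y$. First I would reduce the comparison of these two measures to a sequence of single-coordinate comparisons by introducing a chain of hybrid measures $Q_0,Q_1,\dots,Q_m$ interpolating between them, with $Q_0 = P_{(X_1,\dots,X_m),Y}$ and $Q_m = P_{(X_1,\dots,X_m)}\times P_Y$, where $Q_k$ is obtained from $Q_{k-1}$ by detaching the dependence of the $k$-th coordinate $X_k$ on $Y$ (replacing its $Y$-conditional law by its marginal). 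The triangle inequality for $\Vert\cdot\Vert_{TV}$ then gives $\beta((X_1,\dots,X_m),Y)\le \sum_{k=1}^m \tfrac12\Vert Q_{k-1}-Q_k\Vert_{TV}$.

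The second step is to show that each increment collapses to the pairwise coefficient, $\tfrac12\Vert Q_{k-1}-Q_k\Vert_{TV}\le \beta(X_k,Y)$. Because $Q_{k-1}$ and $Q_k$ differ only through the $k$-th factor, after integrating out every coordinate other than $X_k$ and $Y$ the increment should reduce to $\mathbb{E}_Y\big[ d_{TV}\big(P_{X_k\mid Y},P_{X_k}\big)\big] = \beta(X_k,Y)$, provided the remaining factors integrate to one uniformly in the detached variables. Equivalently, one can iterate Lemma~\ref{lemma_a4}: couple $X_1$ against $Y$ to obtain an independent pair at cost $\beta(X_1,Y)$, then append $X_2,\dots,X_m$ one coordinate at a time, each append being an independent coupling of the new coordinate against $Y$ at cost $\beta(X_k,Y)$, and bound the probability that the full coupling fails by a union bound over the $m$ append steps.

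The hard part—and the step that is not a mere triangle inequality—is justifying that the coordinates can be detached from $Y$ \emph{one at a time} without the already-processed coordinates inflating the dependence, so that the $k$-th increment is genuinely controlled by $\beta(X_k,Y)$ rather than by a conditional dependence given $X_1,\dots,X_{k-1}$. This requires a factorization of the joint law in which, conditionally on the relevant shared information, $X_k$ relates to $Y$ in the same way whether or not the remaining coordinates are present; without such structure the joint dependence need not be dominated by the sum of the pairwise ones. In the present setting this factorization is precisely what the Aldous--Hoover--Kallenberg structure of Assumption~\hyperref[ahk]{AHK} supplies: each block is a measurable function of \emph{disjoint}, mutually independent primitives (the unit components $\alpha_i$, the idiosyncratic components $\varepsilon_{it}$, and a time segment of $\{\gamma_t\}$), so the cross-block dependence flows only through the common time components. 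Verifying that this makes the leftover factors in each hybrid increment integrate out cleanly—so that the telescoped union bound indeed yields $\beta((X_1,\dots,X_m),Y)\le \sum_{k=1}^m\beta(X_k,Y)$—is where the real work lies.
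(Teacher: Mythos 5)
Your diagnosis of where the difficulty lies is correct, and in fact the situation is worse than you suggest: the inequality of Lemma \ref{lemma_a5}, as stated for arbitrary random elements, is \emph{false}, so the step you defer as ``where the real work lies'' is not merely hard but impossible without hypotheses beyond the statement. Take $X_1,X_2$ i.i.d.\ Bernoulli$(1/2)$ and $Y=X_1\oplus X_2$ (exclusive or). Then $Y$ is independent of $X_1$ and of $X_2$ separately, so $\beta(X_1,Y)=\beta(X_2,Y)=0$, while $(X_1,X_2)$ determines $Y$, giving $\beta((X_1,X_2),Y)=1/2$. This kills any structure-free argument, and it kills your telescoping plan at its second step: the increment $\tfrac12\Vert Q_{k-1}-Q_k\Vert_{TV}$ is governed by the dependence of $X_k$ on $Y$ \emph{conditional on the coordinates already detached}, and in the example that conditional dependence is maximal even though every pairwise coefficient vanishes. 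Your fallback---invoking Assumption \hyperref[ahk]{AHK}---concedes exactly this point, but then you are no longer proving Lemma \ref{lemma_a5} as stated (the lemma carries no such hypothesis), and your write-up never carries out the verification it flags.

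You should also know that the paper's own proof is precisely the union-bound route you rightly rejected. It applies Lemma \ref{lemma_a4} to the pair $((X_1,\dots,X_m),Y)$ to obtain an optimal coupling $(\tilde X_1,\dots,\tilde X_m,\tilde Y)$, applies the union bound, and then asserts ${\rm P}\left((X_i,Y)\ne(\tilde X_i,\tilde Y)\right)=\beta(X_i,Y)$. That last equality is the flaw: $((X_i,Y),(\tilde X_i,\tilde Y))$ is merely \emph{some} coupling of $P_{X_i,Y}$ with $P_{X_i}\times P_Y$, so the coupling inequality gives ${\rm P}\left((X_i,Y)\ne(\tilde X_i,\tilde Y)\right)\ge\beta(X_i,Y)$, which points the wrong way; marginalizing an optimal coupling of the full vector does not yield optimal couplings coordinate-wise. (In the XOR example the full disagreement probability is $1/2$, so at least one coordinate disagreement probability is at least $1/4$, while both pairwise $\beta$'s are zero.) Thus neither your proposal nor the paper establishes the lemma---yours because it is incomplete, the paper's because the statement is false. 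The conclusion the paper actually needs, Lemma \ref{indep_couple}, does survive, and by the route you gesture at: under Assumption \hyperref[ahk]{AHK} the main and auxiliary samples are measurable functions of two disjoint, mutually independent blocks of primitives $(\alpha_i,\varepsilon_{it})$ together with two segments of $\{\gamma_t\}$, and $\beta$ is subadditive across \emph{independent pairs} of $\sigma$-fields (a Cs\'aki--Fischer type inequality; see Bradley's monograph on strong mixing conditions). This yields $\beta\bigl(W(k,l),W(-k,-l)\bigr)\le \beta\bigl(\{\gamma_t\}_{t\in S_l},\{\gamma_t\}_{t\in S_{-l}}\bigr)\le c_\kappa \exp(-\kappa T/L)$, which repairs the paper's argument and even removes the $N^2$ factor from its bound. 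Making that independent-pairs argument explicit is what it would take to turn your sketch into a proof.
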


\begin{proof}[\textbf{Proof of Lemma \ref{lemma_a5}}]
By Lemma \ref{lemma_a4}, we have
\begin{align*}
    \beta((X_1,...,X_m), Y) & = \frac{1}{2} \left\Vert P_{(X_1,...,X_q), Y} - P_{(X_1,...,X_m)}\times P_{Y}
 \right\Vert_{TV} \\
    & = {\rm P}((X_1,...,X_m,Y)\ne (\tilde{X}_1,...,\tilde{X}_m,\tilde{Y})) \leq  \sum_{i=1}^m {\rm P}((X_i,Y)\ne (\tilde{X}_i,\tilde{Y})) = \sum_{i=1}^m \beta(X_i,Y),
\end{align*}
where the inequality follows from the union bound.
\end{proof}

Now we can prove Lemma \ref{indep_couple} from the main body of the paper:
\begin{proof}[\textbf{Proof of Lemma \ref{indep_couple}}]

By Lemma \ref{lemma_a4}, for each $(k,l)$ we have
\begin{align*}
  &  {\rm P}\{(W(k,l),W(-k,-l))\ne(\tilde{W}(k,l),\tilde{W}(-k,-l))\} \\
= &\beta\left(W(k,l),W(-k,-l)\right)=\beta\left(\{W_{it}\}_{i\in I_k, t\in S_l},\bigcup_{k'\ne k, l'\ne l, l\pm 1}\{W_{it}\}_{i\in I_{k'}, t\in S_{l'}}\right) \\
\leq & \sum_{i\in I_k} \beta\left(\{W_{it}\}_{t\in S_l},\bigcup_{k'\ne k, l'\ne l, l\pm 1}\{W_{it}\}_{i\in I_{k'}, t\in S_{l'}}\right) \leq  \sum_{k'\ne k, l'\ne l, l\pm 1}\sum_{j\in I_{k'}} \sum_{i\in I_k} \beta\left(\{W_{it}\}_{t\in S_l},\{W_{jt}\}_{t\in S_{l'}}\right)
\end{align*}
where the last two inequalities follow from Lemma \ref{lemma_a5}. Note that for $s,m\geq 1$, we have
\begin{align*}
    &\beta(\{W_{it}\}_{t\leq s},\{W_{jt}\}_{t\geq s+m}) \\
    = &\left\Vert P_{\{W_{it}\}_{t\leq s},\{W_{jt}\}_{t\geq s+m}} - P_{\{W_{it}\}_{t\leq s}}\times P_{\{W_{jt}\}_{t\geq s+m}} \right\Vert_{TV} 
     \leq \sup_{A\in \sigma(\{W_{jt}\}_{t\geq s+m})} {\rm E} \left\vert {\rm P}(A|\sigma(\{W_{it}\}_{t\leq s})) - {\rm P}(A)\right\vert \\
     =&\sup_{A\in \sigma(\{W_{jt}\}_{t\geq s+m})} {\rm E} \left\vert {\rm P}({\rm P}(A|\sigma(\alpha_i,\{\gamma_t\}_{t\leq s}, \{\varepsilon_{it}\}_{t\leq s}))|\sigma(\{W_{it}\}_{t\leq s})) - {\rm P}(A)\right\vert\\
     = &\sup_{A\in \sigma(\{W_{jt}\}_{t\geq s+m})} {\rm E} \left\vert {\rm P}(A|\sigma(\{\gamma_t\}_{t\leq s}) - {\rm P}(A)\right\vert  = \sup_{A\in \sigma(\{\gamma_{t}\}_{t\geq s+m})} {\rm E} \left\vert {\rm P}(A|\sigma(\{\gamma_t\}_{t\leq s}) - {\rm P}(A)\right\vert \leq c_{\kappa}exp(-\kappa m), 
\end{align*}
where the last inequality follows from Assumption \ref{ahk}. Therefore, 
\begin{align*}
    {\rm P}\{(W(k,l),W(-k,-l))\ne(\tilde{W}(k,l),\tilde{W}(-k,-l))\} \leq KLN^2c_{\kappa}exp(-\kappa  T_l),
\end{align*}
which in turn gives
\begin{align*}
  {\rm P}\{(W(k,l),W(-k,-l))\ne(\tilde{W}(k,l),\tilde{W}(-k,-l)),\ for \ some \ (k,l) \} \leq K^2L^2N^2c_{\kappa}exp(-\kappa  T_l),
\end{align*}
where $ T_l = T/L$. Given that $\log(N)/T=o(1)$ and $(K,L)$ are finite, it follows that
\begin{align*}
  {\rm P}\{(W(k,l),W(-k,-l))\ne(\tilde{W}(k,l),\tilde{W}(-k,-l)),\ for \ some \ (k,l) \} = o(1)
\end{align*}
\end{proof}

\pagebreak
\setcounter{page}{1}

\begin{center}
    \LARGE Online Supplementary Material \\
    \Large Inference in High-Dimensional Panel Models: \\
Two-Way Dependence and Unobserved Heterogeneity
\end{center}

This Online Supplementary Material is structured as follows. 
Section \ref{app_tw} collects extra lemmas and proofs the two-way cluster-LASSO. Section \ref{app_1} provides a formal discussion and statistical analysis of the panel DML method. Section \ref{app_2} collects proofs of the theorems of in Section \ref{partial_linear_unob}.

\subsection{Extra proofs for two-way cluster-LASSO}
\label{app_tw}

\begin{lemma}
    \label{lemma_a3}
    Under Assumption \hyperref[asm]{ASM}, if $S_{\max}:=\max_{1\leq j\leq p} \vert\mathbb{E}_{NT} [\omega^{-1}f_{it,j}V_{it}]\vert \leq \frac{\lambda}{2c_1NT}$, $0<a = \min_{j} \omega \leq \max_{j}\omega= b<\infty$, and $u\geq 1 \geq l \geq 1/c_1$, then
\begin{align*}
    &\Vert f(X_{it}) - f_{it}'\widehat{\zeta}_{PL} \Vert_{NT,2} \\
    = & \left(\sqrt{\frac{s}{\phi_{min}(s)(M_f)}} + \sqrt{\frac{\widehat{m}}{\phi_{min}(\widehat{m})(M_f)}}\right) O_P\left(\frac{\lambda}{NT}\right) \nonumber + O_P\left(\Vert f(X_{it}) - \left(\mathcal{P}_{\widehat{\Gamma}}f\right)_{it}\Vert_{NT,2}\right).
\end{align*}
\end{lemma}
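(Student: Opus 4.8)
The plan is to treat $\widehat{\zeta}_{PL}$ as what it is by construction---the ordinary least squares fit of $Y_{it}$ on the columns $\{f_{it,j}:j\in\widehat{\Gamma}\}$---and to reduce the prediction-norm bound to a single empirical-process (noise) term controlled by the hypothesized regularization event $S_{\max}\le \lambda/(2c_1NT)$. Writing $\mathcal{P}_{\widehat{\Gamma}}$ for the empirical $L^2$ projection onto $\mathrm{span}\{f_{it,j}:j\in\widehat{\Gamma}\}$ under the inner product $\langle u,v\rangle=\mathbb{E}_{NT}[u_{it}v_{it}]$, the fitted values are $f_{it}'\widehat{\zeta}_{PL}=(\mathcal{P}_{\widehat{\Gamma}}Y)_{it}$. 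Since $Y_{it}=f(X_{it})+V_{it}$, linearity of the projection gives the identity $f(X_{it})-f_{it}'\widehat{\zeta}_{PL}=\big(f(X_{it})-(\mathcal{P}_{\widehat{\Gamma}}f)_{it}\big)-(\mathcal{P}_{\widehat{\Gamma}}V)_{it}$. The triangle inequality bounds the left-hand side by $\|f(X_{it})-(\mathcal{P}_{\widehat{\Gamma}}f)_{it}\|_{NT,2}+\|\mathcal{P}_{\widehat{\Gamma}}V\|_{NT,2}$; the first summand is exactly the stated approximation term, so everything reduces to bounding the projected noise $\|\mathcal{P}_{\widehat{\Gamma}}V\|_{NT,2}$. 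The projection is well defined with probability approaching one because, under Assumption \hyperref[sparse_eigen]{SE} and the sparsity bound $\widehat m=O_P(s)$, the submatrix of $M_f$ indexed by $\widehat{\Gamma}$ has minimal sparse eigenvalue bounded away from zero.

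Next I would bound $\|\mathcal{P}_{\widehat{\Gamma}}V\|_{NT,2}$ through its variational representation $\|\mathcal{P}_{\widehat{\Gamma}}V\|_{NT,2}=\sup\{|\mathbb{E}_{NT}[V_{it}f_{it}'\delta]|/\|f_{it}'\delta\|_{NT,2}\}$, the supremum taken over $\delta\neq0$ supported on $\widehat{\Gamma}$. For the numerator, I factor out the weights and apply the regularization event coordinatewise: $|\mathbb{E}_{NT}[V_{it}f_{it}'\delta]|=|\sum_j\delta_j\omega_j\,\mathbb{E}_{NT}[\omega_j^{-1}f_{it,j}V_{it}]|\le S_{\max}\sum_j\omega_j|\delta_j|\le b\,S_{\max}\,\|\delta\|_1$, and Cauchy--Schwarz turns $\|\delta\|_1\le\sqrt{\|\delta\|_0}\,\|\delta\|_2\le\sqrt{\widehat m}\,\|\delta\|_2$. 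For the denominator, the definition of $\phi_{\rm min}$ together with monotonicity of the sparse eigenvalues in the sparsity level gives $\|f_{it}'\delta\|_{NT,2}=\sqrt{\delta'M_f\delta}\ge\sqrt{\phi_{\rm min}(\widehat m)(M_f)}\,\|\delta\|_2$. Dividing, the factors $\|\delta\|_2$ cancel and I obtain $\|\mathcal{P}_{\widehat{\Gamma}}V\|_{NT,2}\le b\,S_{\max}\,\sqrt{\widehat m/\phi_{\rm min}(\widehat m)(M_f)}\le \tfrac{b}{2c_1}\sqrt{\widehat m/\phi_{\rm min}(\widehat m)(M_f)}\cdot\tfrac{\lambda}{NT}$, which is exactly $\sqrt{\widehat m/\phi_{\rm min}(\widehat m)(M_f)}\,O_P(\lambda/NT)$.

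Assembling the two pieces already yields an upper bound of the claimed form. Since the extra summand $\sqrt{s/\phi_{\rm min}(s)(M_f)}\,O_P(\lambda/NT)$ is nonnegative, the inequality $\|f(X_{it})-f_{it}'\widehat{\zeta}_{PL}\|_{NT,2}\le \sqrt{\widehat m/\phi_{\rm min}(\widehat m)(M_f)}\,O_P(\lambda/NT)+O_P(\|f(X_{it})-(\mathcal{P}_{\widehat{\Gamma}}f)_{it}\|_{NT,2})$ immediately implies the stated bound, the $\sqrt{s/\phi_{\rm min}(s)(M_f)}$ contribution being conservative slack kept only to match the form used by \cite{belloni2012sparse}. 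That term is what arises if one instead runs the optimality/normal-equation argument and carries the missed active coordinates in $\Gamma_0\setminus\widehat{\Gamma}$ explicitly, where $\Gamma_0=\{j:\zeta_{0,j}\neq0\}$: the relevant difference vector is then supported on $\widehat{\Gamma}\cup\Gamma_0$ of cardinality at most $\widehat m+s$, and splitting it via sub-additivity of the sparse eigenvalues separates the selected and missed-active pieces while a quadratic inequality (from least-squares optimality) solves out the prediction norm. The weight bounds $a\le\omega_j\le b$ and the constants $u\ge1\ge l\ge1/c_1$ enter only to translate the weighted coordinatewise score into the unweighted bound and to guarantee that $S_{\max}$ holds under the feasible weights; they do not affect the rate.

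The hard part is the noise-projection step, precisely because $\widehat{\Gamma}$ is selected from the data: the subspace cannot be fixed in advance, and a direct union bound over all $\binom{p}{\widehat m}$ candidate supports would inflate the bound by a spurious $\sqrt{\log\binom{p}{\widehat m}}$ factor. What makes the argument work is that the regularization event controls the single scalar $S_{\max}=\max_{j\le p}|\mathbb{E}_{NT}[\omega_j^{-1}f_{it,j}V_{it}]|$ uniformly over all $p$ coordinates at once, so the cost of the random support is only the Cauchy--Schwarz factor $\sqrt{\widehat m}$ rather than a logarithmic-in-$p$ penalty, while Assumption \hyperref[sparse_eigen]{SE} supplies the modulus of continuity $\phi_{\rm min}(\widehat m)(M_f)$ that converts $\|\delta\|_2$ into the prediction norm.
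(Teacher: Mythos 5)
Your overall architecture matches the paper's: both start from the projection identity $f(X_{it})-f_{it}'\widehat{\zeta}_{PL}=\left(f(X_{it})-(\mathcal{P}_{\widehat{\Gamma}}f)_{it}\right)-(\mathcal{P}_{\widehat{\Gamma}}V)_{it}$, apply the triangle inequality, and then control the projected noise using the regularization event together with sparse eigenvalues. However, your one-shot bound on $\Vert \mathcal{P}_{\widehat{\Gamma}}V\Vert_{NT,2}$ has a genuine gap that traces to the meaning of $\widehat{m}$. In this paper, $\widehat{m}$ is not $|\widehat{\Gamma}|$: it is the number of \emph{wrongly} selected regressors, $|\widehat{\Gamma}\setminus\Gamma_0|$, exactly as in Lemmas 8 and 10 of \cite{belloni2012sparse}, which the proof of Theorem \ref{performance_bound} invokes immediately after this lemma to bound $\widehat{m}$. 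A vector $\delta$ supported on $\widehat{\Gamma}$ satisfies only $\Vert\delta\Vert_0\le|\widehat{\Gamma}|\le\widehat{m}+s$, since $\widehat{\Gamma}$ may contain up to $s$ correctly selected coordinates of $\Gamma_0$. Your Cauchy--Schwarz step $\Vert\delta\Vert_1\le\sqrt{\widehat{m}}\,\Vert\delta\Vert_2$ is therefore invalid; the correct factor is $\sqrt{\widehat{m}+s}$, and the eigenvalue in the denominator becomes $\phi_{\rm min}(\widehat{m}+s)(M_f)$ rather than $\phi_{\rm min}(\widehat{m})(M_f)$. The resulting bound $\sqrt{(\widehat{m}+s)/\phi_{\rm min}(\widehat{m}+s)(M_f)}\,O_P(\lambda/NT)$ does \emph{not} imply the claimed two-term bound, because $\phi_{\rm min}$ is non-increasing in the sparsity level, so $\phi_{\rm min}(\widehat{m}+s)(M_f)$ can be strictly smaller than both $\phi_{\rm min}(s)(M_f)$ and $\phi_{\rm min}(\widehat{m})(M_f)$, and the lemma's hypotheses do not include Assumption \hyperref[sparse_eigen]{SE}, which is what would make these quantities comparable. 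Relatedly, your closing remark that the $\sqrt{s/\phi_{\rm min}(s)(M_f)}$ term is ``conservative slack'' is wrong under the paper's convention: that term is precisely the contribution of the correctly selected true regressors.

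The paper avoids this by treating the two blocks separately: it bounds $\Vert(\mathcal{P}_{\Gamma_0}V)_{it}\Vert_{NT,2}$ and $\Vert(\mathcal{P}_{\widehat{\Gamma}\setminus\Gamma_0}V)_{it}\Vert_{NT,2}$ individually, each via the operator-norm estimate $\Vert f_{\Gamma}(f_{\Gamma}'f_{\Gamma})^{-1}\Vert_{\infty}\Vert f_{\Gamma}'V\Vert_2$ and the regularization event, which produces exactly the separate factors $\sqrt{s/\phi_{\rm min}(s)(M_f)}$ and $\sqrt{\widehat{m}/\phi_{\rm min}(\widehat{m})(M_f)}$. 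Your variational device is sound in itself (and avoids the paper's questionable additive split of $\mathcal{P}_{\widehat{\Gamma}}$ into $\mathcal{P}_{\Gamma_0}+\mathcal{P}_{\widehat{\Gamma}\setminus\Gamma_0}$, which is not an identity for non-orthogonal column blocks); to repair your argument, apply it blockwise, splitting the coordinates of $\delta$ into $\widehat{\Gamma}\cap\Gamma_0$ and $\widehat{\Gamma}\setminus\Gamma_0$ before Cauchy--Schwarz, or equivalently bound the two projections separately as the paper does, so that each block carries its own cardinality and its own sparse-eigenvalue factor.
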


\begin{proof}[\textbf{Proof of Lemma \ref{lemma_a3}}]
We can decompose $f(X_{it}) - f_{it}'\widehat{\zeta}_{PL}$ as follows:
\begin{align*}
    & f(X_{it}) - f_{it}'\widehat{\zeta}_{PL} = f(X_{it}) - (\mathcal{P}_{\widehat{\Gamma}}Y)_{it} = \left( (I_{NT} - \mathcal{P}_{\widehat{\Gamma}}) f(X) - \mathcal{P}_{\widehat{\Gamma}} V\right)_{it} \\  
    = & \left( (I_{NT} - \mathcal{P}_{\widehat{\Gamma}}) f - (\mathcal{P}_{\widehat{\Gamma}\backslash \Gamma_0}+\mathcal{P}_{\Gamma_0}) V\right)_{it} \leq \left\Vert \left(I_{NT} - \mathcal{P}_{\widehat{\Gamma}}) f\right)_{it} \right\Vert_{NT,2}   +  \left\Vert \left(\mathcal{P}_{\Gamma_0}V\right)_{it} \right\Vert_{NT,2} +\left\Vert \left( \mathcal{P}_{\widehat{\Gamma}\backslash \Gamma_0}V\right)_{it} \right\Vert_{NT,2}.
\end{align*}
where the last equality follows from the property of the linear projection and the inequality follows from Minkowski's inequality. By H\"{o}lder's inequality and the property of spectral norm, we have
\begin{align*}
    &\left\Vert \left( \mathcal{P}_{\widehat{\Gamma}\backslash \Gamma_0}V\right)_{it} \right\Vert_{NT,2} = \frac{1}{\sqrt{NT}} \left\Vert  \mathcal{P}_{\widehat{\Gamma}\backslash \Gamma_0}V \right\Vert_{2} \leq  \frac{1}{\sqrt{NT}} \left\Vert f_{\widehat{\Gamma}\backslash \Gamma_0} (f'_{\widehat{\Gamma}\backslash \Gamma_0}f_{\widehat{\Gamma}\backslash \Gamma_0})^{-1} \right\Vert_{\infty} \left\Vert f_{\widehat{\Gamma}\backslash \Gamma_0}' V \right\Vert_{2}  \\
    \leq & \frac{1}{\sqrt{NT}} \sqrt{\frac{1}{NT\phi_{min}(\widehat{m})(M_f)}} \left( \sum_{j\in \widehat{\Gamma}\backslash \Gamma_0} \left(\sum_{i=1}^N\sum_{t=1}^T f_{it,j}V_{it}\right)^{2}\right)^{1/2}\leq \sqrt{\frac{\widehat{m}}{\phi_{min}(\widehat{m})(M_f)}} S_{\max} \\
    = & \sqrt{\frac{\widehat{m}}{\phi_{min}(\widehat{m})(M_f)}} O_P\left(\frac{\lambda}{NT}\right)
\end{align*}
where the last line follows from $ \min_j \omega_j = a>0$ and $S_{\max} \leq \frac{\lambda}{2c_1NT}$. By similar arguments, we have
\begin{align*}
    &\left\Vert \left(\mathcal{P}_{\Gamma_0}V\right)_{it} \right\Vert_{NT,2} = \frac{1}{\sqrt{NT}} \left\Vert  \mathcal{P}_{\Gamma_0}V \right\Vert_{2} \leq  \frac{1}{\sqrt{NT}} \left\Vert f_{ \Gamma_0} (f'_{\Gamma_0}f_{\Gamma_0})^{-1} \right\Vert_{\infty} \left\Vert f_{ \Gamma_0}' V \right\Vert_{2}  \\
    \leq & \frac{1}{\sqrt{NT}} \sqrt{\frac{1}{NT\phi_{min}(s)(M_f)}} \left( \sum_{j\in \Gamma_0} \left(\sum_{i=1}^N\sum_{t=1}^T f_{it,j}V_{it}\right)^{2}\right)^{1/2}\leq  \sqrt{\frac{s}{\phi_{min}(s)(M_f)}}  O_P\left(\frac{\lambda}{NT}\right).
\end{align*}\end{proof}

\begin{proof}[\textbf{Proof of Lemma \ref{weight_valid_1}}]
   We first consider the case (ii). In this case, since $a_{i,j} = g_{t,j} = 0$ and $e_{it,j} = v_{it,j}$, it is trivially true that $\Sigma_{e,j} = \sigma_{e,j}^2 = E[v_{it,j}^2]$ and
   \begin{align*}
       T\omega_{a,j}^2 = 0, \ \ T\omega_{g,j}^2 =0, \ \  T\omega_{e,j}^2 = \frac{1}{NT}\sum_{i=1}^N\sum_{t=1}^T\sum_{s=1}^T k\left(\frac{|t-s|}{M}\right)v_{it,j}v_{is,j}.
   \end{align*}
    Due to the independence over $i,j$, we have
    \begin{align*}
        E[ T\omega_{e,j}^2] =E[v_{it,j}^2]. 
    \end{align*}
And 
\begin{align*}
    Var(T\omega_{e,j}^2 )\lesssim E[v_{it,j}^4]/N +  E[v_{it,j}^4]/(NT)
\end{align*}
Then, by Chebyshev inequality, we have $T\omega_{e,j}^2 = E[v_{it,j}^2]+o_P(1)$. Then, we consider the feasible penalty weights without estimation errors:
\begin{align*}
     E\left[  T\check\omega_{a,j}^2\right] =&E\left[ \frac{1}{NT} \sum_{i=1}^N\left(\sum_{t=1}^T v_{it,j}\right)^2\right] = E[v_{it,j}^2] \\
       E\left[  T\check\omega_{g,j}^2\right] =&E\left[ \frac{1}{NT} \sum_{t=1}^T \sum_{s=1}^T k\left(\frac{|t-s|}{M}\right) \left(\sum_{i=1}^N {v}_{it,j}\right)\left(\sum_{i=1}^N {v}_{is,j}\right)\right] = E[v_{it,j}^2]\\
       Var(T\check\omega_{a,j}^2) \lesssim & E[v_{it,j}^4]/N + E[v_{it,j}^4]/(NT)  \\
          Var(T\check\omega_{a,j}^2) \lesssim & E[v_{it,j}^4]/N +E[v_{it,j}^4]/T + E[v_{it,j}^4]/(NT)
\end{align*}
Then, by Chebyshev inequality, we have $T\check\omega_{a,j}^2 = E[v_{it,j}^2] +o_P(1)$ and $T\check\omega_{a,j}^2 = E[v_{it,j}^2] +o_P(1)$
Furthermore, 
\begin{align*}
         E\left[  T\check\omega_{e,j}^2\right] =&E\left[\frac{1}{NT}\sum_{i=1}^N\sum_{t=1}^T \sum_{s=1}^T k\left(\frac{|t-s|}{M}\right) \tilde{e}_{it,j}\tilde{e}_{is,j}\right]  =  E[T\omega_{e,j}^2 ] +O(1/N) + O(1/T) + o(1)\\
         Var(T\check\omega_{e,j}^2) \lesssim &E[v_{it,j}^4]/N
\end{align*}
By Chebyshev inequality, we have $ T\check\omega_{e,j}^2 = E[v_{it,j}^2]$. Then, the second statement of the lemma is proved.

Now we consider the first case. By Markov inequality and the fact that $a_{i,j}$ is i.i.d over $i$, we have
    \begin{align*}
        P\left( \left|\frac{1}{N}\sum_{i=1}^{N} {a}_{i,j}^2 - \sigma_a^2 \right| >\epsilon\right) \leq Var\left(\frac{1}{N}\sum_{i=1}^{N} {a}_{i,j}^2\right)/\epsilon^2 =  E\left[a_{i,j}^4\right] /(N\epsilon)\leq E\left[v_{it,j}^4\right] /(N\epsilon)
    \end{align*}
    So, we have $\omega_{a,j}^2 =\sigma_{a,j}^2 + O_P(1/\sqrt{N})$. Next, consider $\check\omega_{a,j}^2 =\frac{1}{N}\sum_{i=1}^{N} \tilde{a}_{i,j}^2 $. We note that
    \begin{align*}
\left(a_i+\frac{1}{T}\sum_{t=1}^{T}e_{it}\right)-\tilde{a}_i
&= \frac{1}{T}\sum_{t=1}^{T}g_t =: R_{g,NT},
\end{align*}
which does not depend on $i$ and $R_{g,NT}=O_P\left(1/\sqrt{T}\right)$. Then
\begin{align*}
\frac{1}{N}\sum_{i=1}^{N}\tilde{a}_i^2
&= \frac{1}{N}\sum_{i=1}^{N}\left[\left(a_i+\frac{1}{T}\sum_{t=1}^{T}e_{it}\right)-R_{g,NT}\right]^2 \\
&= \frac{1}{N}\sum_{i=1}^{N}\left(a_i+\frac{1}{T}\sum_{t=1}^{T}e_{it}\right)^2
   -\frac{2R_{g,NT}}{N}\sum_{i=1}^{N}\left(a_i+\frac{1}{T}\sum_{t=1}^{T}e_{it}\right)
   +R_{g,NT}^2 .
\end{align*}
Note that $a_i+\frac{1}{T}\sum_{t=1}^{T}e_{it}$ is independent over $i$
conditional on $\{\gamma_t\}$, and that
\begin{align*}
E\left(a_i+\frac{1}{T}\sum_{t=1}^{T}e_{it}\right)^2
&= \sigma_a^2+\Sigma_e (1+ o(1/T))/T = \sigma_a^2+o(1),\\
\operatorname{Var}\left(a_i+\frac{1}{T}\sum_{t=1}^{T}e_{it}\right)^2
&\leq E\left(a_i+\frac{1}{T}\sum_{t=1}^{T}e_{it}\right)^4
   \leq Ea_i^4+Ee_{it}^4<\infty .
\end{align*}
Moreover,
\begin{align*}
\operatorname{Var}\left(\frac{1}{N}\sum_{i=1}^{N}
\left(a_i+\frac{1}{T}\sum_{t=1}^{T}e_{it}\right)^2
\,\middle|\,\{\gamma_t\}\right)
&= \frac{1}{N}\operatorname{Var}\left(\left(a_i+\frac{1}{T}\sum_{t=1}^{T}e_{it}\right)^2
\,\middle|\,\{\gamma_t\}\right) \\
&\le \frac{1}{N}E\left(\left(a_i+\frac{1}{T}\sum_{t=1}^{T}e_{it}\right)^4
\,\middle|\,\{\gamma_t\}\right) \\
&\le \frac{1}{N}
\left(\left(Ea_i^4\right)^{1/4}
     +\left(E\left[e_{it}^4\mid\gamma_t\right]\right)^{1/4}\right)^4 .
\end{align*}
By Jensen’s inequality,
\[
E\left[\left(E\left[e_{it}^4\mid\gamma_t\right]\right)^{1/4}\right]
\leq \left(E\left[E\left[e_{it}^4\mid\gamma_t\right]\right]\right)^{1/4}
= E\left[e_{it}^4\right]^{1/4} \leq E[v_{it,j}^4]^{1/4}.
\]
Then, by the conditional Markov inequality and Fubini’s theorem, we have
\[
\frac{1}{N}\sum_{i=1}^{N}\left(a_i+\frac{1}{T}\sum_{t=1}^{T}e_{it}\right)^2
=\left(\sigma_a^2+\frac{1}{T}\Sigma_e\right)+O_P\left(\frac{1}{\sqrt{N}}\right).
\]
And so 
\begin{align*}
    \check\omega_{a,j}^2 = \left(\sigma_a^2+\frac{1}{T}\Sigma_e\right)+O_P\left(\frac{1}{\sqrt{N}}\right) + O_P\left(\frac{1}{\sqrt{NT}} \right)+ O_P\left(\frac{1}{T}\right) = \sigma_a^2+o_P(1).
\end{align*}

For $\omega_{g,j}^2$, we can 
apply Proposition 2 of \cite{bester2008inference} by verifying Assumption 7 from the same paper. Since we can take the block size $h = round(T^{1/5}) + 1$, it diverges with the time sample size and $h/T\to 0$ as $T\to \infty$. So, Assumption 7(i) follows. Note that the $\beta$-mixing property of $g_{t,j}$ implies that it is also $\alpha$-mixing with the mixing coefficient $\alpha_{g}(q) \leq \beta_{g}(q) \leq \beta_{\gamma}(q) = c_\kappa \text{exp}(-\kappa q) $ for all $q\geq 1$. Let  $\zeta$ be some positive constant, then we have
\begin{align*}
    \sum_{q=1}^\infty q^2 \alpha_g(q)^{\zeta / (4+\zeta)} \leq c_\kappa^{\zeta/(4+\zeta)} \sum_{q=1}^\infty q^2  \text{exp}(- \kappa\zeta q /(4+\zeta) ) = c_\kappa^{\zeta/(4+\zeta)} \sum_{q=1}^\infty q^2  \text{exp}(- \frac{\kappa\zeta}{4+\zeta} q )
\end{align*}
We can use the ratio test to examine the convergence of sum: 
\begin{align*}
    \text{lim}_{q\to\infty} \frac{(q+1)^2\text{exp}(-\frac{\kappa\zeta}{4+\zeta}(q+1))}{q^2\text{exp}(-\frac{\kappa\zeta}{4+\zeta}q)} =\text{lim}_{q\to\infty} \left(\frac{q+1}{q}\right) \text{exp}(-\frac{\kappa\zeta}{4+\zeta}) = \text{exp}(-\frac{\kappa\zeta}{4+\zeta})
\end{align*}
Since $\kappa>0$ and $\zeta>0$, we have $\frac{\kappa\zeta}{4+\zeta}>0$ and so $\text{exp}(-\frac{\kappa\zeta}{4+\zeta})<1$. Thus we conclude the infinite sum does not diverge to infinity. The third condition is ensured by our assumptions directly. Thus, by Proposition 2 of \cite{bester2008inference}, we have
\begin{align*}
   \omega_{g,j}^2 =  \frac{N}{T}\frac{1}{T} \sum_{b=1}^B\left(\sum_{t\in H_b} g_{t,j}\right)^2 = c\Sigma_{g,j} + o_P(1).
\end{align*}
Therefore, we conclude that $\omega_{j} =  \tilde{\omega}_j^{\rm CHS} + o_{P}(1)$. 

Next, we consider $\check\omega_{g,j}^2$. We note that $g_t + \frac{1}{N}\sum_{i=1}^Ne_{it,j} - \tilde g_t = \frac{1}{N}\sum_{i=1}^Na_i=: R_{a,NT} = O_P(1/\sqrt{N})$, which does not depend on $i$ or $t$. Then, we have 
\begin{align*}
 T/N  \check\omega_{g,j}^2= & \frac{1}{T} \sum_{t=1}^T \sum_{s=1}^T k\left(\frac{|t-s|}{M}\right) \left(g_t + \frac{1}{N}\sum_{i=1}^Ne_{it,j} - R_{a,NT}\right) \left(g_s+ \frac{1}{N}\sum_{i=1}^Ne_{is,j}- R_{a,NT} \right)\\
   =& \frac{1}{T} \sum_{t=1}^T\sum_{s=1}^Tk\left(\frac{|t-s|}{M}\right) \left(g_t+\frac{1}{N}\sum_{i=1}^Ne_{it,j}\right)\left(g_s +\frac{1}{N}\sum_{i=1}^Ne_{is,j}\right)+ o_P(1) 
\end{align*}
where the last $o_P(1)$ term follows from $M=o(T^{1/4})=o(\sqrt{T})$ and $R_{a,NT} = O_P(1/\sqrt{N})$. Due to the mutual uncorrelatedness and that $e_{it,j}$ is uncorrelated over $i$, the long-run variance of $g_t+\frac{1}{N}\sum_{i=1}^Ne_{it,j}$ can be written as 
\begin{align*}
    \frac{1}{T}\sum_{t=1}^T \sum_{s=1}^T E(g_t+\frac{1}{N}\sum_{i=1}^Ne_{it,j}) (g_s+\frac{1}{N}\sum_{i=1}^Ne_{is,j}) = \Sigma_{g,j}(1+o(1) ) + \Sigma_{e,j}(1+o(1) )/N =\Sigma_{g,t}+o(1)
\end{align*}
Note that conditional on $\{\alpha_i\}$, $g_t+\frac{1}{N}\sum_{i=1}^Ne_{it,j}$ is beta-mixing at a geometric rate under Assumption \hyperref[ahk]{AHK}. We also note that $\frac{1}{T} \sum_{t=1}^T\sum_{s=1}^Tk\left(\frac{|t-s|}{M}\right) g_t\frac{1}{N}\sum_{i=1}^Ne_{is,j} $ is mean zero. Let $\bar e_{s,j} =\frac{1}{N}\sum_{i=1}^Ne_{it,j} $, then we can write
\begin{align*}
   \frac{1}{T} \sum_{t=1}^T\sum_{s=1}^Tk\left(\frac{|t-s|}{M}\right) g_t\bar e_{s,j} = \frac{1}{T}\sum_{l=-M}^{M} k(|l|/M) \sum_{t=1}^{T-|l|} g_{t,j}\bar e_{t+l,j}
\end{align*}
Consider $z_{t,j}(l):=  g_{t,j}\bar e_{t+l,j}$, which is also beta-mixing sequence with the geometric rate.  For each $l$, we can apply Theorem 14.13 of \cite{hansen2022econometrics}
\begin{align*}
    Var\left( \frac{1}{T}  \sum_{t=1}^{T-|l|} z_{t,j}(l)|\{\alpha_i\}\right) = (E[z_{t,j}^4(l)| \{\alpha_i\} ])^{1/2} O(1/T)
\end{align*}
By Jensen's inequality and iterated expectation, we have $E[(E[z_{t,j}^4(l)| \{\alpha_i\} ])^{1/2}]\leq E[z_{t,j}^4(l)]^{1/2} \leq\infty$. It follows that, $    Var\left( \frac{1}{T}  \sum_{t=1}^{T-|l|} z_{t,j}(l)\right)=O(1/T)$. Let $\bar z_{j}(l)= \frac{1}{T}  \sum_{t=1}^{T-|l|} z_{t,j}(l)$ By Holder's inequality, we have
\begin{align*}
 &   Var\left(\frac{1}{T}\sum_{l=-M}^{M} k(|l|/M) \bar z_{j}(l)\right)  =E\left[ \sum_{l=-M}^M k\left(\frac{|l|}{M}\right)\bar z_{j}(l)\right]^2\\
 =&\sum_{l=-M}^M k\left(\frac{|l|}{M}\right)\sum_{l'=-M}^M k\left(\frac{|l'|}{M}\right)E\left[ \bar z_{j}(l)\bar z_{j}(l')\right] \leq \sum_{l=-M}^M k\left(\frac{|l|}{M}\right)\sum_{l'=-M}^M k\left(\frac{|l'|}{M}\right) \sqrt{Var(\bar z_{j}(l'))Var(\bar z_{j}(l))} \\
&=O(M^2/T)=o(1).
 \end{align*}
 Therefore, we have $   \frac{1}{T} \sum_{t=1}^T\sum_{s=1}^Tk\left(\frac{|t-s|}{M}\right) g_t\bar e_{s,j}  = o_P(1)$. Similarly, we can show $\frac{1}{T} \sum_{t=1}^T\sum_{s=1}^Tk\left(\frac{|t-s|}{M}\right) \bar e_{t,j} \bar e_{s,j} = \Sigma_e/N$. With the extra requirement on $M=o(T^{1/4})$, we can verify the conditions of Theorem 2 from \cite{newey1987simple}, we have  $\frac{1}{T} \sum_{t=1}^T\sum_{s=1}^Tk\left(\frac{|t-s|}{M}\right)  g_{t,j}  g_{s,j} = \Sigma_g + o_P(1).$ It follows that $\check \omega_{g,j}^2 - \omega_{g,j}^2= o_P(1)$. 

 Lastly, consider $ \omega_{e,j}^2 -  \check \omega_{e,j}^2$. Under the extra strict heterogeneity and uniform bound on the scores, we can apply the results of Theorem 5 from \cite{CHS_Restat}: $\max_{i}|a_{i,j}-\frac{1}{T}\sum_{t=1}^Tv_{it,j} |=o_P(1) $ and $\max_{i}|g_{t,j}-\frac{1}{N}\sum_{i=1}^Nv_{it,j} |=o_P(1) $. Therefore, 
 \begin{align*}
 \check \omega_{e,j}^2 = &\frac{1}{NT^2}\sum_{i=1}^N \left(\sum_{t=1}^T  e_{it} + a_{i,j}-\frac{1}{T}\sum_{t=1}^Tv_{it,j} +  g_{t,j}-\frac{1}{N}\sum_{i=1}^Nv_{it,j} \right)^2 \\
 =&  \frac{1}{NT^2}\sum_{i=1}^N \left(\sum_{t=1}^T  e_{it}  \right)^2 + o_P(1) = \omega_{e,j}^2 + o_P(1)
 \end{align*}
 This finishes the proof of the first case and therefore completes the proof.

    \end{proof}

    \begin{proof}[\textbf{Proof of Lemma \ref{weight_valid_2}}]
    We first consider $\hat\omega_{g,j}^2-\check\omega_{g,j}^2$. We can rewrite $\hat\omega_{g,j}^2$ (and similar for $\check\omega_{g,j}^2$) as
    \begin{align*}
        \hat\omega_{g,j}^2 = \frac{1}{NT^2} \sum_{l=-M}^M\sum_{t=1}^{T-|l|}k\left(\frac{|l|}{M}\right)\sum_{i=1}^N\sum_{i'=1}^N \hat v_{it,j}\hat v_{i's,j}  
    \end{align*}
     By product decomposition, triangle inequality, and Cauchy-Schwarz inequality, we can decompose the difference as follows:
     \begin{align*}
        |\sum_{t=1}^{T-|l|}k\left(\frac{|l|}{M}\right)\sum_{i=1}^N\sum_{i'=1}^N \hat v_{it,j}\hat v_{i's,j} - \sum_{t=1}^{T-|l|}k\left(\frac{|l|}{M}\right)\sum_{i=1}^N\sum_{i'=1}^N  v_{it,j} v_{i's,j}| \lesssim R_{NT}\left( ||v_{it,j}||_{NT,2} + R_{NT}\right) 
     \end{align*}
    where, with $\hat\varepsilon_{it} := f_{it}'(\hat\beta-\beta)-r_{it}$, 
    \begin{align*}
        R_{NT}:= \Vert \hat v_{it,j} - v_{it,j} \Vert_{NT,2} = \frac{1}{NT} \left(\sum_{i=1}^N\sum_{t=1}^T (f_{it,j} \hat\varepsilon_{it})^2\right)^{1/2}\leq \bar F \Vert \hat\varepsilon_{it}\Vert_{NT,2} \leq F (\Vert f_{it}'(\hat\beta-\beta)\Vert_{NT,2}+ \Vert r_{it}\Vert_{NT,2}) 
    \end{align*}
    Under Assumption \ref{asm}, we have $\Vert r_{it}\Vert_{NT,2} = O_P\left(\sqrt{\frac{s}{N\wedge T}}\right)$. Since the initial weights satisfies Condition \ref{omegacon}, we have from Theorem \ref{performance_bound} that
    $\Vert f_{it}'(\hat\beta-\beta)\Vert_{NT,2} = O_P\left(\sqrt{\frac{s\log(p/\gamma)}{N\wedge T}}\right)$. Then, we have, under $M=o\left(\sqrt{\frac{N\wedge T}{s\log(p/\gamma)}}\right)$,
    \begin{align*}
        |\hat\omega_{g,j}^2-\check\omega_{g,j}^2|\lesssim O_P\left(\sqrt{\frac{sM^2\log(p/\gamma)}{N\wedge T}}\right) = o_P(1)
    \end{align*}
    We can show similarly for the other two difference, with less stringent conditions. 
\end{proof}

\subsection{Results for the Panel DML Method}
\label{app_1}
In this section, a formal study of the panel DML method as an extension of the prototypical DML approach from \cite{Chernozhukov2018} is provided. Differing from the existing literature, the approach in this paper is made robust to two-way cluster dependence characterized by Assumption \hyperref[ahk]{AHK}. To restrict the focus,\footnote{Although the panel DML procedure also works well for the i.i.d data, a formal result is not given in this paper, and the main theorem excludes such cases. The DML inference procedure for the i.i.d case has been studied in \cite{Chernozhukov2018} with the only difference in the cross-fitting scheme and the variance estimation. The proposed cross-fitting scheme is trivially valid for the i.i.d case, and the two-way cluster robust variance estimators considered in this paper, which are adapted from \cite{CHS_Restat,chen2024fixed}, can also be shown valid under i.i.d assumption following similar arguments in \cite{CHS_Restat,chen2024fixed}.} I will assume a non-degeneracy condition. First, I define the decomposition components and their corresponding (long-run) variance-covariance matrices as follows:
\begin{alignat*}{2}
a_i &:= {\rm E}\left[\psi\left(W_{it};\theta_0,\eta_0\right)|\alpha_i\right], \ \ & & \Sigma_a  := {\rm E}[a_ia_i'], \\
g_t &:= {\rm E}\left[\psi\left(W_{it};\theta_0,\eta_0\right)|\gamma_t\right], \ \ & & \Sigma_g  := \sum_{l=-\infty}^{\infty}{\rm E}[g_tg_{t+l}'] , \\
e_{it} &:= \psi\left(W_{it};\theta_0,\eta_0\right) - a_i - g_t, \ \ & &\Sigma_e := \sum_{l=-\infty}^{\infty}{\rm E}[e_{it}e_{i,t+l}'].
\end{alignat*}
Let $\lambda_{min}[.]$ denote the smallest eigenvalue of a square matrix. The next assumption specifies the non-degeneracy condition and it implies that at least one of the components drives the cluster dependence. 
\begin{assumption_nd}[Non-Degeneracy]
\label{nondeg} Either $\lambda_{min}[\Sigma_a]>0$ or $\lambda_{min}[\Sigma_g]>0$. 
\end{assumption_nd}

The next two assumptions follow the same format as \cite{Chernozhukov2018} but, importantly, they characterize some different rates of convergence required for inferential theory. Let $(\delta_{NT})$ and $(\Delta_{NT})$  be some sequence of positive constants converging to 0 as $N,T\to \infty$. Let $\mathcal{T}_{NT}$ be a nuisance realization set such that it contains $\eta_0$ and that $\widehat{\eta}_{kl}$ belongs to $\mathcal{T}_{NT}$ with probability $1-\Delta_{NT}$ for each $(k,l)$.

\begin{assumption_dml1}[Linear Moment Conditions, Smoothness, and Identification]
\label{linear} \ \ \ \ \ \ \ \ \ \ 
\begin{enumerate}
    \item[(i)] $\psi(W;\theta,\eta)$ is linear in $\theta$: $\psi(W;\theta,\eta)=\psi^a(W,\eta)\theta+\psi^b(W,\eta)$.
    \item[(ii)] $\psi(W;\theta,\eta)$ satisfy the Neyman orthogonality conditions \ref{neyman1} and \ref{neyman2} with respect to the probability measure $P$, or, more generally, \ref{neyman2} can be replaced by a $\lambda_{NT}$ near-orthogonality condition
    \begin{align*}
        \lambda_{NT}:= \sup_{\eta\in \mathcal{T}_{NT}}\left\Vert \partial_{r}{\rm E}[\psi(W;\theta_0,\eta_0+r(\eta-\eta_0))]|_{r=0} \right\Vert\leq \delta_{NT}/\sqrt{N}.
    \end{align*}
    \item[(iii)] The map $\eta\to {\rm E}[\psi(W_{it};\theta,\eta)]$ is twice continuously Gateaux-differentiable on $\mathcal{T}$.
    \item[(iv)] The singular values of the matrix $A_0:= {\rm E}[\psi^a(W_{it};\eta_0)]$ are bounded below by $c_a>0$.
\end{enumerate}    
\end{assumption_dml1}

Assumption \hyperref[linear]{DML1}(i) restricts the focus of this paper to models with linear orthogonal moment conditions, which covers the model in Section \ref{partial_linear_unob}. For nonlinear orthogonal moment conditions, \cite{Chernozhukov2018} has shown that the DML estimator has the same desirable properties under more complicated regularity conditions. Focusing on the linear cases allows us to pay more attention to issues specifically attributed to panel data. Assumption \hyperref[linear]{DML1}(ii) slightly relaxes the orthogonality condition \ref{neyman2} by a near-orthogonality condition, which is useful for the approximate sparse model with approximation errors. Assumption \hyperref[linear]{DML1}(iii) imposes a mild smoothness assumption on the orthogonal moment condition and Assumption \hyperref[linear]{DML1}(iv) is a common condition for identification.

\begin{assumption_dml2}[Moment Regularity and First-Steps] \label{rates} \ \ \ \ \ \ \  \ \ \ 
\begin{enumerate}
    \item[(i)] For all $i\geq 1$, $t\geq 1$, and some $q>2$, $c_m<\infty$, the following moment conditions hold:
    \begin{align*}
        m_{NT}&:=\sup_{\eta\in\mathcal{T}_{NT}}\left({\rm E}\left\Vert \psi(W_{it};\theta_0,\eta)\right\Vert^q\right)^{1/q}\leq c_m, \\
        m'_{NT}&:=\sup_{\eta\in\mathcal{T}_{NT}}\left({\rm E}\left\Vert \psi^a(W_{it};\eta)\right\Vert^q\right)^{1/q}\leq c_m.
    \end{align*}
    \item[(ii)] The following conditions on the statistical rates $r_{NT}$, $r_{NT}'$, $\lambda_{NT}'$ hold for all $i\geq 1$, $t\geq 1$:
    \begin{align*}
        r_{NT}&:= \sup_{\eta\in \mathcal{T}_{NT}} \left\Vert {\rm E}[\psi^a(W_{it};\eta)-\psi^a(W_{it};\eta_0)]\right\Vert \leq \delta_{NT}, \\
        r_{NT}'&:= \sup_{\eta\in \mathcal{T}_{NT}} \left(  {\rm E}\left\Vert\psi(W_{it};\theta_0,\eta)-\psi(W_{it};\theta_0,\eta_0)\right\Vert^2\right)^{1/2} \leq \delta_{NT}, \\
        \lambda_{NT}'&:= \sup_{r\in(0,1),\eta\in\mathcal{T}_{NT}} \left\Vert \partial_r^2 {\rm E}[\psi(W_{it};\theta_0,\eta_0+r(\eta-\eta_0))] \right\Vert \leq \delta_{NT}/\sqrt{N}.
    \end{align*}
\end{enumerate}
\end{assumption_dml2}

Assumption \hyperref[rates]{DML2} regulates the quality of the first-step nuisance estimators. It follows from \cite{Chernozhukov2018} and it can be verified under primitive conditions in the next section. Observe that, if the orthogonal moment function $\psi(W;\theta,\eta)$ is smooth in $\eta$, then $\lambda'_{NT}$ is the dominant rate and it imposes a crude rate requirement of order $\varepsilon_{NT} = o(N^{-1/4})$ on the first-step nuisance parameter in $L^2(P)$ norm, which adimts the proposed two-way cluster-LASSO estimator under suitable sparsity conditions. Furthermore, for some models, e.g., the partial linear model, $\lambda_{NT}'$ can be exactly 0, then it is possible to obtain the weakest possible rate requirement for the first-step estimator, i.e. $\varepsilon_{NT} = o(1)$.

\begin{theorem}[Asymptotic Normality and Variance] \label{normality_1}
    Suppose Assumptions \hyperref[ahk]{AHK}, \hyperref[nondeg]{ND}, \hyperref[linear]{DML1}, \hyperref[rates]{DML2} hold for any ${P}\in\mathcal{P}_{NT}$, then for some $\delta_{NT}\geq N^{-1/2}$, as $(N,T)\to \infty$ jointly and $N/T\to c$,
    \begin{align*}
        \sqrt{N}\left(\widehat{\theta}-\theta_0\right) = -\sqrt{N}A_0^{-1}\sum_{i=1}^N\sum_{t=1}^T\psi(W_{it}; \theta_0, \eta_0) + o_{P}(1) \Rightarrow \mathcal{N}(0,V),
    \end{align*}
   where $ V \coloneq  A_0^{-1} \Omega A^{-1'}_0$ and $\Omega \coloneq \Sigma_a + c \Sigma_g$.
\end{theorem}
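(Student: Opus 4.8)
The plan is to exploit the linearity of $\psi$ in $\theta$ from Assumption \hyperref[linear]{DML1}(i) to solve the estimating equation \ref{dml_exact} in closed form, and then to linearize $\sqrt{N}(\widehat\theta-\theta_0)$ around the oracle moment and apply the two-way clustered CLT of Lemma \ref{lemma_a2}. Writing $\psi(W;\theta,\eta)=\psi^a(W,\eta)\theta+\psi^b(W,\eta)$ and $\widehat{\bar A}:=\frac{1}{KL}\sum_{k=1}^{K}\sum_{l=1}^{L}\mathbb E_{kl}[\psi^a(W_{it};\widehat\eta_{kl})]$, the solution of \ref{dml_exact} is exactly
\[
\sqrt{N}(\widehat\theta-\theta_0)=-\widehat{\bar A}^{-1}\,\sqrt{N}\,\frac{1}{KL}\sum_{k=1}^{K}\sum_{l=1}^{L}\mathbb E_{kl}[\psi(W_{it};\theta_0,\widehat\eta_{kl})].
\]
The proof then reduces to two tasks: (a) show $\widehat{\bar A}\overset{p}{\to}A_0$, which together with Assumption \hyperref[linear]{DML1}(iv) makes $\widehat{\bar A}^{-1}$ bounded in probability and consistent for $A_0^{-1}$; and (b) show that the estimated-nuisance moment matches the oracle moment,
\[
\sqrt{N}\,\frac{1}{KL}\sum_{k,l}\mathbb E_{kl}[\psi(W_{it};\theta_0,\widehat\eta_{kl})]=\sqrt{N}\,\mathbb E_{NT}[\psi(W_{it};\theta_0,\eta_0)]+o_P(1),
\]
using that the cross-fitting folds partition the sample, so $\frac{1}{KL}\sum_{k,l}\mathbb E_{kl}[\cdot]=\mathbb E_{NT}[\cdot]$ once the common argument $\eta_0$ is plugged in.

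For task (b) the central object is, for each $(k,l)$, the decomposition
\[
\mathbb E_{kl}[\psi(\cdot;\theta_0,\widehat\eta_{kl})-\psi(\cdot;\theta_0,\eta_0)]=\underbrace{(\mathbb E_{kl}-\mathbb E)[\psi(\cdot;\theta_0,\widehat\eta_{kl})-\psi(\cdot;\theta_0,\eta_0)]}_{=:\,\mathcal R_{1,kl}}+\underbrace{\mathbb E[\psi(\cdot;\theta_0,\widehat\eta_{kl})-\psi(\cdot;\theta_0,\eta_0)]}_{=:\,\mathcal R_{2,kl}},
\]
where in $\mathcal R_{2,kl}$ the function $\widehat\eta_{kl}$ is held fixed and the expectation is over a fresh draw. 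The bias term $\mathcal R_{2,kl}$ is handled by a second-order Taylor expansion of $r\mapsto \mathbb E[\psi(\cdot;\theta_0,\eta_0+r(\widehat\eta_{kl}-\eta_0))]$ around $r=0$: the zeroth-order term vanishes by \ref{neyman1}, the first-order term is bounded by $\lambda_{NT}\le\delta_{NT}/\sqrt{N}$ through near-orthogonality (Assumption \hyperref[linear]{DML1}(ii)), and the remainder by $\tfrac12\lambda'_{NT}\le\delta_{NT}/(2\sqrt{N})$ (Assumption \hyperref[rates]{DML2}(ii)), so $\sqrt{N}\,\mathcal R_{2,kl}=O(\delta_{NT})=o(1)$ on the event $\{\widehat\eta_{kl}\in\mathcal T_{NT}\}$, which has probability $1-\Delta_{NT}$.

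The stochastic term $\mathcal R_{1,kl}$ is where the panel structure enters and is the main obstacle, since cross-fitting no longer produces exact independence between $\widehat\eta_{kl}$ and the main fold. The plan here is to invoke Lemma \ref{indep_couple}: replace the pair $(W(k,l),W(-k,-l))$ by the coupled pair $(\tilde W(k,l),\tilde W(-k,-l))$, which are genuinely independent and coincide with the originals on an event of probability $1-o(1)$. Conditioning on the auxiliary sample then fixes $h:=\psi(\cdot;\theta_0,\widehat\eta_{kl})-\psi(\cdot;\theta_0,\eta_0)$, so $\mathcal R_{1,kl}$ becomes a conditionally centered two-way clustered average. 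Applying the variance decomposition of Lemma \ref{lemma_a1} inside the fold and bounding each component (long-run) variance by a constant multiple of $\Vert h\Vert_{P,2}^2\le (r'_{NT})^2$ — using the geometric $\beta$-mixing from Assumption \hyperref[ahk]{AHK} for the $\gamma$- and $e$-components — gives conditional standard deviation $O(r'_{NT}/\sqrt{N})$, since $N_k\asymp N$ and $T_l\asymp T\asymp N$. A conditional Chebyshev bound then yields $\sqrt{N}\,\mathcal R_{1,kl}=O_P(r'_{NT})=o_P(1)$, with the transfer back to the original samples costing only the $o(1)$ coupling probability; averaging over the finitely many $(k,l)$ preserves $o_P(1)$.

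Finally, combining (a) and (b) and using $\mathbb E[\psi(W_{it};\theta_0,\eta_0)]=0$ from \ref{neyman1},
\[
\sqrt{N}(\widehat\theta-\theta_0)=-A_0^{-1}\,\sqrt{N}\,\mathbb E_{NT}[\psi(W_{it};\theta_0,\eta_0)]+o_P(1),
\]
so the two-way clustered CLT of Lemma \ref{lemma_a2} — valid under the non-degeneracy of Assumption \hyperref[nondeg]{ND} and $N/T\to c$ — delivers $\sqrt{N}\,\mathbb E_{NT}[\psi(\cdot;\theta_0,\eta_0)]\Rightarrow\mathcal N(0,\Sigma_a+c\Sigma_g)=\mathcal N(0,\Omega)$, and Slutsky gives $\sqrt{N}(\widehat\theta-\theta_0)\Rightarrow\mathcal N(0,A_0^{-1}\Omega A_0^{-1\prime})$. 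The consistency $\widehat{\bar A}\to A_0$ needed in (a) follows from the same three-term decomposition applied to $\psi^a$: the leading oracle average converges by the LLN in Lemma \ref{lemma_a1}, the coupled stochastic term is $o_P(1)$ by the argument above, and the bias term is bounded by $r_{NT}=o(1)$ from Assumption \hyperref[rates]{DML2}(ii). The delicate point throughout, and the step I expect to require the most care, is precisely the coupled conditional-variance control of $\mathcal R_{1,kl}$ at the correct $\sqrt{N}$ scale, since it is the cluster components $\Sigma_a,\Sigma_g$ rather than the idiosyncratic $\Sigma_e$ that govern both the normalization and the negligibility of the remainder.
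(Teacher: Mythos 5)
Your proposal is correct and follows essentially the same route as the paper's own proof: the closed-form solution via linearity of $\psi$, consistency of $\widehat{\bar{A}}$ (the paper's Claim B.1), the split of the moment discrepancy into a conditionally centered stochastic term and a bias term handled by a Taylor expansion with the $\lambda_{NT}$ and $\lambda_{NT}'$ bounds (the paper's Claim B.2, terms $b(1)$ and $b(2)$), with the coupling of Lemma \ref{indep_couple} restoring independence between folds, and finally the two-way clustered CLT of Lemma \ref{lemma_a2} plus Slutsky. Your identification of the conditional-variance control of $\mathcal{R}_{1,kl}$ at the $\sqrt{N}$ scale as the delicate step is exactly where the paper expends its effort, via the Hajek-type component decomposition and mixing covariance inequalities conditional on the auxiliary sample.
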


We observe that the convergence rate of the two-step estimator $\widehat{\theta}$ resulting from the panel DML procedure is non-standard. It is $\sqrt{N}$-consistent instead of $\sqrt{NT}$-consistent. This is because the cluster dependence introduced by the unit and time components does not decay over time or space. Intuitively, with more persistence, the information carried by data is accumulated more slowly. It is a common feature in the literature of robust inference with cluster dependence\footnote{For example, see \citealp{mackinnon2021wild}, \citealp{Chiang2022},\citealp{CHS_Restat},\citealp{chen2024fixed} among many others.} and it is also related to inferential theory under strong cross-sectional dependence (e.g., \citealp{gonccalves2011moving}).

It is noted that the variance estimator under the cross-fitting is equivalent to estimating the variance in each sub-sample and then averaging across all sub-samples. Since $K,L$ are fixed, the asymptotic analysis is done at the sub-sample level. The next theorem establishes the consistency of this variance estimator under the conventional small-bandwidth assumption. 
\begin{theorem}[Consistent Variance Estimator] \label{consistency_1}
    Assumptions \hyperref[ahk]{AHK}, \hyperref[nondeg]{ND}, \hyperref[linear]{DML1}, \hyperref[rates]{DML2} hold for any ${P}\in\mathcal{P}_{NT}$, and some $q>4$ (defined in Assumption \hyperref[rates]{DML2}), and $M/T^{1/2}=o(1)$. Then, as $N,T\to\infty$ and $N/T\to c$ where $0<c<\infty$, $\widehat{V}_{\rm CHS} = V + o_P(1).$
\end{theorem}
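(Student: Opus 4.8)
The plan is to work entirely at the sub-sample level, as hinted before the statement: because $(K,L)$ are fixed while every block size $N_k=N/K$ and $T_l=T/L$ diverges, $\widehat{V}_{\rm CHS}$ is the $\frac{1}{KL}$-average over the $KL$ folds of sub-sample variance estimators, so it suffices to pin down the probability limit of one fold and note that all folds share that limit (the time series is strictly stationary, the cross-section is i.i.d., and $N_k/T_l\to Lc/K$ uniformly). The first move is to reduce everything to \emph{infeasible} scores. Using the linearity in Assumption \hyperref[linear]{DML1}(i), the consistency $\widehat{\theta}\overset{p}{\to}\theta_0$ from Theorem \ref{normality_1}, and the first-step rates $r_{NT},r'_{NT}\leq\delta_{NT}$ of Assumption \hyperref[rates]{DML2}, I would show that replacing $(\widehat{\theta},\widehat{\eta}_{kl})$ by $(\theta_0,\eta_0)$ in each of $\widehat{\Omega}_{\rm A},\widehat{\Omega}_{\rm DK},\widehat{\Omega}_{\rm NW}$ (and in $\widehat{\bar{A}}$) changes them by $o_P(1)$. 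The key leverage is the independent-coupling property of Lemma \ref{indep_couple}: conditional on $W(-k,-l)$ the estimate $\widehat{\eta}_{kl}$ is non-random, so the increments $\psi(W_{it};\theta_0,\widehat{\eta}_{kl})-\psi(W_{it};\theta_0,\eta_0)$ are controlled in $L^2$ by $r'_{NT}$, and products of increments with infeasible scores are handled by Cauchy--Schwarz together with the moment bounds $m_{NT},m'_{NT}\leq c_m$; the strengthened $q>4$ supplies the fourth-moment bounds needed for the variance (Chebyshev) arguments that follow.

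Given this reduction I would establish the three limits separately through the Hajek decomposition $\psi(W_{it};\theta_0,\eta_0)=a_i+g_t+e_{it}$ and the orthogonality properties (i)--(iv) stated before Lemma \ref{lemma_a1}. Writing $\widehat{\Omega}_{\rm A}=\frac{1}{KL}\sum_{k,l}\frac{1}{N_k}\sum_{i\in I_k}(\frac{1}{T_l}\sum_{t\in S_l}\psi_{it})(\frac{1}{T_l}\sum_{r\in S_l}\psi_{ir})'$ and substituting, the only surviving term is $\frac{1}{N_k}\sum_{i\in I_k}a_ia_i'\overset{p}{\to}\Sigma_a$ by the i.i.d.\ LLN, while every cross term and every term carrying $\bar{g}_{S_l}$ or $\bar{e}_{i,S_l}$ is $O_P$ of a negative power of $N_k$ or $T_l$ by the variance bounds of Lemma \ref{lemma_a1}. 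For the Driscoll--Kraay term I would aggregate across the cross-section first, $\Psi_t^{(k)}:=\sum_{i\in I_k}\psi_{it}$, so that the dominant contribution is $\frac{(K/L)N_k}{T_l^2}\sum_{t,r\in S_l}k(|t-r|/M)\,g_tg_r'$; since $(K/L)N_k/T_l=N/T\to c$ and $\frac{1}{T_l}\sum_{t,r}k(|t-r|/M)g_tg_r'\overset{p}{\to}\Sigma_g$ is HAC consistency for the $\beta$-mixing sequence $\{g_t\}$ (using $M\to\infty$, $M/T\to 0$, exactly as in the argument behind Lemma \ref{weight_valid_1} via \cite{newey1987simple}), this term tends to $c\Sigma_g$, while the pieces involving $\bar{a}^{(k)},\bar{e}_t^{(k)}=O_P(N_k^{-1/2})$ vanish. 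Finally, every term of the average-of-HACs piece $\widehat{\Omega}_{\rm NW}$ carries a factor $M/T_l$ or $1/T_l$ (the within-$i$ kernel sum $\sum_{t,r}k(|t-r|/M)$ is $O(T_lM)$ against the $T_l^2$ in the denominator), so $\widehat{\Omega}_{\rm NW}\overset{p}{\to}0$ under $M/\sqrt{T}=o(1)$.

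Combining gives $\widehat{\Omega}_{\rm CHS}=\widehat{\Omega}_{\rm A}+\widehat{\Omega}_{\rm DK}-\widehat{\Omega}_{\rm NW}\overset{p}{\to}\Sigma_a+c\Sigma_g=\Omega$. Applying the same substitution step to $\widehat{\bar{A}}$ (replacing $\widehat{\eta}_{kl}$ by $\eta_0$ using $r_{NT}\to 0$, then invoking the two-way LLN of Lemma \ref{lemma_a1}) yields $\widehat{\bar{A}}\overset{p}{\to}A_0$, and the lower bound on the singular values of $A_0$ in Assumption \hyperref[linear]{DML1}(iv) gives $\widehat{\bar{A}}^{-1}\overset{p}{\to}A_0^{-1}$ by continuity of inversion. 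Slutsky's theorem then delivers $\widehat{V}_{\rm CHS}=\widehat{\bar{A}}^{-1}\widehat{\Omega}_{\rm CHS}\widehat{\bar{A}}^{-1'}\overset{p}{\to}A_0^{-1}\Omega A_0^{-1'}=V$. I expect the main obstacle to be the estimation-error reduction of the first paragraph, and within it the Driscoll--Kraay cross term, where the double cross-sectional summation $\sum_{i,j\in I_k}$ can in principle amplify the first-step errors; it is precisely the cross-fitting independence and the $\sqrt{N}$ (rather than $\sqrt{NT}$) normalization that keep these terms negligible, and carefully bookkeeping the kernel-weighted double sums under the small-bandwidth condition $M/\sqrt{T}=o(1)$ is the delicate part.
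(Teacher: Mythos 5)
Your proposal is correct, and its outer skeleton coincides with the paper's: since $(K,L)$ are fixed you reduce to a single fold, you strip out the first-step estimation error using the coupling event of Lemma \ref{indep_couple} together with the rates $r_{NT},r'_{NT}$ of Assumption DML2, and you finish with $\widehat{\bar{A}}\overset{p}{\to}A_0$ and Slutsky. Where you genuinely diverge is the analysis of the infeasible scores. You substitute the decomposition $\psi_{it}=a_i+g_t+e_{it}$ directly into the sample sums and treat the three terms of $\widehat{\Omega}_{\rm CHS}$ separately: an i.i.d.\ LLN for $\frac{1}{N_k}\sum_{i}a_ia_i'$ yields $\widehat{\Omega}_{\rm A}\to\Sigma_a$, HAC consistency of Newey--West type (\cite{newey1987simple}) for the $\beta$-mixing sequence $\{g_t\}$, after cross-sectional aggregation and the observation $(K/L)N_k/T_l\to c$, yields $\widehat{\Omega}_{\rm DK}\to c\Sigma_g$, and kernel counting kills $\widehat{\Omega}_{\rm NW}$. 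The paper never forms component sample sums: it regroups $\widehat{\Omega}_{{\rm CHS},kl}$ into $\widehat{\Omega}_{a,kl}+\widehat{\Omega}_{b,kl}-\widehat{\Omega}_{c,kl}+\widehat{\Omega}_{d,kl}+\widehat{\Omega}_{d,kl}'$ (Arellano, same-time all pairs, diagonal, lagged $i\ne j$ pairs), uses conditional independence across units and the law of total covariance to obtain the exact identities ${\rm E}[\psi_{it}^{(0)}\psi_{jr}^{(0)}]={\rm E}[g_tg_r']$ for $j\ne i$, and then controls the fluctuations of the score products around these expectations by conditional-mixing covariance inequalities and the degenerate-U-statistic lemma of \cite{dehling2010central} (Claims \ref{claim_3}--\ref{claim_6}). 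Your route is more modular, reducing everything to textbook results, at the price of bookkeeping many cross-component interactions ($\bar{a}$, $\bar{g}$, $\bar{e}$ terms), each needing a variance bound that ultimately rests on the component properties stated before Lemma \ref{lemma_a1}; the paper's route keeps score products intact, which lets it recycle the single quantity $R_{kl}=\Vert\widehat{\psi}_{it}^{(kl)}-\psi_{it}^{(0)}\Vert_{kl,2}$ across all claims when removing estimation error, and it replaces the appeal to an external HAC theorem by direct kernel-bias and mixing-tail bounds. Two caveats apply to both routes: consistency of the $g$-part implicitly requires $M\to\infty$ in addition to the stated $M/\sqrt{T}=o(1)$, and the step you flagged as delicate --- the kernel-weighted estimation-error term in the Driscoll--Kraay piece --- is exactly where the paper works hardest, bounding it in Claim \ref{claim_6} by $O_P(M/\sqrt{T}+Mr'_{NT})$ so that the bandwidth condition and the cross-fitting rate must jointly absorb the factor $M$; your instinct about where the difficulty concentrates matches the paper precisely.
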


Theorem \ref{consistency_1} can be seen as a generalization of the consistency result for the CHS variance estimator in \cite{CHS_Restat} by allowing for the estimated nuisance parameters in the moment functions. A remaining practical issue is that $\widehat{V}$ is not ensured to be positive semi-definite. It has been shown in \cite{chen2024fixed} that negative variance estimates
happen with a non-trivial number of times under certain data-generating processes. Accordingly, an alternative two-term variance estimator was proposed in \cite{chen2024fixed}. Following the same idea, I propose an alternative variance estimator by dropping the double-counting term $\widehat{\Omega}_{\rm NW}$:
\begin{align*}
    \widehat{V}_{\rm DKA}  = \widehat{A}^{-1}\widehat{\Omega}_{\rm DKA}\widehat{A}^{-1'}, \ \widehat{\Omega}_{\rm DKA} &= \widehat{\Omega}_{\rm A}+\widehat{\Omega}_{\rm DK}.
\end{align*}
The estimator is referred to as the DKA variance estimator because it is a sum of Driscoll-Krray and Arellano variance estimators.\footnote{Note that, the DKA estimator defined in \cite{chen2024fixed} differs from the DKA estimator here by a constant term based on fixed-b asymptotic analysis. Such bias correction is not considered here since the fixed-b properties are not directly applicable in this setting. The conjecture is that the same form of bias correction can be applied here but formally establishing the fixed-b asymptotic results with the presence of estimated nuisance parameters is challenging and out of the scope of this paper, and so is left for future research. } Similar approaches can be found in \cite{mackinnon2021wild}. It relies on the fact that the double-counting term is of small order asymptotically when the panel is two-way clustering. Similar to other two-term cluster-robust variance estimators, it has the computational advantage of guaranteeing positive semi-definiteness but at the cost of inconsistency in the case of no clustering or clustering at the intersection. For theoretical results and more detailed discussions on the trade-off between the ensured positive-definiteness and the risk of being too conservative/losing power, readers are referred to \cite{mackinnon2021wild} and \cite{chen2024fixed}. The next theorem formally shows that the double-counting term is of small order under two-way clustering, and it implies that the $\widehat{V}_{\rm DKA}$ is also consistent for $\Omega$ under two-way clustering. 

\begin{theorem} [Alternative Consistent Variance Estimator]
\label{consistency_2}
    Under the same conditions as Theorem \ref{consistency_1}, we have, as $N,T\to\infty$ and $N/T\to c$ where $0<c<\infty$, $\widehat{V}_{\rm DKA} = \widehat{V}_{\rm CHS} + o_P(1)$.
\end{theorem}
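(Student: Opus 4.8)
The plan is to reduce everything to a single negligibility statement about the double-counting term. Since $\widehat{\Omega}_{\rm DKA} - \widehat{\Omega}_{\rm CHS} = \widehat{\Omega}_{\rm NW}$ by construction, we have
\begin{align*}
    \widehat{V}_{\rm DKA} - \widehat{V}_{\rm CHS} = \widehat{A}^{-1}\widehat{\Omega}_{\rm NW}\widehat{A}^{-1'}.
\end{align*}
The proof of Theorem \ref{consistency_1} already delivers $\widehat{A}\overset{p}{\to}A_0$ with $A_0$ invertible (Assumption \hyperref[linear]{DML1}(iv)), so $\widehat{A}^{-1}=O_P(1)$. Hence it suffices to prove the single claim $\widehat{\Omega}_{\rm NW}=o_P(1)$: this is precisely the statement that the Newey--West double-counting piece is asymptotically negligible under two-way clustering.

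First I would strip out the estimation error, replacing $\psi(W_{it};\widehat{\theta},\widehat{\eta}_{kl})$ by $\psi(W_{it};\theta_0,\eta_0)$ inside $\widehat{\Omega}_{\rm NW}$ and showing the remainder is $o_P(1)$. Because $\psi$ is linear in $\theta$ (Assumption \hyperref[linear]{DML1}(i)) and Gateaux-smooth in $\eta$, and because the moments are bounded with $q>4$ (Assumption \hyperref[rates]{DML2}), this replacement is controlled by exactly the cross-product and rate arguments used for $\widehat{\Omega}_{\rm A}$ and $\widehat{\Omega}_{\rm DK}$ in Theorem \ref{consistency_1}; the independent-coupling construction of Lemma \ref{indep_couple} lets me treat $\widehat{\eta}_{kl}$ as independent of the block $W(k,l)$ on an event of probability $1-o(1)$, so the first-step error enters only through its rate. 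This step can therefore be imported essentially verbatim, the only new feature being the kernel weights $k(|t-r|/M)$, which are bounded by one and so do not affect the bounds.

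It then remains to analyze the infeasible term with true parameters. Using the Hajek decomposition $\psi(W_{it};\theta_0,\eta_0)=a_i+g_t+e_{it}$ and the mutual uncorrelatedness of the three sequences (property (iv) of the decomposition), the expectation factors into three contributions governed by $\Sigma_a$, $\Sigma_g$, and $\Sigma_e$. Since every summand of $\widehat{\Omega}_{\rm NW}$ pairs the \emph{same} cross-sectional index $i$, the cross-sectional average kills the $N^2$ amplification that makes $\widehat{\Omega}_{\rm DK}$ an $O(1)$ estimator of $c\Sigma_g$. Counting the kernel-weighted time sums then gives contributions of order $M/T$ (from $\Sigma_a$) and $1/T$ (from $\Sigma_g$ and $\Sigma_e$), all $o(1)$ under $M=o(\sqrt{T})$. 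Thus the infeasible NW term has mean $o(1)$, in sharp contrast to the $O(1)$ limits $\widehat{\Omega}_{\rm A}\overset{p}{\to}\Sigma_a$ and $\widehat{\Omega}_{\rm DK}\overset{p}{\to}c\Sigma_g$ that hold under non-degeneracy (Assumption \hyperref[nondeg]{ND}).

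\textbf{The hard part} will be upgrading the mean calculation to $o_P(1)$ via a variance bound on the infeasible NW term. This is a kernel-weighted double sum in time of the products $\psi_{it}\psi_{ir}'$, and its variance mixes the serial dependence of $g_t$ (controlled through the geometric $\beta$-mixing of $\{\gamma_t\}$ inherited by $g_t$) with the across-$i$ conditional independence of $e_{it}$. The bandwidth restriction $M=o(\sqrt{T})$ is exactly what is needed to keep the doubly kernel-weighted autocovariance sums from accumulating; I expect a term of order $M^2/T=o(1)$ to be the binding one, as in the analogous HAC bounds used for the feasible penalty weights (cf. Lemma \ref{weight_valid_2}). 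Chebyshev's inequality then yields $\widehat{\Omega}_{\rm NW}=o_P(1)$, completing the argument. A shorter alternative, available if the proof of Theorem \ref{consistency_1} is organized so as to establish the individual limits $\widehat{\Omega}_{\rm A}\overset{p}{\to}\Sigma_a$ and $\widehat{\Omega}_{\rm DK}\overset{p}{\to}c\Sigma_g$, is to obtain $\widehat{\Omega}_{\rm NW}=\widehat{\Omega}_{\rm A}+\widehat{\Omega}_{\rm DK}-\widehat{\Omega}_{\rm CHS}\overset{p}{\to}\Sigma_a+c\Sigma_g-\Omega=0$ by direct subtraction.
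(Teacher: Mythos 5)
Your proposal is correct in substance and reaches the right conclusion, but it takes a genuinely different route from the paper. The paper's proof is purely algebraic and recycles the machinery of Theorem \ref{consistency_1}: for each block it writes $\widehat{\Omega}_{{\rm NW},kl} = \widehat{\Omega}_{c,kl} + \widehat{\Omega}_{e,kl} - \widehat{\Omega}_{d,kl}$, where $\widehat{\Omega}_{c,kl}$ is the diagonal term already shown to be $O_P(T^{-1})$ in Claim \ref{claim_5}, $\widehat{\Omega}_{d,kl}$ is the cross-unit kernel term already shown in Claim \ref{claim_6} to converge to $c\sum_{m\geq 1}{\rm E}[g_tg_{t+m}]$, and $\widehat{\Omega}_{e,kl}$ is the all-pairs kernel term, which converges to the same limit by repeating Claim \ref{claim_6} verbatim; negligibility of the NW term then follows by cancellation, with no new moment calculations. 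You instead attack $\widehat{\Omega}_{\rm NW}$ head-on: strip the estimation error (legitimate, by the same coupling and rate arguments), then bound the mean of the infeasible term via the Hajek decomposition (your orders $M/T$, $1/T$ are right) and control its fluctuations by Chebyshev. This is more self-contained and makes the "same-$i$ pairing kills the $N^2$ amplification" intuition explicit, but it forces you to redo exactly the kernel-weighted variance bounds that the paper avoids by reuse — and these are nontrivial: in the paper's Claim \ref{claim_6} the binding fluctuation terms are of order $M/\sqrt{T}$ rather than your conjectured $M^2/T$ (note $M/\sqrt{T} \geq M^2/T$ when $M \leq \sqrt{T}$), though both vanish under $M=o(\sqrt{T})$, so your conclusion is unaffected. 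Finally, your "shorter alternative" ($\widehat{\Omega}_{\rm NW} = \widehat{\Omega}_{\rm A} + \widehat{\Omega}_{\rm DK} - \widehat{\Omega}_{\rm CHS} \overset{p}{\to} \Sigma_a + c\Sigma_g - \Omega = 0$) is essentially the paper's argument in disguise, with the caveat you correctly flag: the limit of $\widehat{\Omega}_{\rm DK}$ involves the all-pairs kernel term, so it still requires the "repeat Claim \ref{claim_6} with $\widehat{\Omega}_{e,kl}$" step — that ingredient cannot be skipped on either route.
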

To conclude, in this section, the inferential theory is established for the panel DML estimator, under high-level assumptions on the first-step estimator. Even though the rate of convergence can be slow for the nuisance estimations due to the two-way cluster dependence, the cross-fitting approach for panel models allows for valid inference in a general moment restriction model with growing dimensions in the nuisance parameters. In the next section, I will study a special case of the semiparametric restriction model and consider the complication due to unobserved heterogeneity.

\begin{proof}[\textbf{Proof of Theorem \ref{normality_1}}]
   
    By Assumption \hyperref[rates]{DML2}(i), with probability $1-\Delta_{NT}$, $\widehat{\eta}_{kl}\in \mathcal{T}_{NT}$. So, ${\rm P}(\widehat{\eta}_{kl}\in \mathcal{T}_{NT}, \ \forall (k,l)) \geq 1-KL\Delta_{NT} = 1-o(1)$. Let's denote the event  ${\rm P}(\widehat{\eta}_{kl}\in \mathcal{T}_{\eta}, \ \forall (k,l))$ as $\Eulerconst_\eta$ and the event $\{(W(k,l),W(-k,-l))=(\tilde{W}(k,l),\tilde{W}(-k,-l)),\ {\rm for \ some} \ (k,l) \}$ as $\Eulerconst_{cp}$. By Lemma \ref{indep_couple}, we have ${\rm P}(\Eulerconst_{cp})= 1-o(1)$. By union bound inequality, we have ${\rm P}(\Eulerconst_{\eta}^c \cup \Eulerconst_{cp}^c) \leq {\rm P}(\Eulerconst_{\eta}^c)+ {\rm P}(\Eulerconst_{cp}^c) = o(1)$. So, ${\rm P}(\Eulerconst_{\eta} \cap \Eulerconst_{cp}) = 1- {\rm P}(\Eulerconst_{\eta}^c \cup \Eulerconst_{cp}^c) \geq 1 - o(1)$.

    Let $\widehat{\theta}$ be a solution from equation \ref{dml_exact}. To simplify the notation, we denote
    \begin{align*}
        \widehat{A}_{kl} =\mathbb{E}_{kl} [\psi^a(W_{it},\widehat{\eta}_{kl})], \ \widehat{\bar{A}} &= \frac{1}{KL}\sum_{k=1}^K\sum_{l=1}^L\widehat{A}_{kl}, \ A_0 = {\rm E}[\psi^a(W_{it};\eta_0)], \\
        \widehat{B}_{kl} = \mathbb{E}_{kl}[\psi^b(W_{it},\widehat{\eta}_{kl})], \ \widehat{\bar{B}} &= \frac{1}{KL}\sum_{k=1}^K\sum_{l=1}^L\widehat{B}_{kl}, \ B_0 =  {\rm E}[\psi^b(W_{it};\eta_0)], \\
        \widehat{\bar{\psi}}(\theta)  =  \widehat{\bar{A}}\theta + \widehat{\bar{B}}, \  \bar{\psi}(\theta,\eta) &= \mathbb{E}_{NT} \psi(W_{it};\theta,\eta).
    \end{align*}

\begin{claim}
\label{claim_1}
    On event $\{\Eulerconst_{\eta} \cap \Eulerconst_{cp}\}$,  $\Vert  \widehat{\bar{A}}-A_0\Vert = O_{P}(N^{-1/2}+r_{NT})$.
\end{claim}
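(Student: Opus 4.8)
The plan is to decompose $\widehat{A}_{kl}-A_0$ into a stochastic sampling term evaluated at the estimated nuisance and a deterministic approximation-bias term, to control the two separately, and then to average over the finite grid of folds. Writing $\bar{A}(\eta):={\rm E}[\psi^a(W_{it};\eta)]$ so that $A_0=\bar{A}(\eta_0)$, I would split
\begin{align*}
\widehat{A}_{kl}-A_0 = \underbrace{\left(\mathbb{E}_{kl}[\psi^a(W_{it};\widehat{\eta}_{kl})]-\bar{A}(\widehat{\eta}_{kl})\right)}_{=:R_{1,kl}} + \underbrace{\left(\bar{A}(\widehat{\eta}_{kl})-\bar{A}(\eta_0)\right)}_{=:R_{2,kl}},
\end{align*}
and bound $\Vert\widehat{\bar{A}}-A_0\Vert\le\frac{1}{KL}\sum_{k,l}\left(\Vert R_{1,kl}\Vert+\Vert R_{2,kl}\Vert\right)$ by the triangle inequality, using that $K$ and $L$ are fixed.

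The bias term $R_{2,kl}$ is immediate: on $\Eulerconst_{\eta}$ we have $\widehat{\eta}_{kl}\in\mathcal{T}_{NT}$ for every $(k,l)$, so $\Vert R_{2,kl}\Vert\le\sup_{\eta\in\mathcal{T}_{NT}}\Vert\bar{A}(\eta)-\bar{A}(\eta_0)\Vert=r_{NT}$ by the definition of $r_{NT}$ in Assumption \hyperref[rates]{DML2}(ii). This contributes the $r_{NT}$ part of the bound deterministically.

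The stochastic term $R_{1,kl}$ is where the cross-fitting does its work, and this is the step I expect to be the main obstacle. Because $\widehat{\eta}_{kl}$ is computed from the auxiliary sample $W(-k,-l)$, it is in general dependent on the main sample $W(k,l)$, which would otherwise force a uniform empirical-process control over $\mathcal{T}_{NT}$. I would avoid this by invoking the independent coupling of Lemma \ref{indep_couple}: on $\Eulerconst_{cp}$ we may replace $\left(W(k,l),W(-k,-l)\right)$ by the coupled pair $\left(\tilde{W}(k,l),\tilde{W}(-k,-l)\right)$, in which the main sample is independent of the auxiliary sample and hence of $\widehat{\eta}_{kl}$. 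Conditioning on $\tilde{W}(-k,-l)$ then freezes $\widehat{\eta}_{kl}$ at a deterministic point of $\mathcal{T}_{NT}$, so that $R_{1,kl}$ becomes a centered two-way clustered sample mean of $\psi^a(\cdot;\eta)$ at fixed $\eta$. Applying the clustered variance bound of Lemma \ref{lemma_a1}(ii) to this conditional mean, its conditional variance is of order $N_k^{-1}+T_l^{-1}+(N_kT_l)^{-1}$, with constants controlled by the long-run covariances of the H\'{a}jek components of $\psi^a(\cdot;\eta)$, whose norms are bounded uniformly over $\eta\in\mathcal{T}_{NT}$ by $(m'_{NT})^2\le c_m^2$ under Assumption \hyperref[rates]{DML2}(i). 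Since $K,L$ are fixed and $N/T\to c$, we have $N_k\asymp T_l\asymp N$, so the conditional second moment of $R_{1,kl}$ is $O(1/N)$ uniformly on $\Eulerconst_{\eta}$; a conditional Markov inequality then gives $\Vert R_{1,kl}\Vert=O_P(N^{-1/2})$, and this bound transfers from the coupled sample back to $W$ on $\Eulerconst_{cp}$.

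Combining the two pieces and summing over the fixed number $KL$ of folds yields $\Vert\widehat{\bar{A}}-A_0\Vert=O_P(N^{-1/2})+r_{NT}=O_P(N^{-1/2}+r_{NT})$ on $\Eulerconst_{\eta}\cap\Eulerconst_{cp}$, as claimed. The only delicate point beyond the coupling is verifying that the variance bound of Lemma \ref{lemma_a1} requires only finite second moments of $\psi^a$, which are supplied here by $q>2$ in Assumption \hyperref[rates]{DML2}(i), so that the argument goes through under the weaker moment requirement of Theorem \ref{normality_1} rather than the stronger one needed for the variance-estimator results.
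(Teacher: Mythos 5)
Your proposal is correct and follows essentially the same route as the paper's proof: the identical decomposition into a conditional-bias term (bounded by $r_{NT}$ via Assumption \hyperref[rates]{DML2}(ii) on $\Eulerconst_{\eta}$) and a conditionally centered sample average (handled by the coupling of Lemma \ref{indep_couple}, conditioning on the auxiliary sample, and a two-way clustered conditional variance bound of order $N_k^{-1}+T_l^{-1}+(N_kT_l)^{-1}$, then Markov and the conditional-to-unconditional convergence step). The one place you lean on Lemma \ref{lemma_a1}(ii) as a black box, the paper instead re-derives the variance bound explicitly---expanding the five covariance sums and using conditional Cauchy--Schwarz plus the geometric $\beta$-mixing with Davydov-type inequalities under only $q>2$ moments---precisely because the lemma's stated moment hypothesis is stronger than what DML2(i) supplies; the ``delicate point'' you flag is true, and filling it in amounts to exactly that explicit expansion.
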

    
By Claim \ref{claim_1} and Assumption \hyperref[linear]{DML1}(iv) that all singular values of $A_0$ are bounded below by zero, it follows that all singular values of $\widehat{\bar{A}}$ are also bounded below from zero, on event $\Eulerconst_\eta$. Then, by the linearity in Assumption \hyperref[linear]{DML1}(i), we can write $ \widehat{\theta} = -\widehat{\bar{A}}^{-1}\widehat{\bar{B}}, \ \theta_0 = - A_0^{-1} B_0$. Then, by basic algebra, we have
\begin{align*}
    \sqrt{N}(\widehat{\theta}-\theta_0) = & \sqrt{N}(-\widehat{\bar{A}}^{-1}\widehat{\bar{B}}-\theta_0) = -\sqrt{N}\widehat{\bar{A}}^{-1}(\widehat{\bar{B}}+\widehat{\bar{A}}\theta_0) = -\sqrt{N}\widehat{\bar{A}}^{-1} \widehat{\bar{\psi}}(\theta_0) \\
    =& \sqrt{N}A_0^{-1}\bar{\psi}(\theta_0,\eta_0) + \sqrt{N}A_0^{-1}\left(\widehat{\bar{\psi}}(\theta_0) -\bar{\psi}(\theta_0,\eta_0) \right) \\
    &+\sqrt{N}\left[\left(A_0+\widehat{\bar{A}}-A_0\right)^{-1}-A_0^{-1}\right]\left(\bar{\psi}(\theta_0,\eta_0) + \widehat{\bar{\psi}}(\theta_0) -\bar{\psi}(\theta_0,\eta_0) \right)
\end{align*}

\begin{claim}
\label{claim_2}
    On event $\{\Eulerconst_{\eta} \cap \Eulerconst_{cp}\}$, $\Vert \widehat{\bar{\psi}}(\theta_0) - \bar{\psi}(\theta_0,\eta_0) \Vert = O_{P}(r_{NT}'/\sqrt{N}+\lambda_{NT}+\lambda_{NT}')$.
\end{claim}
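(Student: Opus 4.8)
The plan is to prove Claim \ref{claim_2} by the standard double/debiased‑machine‑learning decomposition of the plugged‑in moment into a centered empirical‑process term and a deterministic bias term, carried out block‑by‑block and then averaged over the finitely many folds $(k,l)$. Since $\psi$ is linear in $\theta$ (Assumption \hyperref[linear]{DML1}(i)) and the folds partition the sample into equal blocks, I would first write
\[
\widehat{\bar{\psi}}(\theta_0) - \bar{\psi}(\theta_0,\eta_0) = \frac{1}{KL}\sum_{k=1}^K\sum_{l=1}^L \mathbb{E}_{kl}\big[\psi(W_{it};\theta_0,\widehat{\eta}_{kl}) - \psi(W_{it};\theta_0,\eta_0)\big],
\]
and, writing $h_{kl}(W_{it}) := \psi(W_{it};\theta_0,\widehat{\eta}_{kl}) - \psi(W_{it};\theta_0,\eta_0)$, decompose each summand as $I_{kl}+II_{kl}$ with $I_{kl}:=\mathbb{E}_{kl}[h_{kl}]-{\rm E}[h_{kl}\mid\widehat{\eta}_{kl}]$ and $II_{kl}:={\rm E}[h_{kl}\mid\widehat{\eta}_{kl}]$. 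Because $K,L$ are fixed, it suffices to bound $\max_{k,l}\|I_{kl}\|$ and $\max_{k,l}\|II_{kl}\|$ separately and sum.

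For the bias term $II_{kl}$, I would work on the event $\Eulerconst_\eta$, where $\widehat{\eta}_{kl}\in\mathcal{T}_{NT}$, and introduce $f_{kl}(r):={\rm E}[\psi(W_{it};\theta_0,\eta_0+r(\widehat{\eta}_{kl}-\eta_0))]$, which is twice continuously Gateaux‑differentiable by Assumption \hyperref[linear]{DML1}(iii). A second‑order Taylor expansion gives $II_{kl}=f_{kl}(1)-f_{kl}(0)=f_{kl}'(0)+\tfrac12 f_{kl}''(\bar r)$ for some $\bar r\in(0,1)$. The first‑order term is exactly the Gateaux derivative appearing in the near‑orthogonality condition, so $\|f_{kl}'(0)\|\le\lambda_{NT}$ by Assumption \hyperref[linear]{DML1}(ii), while $\|f_{kl}''(\bar r)\|\le\lambda_{NT}'$ by Assumption \hyperref[rates]{DML2}(ii); hence $\|II_{kl}\|\le\lambda_{NT}+\tfrac12\lambda_{NT}'=O(\lambda_{NT}+\lambda_{NT}')$. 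This is the step where Neyman orthogonality does its job of annihilating the first‑order sensitivity to the nuisance estimate.

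For the empirical‑process term $I_{kl}$, the point of cross‑fitting is that $\widehat{\eta}_{kl}$ is built only from the auxiliary sample $W(-k,-l)$, which on the coupling event $\Eulerconst_{cp}$ of Lemma \ref{indep_couple} coincides with a copy independent of the main block $W(k,l)$. I would therefore condition on $\widehat{\eta}_{kl}$ and treat $h_{kl}$ as a fixed function, so that $I_{kl}$ is a centered block average. Invoking the finite higher moments from Assumption \hyperref[rates]{DML2}(i), the two‑way variance decomposition of Lemma \ref{lemma_a1} applied to $h_{kl}$ over the block of sizes $N_k=N/K\asymp N$ and $T_l=T/L\asymp T$ yields
\[
{\rm Var}\big(I_{kl}\mid\widehat{\eta}_{kl}\big)=\frac{1}{N_k}\Sigma_a^{h}+\frac{1}{T_l}\Sigma_g^{h}(1+o(1))+\frac{1}{N_kT_l}\Sigma_e^{h}(1+o(1)),
\]
where $\Sigma_a^h,\Sigma_g^h,\Sigma_e^h$ are the (long‑run) variance components of the Hajek decomposition of $h_{kl}$. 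The leading component obeys $\|\Sigma_a^h\|=\|{\rm E}[a_i^h(a_i^h)']\|\le{\rm E}\|a_i^h\|^2\le{\rm E}\|h_{kl}\|^2\le(r_{NT}')^2$ by conditional Jensen and Assumption \hyperref[rates]{DML2}(ii), with $a_i^h:={\rm E}[h_{kl}-{\rm E}h_{kl}\mid\alpha_i]$. Since $N/T\to c$, the temporal term $\tfrac1{T_l}\Sigma_g^h$ is of the same order $O((r_{NT}')^2/N)$ once $\Sigma_g^h$ is controlled, and $\tfrac1{N_kT_l}\Sigma_e^h$ is of strictly smaller order. Taking traces, a conditional Chebyshev inequality and then integrating out $\widehat{\eta}_{kl}$ on $\Eulerconst_\eta\cap\Eulerconst_{cp}$ gives $\|I_{kl}\|=O_P(r_{NT}'/\sqrt N)$, and combining with the bias bound proves the claim.

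The main obstacle is the control of the temporally correlated long‑run variances $\Sigma_g^h$ and $\Sigma_e^h$. Jensen bounds the $a$‑component by the $L^2$ quantity $(r_{NT}')^2$ directly, but $\Sigma_g^h=\sum_l{\rm E}[g_t^h(g_{t+l}^h)']$ is an infinite sum of autocovariances, and a Davydov‑type covariance inequality for the geometrically $\beta$‑mixing sequence $\gamma_t$ (Assumption \hyperref[ahk]{AHK}) bounds each term only through the $L^q$ norm of $h_{kl}$, which Assumption \hyperref[rates]{DML2}(i) merely keeps bounded rather than vanishing. Reconciling this $L^q$ bound with the shrinking $L^2$ rate $r_{NT}'$—so that the geometric decay of the mixing coefficients collapses the long‑run variance to a fixed multiple of the marginal second moment, hence to $O((r_{NT}')^2)$—is the delicate part, and is where both the geometric (not merely summable) mixing rate and the uniform moment regularity are needed. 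A secondary but conceptually essential point is the legitimacy of the conditioning: it is only the coupling in Lemma \ref{indep_couple} that renders $\widehat{\eta}_{kl}$ independent of the main block, so the entire empirical‑process bound must be stated on $\Eulerconst_{cp}$ and the $o(1)$ probability of its complement absorbed at the end.
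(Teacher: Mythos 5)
Your architecture is exactly the paper's: split $\mathbb{E}_{kl}[\psi(W_{it};\theta_0,\widehat{\eta}_{kl})-\psi(W_{it};\theta_0,\eta_0)]$ into a centered empirical-process term and a conditional bias term (the paper's $b(1)/\sqrt{N_k}$ and $b(2)$), kill the bias by a second-order Taylor expansion of $f_{kl}(r)$ using near-orthogonality for $\Vert f_{kl}'(0)\Vert\le\lambda_{NT}$ and Assumption DML2(ii) for $\Vert f_{kl}''\Vert\le\lambda_{NT}'$ on $\Eulerconst_{\eta}\cap\Eulerconst_{cp}$, and bound the centered term through its second moment conditional on the auxiliary sample, integrating out at the end. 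Your bias treatment is complete and correct, and your framing of the variance via the Hajek components is equivalent to the paper's explicit expansion into the groups $c(1)$--$c(5)$.

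The gap is the step you flagged and then stopped at, and it is worth being precise about where it actually lives. With the $\sqrt{N_k}$ scaling the relevant normalization is $\tfrac{1}{N_kT_l^2}$, and for every pair group except one --- same unit with any lag ($\asymp N_kT_l^2$ pairs), same time across units ($\asymp N_k^2T_l$ pairs, controlled because $N/T\to c$), and the diagonal --- plain conditional Cauchy--Schwarz with ${\rm E}[\Vert\tilde{\ddot{\psi}}^{kl}_{it}\Vert^2\mid W(-k,-l)]\le 4(r_{NT}')^2$ already yields $O((r_{NT}')^2)$; no mixing and no long-run variance control is needed there, which your Lemma \ref{lemma_a1}-based framing somewhat obscures. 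The only problematic group is the cross-unit, nonzero-lag covariances (the paper's $c(4)$): there are $\asymp N_k^2T_l^2$ of them, so Cauchy--Schwarz alone gives the divergent bound $O(N_k(r_{NT}')^2)$, while a Davydov/mixing bound alone controls each term only through the $L^q$ envelope $c_m$ of Assumption DML2(i), giving $O(c_m^2)=O(1)$ after summation --- each insufficient. To close it one must combine the two pieces of information: either truncate lags at $m_0\asymp\log(1/r_{NT}')$, using the $L^2$ bound for $m\le m_0$ and the geometric mixing bound for $m>m_0$, which yields $O\bigl((r_{NT}')^2\log(1/r_{NT}')\bigr)$, or interpolate $\Vert\cdot\Vert_p\le\Vert\cdot\Vert_2^{\theta}\Vert\cdot\Vert_q^{1-\theta}$ inside Davydov's inequality, which yields $O\bigl((r_{NT}')^{2\theta}\bigr)$ for any $\theta<1$. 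Both are marginally weaker than the claimed $O_P(r_{NT}'/\sqrt{N})$ but entirely sufficient for the only use of the claim, in Theorem \ref{normality_1}, where the term merely needs to be $o_P(1)$. You should also know that the paper does not close this step either: its proof asserts $c(4)=O((r_{NT}')^2)$ by ``similar arguments as for bounding $a(4)$,'' but the $a(4)$ argument in Claim \ref{claim_1} bounds covariances through $c_m$, not through $r_{NT}'$, so taken literally it delivers only $O(1)$. Your diagnosis of the delicate point is exactly right; the proposal simply stops one (fixable) lemma short of a proof.
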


By Assumption \hyperref[rates]{DML2}(i) and Jensen's inequality, we have $\Vert A_0 \Vert \leq m_{NT}' \leq c_m$. Then, Claim \ref{claim_2} implies that  
\begin{align*}
    \Vert \sqrt{N}A_0^{-1} \left(\widehat{\bar{\psi}}(\theta_0) -\bar{\psi}(\theta_0,\eta_0) \right) \Vert =& O_{P}(1) O_{P}(\sqrt{N}r_{NT}'+\sqrt{N}\lambda_{NT}+\sqrt{N}\lambda_{NT}') \\
    =& O_{P}(r_{NT}'+\sqrt{N}\lambda_{NT}+\sqrt{N}\lambda_{NT}'),
\end{align*}
Since ${\rm E}[\bar{\psi}(\theta_0,\eta_0)] = 0$, by Lemma \ref{lemma_a2}, we have $\sqrt{N}\bar{\psi}(\theta_0,\eta_0) \overset{d}{\to} \mathcal{N}(0,\Omega)$ where $\Omega = \Sigma_a+c\Sigma_g$ and $\Vert \Omega \Vert<\infty$. By Claims B.1, B.2, and the asymptotic normality of $\sqrt{N}\bar{\psi}(\theta_0,\eta_0)$, we have
\begin{align*}
    & \left\Vert\sqrt{N}\left[\left(A_0+\widehat{\bar{A}}-A_0\right)^{-1}-A_0^{-1}\right]\left(\bar{\psi}(\theta_0,\eta_0) + \widehat{\bar{\psi}}(\theta_0) -\bar{\psi}(\theta_0,\eta_0) \right)\right\Vert \\
    \leq & \left\Vert \widehat{\bar{A}}^{-1} \right\Vert \left\Vert \widehat{\bar{A}}- A_0 \right\Vert  \left\Vert A_0^{-1} \right\Vert \left\Vert \sqrt{N}\left( \bar{\psi}(\theta_0,\eta_0)+ \widehat{\bar{\psi}}(\theta_0) -\bar{\psi}(\theta_0,\eta_0) \right)\right\Vert \\
    = &O_{P}(1) O_{P}\left(N^{-1/2}+r_{NT}\right)  O_{P}(1) \left(O_{P}(1) +O_{P}\left(r_{NT}'+\sqrt{N}\lambda_{NT}+\sqrt{N}\lambda_{NT}'\right) \right) =O_{P}\left(N^{-1/2}+r_{NT}\right),
\end{align*}
and $ \sqrt{N}\left(\widehat{\theta}-\theta_0\right) = A_0^{-1}\mathcal{N}(0,\Omega) + O_{P}\left(N^{-1/2}+r_{NT} + r_{NT}'+\sqrt{N}\lambda_{NT}+\sqrt{N}\lambda_{NT}'\right) \overset{d}{\to} A_0^{-1}\mathcal{N}(0,\Omega).$ The proofs for Claim \ref{claim_1} and Claim \ref{claim_2} are given below.
\end{proof}

\begin{proof}[\textbf{Proof of Theorem \ref{consistency_1}}]

By the same arguments for Theorem \ref{normality_1}, we have
${\rm P}(\Eulerconst_{\eta} \cap \Eulerconst_{cp}) = 1- {\rm P}(\Eulerconst_{\eta}^c \cup \Eulerconst_{cp}^c) \geq 1 - o(1)$. By Claim \ref{claim_1}, we have $\Vert  \widehat{\bar{A}}-A_0\Vert = O_{P}(N^{-1/2}+r_{NT})$ on event $\{\Eulerconst_{\eta} \cap \Eulerconst_{cp}\}$. Therefore, due to $\Vert A_0^{-1} \Vert\leq a_0^{-1}$ ensured by Assumption \hyperref[linear]{DML1}(iv) and $\Omega < \infty$ as shown in Claim \ref{claim_2}, it suffices to show $\Vert \widehat{\Omega}_{\rm CHS} - \Omega\Vert = o_P(1)$. Furthermore, since $K$, $L$ are fixed constants, it suffices to show for each $(k,l)$ that $\Vert \widehat{\Omega}_{{\rm CHS},kl}- \Omega\Vert = o_P(1) $ where
\begin{align*}
{\widehat{\Omega}}_{{\rm CHS},kl}:= & {\widehat{\Omega}}_{a,kl} + {\widehat{\Omega}}_{b,kl} - {\widehat{\Omega}}_{c,kl} + {\widehat{\Omega}}_{d,kl} + {\widehat{\Omega}}_{d,kl}', \\
{\widehat{\Omega}}_{a,kl}:=  &  \frac{1}{N_k T_l^2}\sum\limits_{i\in I_k, t\in S_l, r\in S_l} \psi(W_{it};\widehat{\theta},\widehat{\eta}_{kl})
\psi(W_{ir};\widehat{\theta},\widehat{\eta}_{kl})'  , \\
{\widehat{\Omega}}_{b,kl}:=  &  \frac{K/L}{N_k T_l^2}\sum\limits_{t\in S_l, i\in I_k, j\in I_k} \psi(W_{it};\widehat{\theta},\widehat{\eta}_{kl})
\psi(W_{jt};\widehat{\theta},\widehat{\eta}_{kl})'  , \\
{\widehat{\Omega}}_{c,kl}:=  &  \frac{K/L}{N_k T_l^2}\sum\limits_{i\in I_k, t\in S_l} \psi(W_{it};\widehat{\theta},\widehat{\eta}_{kl})
\psi(W_{it};\widehat{\theta},\widehat{\eta}_{kl})'  , \\
{\widehat{\Omega}}_{d,kl}:=  & \frac{K/L}{N_k T_l^2}  \sum_{m=1}^{M-1} k\left(
\frac{m}{M}\right)  \sum_{t=\lfloor S_l \rfloor}^{\lceil S_l \rceil - m} \sum\limits_{i\in I_k, j\in I_k, j\ne i} \psi(W_{it};\widehat{\theta},\widehat{\eta}_{kl})\psi(W_{j,t+m};\widehat{\theta},\widehat{\eta}_{kl})' .
\end{align*}

Since a sequence of symmetric matrices $\Omega_n$ converges to a symmetric matrix  $\Omega_0$ if and only if $e' \Omega_n e \to  e'\Omega_0 e$ for all comfortable $e$, it suffices to assume without loss of generality that the dimension of $\psi$ to be $1$. To simplify the expression, we denote 
\begin{align*}
    \psi_{it}^{(0)} = \psi(W_{it};\theta_0,\eta_0), \ \ \ \ \widehat{\psi}_{it}^{(kl)} = \psi(W_{it};\widehat{\theta},\widehat{\eta}_{kl})
\end{align*}

\begin{claim}
\label{claim_3}
       On event $\{\Eulerconst_{\eta} \cap \Eulerconst_{cp}\}$,  $ \left\vert{\widehat{\Omega}}_{a,kl} - \Sigma_a\right\vert = O_P\left(N^{-1/2}+r_{NT}'\right)$.
\end{claim}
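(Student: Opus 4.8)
The plan is to reduce to the scalar case (as in the surrounding argument) and rewrite
$$\widehat{\Omega}_{a,kl} = \frac{1}{N_k}\sum_{i\in I_k}\left(\frac{1}{T_l}\sum_{t\in S_l}\widehat{\psi}_{it}^{(kl)}\right)^2,$$
so that $\widehat{\Omega}_{a,kl}$ is an across-$i$ average of squared within-cluster time-averages. First I would pass to the infeasible analogue $\Omega_{a,kl}^{(0)}$ obtained by replacing $\widehat{\psi}_{it}^{(kl)}$ with $\psi_{it}^{(0)}$, and bound $\vert\Omega_{a,kl}^{(0)} - \Sigma_a\vert$. Using the Hajek decomposition $\psi_{it}^{(0)} = a_i + g_t + e_{it}$, the within-cluster time average equals $a_i + \bar g^{(l)} + \bar e_i^{(l)}$, where $\bar g^{(l)} = T_l^{-1}\sum_{t\in S_l} g_t$ does not depend on $i$ and $\bar e_i^{(l)} = T_l^{-1}\sum_{t\in S_l} e_{it}$. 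Expanding the square and invoking the properties recorded after Lemma \ref{lemma_a1} — $a_i$ i.i.d.\ across $i$, $g_t$ and $e_{it}$ mean-zero and weakly dependent over $t$, and the mutual uncorrelatedness of $\{a_i\},\{g_t\},\{e_{it}\}$ — the leading term $N_k^{-1}\sum_{i\in I_k} a_i^2$ converges to $\Sigma_a$ at rate $O_P(N_k^{-1/2})=O_P(N^{-1/2})$, while every remaining term is $O_P(T^{-1/2})$ or smaller (since ${\rm E}[(\bar g^{(l)})^2]=O(1/T)$, ${\rm E}[(\bar e_i^{(l)})^2]=O(1/T)$, and the cross terms are mean-zero with variance $O(1/(NT))$). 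Under $N/T\to c$ this gives $\vert\Omega_{a,kl}^{(0)}-\Sigma_a\vert=O_P(N^{-1/2})$.

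The second and harder half is controlling the estimation error. Writing $\Delta_{it} := \widehat{\psi}_{it}^{(kl)} - \psi_{it}^{(0)}$ and $\bar\Delta_i^{(l)} := T_l^{-1}\sum_{t\in S_l}\Delta_{it}$, the feasible estimator expands as
$$\widehat{\Omega}_{a,kl} = \Omega_{a,kl}^{(0)} + \frac{2}{N_k}\sum_{i\in I_k}\bar\psi_i^{(0,l)}\bar\Delta_i^{(l)} + \frac{1}{N_k}\sum_{i\in I_k}\left(\bar\Delta_i^{(l)}\right)^2,$$
where $\bar\psi_i^{(0,l)}$ is the time average of $\psi_{it}^{(0)}$. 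By Cauchy--Schwarz it suffices to bound $N_k^{-1}\sum_{i\in I_k}(\bar\psi_i^{(0,l)})^2 = O_P(1)$ and $N_k^{-1}\sum_{i\in I_k}(\bar\Delta_i^{(l)})^2$. For the latter, Jensen's inequality gives $N_k^{-1}\sum_i (\bar\Delta_i^{(l)})^2 \leq \mathbb{E}_{kl}[\Delta_{it}^2]$, and the linearity of $\psi$ in $\theta$ (Assumption \hyperref[linear]{DML1}(i)) lets me split
$$\Delta_{it} = \psi^a(W_{it};\widehat{\eta}_{kl})(\widehat{\theta}-\theta_0) + \big[\psi(W_{it};\theta_0,\widehat{\eta}_{kl})-\psi(W_{it};\theta_0,\eta_0)\big].$$
The first piece contributes $O_P(N^{-1})$, using $\widehat{\theta}-\theta_0 = O_P(N^{-1/2})$ from Theorem \ref{normality_1} together with the moment bound $m_{NT}'\leq c_m$ of Assumption \hyperref[rates]{DML2}(i), which gives $\mathbb{E}_{kl}[\psi^a(W_{it};\widehat{\eta}_{kl})^2]=O_P(1)$ on $\Eulerconst_\eta$.

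The crux is the nuisance piece $\mathbb{E}_{kl}\big[(\psi(W_{it};\theta_0,\widehat{\eta}_{kl})-\psi(W_{it};\theta_0,\eta_0))^2\big]$, and here the cross-fitting design does the essential work: on $\Eulerconst_{cp}$ the main sample $W(k,l)$ coincides with its independent coupling $\tilde W(k,l)$, which is independent of the auxiliary sample $W(-k,-l)$ producing $\widehat{\eta}_{kl}$. Conditioning on $W(-k,-l)$ therefore treats $\widehat{\eta}_{kl}$ as a fixed element of $\mathcal{T}_{NT}$ (on $\Eulerconst_\eta$), so the conditional expectation of this average equals the population quantity ${\rm E}[(\psi(W_{it};\theta_0,\eta)-\psi(W_{it};\theta_0,\eta_0))^2]\vert_{\eta=\widehat{\eta}_{kl}} \leq (r_{NT}')^2$ by the definition of $r_{NT}'$; a conditional Markov inequality then yields $O_P((r_{NT}')^2)$. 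Collecting terms, $N_k^{-1}\sum_i(\bar\Delta_i^{(l)})^2 = O_P(N^{-1}+(r_{NT}')^2)$, the cross term is $O_P(N^{-1/2}+r_{NT}')$ by Cauchy--Schwarz, and adding the infeasible bound gives $\vert\widehat{\Omega}_{a,kl}-\Sigma_a\vert=O_P(N^{-1/2}+r_{NT}')$. I expect the main obstacle to be exactly this last conditioning step: making the coupling-based independence rigorous so that the population definition of $r_{NT}'$ can be applied to the random plug-in $\widehat{\eta}_{kl}$, while keeping track that the exceptional events $\Eulerconst_\eta^c,\Eulerconst_{cp}^c$ have vanishing probability and so do not disturb the $O_P$ rates.
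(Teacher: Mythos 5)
Your proposal is correct, and it splits naturally into two halves that compare differently with the paper's proof. The estimation-error half is essentially the paper's argument: the paper also isolates $I_{a,1}^{(kl)}$ (the analogue of your $\Delta$-terms), bounds it by the empirical $L^2$ norm $R_{kl}=\Vert \widehat{\psi}_{it}^{(kl)}-\psi_{it}^{(0)}\Vert_{kl,2}$ via Cauchy--Schwarz, uses the same linearity split $\Delta_{it}=\psi^a(W_{it};\widehat{\eta}_{kl})(\widehat{\theta}-\theta_0)+[\psi(W_{it};\theta_0,\widehat{\eta}_{kl})-\psi(W_{it};\theta_0,\eta_0)]$, invokes $\widehat{\theta}-\theta_0=O_P(N^{-1/2})$ and the moment bound for $\psi^a$, and handles the nuisance piece exactly as you describe: conditioning on $W(-k,-l)$, which on $\Eulerconst_{cp}\cap\Eulerconst_\eta$ makes $\widehat{\eta}_{kl}$ a fixed element of $\mathcal{T}_{NT}$ independent of the main sample, so the conditional mean is at most $(r_{NT}')^2$, followed by conditional Markov and Lemma 6.1 of Chernozhukov et al.\ to pass from conditional to unconditional $O_P$ rates; your worry about that step is resolved precisely this way in the paper. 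The infeasible half, however, takes a genuinely different route. The paper centers $\psi_{it}^{(0)}\psi_{ir}^{(0)}$ around conditional expectations given $\{\gamma_t\}_{t\in S_l}$: the conditionally centered part is handled as a conditionally i.i.d.\ average over $i$, the term $\mathrm{E}[\psi_{it}^{(0)}\psi_{ir}^{(0)}\mid\gamma_t,\gamma_r]-\mathrm{E}[\psi_{it}^{(0)}\psi_{ir}^{(0)}]$ is controlled by Lemma 3.4 of Dehling for functionals of mixing sequences (after verifying $2+\delta$ moment conditions), and the remaining bias term is dispatched by the law of total covariance plus a mixing covariance inequality. You instead expand the within-unit time average through the AHK components, $\bar\psi_i^{(0,l)}=a_i+\bar g^{(l)}+\bar e_i^{(l)}$, and bound the six resulting terms elementarily; this avoids the U-statistic-type machinery entirely and is arguably more transparent, at the price of relying directly on the component properties (mutual uncorrelatedness, $\mathrm{E}[e_{it}\mid\alpha_i]=0$, finite long-run variances $\Sigma_g,\Sigma_e$). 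One small caution: your claim that the cross terms have variance $O(1/(NT))$ needs fourth-moment bounds for time averages of mixing sequences (e.g.\ $\mathrm{E}[(\bar e_i^{(l)})^4]=O(T^{-2})$); this is available under $q>4$ and geometric mixing, but it is simpler to note that Cauchy--Schwarz across $i$ already gives these terms as $O_P(T^{-1/2})=O_P(N^{-1/2})$ under $N/T\to c$, which is all the claim requires.
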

 
\begin{claim}
\label{claim_4}
    On event $\{\Eulerconst_{\eta} \cap \Eulerconst_{cp}\}$,  $ \left\vert{\widehat{\Omega}}_{b,kl} - c{\rm E}[g_tg_t]\right\vert =O_P\left(N^{-1/2}+r_{NT}'\right)$.
\end{claim}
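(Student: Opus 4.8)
The plan is to first rewrite $\widehat{\Omega}_{b,kl}$ as a time-series average of squared cross-sectional means, then replace the estimated score by the oracle score, and finally extract the time component through the Hajek decomposition. Collecting the inner double sum over $i,j\in I_k$ and writing the cross-sectional average $\bar{\psi}_t := \frac{1}{N_k}\sum_{i\in I_k}\psi(W_{it};\widehat{\theta},\widehat{\eta}_{kl})$, we have $\sum_{i\in I_k}\widehat{\psi}_{it}^{(kl)} = N_k\bar{\psi}_t$, so that
$$\widehat{\Omega}_{b,kl} = \frac{(K/L)N_k}{T_l}\cdot\frac{1}{T_l}\sum_{t\in S_l}\bar{\psi}_t^2 = \frac{N}{T}\cdot\frac{1}{T_l}\sum_{t\in S_l}\bar{\psi}_t^2,$$
using $(K/L)N_k/T_l = (N/L)/(T/L) = N/T\to c$. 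Reducing to scalar $\psi$ as in the main argument, it therefore suffices to show $\frac{1}{T_l}\sum_{t\in S_l}\bar{\psi}_t^2 = {\rm E}[g_t^2] + O_P(N^{-1/2}+r_{NT}')$, after which multiplication by $N/T\to c$ yields the claim.

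Second, I would reduce to the oracle average $\bar{\psi}_t^{(0)} := \frac{1}{N_k}\sum_{i\in I_k}\psi_{it}^{(0)}$. By $a^2-b^2=(a-b)(a+b)$ and Cauchy--Schwarz,
$$\left|\frac{1}{T_l}\sum_{t\in S_l}\left(\bar{\psi}_t^2 - (\bar{\psi}_t^{(0)})^2\right)\right| \leq \left(\frac{1}{T_l}\sum_{t\in S_l}(\bar{\psi}_t - \bar{\psi}_t^{(0)})^2\right)^{1/2}\left(\frac{1}{T_l}\sum_{t\in S_l}(\bar{\psi}_t+\bar{\psi}_t^{(0)})^2\right)^{1/2},$$
where the second factor is $O_P(1)$ by the moment bound $m_{NT}\leq c_m$ in Assumption \hyperref[rates]{DML2}(i). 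A further Cauchy--Schwarz in $i$ gives $\frac{1}{T_l}\sum_t(\bar{\psi}_t - \bar{\psi}_t^{(0)})^2 \leq \mathbb{E}_{kl}[(\widehat{\psi}_{it}^{(kl)} - \psi_{it}^{(0)})^2]$. By linearity (Assumption \hyperref[linear]{DML1}(i)), $\widehat{\psi}_{it}^{(kl)} - \psi_{it}^{(0)} = \psi^a(W_{it};\widehat{\eta}_{kl})(\widehat{\theta}-\theta_0) + [\psi(W_{it};\theta_0,\widehat{\eta}_{kl}) - \psi(W_{it};\theta_0,\eta_0)]$; the first piece is $O_P(N^{-1/2})$ in $\mathbb{E}_{kl}$ mean square using $\widehat{\theta}-\theta_0=O_P(N^{-1/2})$ from Theorem \ref{normality_1} and $m_{NT}'\leq c_m$, while the second is $O_P(r_{NT}')$ on $\Eulerconst_{\eta}\cap\Eulerconst_{cp}$, because on the coupling event $\widehat{\eta}_{kl}$ is independent of the main sample, so conditioning on $\widehat{\eta}_{kl}=\eta\in\mathcal{T}_{NT}$ and invoking stationarity and the definition of $r_{NT}'$ bounds the conditional mean square by $(r_{NT}')^2$, and Markov's inequality concludes. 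This is exactly the reduction already performed in the proof of Claim \ref{claim_3}, so the same steps apply verbatim.

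Third, for the oracle term I would substitute the Hajek decomposition $\psi_{it}^{(0)} = a_i+g_t+e_{it}$, giving $\bar{\psi}_t^{(0)} = \bar{a} + g_t + \bar{e}_t$ with $\bar{a} := \frac{1}{N_k}\sum_{i\in I_k}a_i$ and $\bar{e}_t := \frac{1}{N_k}\sum_{i\in I_k}e_{it}$. Expanding the square and averaging over $t\in S_l$, the leading term $\frac{1}{T_l}\sum_{t\in S_l}g_t^2$ converges to ${\rm E}[g_t^2]$ at rate $O_P(T^{-1/2})=O_P(N^{-1/2})$, since $g_t^2$ is strictly stationary and geometrically $\beta$-mixing (property (i) of the decomposition) with finite variance (from $q>4$ in Assumption \hyperref[rates]{DML2} and Jensen's inequality), so its scaled partial-sum variance is $O(1/T_l)$. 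Each remaining term is negligible: $\bar{a}^2 = O_P(1/N)$ as $a_i$ is i.i.d. mean zero; $\frac{1}{T_l}\sum_t\bar{e}_t^2 = O_P(1/N)$ because ${\rm E}[\bar{e}_t^2]=\frac{1}{N_k}{\rm E}[e_{it}^2]$, using ${\rm E}[e_{it}|\gamma_t]=0$ with conditional independence across $i$ to kill the off-diagonal terms; and the cross terms $\frac{1}{T_l}\sum_t g_t\bar{a}$, $\frac{1}{T_l}\sum_t\bar{a}\bar{e}_t$, $\frac{1}{T_l}\sum_t g_t\bar{e}_t$ are all $O_P(N^{-1/2})$ by the uncorrelatedness of the components and the $O_P(N^{-1/2})$ sizes of $\bar{a}$ and $\bar{e}_t$. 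Collecting gives $\frac{1}{T_l}\sum_t(\bar{\psi}_t^{(0)})^2 = {\rm E}[g_t^2] + O_P(N^{-1/2})$.

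The hard part will be the reduction step: justifying that the estimated nuisance $\widehat{\eta}_{kl}$ can be treated as a fixed element of $\mathcal{T}_{NT}$ independent of the main-sample scores, which is precisely what clustered-panel cross-fitting and the independence coupling of Lemma \ref{indep_couple} secure on the event $\Eulerconst_{cp}$. Without this independence the conditional bound ${\rm E}[(\psi(W_{it};\theta_0,\widehat{\eta}_{kl})-\psi_{it}^{(0)})^2\mid\widehat{\eta}_{kl}]\leq (r_{NT}')^2$ would fail, so the entire relaxed rate requirement hinges on the coupling; by comparison, the bookkeeping of the component cross terms in the Hajek expansion is routine.
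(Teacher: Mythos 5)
Your proposal is correct, and it reaches the stated rate by a genuinely different route for the oracle (sampling-error) part. The estimation-error reduction is the same as the paper's: the paper bounds its term $I_{b,1}^{(kl)}$ by $R_{kl}\{\Vert \psi_{it}^{(0)}\Vert_{kl,2}+R_{kl}\}$ with $R_{kl}=O_P(N^{-1/2}+r_{NT}')$ carried over from the proof of Claim \ref{claim_3}, exactly as you do. For the oracle part, however, the paper splits $\psi_{it}^{(0)}\psi_{jt}^{(0)}-{\rm E}[\psi_{it}^{(0)}\psi_{jt}^{(0)}]$ into a deviation from the conditional expectation given $\{\alpha_i\}_{i\in I_k}$ (bounded at rate $O_P(T_l^{-1/2})$ via a conditional $\beta$-mixing moment inequality, citing Davidson's Theorem 14.2 and Hansen's (1992) Lemma A) plus a deviation of that conditional expectation from its mean (bounded at $O_P(N_k^{-1/2})$ using independence of the $\alpha_i$), and then disposes of the bias term $I_{b,3}^{(kl)}$ by the law of total covariance. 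You instead exploit the exact algebraic identity $\widehat{\Omega}_{b,kl}=\frac{N}{T}\cdot\frac{1}{T_l}\sum_{t\in S_l}\bar{\psi}_t^2$ and substitute the Hajek decomposition into the cross-sectional average, so that the limit ${\rm E}[g_t^2]$ emerges from a scalar mixing LLN for $g_t^2$ while $\bar{a}^2$, $\frac{1}{T_l}\sum_t\bar{e}_t^2$, and the cross terms are visibly $O_P(N^{-1})$ or smaller by elementary variance computations. Your route is more transparent and avoids the conditional-mixing machinery entirely; it also handles the diagonal $i=j$ terms automatically (they sit inside $\frac{1}{T_l}\sum_t\bar{e}_t^2$), whereas the paper's law-of-total-covariance identity for $I_{b,3}^{(kl)}$ is, strictly speaking, only valid for $i\ne j$ and silently absorbs an $O(1/N_k)$ diagonal contribution. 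The paper's organization buys uniformity: the same three-way split and conditional-expectation argument recur in Claims \ref{claim_3}, \ref{claim_5}, and \ref{claim_6}, and it extends verbatim to matrix-valued scores. Two small points of care in your write-up: the cross term $\frac{1}{T_l}\sum_t g_t\bar{e}_t$ needs more than "uncorrelatedness" --- you should invoke ${\rm E}[e_{it}\mid\gamma_t,\gamma_s]=0$ together with conditional independence of $e_{it}$ across $i$ (property (iii) of the decomposition) to kill the off-diagonal terms and get a variance of order $O((N_kT_l)^{-1})$; and your replacement of $N/T$ by $c$ introduces a remainder $(N/T-c){\rm E}[g_t^2]$ whose rate is not controlled by the stated assumptions, though the paper's own proof commits the identical gloss by writing $\frac{K/L}{N_kT_l^2}=\frac{c}{N_k^2T_l}$, so this is not a gap relative to the paper.
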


\begin{claim}
\label{claim_5}
    On event $\{\Eulerconst_{\eta} \cap \Eulerconst_{cp}\}$,  $ \left\vert{\widehat{\Omega}}_{c,kl} \right\vert =  O_P\left(T^{-1}\right)$.
\end{claim}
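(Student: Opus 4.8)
The plan is to exploit the arithmetic mismatch in $\widehat{\Omega}_{c,kl}$ between the normalizing factor $N_kT_l^2$ and the number of summands, which is only $N_kT_l$. Reducing to the scalar case as in the main argument, I would first rewrite the term in terms of the sub-sample empirical average defined in the Notation:
\begin{align*}
    \widehat{\Omega}_{c,kl} = \frac{K/L}{N_kT_l^2}\sum_{i\in I_k,\,t\in S_l}\big(\widehat{\psi}_{it}^{(kl)}\big)^2 = \frac{K/L}{T_l}\,\mathbb{E}_{kl}\!\left[\big(\widehat{\psi}_{it}^{(kl)}\big)^2\right].
\end{align*}
Since $K,L$ are fixed and $T_l = T/L$, the prefactor $(K/L)/T_l = K/T$ is exactly of order $T^{-1}$. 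Hence it suffices to show that the empirical second moment $\mathbb{E}_{kl}[(\widehat{\psi}_{it}^{(kl)})^2] = O_P(1)$ on the event $\{\Eulerconst_{\eta}\cap\Eulerconst_{cp}\}$.

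To bound this empirical second moment I would split, via $(x+y)^2\le 2x^2+2y^2$, into a true-value piece and an estimation-error piece:
\begin{align*}
    \mathbb{E}_{kl}\!\left[\big(\widehat{\psi}_{it}^{(kl)}\big)^2\right] \le 2\,\mathbb{E}_{kl}\!\left[\big(\psi_{it}^{(0)}\big)^2\right] + 2\,\mathbb{E}_{kl}\!\left[\big(\widehat{\psi}_{it}^{(kl)}-\psi_{it}^{(0)}\big)^2\right].
\end{align*}
For the first piece, Assumption \hyperref[rates]{DML2}(i) with $q>4$ gives $\mathrm{E}[(\psi_{it}^{(0)})^2]\le m_{NT}^2\le c_m^2$ by Jensen, so its unconditional expectation is bounded and Markov's inequality yields $O_P(1)$; no law of large numbers is needed because only a rate, not a precise limit, is required. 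For the second piece I would decompose the difference through the linearity in Assumption \hyperref[linear]{DML1}(i), writing $\widehat{\psi}_{it}^{(kl)}-\psi_{it}^{(0)} = \psi^a(W_{it},\widehat{\eta}_{kl})(\widehat{\theta}-\theta_0) + \big(\psi(W_{it};\theta_0,\widehat{\eta}_{kl})-\psi(W_{it};\theta_0,\eta_0)\big)$. The $\theta$-error term contributes $(\widehat{\theta}-\theta_0)^2\,\mathbb{E}_{kl}[(\psi^a(W_{it},\widehat{\eta}_{kl}))^2]$; the average is $O_P(1)$ (same moment-bound argument using $m_{NT}'\le c_m$), and $(\widehat{\theta}-\theta_0)^2 = O_P(N^{-1})$ by Theorem \ref{normality_1}, so this term is $o_P(1)$.

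The $\eta$-error term is where the only real care is needed, and it is the step I would flag as the main obstacle, though a mild one. Because $\widehat{\eta}_{kl}$ is estimated on $W(-k,-l)$ while the average runs over $W(k,l)$, I cannot directly pass expectations through $\widehat{\eta}_{kl}$. On the coupling event $\Eulerconst_{cp}$, however, $W(k,l)$ coincides with its independent copy $\tilde{W}(k,l)$, which is independent of $\widehat{\eta}_{kl}$; conditioning on $\widehat{\eta}_{kl}$ and using that $\widehat{\eta}_{kl}\in\mathcal{T}_{NT}$ on $\Eulerconst_{\eta}$, the conditional expectation of $\mathbb{E}_{kl}[(\psi(W_{it};\theta_0,\widehat{\eta}_{kl})-\psi(W_{it};\theta_0,\eta_0))^2]$ is bounded by $\sup_{\eta\in\mathcal{T}_{NT}}\mathrm{E}\|\psi(W_{it};\theta_0,\eta)-\psi(W_{it};\theta_0,\eta_0)\|^2 = (r_{NT}')^2\le\delta_{NT}^2$ from Assumption \hyperref[rates]{DML2}(ii). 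Conditional Markov then makes this term $O_P(\delta_{NT}^2)=o_P(1)$. Combining the pieces gives $\mathbb{E}_{kl}[(\widehat{\psi}_{it}^{(kl)})^2]=O_P(1)$, and multiplying by the $K/T$ prefactor delivers $\widehat{\Omega}_{c,kl}=O_P(T^{-1})$, completing the claim.
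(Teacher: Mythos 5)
Your proposal is correct, but it reaches the bound by a genuinely different and somewhat more economical route than the paper. The paper's proof centers the sum: it splits $\widehat{\Omega}_{c,kl}$ by the triangle inequality into an estimation-error term $I_{c,1}^{(kl)}$, a centered fluctuation term $I_{c,2}^{(kl)} = \frac{K/L}{N_kT_l^2}\sum_{i,t}\{(\psi_{it}^{(0)})^2 - {\rm E}[(\psi_{it}^{(0)})^2]\}$, and the pure expectation term $I_{c,3}^{(kl)}$; the last gives the $O(T^{-1})$ order from the moment bound $c_m^2$, the fluctuation term is controlled by an explicit expansion of ${\rm E}|I_{c,2}^{(kl)}|^2$ (which is why fourth moments of $\psi$, i.e.\ $q\geq 4$, enter at this step), and $I_{c,1}^{(kl)}$ reuses the bound $R_{kl}^2 = O_P(N^{-1}+(r_{NT}')^2)$ from Claim \ref{claim_3}. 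You instead never center: you factor out the $K/T$ prefactor exactly, and then only need the uncentered empirical second moment $\mathbb{E}_{kl}[(\widehat{\psi}_{it}^{(kl)})^2]$ to be $O_P(1)$, which follows from plain Markov for the true-score piece (requiring only ${\rm E}[(\psi_{it}^{(0)})^2]\leq c_m^2$, i.e.\ $q>2$) together with the same $R_{kl}$-type analysis — linearity, the rate $\widehat{\theta}-\theta_0 = O_P(N^{-1/2})$ from Theorem \ref{normality_1} (no circularity, since that theorem rests only on Claims \ref{claim_1} and \ref{claim_2}), and coupling plus conditional Markov for the $\eta$-error. What your route buys is brevity and weaker moment requirements for this step, exploiting that the claimed order is driven entirely by the $1/T_l$ prefactor and the nonnegative sum-of-squares structure; what the paper's route buys is finer information (a rate $O_P(T^{-1})$ for each centered piece rather than just the aggregate order) and uniformity with the template used for Claims \ref{claim_3}, \ref{claim_4}, and \ref{claim_6}, which would be needed if one wanted the probability limit of $T\widehat{\Omega}_{c,kl}$ rather than merely its order.
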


\begin{claim}
\label{claim_6}
    On event $\{\Eulerconst_{\eta} \cap \Eulerconst_{cp}\}$,  $\left \vert{\widehat{\Omega}}_{d,kl} - c\sum_{m=1}^{\infty}{\rm E}[g_t g_{t+m}]\right\vert = o_{P}(1)$.
\end{claim}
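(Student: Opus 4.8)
The plan is to mirror the two-step strategy already used for Claims \ref{claim_3}--\ref{claim_5}: first replace the feasible summands $\widehat{\psi}_{it}^{(kl)}$ by the infeasible $\psi_{it}^{(0)}=\psi(W_{it};\theta_0,\eta_0)$, and then analyze the resulting infeasible double-lag statistic through the Hajek decomposition $\psi_{it}^{(0)}=a_i+g_t+e_{it}$, isolating the one surviving product.

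\textbf{Step 1 (reduction to true parameters).} Writing $\Delta_{it}:=\widehat{\psi}_{it}^{(kl)}-\psi_{it}^{(0)}$, I would expand each product $\widehat{\psi}_{it}^{(kl)}\widehat{\psi}_{j,t+m}^{(kl)}$ into the infeasible term $\psi_{it}^{(0)}\psi_{j,t+m}^{(0)}$ plus the three remainders $\psi_{it}^{(0)}\Delta_{j,t+m}$, $\Delta_{it}\psi_{j,t+m}^{(0)}$, and $\Delta_{it}\Delta_{j,t+m}$. The double cross-sectional sum over $i\ne j$ would be handled by the identity $\sum_{i\ne j}x_iy_j=(\sum_i x_i)(\sum_j y_j)-\sum_i x_iy_i$, which converts each remainder contribution into products of cross-sectional \emph{averages} and so avoids an $O(N_k^2)$ blow-up. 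Working on the event $\Eulerconst_\eta\cap\Eulerconst_{cp}$, the coupling of Lemma \ref{indep_couple} lets me treat $\widehat{\eta}_{kl}$ as independent of the main sample $W(k,l)$; the linearity of Assumption \hyperref[linear]{DML1}(i), the $L^2$-rate $r_{NT}'$ of Assumption \hyperref[rates]{DML2}, and the $\sqrt{N}$-consistency of $\widehat{\theta}$ from Theorem \ref{normality_1} then control $\|\Delta\|_{kl,2}$. The only new feature relative to Claims \ref{claim_3}--\ref{claim_5} is that the lag sum carries $O(M)$ kernel terms; bounding $\sum_m k(m/M)(\cdots)$ crudely by $O(M)$ times a uniform bound and invoking $M/\sqrt{T}=o(1)$ shows each remainder is $o_P(1)$.

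\textbf{Step 2 (infeasible analysis).} Substituting $\psi_{it}^{(0)}=a_i+g_t+e_{it}$ produces nine product types; because the sum runs over distinct units $i\ne j$, only the $g_tg_{t+m}$ term survives, and I would show the other eight are $o_P(1)$. For the pure-$a$ term, $\sum_{i\ne j}a_ia_j=(\sum_i a_i)^2-\sum_i a_i^2=O_P(N_k)$ since $\{a_i\}$ is i.i.d.\ mean zero, and the scaling $\tfrac{K/L}{N_kT_l^2}$ leaves a factor $O(KM/T)=o(1)$. The mixed $a$--$g$ and $a$--$e$ terms are mean zero and, after the partial-sum reduction, involve a centered cross-sectional average $\bar{a}=O_P(N_k^{-1/2})$, so their magnitude is of order $\sqrt{N_k}\,M\,T_l^{-3/2}=O(M/T)=o(1)$ under $N/T\to c$. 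The crucial $e$--$e$ term is mean zero conditional on $\{\gamma_t\}$ by the conditional independence across $i$ (property (iii) of the decomposition), so a conditional second-moment bound, exactly as in Claim \ref{claim_3}, gives $o_P(1)$; the remaining $g$--$e$ and $e$--$g$ cross terms vanish by mutual uncorrelatedness (property (iv)) together with the same variance accounting.

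\textbf{Step 3 (HAC limit of the surviving term).} For the $g_tg_{t+m}$ piece the partial-sum identity supplies a leading factor $N_k(N_k-1)$, and after the scaling $\tfrac{K/L}{N_kT_l^2}$ the statistic reduces to $\tfrac{(K/L)(N_k-1)}{T_l}\sum_{m=1}^{M-1}k(m/M)\big(\tfrac{1}{T_l}\sum_t g_tg_{t+m}\big)$, whose prefactor converges to $N/T\to c$. Since $\{g_t\}$ is strictly stationary and $\beta$-mixing at a geometric rate with $\|\Sigma_g\|<\infty$ (Lemma \ref{lemma_a1}), I would invoke the standard HAC kernel-estimator consistency result for the serial-correlation part --- as in Theorem 2 of \cite{newey1987simple} or Proposition 2 of \cite{bester2008inference}, used the same way in Lemma \ref{weight_valid_1} --- under $M\to\infty$ and $M=o(\sqrt{T})$, to get that the kernel-weighted sample autocovariances converge in probability to $\sum_{m=1}^\infty{\rm E}[g_tg_{t+m}]$. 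Combining the three steps yields $\widehat{\Omega}_{d,kl}=c\sum_{m=1}^\infty{\rm E}[g_tg_{t+m}]+o_P(1)$, as claimed.

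\textbf{Main obstacle.} I expect Step 1 to be the delicate part: the estimated-residual replacement must be uniform across the $O(M)$ lags \emph{and} the $O(N_k^2)$ cross-sectional pairs simultaneously, so a naive Cauchy--Schwarz bound would forfeit the bandwidth factor. The partial-sum reduction to cross-sectional averages, combined with the coupling-based independence of $\widehat{\eta}_{kl}$ from the main sample and the restriction $M=o(\sqrt{T})$, is precisely what keeps these remainders negligible; verifying this carefully --- rather than the essentially routine variance accounting of Step 2 --- is where the real work lies.
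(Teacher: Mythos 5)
Your proposal is correct in substance, but it identifies the limit by a genuinely different route than the paper. Where you substitute the Hajek decomposition $\psi_{it}^{(0)}=a_i+g_t+e_{it}$ into the infeasible statistic, kill eight of the nine product sums by variance accounting, and then invoke off-the-shelf HAC consistency (Newey--West/Bester-type) for the surviving $g_tg_{t+m}$ piece, the paper never decomposes the summands at all: it centers the infeasible statistic at its expectation (the analogue of your Step 2's vanishing terms, handled as a single term $I_{d,2}$ via conditional-on-$\{\alpha_i\}$ mixing bounds, giving the explicit rate $O_P(M/\sqrt{T})$), identifies the limit exactly through the law of total covariance --- for $i\ne j$, ${\rm E}[\psi_{it}^{(0)}\psi_{j,t+m}^{(0)}]={\rm E}[g_tg_{t+m}]$ with no error term --- and then disposes of the kernel bias $(k(m/M)-1)$, the truncation of lags $m\ge M$, and the Riemann approximation by dominated convergence, the geometric mixing bound $O(e^{-\kappa M})$, and stationarity, respectively. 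Your route is more modular (the serial-correlation analysis is outsourced to a known HAC theorem applied to the stationary, geometrically $\beta$-mixing series $\{g_t\}$), while the paper's route is self-contained and delivers explicit rates for each bias term; both need $M\to\infty$ and $M=o(\sqrt{T})$. Two smaller points: first, your concern that a "naive" Cauchy--Schwarz bound in Step 1 would forfeit the bandwidth factor is unfounded --- the paper does exactly the naive per-lag bound $|I_{d,1,m}^{(kl)}|=O_P(T^{-1/2}+r_{NT}')$ and multiplies by $M$, which suffices because $M/\sqrt{T}=o(1)$ and $Mr_{NT}'=o(1)$; your partial-sum identity buys nothing here, since the dominant piece of the cross-sectional average of $\psi_{j,t+m}^{(0)}$ is $g_{t+m}$, which does not shrink with $N_k$, so both bounds are of the same order. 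Second, your Step 2 bound for the $e$--$e$ term should be stated conditionally on $\{\gamma_t\}$ (conditional mean zero and conditional cross-sectional independence), exactly as you indicate; the crude conditional second-moment count of $O(N_k^2T_l^2M^2)$ pairs against the scaling $(N_kT_l^2)^{-2}$ indeed gives $O_P(M^2/T^2)=o_P(1)$, so that step is fine.
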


Combining the Claims \ref{claim_3} - \ref{claim_6} completes the proof of Theorem \ref{consistency_1}. 
\end{proof}

\begin{proof}[\textbf{Proof of Theorem \ref{consistency_2}}]
Since $(K,L)$ are fixed constants, it suffices to show for each $(k,l)$ that
\begin{align*}
    \widehat{\Omega}_{{\rm NW},kl}:= \frac{K/L}{N_k T_l^2} \sum\limits_{i\in I_k, t\in S_l, r\in S_l}k\left(  \frac{\left\vert t-r\right\vert }{M}\right)
\psi(W_{it};\widehat{\theta},\widehat{\eta}_{kl})\psi(W_{ir};\widehat{\theta},\widehat{\eta}_{kl})'  =  o_P(1).
\end{align*}
Note that we can rewrite $\widehat{\Omega}_{{\rm NW},kl} =  {\widehat{\Omega}}_{c,kl} + {\widehat{\Omega}}_{e,kl} - {\widehat{\Omega}}_{d,kl}$ where ${\widehat{\Omega}}_{c,kl}$ and ${\widehat{\Omega}}_{d,kl}$ are defined in the proof of Theorem \ref{consistency_1}, and ${\widehat{\Omega}}_{e,kl}$ is defined as follows:
\begin{align*}
{\widehat{\Omega}}_{e,kl}:=  & \frac{K/L}{N_k T_l^2}  \sum_{m=1}^{M-1} k\left(
\frac{m}{M}\right)  \sum_{t=\lfloor S_l \rfloor}^{\lceil S_l \rceil - m} \sum\limits_{i\in I_k, j\in I_k} \psi(W_{it};\widehat{\theta},\widehat{\eta}_{kl})\psi(W_{j,t+m};\widehat{\theta},\widehat{\eta}_{kl})' .
\end{align*}
Observe that by replacing ${\widehat{\Omega}}_{d,kl}$ by ${\widehat{\Omega}}_{e,kl}$, each step in the proof of Claim \ref{claim_6} also follows. It implies that ${\widehat{\Omega}}_{e,kl} = {\widehat{\Omega}}_{d,kl} + o_{P}(1)$. By Claim \ref{claim_5}, we have ${\widehat{\Omega}}_{c,kl} = O_{P}(T^{-1})$. Therefore, we conclude that $\widehat{\Omega}_{{\rm NW},kl} = o_{P}(1)$.
\end{proof}

\begin{proof}[\textbf{Proof of Claim \ref{claim_1}.}]
Fix any $(k,l)$, we have
\begin{align*}
    \left\Vert \widehat{A}_{kl}-A_0 \right\Vert  \leq \left\Vert \widehat{A}_{kl}-{\rm E}[\widehat{A}_{kl}|W(-k,-l)] \right\Vert + \left\Vert {\rm E}[\widehat{A}_{kl}|W(-k,-l)]-A_0 \right\Vert =: \Vert\Delta_{A,1}\Vert + \Vert\Delta_{A,2}\Vert.
\end{align*}
On the event $\{\Eulerconst_{\eta} \cap \Eulerconst_{cp}\}$, we have $\widehat{\eta}_{kl} \in \mathcal{T}_{NT}$ and the independence between $W(-k,-l)$ and $W(k,l)$. So, due to Assumption \hyperref[rates]{DML2}, we have $\left\Vert \Delta_{A,2} \right\Vert \leq r_{NT}$. By iterated expectation, ${\rm E}[\Delta_{A,1}]=0$. To simplify the notation, we denote $\ddot{\psi}^{a,kl}_{it} := \psi^a(W_{it},\widehat{\eta}_{kl}) - {\rm E}[\psi^a(W_{it},\widehat{\eta}_{kl})|W(-k,-l)] $. Consider $\left\Vert \Delta_{A,1} \right\Vert$: 
\begin{align*}
    &{\rm E}\left(\left\Vert \Delta_{A,1} \right\Vert^2| W(-k,-l)\right) = \left(\frac{1}{N_kT_l}\right)^2 {\rm E} \left[ \left\Vert 
\sum_{i\in I_k, t\in S_l}  \ddot{\psi}^{a,kl}_{it}\right\Vert^2 | W(-k,-l)\right] \\
\leq &  \left(\frac{1}{N_kT_l}\right)^2 \sum_{i\in I_k, t\in S_l, r\in S_l} \left\vert {\rm E}\left[\langle\ddot{\psi}^{a,kl}_{it},\ddot{\psi}^{a,kl}_{is} \rangle | W(-k,-l)\right]\right\vert + \sum_{t\in S_l, i\in I_k, j\in I_k}\left\vert {\rm E}\left[\langle\ddot{\psi}^{a,kl}_{it},\ddot{\psi}^{a,kl}_{jt} \rangle | W(-k,-l)\right]\right\vert \\
&  + \sum_{ t\in S_l,i\in I_k} \left\vert {\rm E}\left[\langle\ddot{\psi}^{a,kl}_{it},\ddot{\psi}^{a,kl}_{it} \rangle | W(-k,-l)\right]\right\vert + 2\sum_{m=1}^{ T_l-1}\sum_{t=\min(S_l)}^{\max(S_l)-m}\sum_{i,j\in I_k} \left\vert {\rm E}\left[\langle\ddot{\psi}^{a,kl}_{it},\ddot{\psi}^a_{j,t+m} \rangle | W(-k,-l)\right]\right\vert \\
&  + 2\sum_{m=1}^{ T_l-1}\sum_{t=\min(S_l)}^{\max(S_l)-m} \sum_{i\in I_k} \left\vert {\rm E}\left[\langle\ddot{\psi}^{a,kl}_{it},\ddot{\psi}^a_{i,t+m} \rangle | W(-k,-l)\right]\right\vert =:  \left(\frac{1}{N_kT_l}\right)^2\left( a(1) + a(2) + a(3) +2a(4) +2a(5)\right).
\end{align*}
By conditional Cauchy-Schwarz inequality, for any $i,t,j,s$, we have
\begin{align*}
    \left\vert {\rm E}\left[\langle\ddot{\psi}^{a,kl}_{it},\ddot{\psi}^{a,kl}_{js} \rangle | W(-k,-l)\right]\right\vert & \leq \left({\rm E}\left[\Vert\ddot{\psi}^{a,kl}_{it}\Vert^2 | W(-k,-l)\right] {\rm E}\left[\Vert\ddot{\psi}^{a,kl}_{js}\Vert^2 | W(-k,-l)\right]\right)^{1/2}\\
    &= {\rm E}\left[\Vert\ddot{\psi}^{a,kl}_{it}\Vert^2 | W(-k,-l)\right].
\end{align*}
Therefore, we have
\begin{alignat*}{2}
    a(1)&\leq N_kT_l^2{\rm E}\left[\Vert\ddot{\psi}^{a,kl}_{it}\Vert^2 | W(-k,-l)\right], \ \ \ && a(2)\leq N_k^2 T_l{\rm E}\left[\Vert\ddot{\psi}^{a,kl}_{it}\Vert^2 | W(-k,-l)\right],\\
    a(3)&\leq N_k T_l {\rm E}\left[\Vert\ddot{\psi}^{a,kl}_{it}\Vert^2 | W(-k,-l)\right], \ \ 
 && a(5)\leq N_k T_l^2 {\rm E}\left[\Vert\ddot{\psi}^{a,kl}_{it}\Vert^2 | W(-k,-l)\right].
\end{alignat*}
On the event $\Eulerconst_{\eta}\cap\Eulerconst_{cp}$, we have, for $i\in I_k, t\in S_l$,
\begin{align*}
    &\left({\rm E}\left[\Vert\ddot{\psi}^{a,kl}_{it}\Vert^2 | W(-k,-l)\right]\right)^{1/2} \lesssim \left({\rm E}\left[\Vert \psi^a(W_{it},\widehat{\eta}_{kl}) \Vert^2|W(-k,-l)\right]\right)^{1/2} < \infty,
\end{align*}
where the first inequality follows from expanding the term and applying Jensen's inequality and the second inequality follows from Assumption \hyperref[rates]{DML2}(i). Let $D$ denote the dimension of $\psi^a(W,\eta)$, then we have
\begin{align*}
   a(4)=  a(5) + \sum_{m=1}^{ T_l-1}\sum_{t=\min(S_l)}^{\max(S_l)-m}\sum_{i,j\in I_k, i\ne j} \sum_{d=1}^D {\rm E}\left[\ddot{\psi}^{a,kl}_{d,i,t} \ddot{\psi}^{a,kl}_{d,j,t+m}  | W(-k,-l)\right]
\end{align*}
For each $i\in I_k, t\in S_l$, we can decompose $\ddot{\psi}^{a,kl}_{d,i,t} = a_i^{kl} + g_t^{kl} + e_{it}^{kl}$ where $a_i = {\rm E}[\ddot{\psi}^{a,kl}_{d,i,t}|\alpha_i]$, $g_i = {\rm E}[\ddot{\psi}^{a,kl}_{d,i,t}|\gamma_t]$, and $e_{it} = \ddot{\psi}^{a,kl}_{d,i,t} - a_i - g_t$. Conditional on $W(-k,-l)$, $(a_i^{kl},g_t^{kl}, e_{it}^{kl})$ are mutually uncorrelated, $a_i \ind a_j $ for $i\ne j$, and $g_t^{kl}$ is also beta-mixing with $\beta_g(m)\leq \beta_\gamma(m)$. Therefore, we have
\begin{align*}
    &{\rm E}\left[\ddot{\psi}^{a,kl}_{d,i,t} \ddot{\psi}^{a,kl}_{d,j,t+m}  | W(-k,-l)\right] ={\rm E}\left[g_t^{kl} g_{t+m}^{kl} + e_{it}^{kl} e_{j,t+m}^{kl}| W(-k,-l)\right] \\
    = &{\rm E}\left[g_t^{kl} g_{t+m}^{kl} |W(-k,-l) \right]  + {\rm E}\left[{\rm E}\left[e_{it}^{kl} e_{j,t+m}^{kl}| \alpha_i,\alpha_j, W(-k,-l)\right]|W(-k,-l)\right]
\end{align*}
Note that $\beta$-mixing of $\gamma_t$ implies $\alpha$-mixing with the mixing coefficient $\alpha_{\gamma}(m)\leq \beta_{\gamma}(m)$ for all $m\in \mathbb{Z}^{+}$, and conditional on $W(-k,-l)$ and $\alpha_i$, $e_{it}^{kl}$ is also $\alpha$-mixing with the mixing coefficient not larger than $\alpha_{\gamma}(m)$ by Theorem 14.12 of \cite{hansen2022econometrics}. Then, we have
\begin{align*}
    &\left\vert{\rm E}\left[{\rm E}\left[e_{it}^{kl} e_{j,t+m}^{kl}| \alpha_i,\alpha_j, W(-k,-l)\right]|W(-k,-l)\right]\right\vert \leq  {\rm E}\left[\left\vert{\rm E}\left[e_{it}^{kl} e_{j,t+m}^{kl}| \alpha_i,\alpha_j, W(-k,-l)\right]\right\vert|W(-k,-l)\right] \\
     \lesssim & 8\alpha_{\gamma}(m)^{1-2/q} \left({\rm E}[\vert  \ddot{\psi}^{a,kl}_{d,i,t}\vert^q|W(-k,-l)]\right)^{1/q} \left({\rm E}[\vert  \ddot{\psi}^{a,kl}_{d,j,t+m}\vert^q| W(-k,-l)]\right)^{1/q} \lesssim  32\alpha_{\gamma}(m)^{1-2/q}c_m^2,
\end{align*}
where the first inequality follows from the Jensen's inequality; the second inequality follows from the fact that ${\rm E}[e_{it}^{kl}|\alpha_i,W(-k,-l)] = 0$, and Theorem 14.13(ii) of \cite{hansen2022econometrics}; the last inequality follows from the moment conditions in Assumption \hyperref[rates]{DML2} and that $W(-k,-l)$ is independent of $W(k,l)$ on $\Eulerconst_{cp}$ . Similarly, 
\begin{align*}
    \left\vert {\rm E}\left[g_t^{kl} g_{t+m}^{kl} |W(-k,-l) \right] \right\vert \lesssim  \alpha_{\gamma}(m)^{1-2/q}c_m^2,
\end{align*}
Then, we have
\begin{align*}
   &\frac{1}{N_k^2 T_l} \sum_{m=1}^{ T_l-1}\sum_{t=\min(S_l)}^{\max(S_l)-m}\sum_{i,j\in I_k, i\ne j} \sum_{d=1}^D {\rm E}\left[\ddot{\psi}^{a,kl}_{d,i,t} \ddot{\psi}^{a,kl}_{d,j,t+m}  | W(-k,-l)\right]  \\
   \lesssim & c_m^2 \frac{1}{N_k^2 T_l} \sum_{m=1}^{ T_l-1}\sum_{t=\min(S_l)}^{\max(S_l)-m}\sum_{i,j\in I_k, i\ne j} \sum_{d=1}^D \alpha_{\gamma}(m)^{1-2/q} \nonumber \leq c_m^2 {D} \sum_{m=1}^{\infty} c_{\kappa} exp(-\kappa m)^{1-2/q} \leq   \frac{c_m^2{D} c_{\kappa} }{exp(\kappa(1-2/q))-1}<\infty,
\end{align*}
where the last inequality follows from the geometric sum. Thus, as $(N_k,T_l)\to\infty$ we have
\begin{align*}
   {\rm E}\left(\left\Vert \Delta_{A,1} \right\Vert^2| W(-k,-l)\right)= \left(\frac{1}{N_kT_l}\right)^2 \left[a(1)+a(2)+(3)+2a(4)+2a(5)\right] =O_P(1/T_l) =O_P(1/N).
\end{align*}
where the last step follows from that $L$ is constant and $N/T\to c$ as $N,T\to\infty$. By Markov's inequality, we conclude that conditional on $W(-k,-l)$, $\left\Vert \Delta_{A,1} \right\Vert =  O_P(1/\sqrt{N})$. By Lemma 6.1 that conditional convergence implies unconditional convergence, we have $\left\Vert \Delta_{A,1} \right\Vert =  O_P(1/\sqrt{N})$. To summarize, we have $\left\Vert \widehat{A}_{kl}-A_0 \right\Vert = O_{P}(N^{-1/2}+\delta_{NT})$, which implies $\left\Vert \widehat{\bar{A}}-A_0 \right\Vert = O_{P}(N^{-1/2}+r_{NT})$. 
\end{proof}

\begin{proof}[\textbf{Proof of Claim \ref{claim_2}}]
Since $K$ and $L$ are finite, it suffices to show for any $k,l$, 
\begin{align*}
   \left\Vert \mathbb{E}_{kl} \left[\psi(W_{it};\theta_0,\widehat{\eta}_{kl}) - \psi(W_{it};\theta_0,\eta_0) \right] \right\Vert = O_{P}(r_{NT}'/\sqrt{N_k} + \lambda_{NT} +\lambda_{NT}').
\end{align*}
To simplify the notation, we denote 
\begin{align*}
    \ddot{\psi}^{kl}_{it} &=\psi(W_{it};\theta_0,\widehat{\eta}_{kl})-\psi(W_{it};\theta_0,\eta_0), \\
    \tilde{\ddot{\psi}}^{kl}_{it} & = \ddot{\psi}^{kl}_{it} - {\rm E}[\ddot{\psi}^{kl}_{it}|W(-k,-l)], \\
  b(1) & = \left\Vert 
   \frac{\sqrt{N_k}}{N_kT_l}\sum_{i\in I_k, t\in S_l} \left[\ddot{\psi}^{kl}_{it} - {\rm E}[\ddot{\psi}^{kl}_{it}|W(-k,-l)] \right] \right\Vert \\
   b(2) & = \left\Vert  \frac{1}{N_kT_l} {\rm E}\left[\psi(W_{it};\theta_0,\widehat{\eta}_{kl})|W(-k,-l)\right]-{\rm E}\left[\psi(W_{it};\theta_0,\eta_0)\right]
    \right\Vert.
\end{align*}
We also denote $\tilde{\ddot{\psi}}_{d,it}$ as each element in the vector $\tilde{\ddot{\psi}}^{kl}_{it}$ for $d=1,...,D$, while suppressing the subscripts $k,l$ for convenience. By triangle inequality, we have
    \begin{align*}
    \left\Vert\mathbb{E}_{kl} \left[\psi(W_{it};\theta_0,\widehat{\eta}_{kl}) - \psi(W_{it};\theta_0,\eta_0) \right] \right\Vert  
    \leq  b(1)/\sqrt{N_k}+b(2).
    \end{align*}
To bound $b(1)$, first note that it is mean zero by the iterated expectation argument. 
On the event $\Eulerconst_{\eta}\cap\Eulerconst_{cp}$, we have
\begin{align*}
    &{\rm E}[b(1)^2|W(-k,-l)] \leq \frac{1}{N_k T_l^2}\sum_{i\in I_k, t\in S_l, r\in S_l} \left\vert {\rm E}\left[\langle\tilde{\ddot{\psi}}^{kl}_{it},\tilde{\ddot{\psi}}^{kl}_{is} \rangle | W(-k,-l)\right]\right\vert \\
& +\sum_{t\in S_l, i\in I_k, j\in I_k}\left\vert {\rm E}\left[\langle\tilde{\ddot{\psi}}^{kl}_{it},\tilde{\ddot{\psi}}^{kl}_{jt} \rangle | W(-k,-l)\right]\right\vert  +\sum_{ t\in S_l,i\in I_k}\left\vert {\rm E}\left[\langle\tilde{\ddot{\psi}}^{kl}_{it},\tilde{\ddot{\psi}}^{kl}_{it} \rangle | W(-k,-l)\right]\right\vert \\
&+ 2\sum_{m=1}^{ T_l-1}\sum_{t=\min(S_l)}^{\max(S_l)-m}\sum_{i,j\in I_k}\left\vert {\rm E}\left[\langle\tilde{\ddot{\psi}}^{kl}_{it},\tilde{\ddot{\psi}}^{kl}_{j,t+m} \rangle | W(-k,-l)\right]\right\vert +2\sum_{m=1}^{ T_l-1}\sum_{t=\min(S_l)}^{\max(S_l)-m} \sum_{i\in I_k}\left\vert {\rm E}\left[\langle\tilde{\ddot{\psi}}^{kl}_{it},\tilde{\ddot{\psi}}^{kl}_{i,t+m} \rangle | W(-k,-l)\right]\right\vert\\
&=: c(1) + c(2) + c(3) +2c(4) +2c(5).
\end{align*}
By conditional Cauchy-Schwarz inequality, for any $i,t,j,s$, we have
\begin{align*}
    \left\vert {\rm E}\left[\langle\tilde{\ddot{\psi}}^{kl}_{it},\tilde{\ddot{\psi}}^{kl}_{js} \rangle | W(-k,-l)\right]\right\vert & \leq \left({\rm E}\left[\Vert\tilde{\ddot{\psi}}^{kl}_{it}\Vert^2 | W(-k,-l)\right] {\rm E}\left[\Vert\tilde{\ddot{\psi}}^{kl}_{js}\Vert^2 | W(-k,-l)\right]\right)^{1/2}.
\end{align*}
Applying Minkowski's inequality, Jensen's inequality on the event $\Eulerconst_{\eta}\cap\Eulerconst_{cp}$, we have, for $i\in I_k, t\in S_l$,
\vspace{-5pt}
\begin{align*}
    &\left({\rm E}\left[\Vert\tilde{\ddot{\psi}}^{kl}_{it}\Vert^2 | W(-k,-l)\right] \right)^{1/2} \\
    \leq & \left({\rm E}\left[\Vert \ddot{\psi}^{kl}_{it}\Vert^2 | W(-k,-l)\right] \right)^{1/2} +\left({\rm E}\left[\Vert {\rm E}[\ddot{\psi}^{kl}_{it}|W(-k,-l)]\Vert^2 | W(-k,-l)\right] \right)^{1/2} \\
    \leq & 2 \left({\rm E}\left[\Vert \ddot{\psi}^{kl}_{it}\Vert^2 | W(-k,-l)\right] \right)^{1/2} \leq 2{r'_{NT}}.
\end{align*}
Therefore, we have
\begin{alignat*}{2}
    c(1)&\leq {\rm E}\left[\Vert\tilde{\ddot{\psi}}^{kl}_{it}\Vert^2 | W(-k,-l)\right]= O({r'_{NT}}^2), \ \ \ && c(2)\leq c {\rm E}\left[\Vert\tilde{\ddot{\psi}}^{kl}_{it}\Vert^2 | W(-k,-l)\right]= O({r'_{NT}}^2),\\
    c(3)&\leq \frac{1}{N_k}{\rm E}\left[\Vert\tilde{\ddot{\psi}}^{kl}_{it}\Vert^2 | W(-k,-l)\right]= O({r'_{NT}}^2/N), \ \  \ && c(5)\leq {\rm E}\left[\Vert\tilde{\ddot{\psi}}^{kl}_{it}\Vert^2 | W(-k,-l)\right]]=O({r'_{NT}}^2).
\end{alignat*}
Following similar arguments as for bounding $a(4)$, $c(4)$ is of order $O({r'_{NT}}^2)$. So, we have shown
\begin{align*}
    {\rm E}[b(1)^2|W(-k,-l)] = O_P\left({r_{NT}'}^2\right), 
\end{align*}
which implies $b(1) = O_{P}(r_{NT}')$ by Markov inequality and Lemma 6.1 of \cite{Chernozhukov2018}.

To bound $b(2)$, we first define 
\begin{align*}
    f_{kl}(r) := {\rm E}\left[ \psi(W_{it}, \theta_0, \eta_0+r(\widehat{\eta}_{kl}-\eta_0)|W(-k,-l)\right]-{\rm E}\left[\psi(W_{it};\theta_0,\eta_0)\right], \ r\in[0,1], 
\end{align*}
for some $i\in I_k, t\in S_l$. So, $b(2) = \Vert f_{kl}(1) \Vert$. By expanding $f_{kl}(r)$ around $0$ using mean value theorem and evaluating at $r=1$, we have
\begin{align*}
    f_{kl}(r) = f_{kl}(0) + f_{kl}'(0) + f_{kl}^{''}(\tilde{r})/2, 
\end{align*}
where $\tilde{r}\in(0,1)$. We note that $f_{kl}(0)=0$ on the event $\Eulerconst_{cp}$. On the event $\Eulerconst_{\eta}\cap\Eulerconst{cp}$ and under Assumption \hyperref[linear]{DML1}(ii)(near-orthogonality), we have $ \left\Vert f'_{kl}(0)\right\Vert \leq \lambda_{NT}$ and $ \left\Vert f^{''}_{kl(0)}\right\Vert \leq \lambda_{NT}'$. Therefore, we have shown that $b(2) = O_{P}(\lambda_{NT}) + O_{P}(\lambda'_{NT})$. Combining the bounds for $b(1)$ and $b(2)$ completes the proof of Claim \ref{claim_2}.
    
\end{proof}

\begin{proof}[\textbf{Proof of Claim \ref{claim_3}}]
  By triangle inequality, we have
   \begin{align*}
       \left\vert{\widehat{\Omega}}_{a,kl} -\Sigma_a \right\vert &\leq \left\vert I_{a,1}^{(kl)}\right\vert + \left\vert I_{a,2}^{(kl)}\right\vert + \left\vert I_{a,2}^{(kl)}\right\vert, \\
    I_{a,1}^{(kl)}&:= \frac{1}{N_k T_l^2}\sum\limits_{i\in I_k, t\in S_l, r\in S_l} \left\{\widehat{\psi}_{it}^{(kl)}
\widehat{\psi}_{ir}^{(kl)}  - \psi_{it}^{(0)}\psi_{ir}^{(0)}\right\}  ,\\
     I_{a,2}^{(kl)}&:=   \frac{1}{N_k T_l^2}\sum\limits_{i\in I_k, t\in S_l, r\in S_l} \left\{ \psi_{it}^{(0)}\psi_{ir}^{(0)}- {\rm E}[\psi_{it}^{(0)}\psi_{ir}^{(0)}]\right\}  ,\\
     I_{a,2}^{(kl)}&:= \frac{1}{N_k T_l^2}\sum\limits_{i\in I_k, t\in S_l, r\in S_l} {\rm E}[\psi_{it}^{(0)}\psi_{ir}^{(0)}] - {\rm E}[a_i a_i].
\end{align*}
By law of total covariance and mean-zero property of $\psi_{it}^{(0)}$, we have
\begin{align*}
    {\rm E}\left[\psi_{it}^{(0)}\psi_{ir}^{(0)}\right] = {\rm E}[{\rm E}(\psi_{it}^{(0)},\psi_{ir}^{(0)}|\alpha_i)]+{\rm E}\left({\rm E}[\psi_{it}^{(0)}|\alpha_i]{\rm E}[\psi_{ir}^{(0)}|\alpha_i]\right) 
\end{align*}
Due to the identical distribution of $\alpha_i$ and mean zero, we have
\begin{align*}
    \frac{1}{N_k T_l^2}\sum\limits_{i\in I_k, t\in S_l, r\in S_l} {\rm E}[\psi_{it}^{(0)}\psi_{ir}^{(0)}]= \frac{1}{ T_l^2}\sum\limits_{t\in S_l, r\in S_l} \left \{{\rm E}[{\rm E}(\psi_{it}^{(0)}\psi_{ir}^{(0)}|\alpha_i)]+{\rm E}({\rm E}[\psi_{it}^{(0)}|\alpha_i]{\rm E}[\psi_{ir}^{(0)}|\alpha_i])\right\}
\end{align*}
Conditional on $\alpha_i$, $\{\psi_{it}^{(0)}\}_{t\geq 1}$ is $\beta$-mixing with the mixing coefficient same as $\gamma_t$. Therefore, we can apply Theorem 14.13(ii) in \cite{hansen2022econometrics} and Jensen's inequality:
\begin{align*}
    {\rm E} \left\vert {\rm E}\left[\psi_{it}^{(0)},\psi_{ir}^{(0)}|\alpha_i\right] \right\vert  \leq 8 \left({\rm E}\vert \psi_{it}^{(0)} \vert^{q}\right)^{2/q} \beta_{\gamma}(|t-r|)^{1-2/q}
\end{align*}
Note that $\sum_{t\in S_l,r\in S_l} \beta_{\gamma}(|t-r|)^{1-2/q} \leq \infty $ under Assumption \ref{ahk}. So, $I_{a,2}^{(kl)} = O(1/ T_l^{2})= O(T^{-2})$.

To bound $ I_{a,2}^{(kl)}$, we can rewrite it by triangle inequality as follows:
\begin{align*}
    \left\vert  I_{a,2}^{(kl)}\right\vert \leq &  \left\vert  \frac{1}{N_k} \sum_{i\in I_k} I_{a,2,i}^{(kl)}\right\vert  + \left\vert  \frac{1}{N_k} \sum_{i\in I_k} \tilde{I}_{a,2,i}^{(kl)}\right\vert, \\
    I_{a,2,i}^{(kl)} :=& \frac{1}{ T_l^2}\sum_{t,r\in S_l}\left\{ \psi_{it}^{(0)}\psi_{ir}^{(0)} - {\rm E}\left[\psi_{it}^{(0)}\psi_{ir}^{(0)}|\{\gamma_t\}_{t\in S_l}\right]\right\},\\
    \tilde{I}_{a,2,i}^{(kl)}:= & \frac{1}{ T_l^2}\sum_{t,r\in S_l} \left\{{\rm E}\left[\psi_{it}^{(0)}\psi_{ir}^{(0)}|\{\gamma_t\}_{t\in S_l}\right] - {\rm E}[\psi_{it}^{(0)}\psi_{ir}^{(0)}]\right\}.
\end{align*}
Due to identical distribution of $\alpha_i$, $ \tilde{I}_{a,2,i}^{(kl)}$ does not vary over $i$ so that ${\rm E}\left\vert  \frac{1}{N_k} \sum_{i\in I_k} \tilde{I}_{a,2,i}^{(kl)}\right\vert^2= {\rm E}\left\vert  \tilde{I}_{a,2,i}^{(kl)}\right\vert^2$. Denote $h_i(\gamma_t,\gamma_r)={\rm E}[\psi_{it}^{(0)}\psi_{ir}^{(0)}|\gamma_t,\gamma_r]-{\rm E}[\psi_{it}^{(0)}\psi_{ir}^{(0)}]$. By direct calculation, we have
\begin{align*}
  {\rm E}\left\vert  \tilde{I}_{a,2,i}^{(kl)}\right\vert^2 =&\frac{1}{ T_l^4}\sum_{t,r,t',r'\in S_l} {\rm E}\left[h_i(\gamma_t,\gamma_r)h_i(\gamma_{t'},\gamma_{r'})\right].
\end{align*}
To bound the RHS above, we can apply Lemma 3.4 in \cite{dehling2010central} by verifying the following two conditions: 
\begin{align}
    {\rm E}\left\vert h_i(\gamma_t,\gamma_r) \right\vert^{2+\delta}&<\infty, \label{dehling1} \\
    \int\int \left\vert h_i(u,v) \right\vert^{2+\delta} dF(u) dF(v)&<\infty, \label{dehling2} 
\end{align}
for some $\delta>0$ and $F(.)$ is the common CDF of $\gamma_t$. Consider condition \ref{dehling1}. By Minkowski's inequality, Jensen's inequality, and the law of iterated expectation, we have
\begin{align*}
    \left( {\rm E}\left\vert h_i(\gamma_t,\gamma_r) \right\vert^{2+\delta}\right)^{\frac{1}{2+\delta}} \leq & \left( {\rm E}\left\vert \psi_{it}^{(0)}\psi_{ir}^{(0)} \right\vert^{2+\delta}\right)^{\frac{1}{2+\delta}} + {\rm E}\left\vert \psi_{it}^{(0)}\psi_{ir}^{(0)}\right\vert \leq  \left({\rm E}\left\vert \psi_{it}^{(0)} \right\vert^{4+2\delta}\right)^{\frac{1}{2+\delta}} + {\rm E}\left\vert \psi_{it}^{(0)}\right\vert^2
\end{align*}
where the second inequality follows from H\"{o}lder's inequality and the identical distribution of $\gamma_t$. Let $\delta=\frac{p-4}{2}$, then $\left({\rm E}\left\vert \psi_{it}^{(0)} \right\vert^{4+2\delta}\right)^{\frac{1}{2+\delta}}<c_m$ and ${\rm E}\left\vert \psi_{it}^{(0)}\right\vert^2\leq c_m^2$ follows from Assumption \hyperref[rates]{DML2}(i). Therefore, condition \ref{dehling1} is satisfied. 

Consider condition \ref{dehling2}. By Minkowski's inequality and Jensen's inequality, we have
\begin{align*}
   & \left( \int\int \left\vert {\rm E}[\psi_{it}^{(0)}\psi_{ir}^{(0)}|\gamma_t=u,\gamma_r=v]-{\rm E}[\psi_{it}^{(0)}\psi_{ir}^{(0)}] \right\vert^{2+\delta} dF(u) dF(v) \right)^{\frac{1}{2+\delta}} \\
   \leq &\left( \int\int \left\vert {\rm E}[\psi_{it}^{(0)}\psi_{ir}^{(0)}|\gamma_t=u,\gamma_r=v] \right\vert^{2+\delta} dF(u) dF(v) \right)^{\frac{1}{2+\delta}} +  {\rm E}\left\vert\psi_{it}^{(0)}\psi_{ir}^{(0)}\right\vert \\
   \leq & \left( \int\int  \left({\rm E}\left[\left\vert \psi_{it}^{(0)}\right\vert^2|\gamma_t=u\right]\right)^{\frac{2+\delta}{2}} \left({\rm E}\left[\left\vert \psi_{ir}^{(0)}\right\vert^2|\gamma_r=v\right]\right)^{\frac{2+\delta}{2}} dF(u) dF(v) \right)^{\frac{1}{2+\delta}} +  {\rm E}\left\vert\psi_{it}^{(0)}\right\vert^2 \\
    \leq & \left( \int\int {\rm E}\left[\left\vert \psi_{it}^{(0)}\right\vert^{2+\delta}|\gamma_t=u\right] {\rm E}\left[\left\vert \psi_{ir}^{(0)}\right\vert^{2+\delta}|\gamma_r=v\right] dF(u) dF(v) \right)^{\frac{1}{2+\delta}} +  {\rm E}\left\vert\psi_{it}^{(0)}\right\vert^2 \\
    = & \left({\rm E}\left\vert \psi_{it}^{(0)}\right\vert^{4+2\delta} \right)^{\frac{1}{2+\delta}} + {\rm E}\left\vert\psi_{it}^{(0)}\right\vert^2
\end{align*}
where the second inequality follows from (conditional) H\"{o}lder's inequality and identical distribution of $\gamma_t$; the third inequality follows from Jensen's inequality; the last equality follows from the law of iterated expectation and the identical distribution of $\gamma_t$. Therefore, condition \ref{dehling2} is also satisfied with $\delta=\frac{p-4}{2}$. By Lemma 3.4 in \cite{dehling2010central}, we conclude
\begin{align*}
  {\rm E}\left\vert  \tilde{I}_{a,2,i}^{(kl)}\right\vert^2 =&\frac{1}{ T_l^4}\sum_{t,r,t',r'\in S_l} {\rm E}\left[h_i(\gamma_t,\gamma_r)h_i(\gamma_{t'},\gamma_{r'})\right] = o( T_l^{-1})=o(T^{-1}).
\end{align*}
Therefore, by Markov inequality, we have $ \tilde{I}_{a,2,i}^{(kl)} =o_P(T^{-1/2}) $. Next. consider $\left\vert  \frac{1}{N_k} \sum_{i\in I_k} I_{a,2,i}^{(kl)}\right\vert$. Note that conditional on $\{\gamma_t\}_{t\in S_l}$, $I_{a,2,i}^{(kl)}$ is i.i.d over $i$. So, we have
\begin{align*}
    {\rm E}\left[\left\vert \frac{1}{N_k} \sum_{i\in I_k} I_{a,2,i}^{(kl)}\right\vert^2|\{\gamma_t\}_{t\in S_l}\right] =& \frac{1}{N_k^2} \sum_{i\in I_k} {\rm E}\left[\left\vert  I_{a,2,i}^{(kl)}\right\vert^2|\{\gamma_t\}_{t\in S_l}\right] = \frac{1}{N_k} {\rm E}\left[\left\vert  I_{a,2,i}^{(kl)}\right\vert^2|\{\gamma_t\}_{t\in S_l}\right] 
\end{align*}
By conditional Markov inequality, we have
\begin{align*}
    {\rm P}\left(\left\vert \frac{1}{N_k} \sum_{i\in I_k} I_{a,2,i}^{(kl)}\right\vert>\varepsilon | \{\gamma_t\}_{t\in S_l}\right) = O\left(\frac{1}{N_k} {\rm E}\left[\left\vert  I_{a,2,i}^{(kl)}\right\vert^2|\{\gamma_t\}_{t\in S_l}\right] \right)
\end{align*}
By Minkowski's inequality for infinite sums, Jensen's inequality, and H\"{o}lder's inequality, we have 
\begin{align*}
    \left({\rm E}\left[\left\vert I_{a,2,i}^{(kl)}\right\vert^{2}\right]\right)^{1/2} \lesssim & \frac{1}{ T_l^2} \sum_{t,r\in S_l} \left({\rm E}\left[\psi_{it}^{(0)} \psi_{ir}^{(0)}\right]^{2}\right)^{1/2} \leq \frac{1}{ T_l^2} \sum_{t,r\in S_l} \left({\rm E}\left[\psi_{it}^{(0)} \right]^{4}\right)^{1/2} \leq c_m^2,
\end{align*}
where the last inequality follows from Assumption \hyperref[rates]{DML2}(i) . Then, by law of iterated expectation, we have
\begin{align*}
    {\rm P}\left(\left\vert \frac{1}{N_k} \sum_{i\in I_k} I_{a,2,i}^{(kl)}\right\vert>\varepsilon\right) = O\left(N_k^{-1} \right),
\end{align*}
and $\left\vert \frac{1}{N_k} \sum_{i\in I_k} I_{a,2,i}^{(kl)}\right\vert = O_P\left(N_k^{-1/2}\right) =O_P\left(N^{-1/2}\right) $. Therefore, we have shown $\left \vert I_{a,2}^{kl}\right\vert =O_P\left(N^{-1/2}\right) + o_P\left(T^{-1/2}\right) $.

Next, consider $I_{a,1}^{kl}$. By product decomposition, triangle inequality, and Cauchy-Schwarz inequality, we have 
\begin{align*}
    \left\vert I_{a,1}^{kl}\right\vert \leq &\frac{1}{N_k T_l^2} \sum_{i\in I_k, t\in S_l, r\in S_l} \left\vert\widehat{\psi}_{it}^{(kl)}\widehat{\psi}_{ir}^{(kl)'} - {\psi}_{it}^{(0)}{\psi}_{ir}^{(0)} \right\vert \\
     \leq &\frac{1}{N_k T_l^2} \sum_{i\in I_k, t\in S_l, r\in S_l} \left\{\left\vert\widehat{\psi}_{it}^{(kl)} - {\psi}_{it}^{(0)}\right\vert \left\vert\widehat{\psi}_{ir}^{(kl)} - {\psi}_{ir}^{(0)}\right\vert  +\left\vert {\psi}_{it}^{(0)}\right\vert \left\vert\widehat{\psi}_{ir}^{(kl)} - {\psi}_{ir}^{(0)}\right\vert +\left\vert \widehat{\psi}_{it}^{(kl)} - {\psi}_{it}^{(0)}\right\vert \left\vert\widehat{\psi}_{ir}^{(kl)'} \right\vert\right\} \\
    \lesssim &  R_{kl} \left\{\left\Vert \psi_{it}^{(0)} \right\Vert_{kl,2}+ R_{kl}\right\},
\end{align*}
where $ R_{kl}=\left \Vert \widehat{\psi}_{it}^{(kl)} - {\psi}_{it}^{(0)} \right\Vert_{kl,2}  $. By Markov inequality and under Assumption \hyperref[rates]{DML2}(i), we have 
\begin{align*}
    {\rm E}\left[\frac{1}{N_k T_l}\sum_{i\in I_k, t\in S_l} \left({\psi}_{it}^{(0)}\right)^2\right] = {\rm E}\left\vert\psi(W_{it};\theta_0,\eta_0)\right\vert^2 \leq c_m^2.
\end{align*}
Therefore, $\frac{1}{N_k T_l}\sum_{i\in I_k, t\in S_l} \left({\psi}_{it}^{(0)}\right)^2 = O_P(1)$. To bound $R_{kl}$, note that by Assumption \hyperref[linear]{DML1}(i) (linearity) we have
\begin{align*}
    R_{kl}^2 = &\frac{1}{N_k T_l}\sum_{i\in I_k, t\in S_l} \left(\psi^a(W_{it};\widehat{\eta}_{kl})(\widehat{\theta}-\theta_0) + \psi(W_{it};\theta_0,\widehat{\eta}_{kl})- \psi(W_{it};\theta_0,\eta_0)\right)^2 \\
    \lesssim &  \frac{1}{N_k T_l}\sum_{i\in I_k, t\in S_l} \left\vert\psi^a(W_{it};\widehat{\eta}_{kl})\right\vert^2 \left\vert\widehat{\theta}-\theta_0\right\vert^2 + \frac{1}{N_k T_l}\sum_{i\in I_k, t\in S_l} \left\vert  \widehat{\psi}_{it}^{(kl)} - {\psi}_{it}^{(0)}\right\vert^2 
\end{align*}
By Markov inequality and Assumption \hyperref[rates]{DML2}(i), we have $ \frac{1}{N_k T_l}\sum_{i\in I_k, t\in S_l} \left\vert\psi^a(W_{it};\widehat{\eta}_{kl})\right\vert^2 = O_P(1)$. By Theorem \hyperref[normality_1]{3.1}, $\left\vert\widehat{\theta}-\theta_0\right\vert^2 = O_p(N^{-1})$. Therefore, the first term on RHS is $O_P(N^{-1})$. For the second term on RHS, consider its conditional expectation given the auxiliary sample $W(-k,-l)$. On the event $\Eulerconst_\eta \cap \Eulerconst_{cp}$, we have
\begin{align*}
    &{\rm E}\left[ \frac{1}{N_k T_l}\sum_{i\in I_k, t\in S_l} \left\vert  \widehat{\psi}_{it}^{(kl)} - {\psi}_{it}^{(0)}\right\vert^2 |W(-k,-l) \right] =    {\rm E}\left[ \left\vert  \psi(W_{it};\theta_0,\widehat{\eta}_{kl}) - \psi(W_{it};\theta_0,\eta_0)\right\vert^2 |W(-k,-l) \right] \leq \delta_{NT}^2,
\end{align*}
where the last inequality follows from Assumption \hyperref[rates]{DML2}(ii). Then, by Markov inequality and Lemma 6.1 from \cite{Chernozhukov2018}, we have $R_{kl}^2 = O_P\left(N^{-1}+(r_{NT}')^2\right)$ and so $\left\vert I_{a,1}^{kl} \right\vert = O_P\left(N^{-1/2}+r_{NT}'\right)$. To summarize, we have shown
\begin{align*}
       \left\vert{\widehat{\Omega}}_{a,kl} -\Sigma_a \right\vert  = O_P\left(N^{-1/2}+r_{NT}'\right)+O_P(N^{-1/2})+o_P(T^{-1/2})+ O(T^{-2}) = O_P\left(N^{-1/2}+r_{NT}'\right)
\end{align*}

\end{proof}

\begin{proof}[\textbf{Proof of Claim \ref{claim_4}}]
   By triangle inequality, we have
   \begin{align*}
       &\left\vert{\widehat{\Omega}}_{b,kl} - c{\rm E}[g_t g_t'] \right\vert  \leq \left\vert I_{b,1}^{(kl)}\right\vert + \left\vert I_{b,2}^{(kl)}\right\vert + \left\vert I_{b,3}^{(kl)}\right\vert,\\
    &I_{b,1}^{(kl)}:=  \frac{K/L}{N_k T_l^2}\sum\limits_{t\in S_l,i\in I_k, j\in I_k} \left\{\widehat{\psi}_{it}^{(kl)}
\widehat{\psi}_{jt}^{(kl)}  - \psi_{it}^{(0)}\psi_{jt}^{(0)}\right\}  ,\\
     &I_{b,2}^{(kl)}:=   \frac{K/L}{N_k T_l^2}\sum\limits_{t\in S_l, i\in I_k, j\in I_k} \left\{ \psi_{it}^{(0)}\psi_{jt}^{(0)}- {\rm E}[\psi_{it}^{(0)}\psi_{jt}^{(0)}]\right\}  ,\\
     &I_{b,3}^{(kl)}:=   \frac{K/L}{N_k T_l^2}\sum\limits_{t\in S_l, i\in I_k, j\in I_k} {\rm E}[\psi_{it}^{(0)}\psi_{jt}^{(0)}] - c{\rm E}[g_t g_t'],
\end{align*}
and $ \frac{K/L}{N_k T_l^2} =  \frac{c}{N_k^2 T_l}$. 

Consider $I_{b,3}^{(kl)}$. By the the law of total covariance, we have
 \begin{align*}
      {\rm E}[\psi_{it}^{(0)}\psi_{jt}^{(0)}] = cov(\psi_{it}^{(0)},\psi_{jt}^{(0)}) =  {\rm E}[cov(\psi_{it}^{(0)},\psi_{jt}^{(0)}|\gamma_t)] + cov({\rm E}[\psi_{it}^{(0)}|\gamma_t],{\rm E}[\psi_{jt}^{(0)}|\gamma_t]) = 0 + {\rm E}[g_t g_t'],
 \end{align*}
Due to identical distribution of $\gamma_t$, ${\rm E}[g_t g_t']$ does not vary over $t$ and so $I_{b,3}^{(kl)} = 0$. 

To bound $I_{b,2}^{(kl)}$, we can rewrite it by triangle inequality as follows
\begin{align*}
    &\frac{1}{c} \left\vert I_{b,2}^{kl}\right\vert\leq \left\vert \frac{1}{ T_l} \sum_{t\in S_l} I_{b,2,t}^{(kl)} \right\vert +\left\vert \frac{1}{ T_l} \sum_{t\in S_l} \tilde{I}_{b,2,t}^{(kl)} \right\vert, \\
   & I_{b,2,t}^{(kl)} :=  \frac{1}{N_k^2}\sum_{i,j\in I_k} \left\{ \psi_{it}^{(0)}\psi_{jt}^{(0)}- {\rm E}[\psi_{it}^{(0)}\psi_{jt}^{(0)}|\{\alpha_i\}_{i\in I_k}]\right\} \\
   &\tilde{I}_{b,2,t}^{(kl)} :=  \frac{1}{N_k^2}\sum_{i,j\in I_k} \left\{  {\rm E}[\psi_{it}^{(0)}\psi_{jt}^{(0)}|\{\alpha_i\}_{i\in I_k}] - {\rm E}[\psi_{it}^{(0)}\psi_{jt}^{(0)}]\right\} 
\end{align*}
Due to identical distribution of $\gamma_t$, $ \tilde{I}_{b,2,t}^{(kl)}$ does not vary over $t$ so that $ {\rm E}\left\vert \frac{1}{ T_l} \sum_{t\in S_l} \tilde{I}_{b,2,t}^{(kl)} \right\vert^2 =  {\rm E}\left\vert \tilde{I}_{b,2,t}^{(kl)} \right\vert^2 $. Denote $\zeta_{ij,t} = \psi_{it}^{(0)}\psi_{jt}^{(0)}$. By direct calculation, we have
\begin{align*}
  {\rm E}\left\vert \tilde{I}_{b,2,t}^{(kl)} \right\vert^2 = & \frac{1}{N_k^4} \sum_{i,j\in I_k} \sum_{i',j' \in I_k} {\rm E}\left[ \left({\rm E}[\zeta_{ij,t}|\alpha_i,\alpha_j] - {\rm E}[\zeta_{ij,t}]\right)\left({\rm E}[\zeta_{i'j'}|\alpha_{i'},\alpha_{j'}] - {\rm E}[\zeta_{i'j'}]\right) \right] \\
  \lesssim &  \frac{1}{N_k}  {\rm E}[\zeta_{ij,t}]^2  < \frac{1}{N_k}{\rm E}\left[\psi_{it}^{(0)}\right]^4 =O({1}/{N_k}).
\end{align*}
where the first inequality follows from the assumption that $\alpha_i$ is independent over $i$
and an application of H\"{o}lder's inequality and Jensen's inequality. The second inequality follows from H\"{o}lder's inequality and the last equality follows from Assumption \hyperref[rates]{DML2}(i)  with some $q> 4$. Therefore, by Markov inequality, we have $\left\vert \frac{1}{ T_l} \sum_{t\in S_l} \tilde{I}_{b,2,t}^{(kl)} \right\vert = O_P(N_k^{-1/2})=O_P(N^{-1/2})$.

Now consider $\left\vert \frac{1}{ T_l} \sum_{t\in S_l} I_{b,2,t}^{(kl)} \right\vert$. Note that conditional on $\{\alpha_i\}$, $I_{b,2,t}^{(kl)}$ is also $\beta$-mixing with the mixing coefficient same as $\gamma_t$. Then, with an application of the conditional version of Theorem 14.2 from \cite{davidson1994stochastic}, we have
\begin{align*}
    \left({\rm E}\left[ \left\vert {\rm E}[I_{b,2,t}^{(kl)}|\{\alpha_i\}_{i\in I_k}, \mathcal{F}_{-\infty}^{t-l}]\right\vert^2 | \{\alpha_i\}_{i\in I_k}\right]\right)^{1/2} ]\leq 2(2^{1/2}+1) \beta(l)^{1/2-\frac{2}{q}} \left({\rm E}\left[\vert I_{b,2,t}^{(kl)}\vert^{\frac{q}{2}}|\{\alpha_i\}_{i\in I_k}\right]\right)^{\frac{2}{q}}.
\end{align*}
Then, we can apply the conditional version of Lemma A from \cite{hansen1992consistent} to show that
\begin{align*}
    \left({\rm E}\left[ \left\vert\frac{1}{ T_l}\sum_{t\in S_l}I_{b,2,t}^{(kl)}\right\vert^2| \{\alpha_i\}_{i\in I_k}\right]\right)^{1/2} &\lesssim \frac{1}{ T_l} \sum_{l=1}^{\infty} \beta(l)^{1/2-\frac{2}{q}} \left(\sum_{t\in S_l} \left({\rm E}\left[\vert I_{b,2,t}^{(kl)}\vert^{\frac{q}{2}}|\{\alpha_i\}_{i\in I_k}\right]\right)^{\frac{4}{q}}\right)^{1/2} \\
    &\lesssim \frac{1}{\sqrt{ T_l}} \left({\rm E}\left[\vert I_{b,2,t}^{(kl)}\vert^{\frac{q}{2}}|\{\alpha_i\}_{i\in I_k}\right]\right)^{\frac{2}{q}}
\end{align*}
By conditional Markov inequality, we have
\begin{align*}
    {\rm P}\left(\left\vert \frac{1}{ T_l} \sum_{t\in S_l} I_{b,2,t}^{(kl)}\right\vert>\varepsilon|\{\alpha_i\}_{i\in I_k}\right) = O\left( T_l^{-1} {\rm E}\left[\left\vert I_{b,2,t}^{(kl)}\right\vert^{\frac{q}{2}}|\{\alpha_i\}_{i\in I_k}\right]\right)
\end{align*}
By Minkowski's inequality for infinite sums, Jensen's inequality, and H\"{o}lder's inequality, we have 
\begin{align*}
    \left({\rm E}\left[\left\vert I_{b,2,t}^{(kl)}\right\vert^{\frac{q}{2}}\right]\right)^{\frac{2}{q}} \lesssim & \frac{1}{N_k^2} \sum_{i,j\in I_k} \left({\rm E}\left[\psi_{it}^{(0)} \psi_{jt}^{(0)}\right]^{\frac{q}{2}}\right)^{\frac{2}{q}} \leq \frac{1}{N_k^2} \sum_{i,j\in I_k} \left({\rm E}\left[\psi_{it}^{(0)} \right]^{q}\right)^{\frac{2}{q}} \leq c_m^2,
\end{align*}
where the last inequality follows from Assumption \hyperref[rates]{DML2}(i) . Then, by the law of iterated expectation, we have
\begin{align*}
    {\rm P}\left(\left\vert \frac{1}{ T_l} \sum_{t\in S_l} I_{b,2,t}^{(kl)}\right\vert>\varepsilon\right) = O\left( T_l^{-1/2} \right).
\end{align*}
Therefore, we have shown $\left \vert I_{b,2}^{kl}\right\vert = O_P(N_k^{-1}) + O_P( T_l^{-1/2})= O_P(T^{-1/2})$.

Consider $I_{b,1}^{kl}$. By the similar inequality for $\left\vert I_{a,1}^{kl}\right\vert$, we have
\begin{align*}
    \frac{1}{c} \left\vert I_{b,1}^{kl}\right\vert\lesssim   R_{kl} \left\{\left(\frac{1}{N_k T_l}\sum_{i\in I_k, t\in S_l} \left({\psi}_{it}^{(0)}\right)^2\right)^{1/2}+ R_{kl}\right\},
\end{align*}
where $R_{kl} = \left\Vert \widehat{\psi}_{it}^{(kl)} - {\psi}_{it}^{(0)} \right\Vert_{kl,2}$. We have shown in the proof of Claim \ref{claim_3} that $\left\Vert {\psi}_{it}^{(0)} \right\Vert_{kl,2} = O_P(1)$ and $R_{kl}^2 = O_P\left(N^{-1}+(r_{NT}')^2\right)$. So $\left\vert I_{b,1}^{kl} \right\vert = O_P\left(N^{-1/2}+r_{NT}'\right)$. To summarize 
   \begin{align*}
       \left\vert{\widehat{\Omega}}_{b,kl} - c{\rm E}[g_t g_t'] \right\vert  = O_P\left(N^{-1/2}\right) +O_P\left(T^{-1/2}\right) + O_P\left(N^{-1/2}+r_{NT}'\right) = O_P\left(N^{-1/2}+r_{NT}'\right),
   \end{align*}
which completes the proof of Claim \ref{claim_4}. 
    
\end{proof}

\begin{proof}[\textbf{Proof of Claim \ref{claim_5}}]

 By triangle inequality, we have $\left\vert{\widehat{\Omega}}_{c,kl}\right\vert \leq \left\vert I_{c,1}^{(kl)}\right\vert + \left\vert I_{c,2}^{(kl)}\right\vert + \left\vert I_{c,3}^{(kl)}\right\vert$ where 
\begin{align*}
    I_{c,1}^{(kl)}&:=  \frac{K/L}{N_k T_l^2}\sum\limits_{i\in I_k, t\in S_l} \left\{\widehat{\psi}_{it}^{(kl)}\widehat{\psi}_{it}^{(kl)} - \psi_{it}^{(0)}\psi_{it}^{(0)}\right\}  ,\\
     I_{c,2}^{(kl)}&:=   \frac{K/L}{N_k T_l^2}\sum\limits_{i\in I_k, t\in S_l}\left\{\psi_{it}^{(0)}\psi_{it}^{(0)}- {\rm E}[\psi_{it}^{(0)}\psi_{it}^{(0)}]\right\}  ,\\
     I_{c,3}^{(kl)}&:=  \frac{K/L}{N_k T_l^2}\sum\limits_{i\in I_k, t\in S_l} {\rm E}[\psi_{it}^{(0)}\psi_{it}^{(0)}],
\end{align*}

Consider $I_{c,3}^{(kl)}$. Note that under Assumption \hyperref[rates]{DML2}(i), we have
 \begin{align*}
      {\rm E}[\psi_{it}^{(0)}\psi_{it}^{(0)}] \leq c_m^2.
 \end{align*}
Thus, $I_{c,3}^{(kl)} = O_P(1/ T_l)=O_P(T^{-1})$.

Consider $I_{c,2}^{kl}$. We denote $\xi_{it} =\psi_{it}^{(0)}\psi_{it}^{(0)}- {\rm E}[\psi_{it}^{(0)}\psi_{it}^{(0)}]$. By expanding ${\rm E}\left|I_{c,2}^{kl}\right|^2$ and applying H\"{o}lder's inequality, we have
\begin{align*}
{\rm E}\left|I_{c,2}^{kl}\right|^2 & \leq  \left(\frac{K/L}{N_k T_l^2}\right)^2  \left\{\sum_{i\in I_k, t\in S_l, r\in S_l} {\rm E}\left\vert\xi_{it} \right\vert^2 + \sum_{t\in S_l, i\in I_k, j\in I_k}{\rm E}\left\vert\xi_{it}\right\vert^2 + \sum_{ t\in S_l,i\in I_k}{\rm E}\left\vert\xi_{it}\right\vert^2 \right. \\
 &\left.+ 2\sum_{m=1}^{ T_l-1}\sum_{t=\min(S_l)}^{\max(S_l)-m}\sum_{i,j\in I_k}{\rm E}\left\vert\xi_{it}\right\vert^2+2\sum_{m=1}^{ T_l-1}\sum_{t=\min(S_l)}^{\max(S_l)-m} \sum_{i\in I_k}{\rm E}\left\vert\xi_{it}\right\vert^2\right\}  .
\end{align*}
where the last inequality follows from  Note that for each $i,t$, by  H\"{o}lder's inequality and Assumption \hyperref[rates]{DML2}(i), we have
\begin{align*}
    {\rm E}\left\vert\xi_{it}\right\vert^2 \lesssim {\rm E}\left[\psi(W_{it};\theta_0,\eta_0)^4\right] \leq c_m^4. 
\end{align*}
Thus, ${\rm E}\left|I_{c,2}^{(kl)}\right|^2 = O(T^{-2})$ and so $I_{c,2}^{(kl)}= O_P(T^{-1})$.
 
Now consider $ I_{c,1}^{(kl)}$. Following the same steps for $ I_{b,1}^{(kl)}$, we have
\begin{align*}
    \left\vert I_{c,1}^{(kl)}\right\vert \lesssim \frac{K/L}{ T_l} R_{kl} \left\{\left \Vert {\psi}_{it}^{(0)}\right\Vert_{kl,2}+ R_{kl}\right\},
\end{align*}
where $R_{kl}  = \left \Vert \widehat{\psi}_{it}^{(kl)} - {\psi}_{it}^{(0)}\right\Vert_{kl,2}$.
We have shown in the proof of Claim \ref{claim_3} that $\left \Vert {\psi}_{it}^{(0)}\right\Vert_{kl,2} = O_P(1)$ and $R_{kl}^2 = O_P\left(N^{-1}+(r_{NT}')^2\right)$. So, $\left\vert I_{c,1}^{(kl)} \right\vert = O_P\left(N^{-1/2}/T+r_{NT}'/{T}\right)$. To summarize 
   \begin{align*}
       \left\vert{\widehat{\Omega}}_{c,kl} \right\vert  = O_P\left(T^{-1}\right) + O_P\left(N^{-1/2}/T+r_{NT}'/{T}\right)= O_P\left(T^{-1}\right),
   \end{align*}
which completes the proof of Claim \ref{claim_5}. 
    
\end{proof}

\begin{proof}[\textbf{Proof of Claim \ref{claim_6}}]
   By triangle inequality, we have
   \begin{align*}
       \left\vert{\widehat{\Omega}}_{d,kl} - c\sum_{m=1}^{\infty}{\rm E}[g_t g_t'] \right\vert \leq \left\vert I_{d,1}^{(kl)}\right\vert + \left\vert I_{d,2}^{(kl)}\right\vert + \left\vert I_{d,3}^{(kl)} \right\vert + \left\vert I_{d,4}^{(kl)} \right\vert + \left\vert I_{d,5}^{(kl)} \right\vert +\left\vert I_{d,6}^{(kl)} \right\vert 
   \end{align*}
where 
\begin{align*}
    I_{d,1}^{(kl)}&:=   \frac{K/L}{N_k T_l^2}  \sum_{m=1}^{M-1} k\left(
\frac{m}{M}\right)  \sum_{t=\lfloor S_l \rfloor}^{\lceil S_l \rceil - m} \sum\limits_{i\in I_k, j\in I_k, j\ne i} \left\{\widehat{\psi}_{it}^{(kl)}\widehat{\psi}_{j,t+m}^{(kl)} - {\psi}_{it}^{(0)}{\psi}_{j,t+m}^{(0)} \right\} ,\\
     I_{d,2}^{(kl)}&:=   \frac{K/L}{N_k T_l^2}  \sum_{m=1}^{M-1} k\left(
\frac{m}{M}\right)  \sum_{t=\lfloor S_l \rfloor}^{\lceil S_l \rceil - m} \sum\limits_{i\in I_k, j\in I_k, j\ne i} \left\{{\psi}_{it}^{(0)}{\psi}_{j,t+m}^{(0)} - {\rm E}\left[{\psi}_{it}^{(0)}{\psi}_{j,t+m}^{(0)}\right] \right\} ,\\
     I_{d,3}^{(kl)}&:=   \frac{K/L}{N_k T_l^2}  \sum_{m=1}^{M-1} \left(k\left(
\frac{m}{M}\right)-1\right)  \sum_{t=\lfloor S_l \rfloor}^{\lceil S_l \rceil - m} \sum\limits_{i\in I_k, j\in I_k, j\ne i} {\rm E}\left[{\psi}_{it}^{(0)}{\psi}_{j,t+m}^{(0)}\right]  ,\\
 I_{d,4}^{(kl)}&:=  \frac{K/L}{N_k T_l^2} \sum_{m=M}^{\infty} \sum_{t=\lfloor S_l \rfloor}^{\lceil S_l \rceil - m} \sum\limits_{i\in I_k, j\in I_k, j\ne i} {\rm E}\left[{\psi}_{it}^{(0)}{\psi}_{j,t+m}^{(0)}\right], \\
  I_{d,5}^{(kl)}&:=  \frac{K/L}{N_k T_l^2} \sum_{m=1}^{M-1}\sum_{t=\lfloor S_l \rfloor}^{\lceil S_l \rceil - m} \sum\limits_{i\in I_k, j\in I_k, j\ne i} {\rm E}\left[{\psi}_{it}^{(0)}{\psi}_{j,t+m}^{(0)}\right] - c\sum_{m=1}^{\infty} {\rm E}\left[{\psi}_{it}^{(0)}{\psi}_{j,t+m}^{(0)}\right], \\
  I_{d,6}^{(kl)}&:=c\sum_{m=1}^{\infty} {\rm E}\left[{\psi}_{it}^{(0)}{\psi}_{j,t+m}^{(0)}\right] -c \sum_{m=1}^{\infty} {\rm E}\left[g_{t}g_{t+m}\right]
\end{align*}
and $ \frac{K/L}{N_k T_l^2} =  \frac{c}{N_k^2 T_l}$. 

Consider $I_{d,6}^{(kl)}$. By the law of total covariance, we have
 \begin{align*}
       {\rm E}\left[{\psi}_{it}^{(0)}{\psi}_{j,t+m}^{(0)}\right] &= cov(\psi_{it}^{(0)},\psi_{j,t+m}^{(0)}) \\
       &=  {\rm E}[cov(\psi_{it}^{(0)},\psi_{j,t+m}^{(0)})|\gamma_t,\gamma_{t+m}] + cov({\rm E}[\psi_{it}^{(0)}|\gamma_t],{\rm E}[\psi_{j,t+m}^{(0)}|\gamma_{t+m}])\\
       & = 0+{\rm E}[g_t g_{t+m}'],
 \end{align*}
 where the last equality follows from the properties of Hajek-type decomposition components. Therefore,  $I_{d,6}^{(kl)}=0$.

Consider $I_{d,5}^{(kl)}$. The strict stationarity of $\gamma_t$ implies that $\psi_{it}^{(0)}$ is also strictly stationary over $t$. And under Assumption \ref{ahk}, there is no heterogeneity across $i$. Then, as $M,T\to\infty$, we have $I_{d,5}^{(kl)} = o(1)$.

Consider $I_{d,4}^{(kl)}$. Under Assumption \hyperref[rates]{DML2}(i), $\left({\rm E} \vert{\psi}_{it}^{(0)}\vert^q\right)^{1/q}\leq c_m $ for $q>4$. And conditional on $\alpha_i$,  ${\psi}_{it}^{(0)}$ is $\beta$-mixing with the mixing coefficient not larger than that of $\gamma_t$. Then by Theorem 14.13(ii) in \cite{hansen2022econometrics}, we have
\begin{align*}
    \left\vert {\rm E}\left[{\psi}_{it}^{(0)}{\psi}_{j,t+m}^{(0)}|\{\alpha_i\}_{i\in I_k}\right] \right\vert \leq 8 \left({\rm E}\left[ \vert{\psi}_{it}^{(0)}\vert^q|\alpha_i\right]\right)^{1/q}\left({\rm E}\left[ \vert{\psi}_{j,t+m}^{(0)}\vert^q|\alpha_j\right]\right)^{1/q} \alpha_{\gamma}(m)^{1-2/q}
\end{align*}
By iterated expectation and Jensen's inequality, we have
\begin{align*}
    \left\vert {\rm E}\left[{\psi}_{it}^{(0)}{\psi}_{j,t+m}^{(0)}\right] \right\vert \leq& {\rm E}\left[\left\vert {\rm E}\left[{\psi}_{it}^{(0)}{\psi}_{j,t+m}^{(0)}|\{\alpha_i\}_{i\in I_k}\right] \right\vert\right] \\
    \leq& 8{\rm E}\left[   \left({\rm E}\left[ \vert{\psi}_{it}^{(0)}\vert^q|\alpha_i\right]\right)^{1/q}\left({\rm E}\left[ \vert{\psi}_{j,t+m}^{(0)}\vert^q|\alpha_j\right]\right)^{1/q} \alpha_{\gamma}(m)^{1-2/q}\right] \\
    \leq & 8{\rm E}\left[ \left({\rm E}\left[ \vert{\psi}_{it}^{(0)}\vert^q|\alpha_i\right]\right)^{1/q}\right] {\rm E}\left[ \left({\rm E}\left[ \vert{\psi}_{j,t+m}^{(0)}\vert^q|\alpha_j\right]\right)^{1/q}\right] \alpha_{\gamma}(m)^{1-2/q} \\
    \lesssim & c_m^2\alpha_{\gamma}(m)^{1-2/q}
\end{align*}
where the third inequality follows from that $\alpha_i$ are independent over $i$. Then, as $M\to\infty$,
\begin{align*}
    \left\vert I_{d,4}^{(kl)}\right\vert \leq &\frac{K/L}{N_k T_l^2} \sum_{m=M}^{\infty} \sum_{t=\lfloor S_l \rfloor}^{\lceil S_l \rceil - m} \sum\limits_{i\in I_k, j\in I_k, j\ne i} \left\vert {\rm E}\left[{\psi}_{it}^{(0)}{\psi}_{j,t+m}^{(0)}\right]\right\vert \lesssim \sum_{m=M}^{\infty} \alpha_{\gamma}(m)^{1-2/q} \leq \sum_{m=M}^{\infty} \beta_{\gamma}(m)^{1-2/q} \\
    \leq & c_\kappa \sum_{m=M}^{\infty} exp(-\kappa m) = c_\kappa\left( \frac{1}{1-e^{-k}} - \frac{1-e^{-kM}}{1-e^{-\kappa}} \right)= O\left(e^{-\kappa M}\right).
\end{align*}

Consider $I_{d,3}^{(kl)}$. 
\begin{align*}
   \left\vert I_{d,3}^{(kl)}\right\vert\leq & \frac{K/L}{N_k T_l^2}  \sum_{m=1}^{M-1} \left\vert k\left(
\frac{m}{M}\right)-1\right\vert  \sum_{t=\lfloor S_l \rfloor}^{\lceil S_l \rceil - m} \sum\limits_{i\in I_k, j\in I_k, j\ne i} \left\vert {\rm E}\left[{\psi}_{it}^{(0)}{\psi}_{j,t+m}^{(0)}\right]\right\vert \\
\leq  &c c_m^2\sum_{m=1}^{M-1} \left\vert k\left(
\frac{m}{M}\right)-1\right\vert \alpha_{\gamma}(m)^{1-2/q}.
\end{align*}
Note that for each $m$, $\left\vert k\left(
\frac{m}{M}\right)-1\right\vert\to 0$ as $M\to\infty$. Since $\left\vert k\left(
\frac{m}{M}\right)-1\right\vert \alpha_{\gamma}(m)^{1-2/q} \leq 1$, we can apply dominated convergence theorem to conclude that $ I_{d,3}^{(kl)} =o(1)$.

To bound $I_{d,2}^{(kl)}$, we can rewrite it by triangle inequality as follows
\begin{align*}
    \frac{1}{c} \left\vert I_{d,2}^{(kl)}\right\vert\leq \left\vert \sum_{m=1}^{M-1} \frac{ k\left(
\frac{m}{M}\right)}{ T_l} \sum_{t= \lfloor S_l\rfloor}^{\lceil S_l\rceil -m } I_{d,2,tm}^{(kl)} \right\vert +\left\vert \sum_{m=1}^{M-1} \frac{ k\left(
\frac{m}{M}\right)}{ T_l}  \sum_{t= \lfloor S_l\rfloor}^{\lceil S_l\rceil -m }\tilde{I}_{d,2,tm}^{(kl)} \right\vert,
\end{align*}
where
\begin{align*}
    I_{d,2,tm}^{(kl)} := & \frac{1}{N_k^2}\sum_{i,j\in I_k, i \ne j} \left\{ \psi_{it}^{(0)}\psi_{j,t+m}^{(0)}- {\rm E}[\psi_{it}^{(0)}\psi_{j,t+m}^{(0)}|\{\alpha_i\}_{i\in I_k}]\right\} \\
    \tilde{I}_{d,2,tm}^{(kl)} := & \frac{1}{N_k^2}\sum_{i,j\in I_k, i \ne j} \left\{  {\rm E}[\psi_{it}^{(0)}\psi_{j,t+m}^{(0)}|\{\alpha_i\}_{i\in I_k}] - {\rm E}[\psi_{it}^{(0)}\psi_{j,t+m}^{(0)}]\right\} 
\end{align*}

Due to identical distribution of $\gamma_t$, $ \tilde{I}_{d,2,tm}^{(kl)}$ does not vary over $t$ so that $ {\rm E}\left\vert \sum_{m=1}^{M-1} \frac{ k\left(
\frac{m}{M}\right)}{ T_l}\sum_{t= \lfloor S_l\rfloor}^{\lceil S_l\rceil -m } \tilde{I}_{d,2,tm}^{(kl)} \right\vert^2 \leq  {\rm E}\left\vert \sum_{m=1}^{M-1} k\left(
\frac{m}{M}\right) \tilde{I}_{d,2,tm}^{(kl)} \right\vert^2$. And by Minkowski's inequality, we have
\begin{align*}
    \left({\rm E}\left\vert \sum_{m=1}^{M-1} k\left(
\frac{m}{M}\right) \tilde{I}_{d,2,tm}^{(kl)} \right\vert^2\right)^{1/2} \leq \sum_{m=1}^{M-1} k\left(\frac{m}{M}\right) \left({\rm E}\left[\tilde{I}_{d,2,tm}^{(kl)}\right]^2\right)^{1/2}
\end{align*}

Denote $\zeta_{ijm} = \psi_{it}^{(0)}\psi_{j,t+m}^{(0)}$. By direct calculation, we have
\begin{align*}
  {\rm E}\left\vert \tilde{I}_{d,2,tm}^{(kl)} \right\vert^2 = & \frac{1}{N_k^4} \sum_{i,j\in I_k} \sum_{i',j' \in I_k} {\rm E}\left[ \left({\rm E}[\zeta_{ijm}|\alpha_i,\alpha_j] - {\rm E}[\zeta_{ij,t}]\right)\left({\rm E}[\zeta_{i'j'}|\alpha_{i'},\alpha_{j'}] - {\rm E}[\zeta_{i'j'}]\right) \right] \\
  \lesssim &  \frac{1}{N_k}  {\rm E}[\zeta_{ijm}]^2  < \frac{1}{N_k}{\rm E}\left[\psi_{it}^{(0)}\right]^4 =O({1}/{N_k}).
\end{align*}
where the first inequality follows from the assumption that $\alpha_i$ is independent over $i$
and an application of H\"{o}lder's inequality and Jensen's inequality. The second inequality follows from H\"{o}lder's inequality and the last equality follows from Assumption \hyperref[rates]{DML2}(i)  with some $q> 4$. Therefore, we have 
\begin{align*}
\left({\rm E}\left\vert  \right\vert^2\right)^{1/2} \leq O_P\left(\frac{M}{N^{1/2}}\right)=O_P\left(\frac{M}{T^{1/2}}\right).    
\end{align*}
By Markov inequality, we have $\left\vert \sum_{m=1}^{M-1} \frac{ k\left(
\frac{m}{M}\right)}{ T_l}\sum_{t= \lfloor S_l\rfloor}^{\lceil S_l\rceil -m }\tilde{I}_{d,2,tm}^{(kl)} \right\vert = O_P\left(\frac{M}{T^{1/2}}\right)$.

Now consider $\left\vert \sum_{m=1}^{M-1} \frac{ k\left(
\frac{m}{M}\right)}{ T_l}\sum_{t= \lfloor S_l\rfloor}^{\lceil S_l\rceil -m } I_{d,2,tm}^{(kl)} \right\vert $. By Minkowski's inequality, we have
\begin{align*}
   \left({\rm E} \left\vert \sum_{m=1}^{M-1} \frac{ k\left(
\frac{m}{M}\right)}{ T_l} \sum_{t= \lfloor S_l\rfloor}^{\lceil S_l\rceil -m } I_{d,2,tm}^{(kl)} \right\vert^2 \right)^{1/2} \leq \sum_{m=1}^{M-1}  k\left(
\frac{m}{M}\right)\left({\rm E}\left\vert \frac{1}{ T_l}\sum_{t= \lfloor S_l\rfloor}^{\lceil S_l\rceil -m } I_{d,2,tm}^{(kl)}  \right\vert^2 \right)^{1/2} 
\end{align*}
Following the same steps as for $I_{b,2,tm}^{(kl)}$, we can show 
\begin{align*}
  {\rm E}\left\vert \frac{1}{ T_l}\sum_{t= \lfloor S_l\rfloor}^{\lceil S_l\rceil -m } I_{d,2,tm}^{(kl)}  \right\vert^2 = O\left( T_l^{-1} \right).
\end{align*}
Therefore, $\left\vert \sum_{m=1}^{M-1} \frac{ k\left(
\frac{m}{M}\right)}{ T_l}\sum_{t= \lfloor S_l\rfloor}^{\lceil S_l\rceil -m } I_{d,2,tm}^{(kl)} \right\vert  = O_P\left(\frac{M}{ T_l^{-1/2}}\right)=O_P\left(\frac{M}{T^{-1/2}}\right) $. We have shown $\left \vert I_{b,2}^{(kl)}\right\vert = O_P(1/N_k) + O_P\left(\frac{M}{T^{-1/2}}\right) = O_P\left(\frac{M}{T^{-1/2}}\right)$.

Consider $I_{d,1}^{(kl)}$. Denote 
\begin{align*}
    I_{d,1,m}^{(kl)} = \frac{K/L}{N_k T_l^2} \sum_{t=\lfloor S_l \rfloor}^{\lceil S_l \rceil - m} \sum\limits_{i\in I_k, j\in I_k, j\ne i} \left\{\widehat{\psi}_{it}^{(kl)}\widehat{\psi}_{j,t+m}^{(kl)} - {\psi}_{it}^{(0)}{\psi}_{j,t+m}^{(0)} \right\},
\end{align*} 
for each $m$. Then, $  I_{d,1}^{(kl)} =  \sum_{m=1}^{M-1} k\left(
\frac{m}{M}\right) I_{d,1,m}^{(kl)}$. Following the same steps as for $I_{a,1}^{(kl)}$, we can show
\begin{align*}
    \left\vert I_{d,1,m}^{(kl)} \right\vert = O_P(T^{-1/2}+r_{NT}'),
\end{align*}
for each $m$. Therefore, $ \left\vert I_{d,1}^{(kl)} \right\vert = O_P\left(\frac{M}{T^{-1/2}}+Mr_{NT}'\right)$. Note that $Mr_{NT}' \leq M\delta_{NT} N^{-1/2} = \frac{M}{T^{1/2}} \frac{T^{1/2}}{N^{1/2}} \delta_{NT} = o(1)$.

To summarize 
\begin{align*}
       \left\vert{\widehat{\Omega}}_{d,kl} - c\sum_{m=1}^{\infty}{\rm E}[g_t g_t'] \right\vert& = O_P\left(\frac{M}{T^{-1/2}}+Mr_{NT}'\right) + O_P\left(\frac{M}{T^{1/2}}\right) + o(1) + O(e^{-\kappa M}) + o(1) +0 \\
       &=o_P(1).
\end{align*}
which completes the proof of Claim \ref{claim_6}. 
    
\end{proof}

\subsection{Results for the Partial Linear Model}
\label{app_2}

\begin{proof}[\textbf{Proof of Theorem \ref{thm_asymp_norm}}]
Let $P\in \mathcal{P}_{NT}$ for each $(N,T)$. We denote 
\begin{alignat*}{2}
    A_{NT}& = \frac{1}{NT}(V^Z)'V^D, \ &&\hat{A}_{NT}  = \frac{1}{NT}(Z-{f}\widehat{\zeta}_0)'(D-{f}\widehat{\pi}_0),  \\
  \psi_{NT} &=\frac{1}{NT} (V^Z)'V^g, \ && \hat{\psi}_{NT} = \frac{1}{NT}(Z-f\hat{\zeta}_0)'\left(Y - f\hat{\beta} - (D-f\hat{\zeta})'\theta_0\right).
\end{alignat*}
We can write $\widehat{\theta} - \theta_0 =\hat{A}_{NT}^{-1} \hat{\psi}_{NT}$. By product decomposition, we have
\begin{align*}
    \widehat{\theta} - \theta_0 =& {A}_{NT} ^{-1} {\psi}_{NT}  + {A}_{NT} ^{-1}\left[\hat{\psi}_{NT} - \psi_{NT}\right] + \left[\hat{A}_{NT} ^{-1}-{A}_{NT} ^{-1}\right]\left[\hat{\psi}_{NT}  -\psi_{NT}  \right] + \left[\hat{A}_{NT} ^{-1}-{A}_{NT} ^{-1}\right]\psi_{NT}  
\end{align*}

For the asymptotic normality of $\sqrt{N\wedge T}\left(\widehat{\theta} - \theta_0\right)$, we need to show the following statements: (i) ${A}_{NT} \overset{p}{\to} {A}_0 = {\rm E}[V_{it}^ZV_{it}^D] $; (ii) $\sqrt{N\wedge T}{\psi}_{NT}  \overset{d}{\to} \mathcal{N}(0, {\Omega}_0)$; (iii) $ \sqrt{N\wedge T}\left[\hat{\psi}_{NT} - \psi_{NT}\right] = o(1)$; (iv) $\widehat{A}_{NT} - A_{NT} = o_P(1)$. With statements (i) - (iv) and the identification condition in Assumption \hyperref[regcon_app]{REG-P}(i) such that ${A}_0$ is non-singular, $\sqrt{N\wedge T}\left(\widehat{\theta} - \theta_0\right) \overset{d}{\to} \mathcal{N}\left(0,{A}_0^{-1} {\Omega}_0 {A}_0^{-1'}\right)$. Then, the conclusion of the theorem follows.

First, we note that Assumptions \hyperref[regcon_app]{REG-P}(ii) and \hyperref[ahk]{AHK} imply that $(\bar{F}_i,\bar{F}_t)$ are functions of only $(\alpha_i,\gamma_t,\varepsilon_{it})$, and so are $f_{it}$ and $V_{it}^l$ for $l=g,D,Y,Z$. Therefore, the results based on Hajek-type decomposition are still applicable. Furthermore, under Assumptions \hyperref[regcon_app]{REG-P}(ii), $\bar{F}_i$ is a function of only $(c_i,\epsilon_i^c)$ and $\bar{F}_t$ is a function of only $(d_t,\epsilon_t^d)$, so $f_{it}=(L_{1,it},1)$ is a function of $(X_{it},c_i,d_t,\epsilon_i^c,\epsilon_t^d)$. By definition, ${\rm E}[U^D|X_{it},c_i,d_t]=0$. Given that $(\epsilon_i^c,\epsilon_t^d)$ are independent shocks, ${\rm E}[U^D|X_{it},c_i,d_t,\epsilon_i^c,\epsilon_t^d]=0$. Therefore, ${\rm E}[f_{it}U_{it}^D] = 0$. By definition of $L_{2,it}$ and that $(\epsilon_i^c,\epsilon_t^d)$ are independent shocks, ${\rm E}[f_{it}L_{2,it}]={\rm E}[f_{it}]{\rm E}[L_{2,it}]$. Therefore, ${\rm E}[f_{it}'V_{it}^D] ={\rm E}\left[f_{it} [(L_{2,it} - {\rm E}[L_{2,it}])\eta_{D,2} +U_{it}^D]\right] = 0 $. Similarly, we have ${\rm E}[f_{it}'V_{it}^Y] = 0$.

Statement (i) follows from Lemma \ref{lemma_a1} under Assumptions \hyperref[ahk]{AHK} and \hyperref[regcon_app]{REG-P}(iii). For Statement (ii), we first observe that $V^Z_{it} = Z_{it} ( 1- \zeta_0)$ where $\zeta_0 = \left({\rm E}[f_{it}'f_{it}]\right)^{-1}{\rm E}[f_{it}'Z_{it}]$. Due to the exogeneity condition ${\rm E}[Z_{it}{V}^g] = 0$, we have ${\rm E}[V_{it}^Z{V}^g_{it}] = 0$. With the additional Assumption \hyperref[regcon_app]{REG-P}(iv), Statement (ii) follows from Lemma \ref{lemma_a2}. 

Consider Statement (iii). By product decomposition and triangle inequality, we have
\begin{align}
  NT  \vert \hat{\psi}_{NT} - \psi_{NT}\vert \leq& \vert (f(\zeta_0-\widehat{\zeta}))'(f(\beta_0-\widehat{\beta}) + V^Y+ r^Y - \theta_0 (f(\pi_0-\widehat{\pi}) + V^D+ r^D) )\vert \nonumber \\
  &  + \vert (Z-f\zeta_0)'(\theta_0(f(\widehat{\pi}-\pi_0)) - f(\beta_0 - \widehat{\beta}) + r^g)\vert \nonumber  \\ 
  \leq & \vert (f(\zeta_0-\widehat{\zeta}))'f(\beta_0-\widehat{\beta})\vert +\vert (f(\zeta_0-\widehat{\zeta}))' V^Y\vert +\vert (f(\zeta_0-\widehat{\zeta}))' r^Y \vert  \nonumber  \\
  & + \theta_0\vert (f(\zeta_0-\widehat{\zeta}))' f(\pi_0-\widehat{\pi})\vert + \theta_0\vert (f(\zeta_0-\widehat{\zeta}))'V^D\vert + \theta_0\vert (f(\zeta_0-\widehat{\zeta}))'r^D \vert  \nonumber  \\
  & + \theta_0\vert (V^Z)'f(\widehat{\pi}-\pi_0)\vert + \theta_0\vert (V^Z)' f(\beta_0 - \widehat{\beta})\vert + \theta_0\vert (V^Z)' r^g \vert \label{boundc1}
\end{align}
Under Assumption \hyperref[ahk]{AHK}, the sparse approximation conditions as well as Assumption \hyperref[regcon_app]{REG-P}(ii) - (vii), we can apply Theorem \ref{performance_bound} to obtain that $\Vert f_{it}'(\eta_0-\widehat{\eta})\Vert_{NT,2} = O_P\left( \sqrt{\frac{s\log(p/\gamma)}{N\wedge T}}\right)$, $\Vert \eta_0-\widehat{\eta}\Vert_1 = O_P\left(s\sqrt{\frac{\log (p/\gamma)}{N\wedge T} }\right)$ for $\eta = \zeta,\pi,\beta$, and
$P\left(\max_{j=1,...,p}  \left|\frac{1}{NT}\sum_{i=1}^N\sum_{t=1}^T{\omega}_{j,l}^{-1}{f}_{it,j}V_{it}^l\right| \geq \frac{\lambda}{2c_1NT}\right) \to 0$ for $l=Z,D,Y$ where $\lambda = \frac{2C_\lambda NT}{\sqrt{N\wedge T}}\Phi^{-1}(1-\gamma/2p)$. By Lemma \ref{lemma_a2}, ${\omega}_{j,l} \overset{p}{\to} \frac{A\wedge T}{N} \Sigma_{a,j,l} + \frac{N\wedge T}{T}\Sigma_{g,j,l}$ where $\min_{j\leq p}\Sigma_{a,j}^l >0$ by Assumption \hyperref[regcon_app]{REG-P}(iv) and Lemma \ref{lemma_a1}. Therefore, $\min_j{\omega}_{j,l}^{-1} > 0$, which implies $\frac{1}{NT}\Vert f'V^l\Vert_\infty = O_P(\Phi^{-1}(1-\gamma/2p)/\sqrt{N\wedge T}) = O_P\left(\sqrt{\frac{\log(p/\gamma)}{N\wedge T}}\right)$ for $l=D,Y,Z$. 

Consider the first term in \ref{boundc1}. By Cauchy-Swartz inequality, we have $\frac{\sqrt{N\wedge T}}{NT}\vert (f(\zeta_0-\widehat{\zeta}))'f(\beta_0-\widehat{\beta})\vert \leq \sqrt{N\wedge T} \Vert f_{it}'(\zeta_0-\widehat{\zeta})\Vert_{NT,2}\Vert f_{it}'(\beta_0-\widehat{\beta})\Vert_{NT,2} = O_P\left( \frac{s\log(p/\gamma)}{\sqrt{N\wedge T}}\right)$. Consider the second term in \ref{boundc1}. By Hölder's inequality, we have $\frac{\sqrt{N\wedge T}}{NT}\vert (f(\zeta_0-\widehat{\zeta}))' V^Y\vert \leq \frac{\sqrt{N\wedge T}}{NT}\Vert \zeta_0-\widehat{\zeta} \Vert_1 \Vert f'V^Y\Vert_\infty = O_P\left(s{\frac{\log (p/\gamma)}{\sqrt{N\wedge T}} }\right)$. Consider the third term in \ref{boundc1}. By Cauchy-Swartz inequality, we have $\frac{\sqrt{N\wedge T}}{NT}\vert (f(\zeta_0-\widehat{\zeta}))' r^Y \vert \leq {\sqrt{N\wedge T}}\Vert f_{it}'(\zeta_0-\widehat{\zeta})\Vert_{NT,2}\Vert r_{it}^Y\Vert_{NT,2} =O_P\left( \sqrt{\frac{s\log(p/\gamma)}{N\wedge T}}\right) $. For the last term of \ref{boundc1}, Cauchy-Swartz inequality implies that $\frac{\sqrt{N\wedge T}}{NT}\vert (V^Z)' r\vert\leq  {\sqrt{N\wedge T}}\Vert V_{it}^Y\Vert_{NT,2}\Vert r_{it}^Y\Vert_{NT,2}$. By Assumption \hyperref[regcon_app]{REG-P}(ii), we have $\vert {\rm E}[(V_{it}^Y)^2]^{4(\mu+\delta)}\vert <\infty$. Then we can apply Lemma \ref{lemma_a1} and obtain that $\Vert V_{it}^Y\Vert_{NT,2} \to ({\rm E}[(V_{it}^Y)^2])^{1/2}$. Therefore, we have $\frac{\sqrt{N\wedge T}}{NT}\vert (V^Z)' r\vert = o_P(1)$. The arguments for the rest of the terms in \ref{boundc1} are similar. Under the sparsity condition $s=\frac{\sqrt{N\wedge T}}{\log(p/\gamma)}$, we conclude that $\sqrt{N\wedge T}\vert \hat{\psi}_{NT} - \psi_{NT}\vert = o_P(1)$. 

Consider Statement (iv). By product decomposition, we have 
\begin{align*}
    NT\left\Vert \widehat{A}_{NT} - A_{NT}\right\Vert_1=\left\Vert \left(f(\zeta_0 - \widehat{\zeta})\right)' f(\pi_0-\widehat{\pi}) + \left(f(\zeta_0 - \widehat{\zeta})\right)'(D-f{\pi}_0) + (Z-f\zeta_0)'f(\pi_0-\widehat{\pi})\right\Vert_1 \\
    \leq \left\Vert \left(f(\zeta_0 - \widehat{\zeta})\right)' f(\pi_0-\widehat{\pi}) \right\Vert_1 + \left\Vert\left(f(\zeta_0 - \widehat{\zeta})\right)'(r^D+V^D) \right\Vert_1+ \left\Vert (V^Z)'f(\pi_0-\widehat{\pi})\right\Vert_1
\end{align*}
We observe that, by similar arguments for Statement (iii), $\left\Vert \widehat{A}_{NT} - A_{NT}\right\Vert_1 = o_P(1)$. We have shown Statement (i)-(iv), completing the proof.
\end{proof}

\begin{proof}[\textbf{Proof of Theorem \ref{thm_consistency}}]
    We have shown in the proof of Theorem \ref{thm_asymp_norm} that $\widehat{A}_{NT}-{A}_{NT} = o_P(1)$ and ${A}_{NT}-{A}_0 = o_P(1)$. By triangle inequality, we have $\widehat{A}_{NT}-{A}_0 = o_P(1)$. Then, it suffices to show $\widehat{\Omega}_{\rm CHS} - \Omega = o_P(1)$. We decompose $\widehat{\Omega}_{\rm CHS} $ as follows:
    \begin{align*}
{\widehat{\Omega}}_{{\rm CHS}}:= & {\widehat{\Omega}}_{a} + {\widehat{\Omega}}_{b} - {\widehat{\Omega}}_{c} + {\widehat{\Omega}}_{d} + {\widehat{\Omega}}_{d}', \\
{\widehat{\Omega}}_{a}:=  &  \frac{1}{NT^2} \sum\limits_{i=1}^N  \sum_{t=1}^T \sum_{r=1}^T \psi_{it}(\widehat{\theta},\widehat{\eta})
\psi_{ir}(\widehat{\theta},\widehat{\eta})'  , \ \ 
{\widehat{\Omega}}_{b}:=   \frac{1}{NT^2} \sum\limits_{t=1}^T  \sum_{i=1}^N \sum_{j=1}^N \psi_{it}(\widehat{\theta},\widehat{\eta})
\psi_{jt}(\widehat{\theta},\widehat{\eta})'  , \\
{\widehat{\Omega}}_{c}:=  &  \frac{1}{NT^2} \sum\limits_{i=1}^N  \sum_{t=1}^T \psi_{it}(\widehat{\theta},\widehat{\eta})
\psi_{it}(\widehat{\theta},\widehat{\eta})'  , \ \ 
{\widehat{\Omega}}_{d}:=  \frac{1}{NT^2} \sum_{m=1}^{M-1} k\left(
\frac{m}{M}\right)  \sum_{t=1}^{T - m} \sum\limits_{i=1}^N \sum_{j=1, j \ne i}^N \psi_{it}(\widehat{\theta},\widehat{\eta})\psi_{j,t+m}(\widehat{\theta},\widehat{\eta})' .
\end{align*}
where $\psi_{it}({\theta},{\eta}) = (Z_{it}-f_{it}{\zeta})(Y_{it} - f_{it}{\beta} - {\theta} (D_{it} - f_{it}{\pi}))$ and $\eta=(\zeta,\beta,\pi)$. We need to show $\widehat{\Omega}_a \overset{p}{\to} {\Sigma}_a = {\rm E}[{a}_i^2]$, $\widehat{\Omega}_b \overset{p}{\to} c{\rm E}[{g}_t^2]$, $\widehat{\Omega}_c = o_P(1)$, and $\widehat{\Omega}_d\overset{p}{\to} c\sum_{m=1}^\infty {\rm E}[{g}_t {g}_{t+m}]$. 

First, consider $ \widehat{\Omega}_a - {\rm E}[{a}_i^2] $. By triangle inequality, we have
\begin{align*}
    \left\vert \widehat{\Omega}_a - {\rm E}[a_i^2] \right\vert \leq & \left\vert I_{a,1}\right\vert +  \left\vert I_{a,2}\right\vert + \left\vert I_{a,2}\right\vert, \\
      I_{a,1} :=  & \frac{1}{NT^2} \sum\limits_{i=1}^N  \sum_{t=1}^T \sum_{r=1}^T \left\{\psi_{it}(\widehat{\theta},\widehat{\eta})
\psi_{ir}(\widehat{\theta},\widehat{\eta}) - \psi_{it}({\theta}_0,{\eta}_0)
\psi_{ir}({\theta}_0,{\eta}_0) \right\}, \\
      I_{a,2} :=  & \frac{1}{NT^2} \sum\limits_{i=1}^N  \sum_{t=1}^T \sum_{r=1}^T \left\{{\psi}_{it}({\theta}_0,{\eta}_0){\psi}_{ir}({\theta}_0,{\eta}_0) - {\rm E}[{\psi}_{it}({\theta}_0,{\eta}_0){\psi}_{ir}({\theta}_0,{\eta}_0)]\right\}, \\
      I_{a,2} :=  & \frac{1}{NT^2} \sum\limits_{i=1}^N  \sum_{t=1}^T \sum_{r=1}^T \left\{  {\rm E}[{\psi}_{it}({\theta}_0,{\eta}_0){\psi}_{ir}({\theta}_0,{\eta}_0)] - {\rm E}[{a}_i^2]\right\}.
\end{align*}
Note that in proving Claim \ref{claim_3}, the cross-fitting device is only used to show that $I_{a,1}$ is of small order. Since the arguments for showing $I_{a,2}$ and $I_{a,3}$ to be of small order are basically the same as those in the proof of Claim \ref{claim_3}, they are not repeated here.

Consider $I_{a,1}$. By product decomposition, triangle inequality, and Cauchy-Schwarz inequality, we
have
\begin{align*}
    \left|I_{a,1}\right|\lesssim &R_{NT}\left\{ \left\vert  \psi_{it}({\theta}_0,{\eta}_0) \right\vert_{NT,2} +  R_{NT}\right\}\\
    R_{NT} := &\left\Vert \psi_{it}(\widehat{\theta},\widehat{\eta}) -  \psi_{it}({\theta}_0,{\eta}_0) \right\Vert_{NT,2}
\end{align*}
By Minkowski's inequality, we have
\begin{align*}
    R_{NT} = & \left\Vert \psi^a_{it}({\eta}_0)(\widehat{\theta}-\theta_0) + (\psi^a_{it}({\eta}_0)-\psi^a_{it}(\widehat{\eta}))(\widehat{\theta}-\theta_0) + \psi_{it}({\theta}_0,\widehat{\eta}) - \psi_{it}({\theta}_0,{\eta}_0) \right\Vert_{NT,2}  \\
    \leq& \left\Vert \psi^a_{it}({\eta}_0)(\widehat{\theta}-\theta_0) \right\Vert_{NT,2} + \left\Vert (\psi^a_{it}({\eta}_0)-\psi^a_{it}(\widehat{\eta}))(\widehat{\theta}-\theta_0) \right\Vert_{NT,2} +  \left\Vert  \psi_{it}({\theta}_0,\widehat{\eta}) - \psi_{it}({\theta}_0,{\eta}_0) \right\Vert_{NT,2}\\
    =: & \ R_{a,1} +R_{a,2} + R_{a,3},
\end{align*}
where $\psi^a_{it}({\eta}) := \left(Z_{it}-f_{it}{\zeta}\right)(D_{it}-f_{it}{\pi})$. Under Assumption \hyperref[regcon_app]{REG-P}(ii), we have
\begin{align*}
    {\rm E}[\psi_{it}^a({\eta}_0)]^2={\rm E}[V^Z_{it}(V^D_{it}+r^D_{it})]^2=O_P(1),
\end{align*}
and Markov inequality implies that $\left\Vert  \psi_{it}^a({\eta}_0) \right\Vert_{NT,2} = O_P(1)$. By Theorem \ref{thm_asymp_norm}, we have $ \widehat{\theta} - \theta_0 = O_P\left(\frac{1}{\sqrt{N\wedge T}}\right)$. Therefore, $R_{a,1}\leq \left\Vert {\psi}^a_{it}({\eta}_0) \right\Vert_{NT,2} \vert \widehat{\theta}-\theta_0\vert = O_P\left(\frac{1}{\sqrt{N\wedge T}}\right)$. To bound $R_{a,2}$, we note
\begin{align*}
    \left\Vert {\psi}^a_{it}({\eta}_0)- {\psi}^a_{it}(\widehat{\eta})\right\Vert_{NT,2}= \left\Vert f_{it}'(\widehat{\zeta}-\zeta_0)(D_{it}-f_{it}'\pi_0) +  f_{it}'(\widehat{\zeta}-\zeta_0)f_{it}'(\widehat{\pi}-\pi_0) + (Z_{it}-f_{it}'\zeta_0) f_{it}'(\widehat{\pi}-\pi_0)  \right\Vert_{NT,2}
\end{align*}
Under Assumption \hyperref[regcon_app]{REG-P}(iii), we have ${\rm E}|V_{it}^D|^{8(\mu+\delta)}<\infty$, which implies ${\rm E}\left[\max_{i\leq N,t\leq T} |V_{it}^D|^2\right] \lesssim (NT)^{\frac{1}{4(\mu+\delta)}}$. By Markov inequality, we have $\max_{i\leq N,t\leq T} |V_{it}^D|^2 = O_P((NT)^{\frac{1}{4(\mu+\delta)}})$. As in the proof of Theorem \ref{thm_asymp_norm}, Theorem \ref{performance_bound} can be applied to obtain $\left\Vert f_{it}'(\widehat{\zeta}-\zeta_0)\right\Vert_{NT,2} = O_P\left( \sqrt{\frac{s\log(p/\gamma)}{{N\wedge T}}}\right) $. Then, we have 
\begin{align*}
   R_{a,2}=& \left\Vert f_{it}'(\widehat{\zeta}-\zeta_0)V_{it}^D \right\Vert_{NT,2} \leq \left( \max_{i\leq N,t\leq T} |V_{it}^D|^2 \right)^{1/2}\left\Vert f_{it}'(\widehat{\zeta}-\zeta_0)\right\Vert_{NT,2} \\
    =& O_P((NT)^{\frac{1}{8(\mu+\delta)}}) O_P\left( \sqrt{\frac{s\log(p/\gamma)}{{N\wedge T}}}\right) 
    =  O_P((NT)^{\frac{1}{8(\mu+\delta)}})o_P\left(\frac{1}{(N\wedge T)^{1/4}}\right) =o_P(1).
\end{align*}
Similar arguments can be made to show $R_{a,3}$. Therefore, we have $R_{NT} = o_P(1)$ and so $\widehat{\Omega}_a\overset{p}{\to} \Sigma_a$

It is left to show that $\widehat{\Omega}_b \overset{p}{\to} c{\rm E}[g_t^2]$, $\widehat{\Omega}_c = o_P(1)$, and $\widehat{\Omega}_d\overset{p}{\to} c\sum_{m=1}^\infty {\rm E}[g_t g_{t+m}]$. As is shown in the proofs of Claim \ref{claim_4}, the only step in showing these claims that involves the cross-fitting technique is to show the same term $R_{NT}$ to converge to 0 in probability. Otherwise, the arguments are basically the same and not repeated here. Combining these results, we obtain $\widehat{\Omega} \overset{p}{\to} {\rm E}(a_i^2)+c{\rm E}(g_t^2) + c\sum_{m=1}^{\infty}{\rm E}(g_tg_{t+m}) = \Sigma_a+ c \Sigma_g$.  

To show $\widehat{V}_{\rm DKA} = \widehat{V}_{\rm CHS} + o_P(1)$, it suffices to show $\widehat{\Omega}_{\rm NW} = o_P(1)$. We decompose $\Omega_{\rm NW}$ as follows:
\begin{align*}
    \widehat{\Omega}_{\rm NW} = \widehat{\Omega}_{c} + \widehat{\Omega}_e - \widehat{\Omega}_d,
\end{align*}
where $ \widehat{\Omega}_{c}$ and $ \widehat{\Omega}_{d}$ are defined as above and $\widehat{\Omega}_e $ is defined as follows:
\begin{align*}
    \widehat{\Omega}_e := \frac{1}{NT^2} \sum_{m=1}^{M-1}k\left(\frac{m}{M}\right) \sum_{t=1}^{T-m}\sum_{i=1}^{N}\sum_{j=1}^N \psi(W_{it};\widehat{\theta},\tilde{\eta}) \psi(W_{j,t+m};\widehat{\theta},\tilde{\eta}).
\end{align*}
Following the same arguments as in the proof of Claim \ref{claim_6}, we have $\widehat{\Omega}_e = \widehat{\Omega}_d + o_P(1)$. We have shown $\widehat{\Omega}_c = o_P(1)$. Therefore, we conclude that $\widehat{\Omega}_{NW} = o_P(1)$. So it is proved.
\end{proof}

\end{document}